\newcommand{\Span}{{\ensuremath{\mathsf{Span}}\xspace}}
\newcommand{\defeq}{\stackrel{\text{\tiny def}}{=}}
\newcommand{\calK}{\mathcal{K}} \newcommand{\nfrac}{\nicefrac}
 \newcommand{\BS}{{\sc Balanced
Separator}\xspace} 
\newcommand{\SDP}{{\sc SDP}\xspace} \newcommand{\eps}{\epsilon}
\newcommand{\alg}{{\sc BalSep}\xspace}
\newcommand{\avg}{{\ensuremath{\mathsf{avg}}\xspace}}
\newcommand{\vol}{{\ensuremath{\mathsf{vol}}\xspace}}
\newcommand{\expRational}{{\sc ExpRational}}
\newcommand{\lanczos}{{\sc Lanczos}}
\renewcommand{\epsilon}{\varepsilon}
\newcommand{\calI}{\ensuremath{\mathcal{I}}}
\newcommand{\Invert}{{\ensuremath{\mathsf{Invert}}\xspace}}
\newcommand{\Vol}[1]{\text{vol}(#1)}
\def\showauthornotes{1} 
\def\showdraftbox{1}
\newtheorem{theorem}{Theorem}[section]
\newtheorem{definition}[theorem]{Definition}
\newtheorem{lemma}[theorem]{Lemma}
\newtheorem{remark}[theorem]{Remark}
\newtheorem{proposition}[theorem]{Proposition}
\newtheorem{corollary}[theorem]{Corollary}
\newtheorem{fact}[theorem]{Fact}
\def\FullBox{\hbox{\vrule width 6pt height 6pt depth 0pt}}
\def\qed{\ifmmode\qquad\FullBox\else{\unskip\nobreak\hfil
\penalty50\hskip1em\null\nobreak\hfil\FullBox
\parfillskip=0pt\finalhyphendemerits=0\endgraf}\fi}
\def\qedsketch{\ifmmode\Box\else{\unskip\nobreak\hfil
\penalty50\hskip1em\null\nobreak\hfil$\Box$
\parfillskip=0pt\finalhyphendemerits=0\endgraf}\fi}
\newenvironment{proof}{\begin{trivlist} \item {\bf Proof:~~}}
  {\qed\end{trivlist}}
\newcommand\rea{\mathbb R}
\newcommand\R{\mathbb R}
\newcommand{\marginlabel}[1]%
{\mbox{}\marginpar{\it{\raggedleft\hspace{0pt}#1}}}
\newcommand\card[1]{\left| #1 \right|} 
\newcommand{\poly}{\mathrm{poly}}
\newcommand{\ceil}[1]{\left\lceil\, {#1}\,\right\rceil}
\definecolor{Mygray}{gray}{0.8}
\let\csname ifcommentflag\expandafter\endcsname
\newcommand{\Authornote}[2]{{\sf\small\color{red}{[#1: #2]}}}
\newcommand{\Authoredit}[2]{{\sf\small\color{red}{[#1]}\color{blue}{#2}}}
\newcommand{\Authorcomment}[2]{{\sf \small\color{gray}{[#1: #2]}}}
\newcommand{\Authorfnote}[2]{\footnote{\color{red}{#1: #2}}}
\newcommand{\Authorfixme}[1]{\Authornote{#1}{\textbf{??}}}
\newcommand{\Authormarginmark}[1]{\marginpar{\textcolor{red}{\fbox{
#1:!}}}}
\newcommand{\Authornote}[2]{}
\newcommand{\Authoredit}[2]{}
\newcommand{\Authorcomment}[2]{}
\newcommand{\Authorfnote}[2]{}
\newcommand{\Authorfixme}[1]{}
\newcommand{\Authormarginmark}[1]{}
\newcommand{\from}{:}
\newcommand{\norm}[1]{\ensuremath{\left\lVert #1 \right\rVert}}
\newlength{\pgmtab}  
\newcounter{lecnum}
\newlength{\tpush}
\newcommand{\Tr}{{\rm Tr}}
\newcommand{\cH}{\mathcal H}
\newcommand{\mwu}{\textsc{MWU}\xspace}
\newcommand{\mmwu}{\textsc{MMWU}\xspace}
\newcommand{\Degin}{D^{-\nfrac{1}{2}}}
\newcommand{\supp}{\mathrm{supp}}
\title{\bf Approximating the Exponential, 
 the Lanczos Method and an $\tilde{O}(m)$-Time Spectral Algorithm for Balanced Separator  }
\author{Lorenzo Orecchia \\
{\small MIT}\\
{\small Cambridge, MA, USA.} \\
{\small orecchia@mit.edu}
\and Sushant Sachdeva\thanks{This work was done while this author was interning at Microsoft Research India, Bangalore.}   \\ 
 {\small Princeton University} \\
{\small Princeton, NJ, USA.}\\
{\small  sachdeva@cs.princeton.edu}
 \and Nisheeth K. Vishnoi
     \\ {\small Microsoft Research}  \\
{\small Bangalore, India}\\
{\small nisheeth.vishnoi@gmail.com}}
\date{}
\begin{document}

\maketitle

\begin{abstract}
We give a novel spectral approximation algorithm for the balanced separator problem that, given a graph $G$, a constant balance $b \in (0,1/2],$ and a parameter $\gamma,$ either finds an $\Omega(b)$-balanced cut of conductance $O(\sqrt{\gamma})$ in $G,$ or outputs a certificate that all $b$-balanced cuts in $G$ have conductance at least $\gamma,$ and runs in time $\tilde{O}(m).$ This settles the question of designing asymptotically optimal spectral algorithms for balanced separator. 
Our algorithm relies on a variant of the heat kernel random walk and
requires, as a subroutine, an algorithm to compute $\exp(-L)v$
where $L$ is the Laplacian of a graph related to $G$ and $v$ is a
vector. Algorithms for computing the matrix-exponential-vector product efficiently comprise our next set of results. 
Our main result here is a new algorithm which computes a good
approximation to $\exp(-A)v$ for a class of symmetric positive
semidefinite (PSD) matrices $A$ and a given vector $u,$ in time roughly $\tilde{O}(m_A),$ where $m_A$ is the number of non-zero entries of $A.$ This uses, in a non-trivial way, the breakthrough result of Spielman and Teng on inverting symmetric and diagonally-dominant matrices in $\tilde{O}(m_A)$ time.
Finally, we prove that $e^{-x}$ can be uniformly approximated up to a small additive error, in a non-negative interval $[a,b]$ with a polynomial of degree roughly $\sqrt{b-a}.$ 
 While this result is of independent interest in approximation theory, we show that, via the Lanczos method from numerical analysis, it yields a simple algorithm to compute $\exp(-A)v$ for symmetric PSD matrices that runs in time roughly $O(t_A\cdot \sqrt{\norm{A}}),$ where $t_A$ is time required for the
computation of the vector $Aw$ for given vector $w.$ As an application, we obtain a simple and practical algorithm, with output conductance $O(\sqrt{\gamma}),$ for balanced separator that runs in time $\tilde{O}(\nfrac{m}{\sqrt{\gamma}}).$ This latter algorithm matches the running time, but improves on the approximation guarantee of the Evolving-Sets-based algorithm by Andersen and Peres for balanced separator.

\vspace{5mm}
\paragraph{\it Keywords.} Spectral Algorithms, Balanced Graph Partitioning, Matrix Exponential, Lanczos Method, Uniform Approximation.

\end{abstract}

\newpage

\tableofcontents

\newpage

\section{Introduction and Our Results}
 
\subsection{Balanced Separator}

The {\BS} problem ({\sc BS}) asks the following decision question: 
given an unweighted graph $G=(V,E),$ $V =[n], \card{E} = m,$ 
a constant balance parameter $b \in (0,\nfrac{1}{2}],$ and a target conductance value $\gamma \in (0,1),$
 does $G$ have a $b$-balanced cut $S$ such that $\phi(S) \leq \gamma$?
\renewcommand{\BS}{{\sc BS}\xspace}
{Here, the conductance of
a cut $(S,\bar{S})$ is defined to be $\phi(S) \defeq \nfrac{|E(S,\bar{S})|}{\min
\{\vol(S),\vol (\overline{S})\}},$ where $\vol(S)$ is the sum of the degrees of the vertices in the set $S$. 
Moreover, a cut $(S, \bar{S})$ is $b$-balanced if $\min \{\vol(S), \vol(\bar{S})\} \geq b \cdot \Vol{V}.$}
This is a classic NP-hard problem and a central object of study for the development of
approximation algorithms, both in theory and in practice. 
On the theoretical side, \BS has far reaching connections to spectral graph theory,
 the study of random walks and metric embeddings.
In practice, algorithms for \BS play a crucial role in the design of recursive algorithms~\cite{Shmoys}, 
clustering~\cite{Kannan} and scientific computation~\cite{Schloegel00}.

Spectral methods are an important set of techniques in the design of graph-partitioning algorithms and are fundamentally based on the study of the behavior of random walks over the instance graph.
Spectral algorithms tend to be conceptually appealing, because of the intuition based on 
the underlying diffusion process, and easy to implement, as many of the primitives required,
such as eigenvector computation, already appear in highly-optimized software packages.
The most important spectral algorithm for graph partitioning is the Laplacian Eigenvector ({\sc LE}) algorithm of Alon and Milman~\cite{Alon}, which, given a graph of conductance at most $\gamma,$ outputs a cut of conductance at most $O(\sqrt{\gamma}),$ an approximation guarantee that is asymptotically optimal for spectral algorithms.
A consequence of the seminal work of Spielman and Teng \cite{ST3} is that the {\sc LE} algorithm can run in time $\tilde{O}(m)$ 
using the Spielman-Teng solver.
Hence, {\sc LE} is an asymptotically optimal spectral algorithm for the minimum-conductance problem, both for running time (up to polylog factors) and approximation quality.
In this paper, we present a simple random-walk-based algorithm that is the first such asymptotically optimal spectral algorithm for \BS. Our algorithm can be seen as an analogue to the {\sc LE} algorithm for the balanced version of the minimum-conductance problem and settles the question of designing spectral algorithms for \BS. The following is our main theorem on graph partitioning.
\begin{theorem}[Spectral Algorithm for Balanced Separator]\label{thm:bsmain}
Given an unweighted graph $G=(V,E)$, a balance parameter $b \in
(0,\nfrac{1}{2}], \; b = \Omega(1)$ and a conductance value $\gamma
\in (0,1),$ we give an algorithm called {\alg}$(G,b, \gamma),$ that either
outputs an $\Omega(b)$-balanced cut $S \subset V$ such that $\phi(S)
\leq O(\sqrt{\gamma}),$ or outputs a certificate that no $b$-balanced
cut of conductance $\gamma$ exists. \alg runs in time ${O}(m \;\poly(\log
n)).$ 
\end{theorem}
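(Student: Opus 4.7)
The plan is to cast \BS in a matrix multiplicative weights (MMW) framework driven by the heat-kernel random walk $\exp(-tL_G)$, and to implement each iteration of the framework in $\tilde{O}(m)$ time using the fast matrix-exponential-vector multiplication algorithm developed elsewhere in this paper. Because the MMW regret analysis controls the number of iterations by $\polylog(n)$, this gives a total running time of $O(m\cdot\polylog(n))$ as required.

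First I would set up a semidefinite feasibility program whose feasibility would certify that no $b$-balanced cut of conductance at most $\gamma$ exists, and whose infeasibility can be turned into a good balanced cut via Cheeger-type rounding. The MMW meta-algorithm then maintains a PSD ``density'' matrix of the form $P_t \propto \exp\bigl(-\eta \sum_{s<t} M_s\bigr)$, and at each step $t$ invokes an oracle that, given $P_t$, either (a) applies the heat kernel to a probe vector drawn from $P_t$ and performs a sweep cut on the resulting vector to output an $\Omega(b)$-balanced cut of conductance $O(\sqrt{\gamma})$, or (b) returns a ``loss'' matrix $M_t$, a Laplacian-like PSD matrix supported on the portion of the graph that failed to separate, which is fed back into the exponential update. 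If after $T=\polylog(n)$ iterations case (a) never triggers, the aggregated $M_t$'s combined with the standard MMW regret bound yield a dual certificate that no $b$-balanced cut of conductance $\gamma$ exists. The heat-kernel time parameter is chosen of order $1/\gamma$, and this is exactly what yields the Cheeger-type $O(\sqrt{\gamma})$ approximation rather than the $O(\gamma)$ one would get from a linear walk.

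The main technical obstacle is bringing the per-iteration cost down to $\tilde{O}(m)$. The oracle reduces to computing $\exp(-\eta L_t)\,u_t$ for a vector $u_t$ and a reweighted Laplacian-like operator $L_t$ arising from the current $P_t$. A naive implementation would cost $\Omega(n^2)$ via eigendecomposition, or $\tilde{O}(m/\sqrt{\gamma})$ via straight Lanczos, neither of which meets the target. Instead I would invoke the $\tilde{O}(m_A)$-time matrix-exponential subroutine of this paper, which reduces the computation of $\exp(-A)v$ to $\polylog$-many calls to an SDD linear system solver of the Spielman--Teng type. The delicate point is ensuring that the iterates $L_t$ produced by the MMW update always lie in the class of operators to which the fast exponential algorithm applies; concretely, that each $L_t$ can be expressed as a simple modification (e.g.\ a rank-one shift or a PSD restriction) of the Laplacian of a graph closely related to $G$, despite the multiplicative reweighting. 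Once this invariant is set up carefully, multiplying the $\tilde{O}(m)$ per-iteration cost by the $\polylog(n)$ iteration count yields the claimed running time and completes the proof of Theorem~\ref{thm:bsmain}.
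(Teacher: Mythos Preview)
Your outline assembles the right ingredients, and you correctly flag the structural invariant needed for the fast exponential. The gap is in the iteration count. You assert that ``the MMW regret analysis controls the number of iterations by $\polylog(n)$,'' but this is exactly what is \emph{not} automatic here: the Orecchia--Vishnoi algorithm is precisely the MMW scheme you describe (density $P_t\propto\exp(-\eta\sum_s M_s)$, oracle returns a feedback matrix), and it needs $\Theta((\log n)/\gamma)$ iterations because the Laplacian-type feedback matrices have width $\Theta(1)$ against the target $\gamma$. With that iteration count the total time is $\tilde O(m/\gamma)$, not $\tilde O(m)$.

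The paper's fix is not to apply the heat kernel inside the oracle as a separate step on top of a generic density $P_t$, but to bake it into the density itself: the embedding at round $t$ comes from $\exp\bigl(-\tau D^{-1/2}(L+\sum_i\beta_i^{(t)} L(S_i))D^{-1/2}\bigr)$ with $\tau=\Theta((\log n)/\gamma)$ fixed once and for all. Because $\tau$ is this large, every such embedding automatically satisfies $L\bullet X^{(t)}\le O(\gamma)\,L(K_V)\bullet X^{(t)}$ (Lemma~\ref{lem:obj}), so the high-width Laplacian feedback of OV is never triggered---only low-width star updates supported on the unbalanced cut $S^{(t)}$ are fed back. A direct potential argument (Golden--Thompson plus the rounding guarantee, Theorem~\ref{thm:reduction}) then shows the total deviation from stationarity drops by a constant factor each round, yielding $O(\log n)$ iterations. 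In your architecture, with the heat kernel living only in the oracle and a generic MMW density, nothing forces the feedback width down to $O(\gamma)$, and the $\polylog(n)$ claim does not follow. (On your closing concern: the exponent is indeed not SDD because of the star terms; the paper handles this via Theorem~\ref{thm:expRational2} for matrices of the form $\Pi H M H\Pi$, using Sherman--Morrison on top of the SDD solver.)
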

The algorithm for Theorem \ref{thm:bsmain} relies on our ability to
compute the product of the matrix-exponential of a matrix and an
arbitrary vector in time essentially proportional to the sparsity of
the matrix. 
Our contribution to the problem of computing the matrix-exponential-vector product appear in detail in Section \ref{sec:exp-intro}. 
The algorithm required for Theorem \ref{thm:bsmain}
runs in time $\tilde{O}(m)$ and, notably, makes use of the Spielman-Teng solver in a non-trivial way. We also prove an
alternative novel result on how to perform this matrix-exponential
computation, which relies just on matrix-vector products. This result,
when combined with our {\BS} algorithm based on random walks, yields a
theorem identical to Theorem \ref{thm:bsmain} except that the running time now
increases to $\tilde{O}(\nfrac{m}{\sqrt{\gamma}}),$ see Theorem
\ref{thm:bsmain2}. However, this latter algorithm not only turns out to be 
 almost as simple and practical as the LE algorithm, but it also improves
in the approximation factor upon the result of Andersen and Peres
\cite{AP} who obtain the same running time using Evolving-Sets-based random walk.

\subsubsection{Comparison to Previous Work on Balanced Separator}
The best known approximation for \BS is $O(\sqrt{\log n})$
achieved by the seminal work of Arora, Rao and Vazirani~\cite{ARV} 
that combines semidefinite programming (\SDP) and flow ideas. 
A rich line of research has centered on reducing the running time of this algorithm using \SDP and flow ideas~\cite{KRV, AK, OSVV}.
This effort culminated in Sherman's work~\cite{Sherman}, which brings down the required running time to $O(n^\eps)$ $s$-$t$ maximum-flow computations.
\footnote{Even though the results of~\cite{ARV} and ~\cite{Sherman} are stated for the Sparsest Cut problem, the same techniques apply to the conductance problem, e.g. by modifying the underlying flow problems. See for example~\cite{AL}.}
However, these algorithms are based on advanced theoretical ideas that
are not easy to implement or even capture in a principled
heuristic. Moreover, they fail to achieve a nearly-linear
\footnote{Following the convention of~\cite{ST2}, we denote by nearly-linear a running time of $\tilde{O}(\nfrac{m}{\poly(\gamma)}).$}
running time, which is crucial in many of today's applications that involve very large graphs.
To address these issues, researchers have focused on the design of simple, nearly-linear-time algorithms for \BS based on spectral techniques. The simplest spectral algorithm for \BS is the Recursive Laplacian Eigenvector ({\sc RLE}) algorithm (see, for example, \cite{Kannan}). 
This algorithm iteratively uses {\sc LE} to remove low-conductance unbalanced cuts from $G,$
until a balanced cut or an induced $\gamma$-expander is found.
The running time of the {\sc RLE} algorithm is quadratic in the worst case, as $\Omega(n)$ unbalanced cuts may be found, each requiring a global computation of the eigenvector.
Spielman and Teng~\cite{ST1} were the first to design nearly-linear-time algorithms outputting an $\Omega(b)$-balanced cut of conductance $O(\sqrt{\gamma \; \textrm{polylog} n}),$ if a $b$-balanced cut of conductance less than $\gamma$ exists.
Their algorithmic approach is based on local random walks, which are
used to remove unbalanced cuts in time proportional to the size of the
cut removed, hence avoiding the quadratic dependence of {\sc RLE}. 
Using similar ideas, Andersen, Chung and Lang~\cite{ACL}, and Andersen and Peres~\cite{AP} improved the approximation guarantee to $O(\sqrt{\gamma \; \log n})$ and the running time to $\tilde{O}(\nfrac{m}{\sqrt{\gamma}}).$
More recently, Orecchia and Vishnoi (OV) \cite{OV} employed an \SDP formulation of the problem, together with the Matrix Multiplicative Weight Update (\mmwu) of~\cite{AK} and a new \SDP rounding, to obtain an output conductance of $O(\sqrt{\gamma})$ with running time $\tilde{O}(\nfrac{m}{\gamma^2})$, effectively removing unbalanced cuts in $O(\nfrac{\log n}{\gamma})$ iterations. In Section~\ref{sec:sdpinter}, we give a more detailed comparison with OV and discussion of our novel width-reduction techniques from an optimization point of view.
Finally, our algorithm should also be compared to the remarkable
results of Madry~\cite{Madry10} for BS, which build up on R\"{a}cke's work \cite{Racke} and on the low-stretch spanning trees from Abraham {\em et
al.}~\cite{ABN}, to achieve a trade-off between running time and
approximation. For every integer $k \ge 1,$ he achieves roughly
$O((\log n)^k)$ approximation in time $\tilde{O}(m+2^k\cdot
n^{1+2^{-k}}).$
Calculations show that for $\gamma \geq 2^{-(\log \log n)^2},$ our
algorithm achieves strictly better running time and approximation than
Madry's for sparse graphs.\footnote{In the table on Page 4 of the full
  version of Madry's paper, it has been erroneously claimed that using
  the Spielman-Teng solver, Alon-Milman algorithm runs in time
  $\tilde{O}(m)$ for BS.}
More importantly, we believe that our algorithm is significantly simpler, especially in its second form mentioned above, and likely to find applications in practical settings. 

\subsection{The Matrix Exponential, the Lanczos Method and 
Approximations to $e^{-x}$}\label{sec:exp-intro}
We first state a few definitions used in this section. We will work
with $n \times n,$ symmetric and positive semi-definite (PSD) matrices
over $\mathbb{R}.$ For a matrix $M,$ abusing notation, we denote its
exponential by $\exp(-M),$ or by $e^{-M},$ and define it as
$\sum_{i\geq 0} \frac{(-1)^i}{i!}M^{i}.$ $M$ is said to be
\emph{Symmetric and Diagonally Dominant} (SDD) if, $M_{ij} = M_{ji},$
for all $i,j$ and $M_{ii} \ge \sum_j |M_{ij}|,$ for all $i$. Let
$m_M$ denote the number of non-zero entries in $M$ and let $t_M$
denote the time required to multiply the matrix $M$ with a given
vector $v.$ In general, $t_M$ depends on how $M$ is given as an input
and can be $\Theta(n^2)$. However, it is possible to exploit the
special structure of $M$ if given as an input appropriately: It is
possible to just multiply the non-zero entries of $M,$ giving $t_M =
O(m_M).$ Also, if $M$ is a rank one matrix $ww^\top,$ where $w$ is
known, we can multiply with $M$ in $O(n)$ time. We move on to our results.

At the core of our algorithm for {\BS}, and more generally of most
\mmwu based algorithms, lies an algorithm to quickly compute
$\exp(-A)v$ for a PSD matrix $A$ and a unit vector $v.$ It is
sufficient to compute an approximation $u,$ to $\exp(-A)v,$ in time
which is as close as possible to $t_{A}.$ It can be shown that using
about $\|A\|$ terms in the Taylor series expansion of $\exp(-A),$ one
can find a vector $u$ that approximates $\exp(-A)v.$ Hence, this
method runs in time roughly $O(t_{A} \cdot \|A\|).$ In our
application, and certain others \cite{AK,Kthesis,IPS11,IPS}, this
dependence on the norm is prohibitively large. The following
remarkable result was cited in Kale \cite{Kthesis}.
\vspace{2pt}
\paragraph{\bf Hypothesis.}
{\em Let $A \succeq 0$ and $\eps >0.$ There is an algorithm that requires $O\left(\log^2 \nfrac{1}{\epsilon}\right)$ iterations to find a vector $u$ such that $\norm{\exp(-A)v-u} \le \norm{\exp(-A)}\epsilon,$
for any unit vector $v$. The time for every iteration is ${O}(t_A).$ }

This hypothesis would suffice to prove Theorem \ref{thm:bsmain}. But, to
 the best of our knowledge, there is no known proof of this result. 
 In fact, the source of this unproved hypothesis can be traced to a
 paper of Eshof and Hochbruck (EH) \cite{EH}. EH suggest that one may
 use the Lanczos method (described later), and combine it with a
 rational approximation for $e^{-x}$ due to Saff, Schonhage and Varga
 \cite{SSV}, to reduce the computation of $\exp(-A)v$ to a number of
 $(I+\alpha A)^{-1}v$ computations for some $\alpha>0.$ Note that this is
 insufficient to prove the hypothesis above as there is no known way
 to compute $(I+ \alpha A)^{-1}v$ in time $O(t_{A}).$ They note this
 and propose the use of iterative methods to do this computation. They
 also point out that this will only result in an approximate solution
 to $(I+ \alpha A)^{-1}v$ and make no attempt to analyze the running time or the error of their method
 when the inverse computation is approximate. We believe that we are quite distant from proving
 the hypothesis for all PSD matrices and, moreover, that proving such a result may provide valuable insights into
 a fast (approximate) inversion method for symmetric PSD
 matrices, an extremely important open problem.
 
 A significant part of this paper is devoted to a proof of the above
 hypothesis for a class of PSD matrices that turns out to be
 sufficient for the {\BS} application. For the norm-independent,
 fast-approximate inverse computation, we appeal to the result of
 Spielman and Teng \cite{ST3} (also see improvements by Koutis,
 Miller and Peng~\cite{KMP}). The theorem we prove is the following.
\begin{theorem}[SDD Matrix Exponential Computation]
\label{thm:expRational}
Given an $n \times n$ SDD matrix $A$, a vector $v$
and a parameter $\delta \le 1$, there is an algorithm that can compute
a vector $u$ such that $\norm{\exp(-A)v-u} \le \delta\norm{v}$ in time
$\tilde{O}((m_A+n)\log (2+\norm{A})).$ Here the tilde hides
$\poly(\log n)$ and $\poly(\log \nfrac{1}{\delta})$ factors.
\end{theorem}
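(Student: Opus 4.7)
The proof plan is to reduce the computation of $\exp(-A)v$ to a small number of approximate SDD linear solves of the form $(\alpha_i I + A)^{-1}v$ with $\alpha_i > 0$, each handled by the Spielman--Teng solver, and then to show that the errors combine benignly. The first step is to produce a low-degree rational function
\[
r(x) \;=\; \sum_{i=1}^k \frac{w_i}{x+\alpha_i},\qquad \alpha_i > 0,
\]
of degree $k=\polylog(\norm{A}/\delta)$ that approximates $e^{-x}$ uniformly on $[0,\norm{A}]$ to within $\delta/3$ and whose weights satisfy $\sum_i\abs{w_i}\le\poly(\norm{A},\log(1/\delta))$. The classical Saff--Schonhage--Varga theorem gives essentially optimal uniform approximations, but typically with complex poles; for our purposes a real-pole variant can be obtained either from the Cauchy-integral representation of $e^{-x}$ combined with a quadrature rule whose nodes lie on the negative real axis, or by stitching together Chebyshev-type rational interpolants on $O(\log\norm{A})$ dyadic subintervals of $[0,\norm{A}]$.

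Given such $r$, applying it spectrally to $A$ gives $r(A)v = \sum_i w_i(\alpha_i I + A)^{-1}v$, and $\norm{\exp(-A)v - r(A)v}\le(\delta/3)\norm{v}$ follows from the uniform bound and the spectral theorem. Each matrix $\alpha_i I + A$ is SDD and PSD with condition number at most $1+\norm{A}/\alpha_i$, so the Spielman--Teng solver produces $y_i$ with $\norm{y_i - (\alpha_i I + A)^{-1}v}\le \eta\norm{v}$ in time $\tilde{O}((m_A+n)\log(\norm{A}/\eta))$. Setting $\eta=\delta/(3k\max_i\abs{w_i})$ and $u = \sum_i w_i y_i$, the triangle inequality gives $\norm{\exp(-A)v - u}\le\delta\norm{v}$. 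Summing the per-solve cost over the $k$ solves yields the claimed total $\tilde{O}((m_A+n)\log(2+\norm{A}))$, with the remaining $\polylog(n,1/\delta)$ factors absorbed into the tilde.

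The main obstacles are twofold. First, one must produce a rational approximation with \emph{only real, positive shifts} and with weights whose absolute values do not blow up: the solver's cost depends only logarithmically on $1/\eta$, but the required $\eta$ depends linearly on $\sum_i\abs{w_i}$, so a loose weight bound visibly degrades the final running-time bound. Second, the Spielman--Teng guarantee is most naturally expressed in the $(\alpha_i I + A)$-energy norm, so a small amount of care is needed to translate its error back to the Euclidean norm used in the telescoping above without paying spurious factors of $\norm{A}$. Both are resolvable with standard techniques, but they are the points where the delicate part of the argument must appear.
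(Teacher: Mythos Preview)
Your approach is genuinely different from the paper's, and the difference is exactly at the point you flag as delicate.

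The paper does \emph{not} use a partial-fraction decomposition with many distinct shifts. Instead it exploits the Saff--Sch\"onhage--Varga (SSV) approximation, which gives a degree-$k$ polynomial $p_k^\star$ with $p_k^\star\bigl((1+x/k)^{-1}\bigr)$ approximating $e^{-x}$ on $[0,\infty)$ to error $O(k\,2^{-k})$. This is a rational function with a \emph{single} real pole at $-k$ of multiplicity $k$, not a sum of simple poles. The paper therefore sets $B=(I+A/k)^{-1}$, runs a Lanczos-type procedure for $k=O(\log(1/\delta))$ iterations (each iteration calling the Spielman--Teng solver once to approximate $Bv_i$), and outputs $V_k f(\widehat T_k)V_k^\top v$ for $f(t)=e^{k(1-1/t)}$. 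The hard part of the paper's proof is the error analysis of Lanczos under \emph{inexact} matrix-vector products: bounding how the per-iteration solver error propagates through the Krylov basis, controlling the eigenvalues of the perturbed tridiagonal-like matrix $\widehat T_k$, and bounding $\|p_k^\star\|_1$ so that the polynomial-evaluation error can be tied back to the solver accuracy.

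Your plan instead posits $r(x)=\sum_{i=1}^k w_i/(x+\alpha_i)$ with real $\alpha_i>0$, $k=\polylog(\|A\|/\delta)$, and polynomially bounded $\sum_i|w_i|$. This is the step that is not justified, and neither of your two suggested constructions supplies it. The Cauchy-integral route requires a contour enclosing the spectrum $[0,\|A\|]$; a contour confined to the negative real axis does not enclose it, and the standard exponentially-convergent quadratures (parabolic/hyperbolic contours \`a la Trefethen--Weideman--Schmelzer) produce \emph{complex} nodes. The ``dyadic stitching'' idea yields a piecewise approximant, not a single rational function that can be applied spectrally. Partial-fractioning the SSV approximant itself gives $\sum_j c_j(1+x/k)^{-j}$, i.e.\ iterated powers of a single resolvent rather than independent shifts, and evaluating that stably is precisely the problem the paper's Lanczos error analysis is designed to handle. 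A secondary issue: your $k$ depends on $\log\|A\|$, whereas the paper achieves $k=O(\log(1/\delta))$ independent of $\|A\|$; since the tilde in the statement hides only $\poly(\log n,\log(1/\delta))$, extra $\polylog\|A\|$ factors would not match the stated bound in general.

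In short, the gap is the existence of the real-simple-pole decomposition with controlled weights; without it the rest of the argument does not start, and the paper's route via inexact Lanczos on a single shift is there precisely because this decomposition is not available.
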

\noindent
First, we note that for our application, the dependence of the running
time on the $\log (2+\|A\|)$ turns out to just contribute an extra
$\log n$ factor. Also, for our application $\delta=\nfrac{1}{\poly
 (n)}.$ Secondly, for our {\BS} application, the matrix we need to
invert is not SDD or sparse. Fortunately, we can combine
Spielman-Teng solver with the Sherman-Morrison formula to invert our
matrices; see Theorem \ref{thm:expRational2}. A significant effort
goes into analyzing the effect of the error introduced due to
approximate matrix inversion. This error can cascade due to the
iterative nature of our algorithm that proves this theorem.

Towards proving the hypothesis above, when the only guarantee we know
on the matrix is that it is symmetric and PSD, we prove the following
theorem, which is the best known algorithm to compute $\exp(-A)v$ for
an arbitrary symmetric PSD matrix $A,$ when
$\norm{A}=\omega(\poly(\log n)).$
\begin{theorem}[PSD Matrix Exponential Computation]
\label{thm:expRational-psd}
Given an $n \times n$ symmetric PSD matrix $A$, a vector $v$ and a
parameter $\delta \le 1$, there is an algorithm that computes a
vector $u$ such that $\norm{\exp(-A)v-u} \le \delta\norm{v}$ in time
$\tilde{O}\left((t_A+n) \sqrt{1+ \norm{A}} \log (2+\norm{A})\right).$
Here the tilde hides $\poly(\log n)$ and $\poly(\log
\nfrac{1}{\delta})$ factors.
\end{theorem}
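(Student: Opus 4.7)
The plan is to combine two ingredients: the paper's own uniform approximation of $e^{-x}$ on a non-negative interval by a polynomial of degree roughly $\sqrt{b-a}$ (stated as a result of independent interest in the introduction), and the classical Lanczos iteration, which for any polynomial $q$ of degree less than $k$ computes $q(A)v$ \emph{exactly} inside a $k$-dimensional Krylov subspace at the cost of $k$ matrix-vector products with $A$.

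First I would invoke the uniform-approximation result on $[0,\|A\|]$, which contains the spectrum of the symmetric PSD matrix $A$, to produce a polynomial $p$ of degree $k = \tilde O(\sqrt{1+\|A\|})$ satisfying $\sup_{x\in[0,\|A\|]}|p(x)-e^{-x}| \le \delta/4$. By spectral calculus, this yields $\|p(A)-e^{-A}\|_{\mathrm{op}}\le\delta/4$, so it suffices to approximate $p(A)v$ to within $(\delta/2)\|v\|$.

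Next I would run $k$ iterations of Lanczos starting at $q_1=v/\|v\|$, producing an orthonormal basis $Q_k=[q_1,\ldots,q_k]$ of the Krylov space $\mathcal{K}_k(A,v)$ and the symmetric tridiagonal matrix $T_k = Q_k^\top A Q_k$. Each iteration costs $O(t_A+n)$ via the three-term recurrence, for a total of $O(k(t_A+n))$. I would then output
\[
u \;=\; \|v\|\, Q_k\, \exp(-T_k)\, e_1,
\]
computing $\exp(-T_k)e_1$ in $\mathrm{poly}(k)$ time by diagonalizing the small tridiagonal matrix and paying a final $O(kn)$ to apply $Q_k$. Correctness uses the Krylov identity $q(A)v=\|v\|Q_k q(T_k)e_1$ for every polynomial $q$ of degree less than $k$; subtracting this identity with $q=p$ gives
\[
e^{-A}v - u \;=\; \bigl(e^{-A}-p(A)\bigr)v \;-\; \|v\|\,Q_k\bigl(\exp(-T_k)-p(T_k)\bigr)e_1.
\]
The first term has norm at most $(\delta/4)\|v\|$ by the approximation bound. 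For the second, Rayleigh--Ritz implies $\sigma(T_k)\subseteq[0,\|A\|]$, so $\|\exp(-T_k)-p(T_k)\|_{\mathrm{op}}\le\delta/4$; together with $\|Q_k\|_{\mathrm{op}}=1$, this term is also at most $(\delta/4)\|v\|$. The total error is at most $(\delta/2)\|v\|$, and one more constant-factor tightening of $\delta$ absorbs the roundoff in the $\mathrm{poly}(k)$ small-matrix computation.

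The main obstacle will be the first step: a naive Taylor truncation of $e^{-x}$ on $[0,\|A\|]$ needs degree $\Theta(\|A\|)$, and the $\sqrt{\|A\|}$ saving rests on the paper's standalone uniform-approximation theorem -- a shift-and-Chebyshev construction -- which is the genuinely nontrivial analytic input. A secondary subtlety is that classical Lanczos is unstable in finite precision, but since $k=\tilde O(\sqrt{\|A\|})$ is relatively small and we only need additive error $\delta$, standard partial or full reorthogonalization keeps the per-iteration cost at $O(t_A+n)$ up to $\mathrm{polylog}$ factors. Summing everything yields the claimed $\tilde O\!\left((t_A+n)\sqrt{1+\|A\|}\log(2+\|A\|)\right)$ running time.
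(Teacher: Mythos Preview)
Your argument is correct and does establish the stated bound, but it is not the paper's proof of Theorem~\ref{thm:expRational-psd}; it is essentially the paper's proof of Theorem~\ref{thm:expPoly}. The paper proves Theorem~\ref{thm:expRational-psd} via the \textsc{ExpRational} procedure: it runs Lanczos on $B=(I+A/k)^{-1}$ with only $k=O(\log\nfrac{1}{\delta})$ outer iterations, using the Saff--Sch\"onhage--Varga rational approximation (Corollary~\ref{cor:pk-star}), and implements each approximate product $B y$ by the Conjugate Gradient method. The $\sqrt{1+\|A\|}$ factor there comes from the CG iteration count (the condition number of $I+A/k$), and the explicit $\log(2+\|A\|)$ from the accuracy $\varepsilon_1$ demanded of each solve. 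Your route---direct Lanczos on $A$ with $f(x)=e^{-x}$ and the degree-$\tilde O(\sqrt{\|A\|})$ polynomial from Theorem~\ref{thm:exp-poly-approx}---is simpler, avoids inversion entirely, and even shaves the $\log(2+\|A\|)$ factor. The paper's route, on the other hand, keeps the number of outer iterations independent of $\|A\|$, which is exactly what lets them swap CG for the Spielman--Teng solver and obtain the norm-free SDD result (Theorem~\ref{thm:expRational}). One minor inaccuracy in your aside: full reorthogonalization costs $O(ni)$ at step $i$, not $O(t_A+n)$, so it would add an $\tilde O(n\|A\|)$ term; but the paper, like your main argument, works in exact arithmetic for this theorem, so this does not affect the proof.
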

 
\noindent In the symmetric PSD setting we also prove the following
theorem which, for our application, gives a result comparable to
Theorem~\ref{thm:expRational-psd}.

\begin{theorem}[Simple PSD Matrix Exponential Computation]
\label{thm:expPoly}
Given an $n \times n$ symmetric PSD matrix $A$, a vector $v$ and a
parameter $\delta \le 1$, there is an algorithm that computes a vector
$u$ such that $\norm{\exp(-A)v-u} \le \delta\norm{v},$ in time
$O((t_A+n)\cdot k+k^2),$ where $k \defeq
\tilde{O}(\sqrt{1+\norm{A}}).$ Here the tilde hides $\poly(\log
\nfrac{1}{\delta})$ factors.
\end{theorem}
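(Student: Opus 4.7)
The plan is to combine the uniform polynomial approximation result for $e^{-x}$ (mentioned in the abstract as one of our results) with the classical Lanczos method for matrix functions. Specifically, I would first invoke the approximation theorem to obtain a polynomial $p$ of degree $k = \tilde{O}(\sqrt{1+\|A\|})$ (with $\polylog(1/\delta)$ dependence hidden) such that $\sup_{x \in [0,\|A\|]} |e^{-x} - p(x)| \le \delta/2$. Since $A$ is symmetric PSD with spectrum contained in $[0,\|A\|]$, functional calculus immediately yields $\|e^{-A} - p(A)\|_{\mathrm{op}} \le \delta/2$, and in particular $\|e^{-A}v - p(A)v\| \le (\delta/2)\|v\|$. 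Thus it suffices to compute $p(A)v$ within error $(\delta/2)\|v\|$ within the stated time.

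Next, I would run $k$ steps of the Lanczos process on $A$ with starting vector $v/\|v\|$, producing an $n \times k$ matrix $Q$ with orthonormal columns and a $k \times k$ symmetric tridiagonal $T$ satisfying $Q^{\top}A Q = T$ and $Qe_1 = v/\|v\|$. The key algebraic fact is that, in exact arithmetic, for any polynomial $q$ of degree at most $k-1$,
\[
q(A)v \;=\; \|v\|\cdot Q\, q(T)\, e_1,
\]
which follows by induction from the Lanczos relation $AQ = QT + \beta_k q_{k+1} e_k^\top$ together with the fact that $T^j e_1$ is supported on the first $j+1$ coordinates for $j \le k-2$ (by tridiagonality). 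Applied with $q=p$, this reduces the task to computing $p(T)e_1$ exactly and then multiplying by $Q$.

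The running time is then accounted for as follows. Each Lanczos step consists of one matrix-vector product with $A$, at cost $O(t_A)$, plus $O(n)$ work for the three-term recurrence, for a total of $O((t_A+n)k)$ over the $k$ steps. Evaluating $p(T)e_1$ by Horner's rule uses $k$ multiplications by the tridiagonal matrix $T$, each of which costs $O(k)$, for a total of $O(k^2)$. A final multiplication $Q \cdot (p(T)e_1)$ costs $O(nk)$. Summing gives the claimed $O((t_A+n)k + k^2)$ bound. The main obstacle I would expect is not the algorithmic or complexity analysis, which is essentially immediate once the pieces are in place, but rather the underlying polynomial approximation theorem being invoked in the first step; that result, established separately in the paper, is what enables the $\sqrt{\|A\|}$ dependence in $k$ instead of the naive $\|A\|$ dependence from truncating the Taylor series. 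A secondary technical point is that one needs an estimate of $\|A\|$ to set $k$, which can be obtained in $\tilde{O}(t_A)$ time by power iteration and contributes only lower-order terms; numerical stability of Lanczos can be controlled by standard selective re-orthogonalization, affecting the running time only in lower-order ways that are absorbed into the stated bound.
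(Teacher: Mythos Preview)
Your outline has the right overall structure, but there is a genuine gap in the algorithmic step. You propose to output $\|v\|\cdot Q\,p(T)\,e_1$ and to compute $p(T)e_1$ by Horner's rule. This requires the coefficients of the approximating polynomial $p$, but the approximation theorem you invoke is an \emph{existence} statement: the construction goes through the Saff--Sch\"onhage--Varga polynomials $p_k^\star$, which the paper explicitly treats as not known in closed form (it even bounds $\|p_k^\star\|_1$ indirectly via interpolation precisely because the coefficients are unavailable). So as written, your algorithm cannot be run. Note also that if $p$ \emph{were} explicitly available, the Lanczos detour would be unnecessary: Horner applied directly to $A$ already computes $p(A)v$ in $O((t_A+n)k)$ time.

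The paper's fix, and the whole point of routing through Lanczos, is that one never needs $p$. The algorithm outputs $u = V_k\,e^{-T_k}\,V_k^\top v$, computing $e^{-T_k}$ by eigendecomposing the $(k{+}1)\times(k{+}1)$ tridiagonal matrix in $O(k^2)$ time. The error bound then follows from the identity $p(A)v = V_k\,p(T_k)\,V_k^\top v$ for \emph{every} degree-$k$ polynomial, which yields
\[
\|e^{-A}v - V_k e^{-T_k}V_k^\top v\| \;\le\; 2\min_{\deg p\le k}\ \max_{\lambda\in\Lambda(A)} |e^{-\lambda}-p(\lambda)|,
\]
and now the \emph{existence} of a good degree-$k$ polynomial (from the approximation theorem) suffices to bound the right-hand side by $\delta$. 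In short: replace ``evaluate $p(T)e_1$ by Horner'' with ``evaluate $e^{-T}$ by eigendecomposition,'' and your argument becomes the paper's.
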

As noted above, $t_A$ can be significantly smaller than $m_A.$ Moreover, it only uses
multiplication of a vector with the matrix $A$ as a primitive and does
not require matrix inversion. Consequently, it does not need 
tools like the SDD solver or conjugate gradient, thus obviating the
error analysis required for the previous algorithms. Furthermore,
this algorithm is very simple and when combined with our random
walk-based {\sc BalSep} algorithm, results in a very simple and
practical $O(\sqrt{\gamma})$ approximation algorithm for BS that runs in time $\tilde{O}(\nfrac{m}{\sqrt{\gamma}}).$

Theorem \ref{thm:expPoly} relies on the Lanczos method which can be
used to convert guarantees about polynomial approximation from scalars
to matrices. In particular, it uses the following structural result
(the upper bound) on the best degree $k$ polynomial $\delta$-uniformly
approximating $e^{-x}$ in an interval $[a,b].$ We also prove a lower
bound which establishes that the degree cannot be improved beyond
lower order terms. This suggests that improving on the 
$\tilde{O}(\nfrac{m}{\sqrt{\gamma}})$ running time in Theorem
\ref{thm:expPoly} requires more advanced techniques. 
To the best of our knowledge, this theorem is new
and is of independent interest in approximation theory.
\begin{theorem}[Uniform Approximation to $e^{-x}$]
\label{thm:exp-poly-approx} $\\$
\begin{itemize}
\item{\bf Upper Bound.} For every $0 \le a < b,$ and $0<\delta \le 1$,
there exists a polynomial ${p}$ that satisfies,
$\sup_{x \in [a,b]} |e^{-x}-{p}(x)| \le
\delta\cdot e^{-a},$ and has degree $O\left(\sqrt{\max\{\log^2
\nfrac{1}{\delta},(b-a) \cdot \log \nfrac{1}{\delta} \}}\cdot
\left(\log \nicefrac{1}{\delta}\right)^2 \right)$. 

\item{\bf Lower Bound.} For every $0 \leq a < b $ such that $a + \log_e 4 \le b,$ and $\delta
\in (0,\nfrac{1}{8}],$ any polynomial $p(x)$ that approximates
$e^{-x}$ uniformly over the interval $[a,b]$ up to an error of
$\delta\cdot e^{-a},$ must have degree at least
$\frac{1}{2}\cdot\sqrt{b-a}\ .$
\end{itemize}
\end{theorem}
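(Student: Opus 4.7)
For the upper bound, after translating so that $a=0$, the task reduces to approximating $e^{-x}$ on $[0, L]$ within error $\delta$, where $L := b-a$. I plan to split into two regimes. When $L = O(\log(1/\delta))$, the degree-$n$ truncated Taylor series of $e^{-x}$ at $0$ has remainder at most $L^{n+1}/(n+1)! \le (eL/(n+1))^{n+1}$, which is below $\delta$ for $n = O(\log(1/\delta))$, matching the stated bound in this regime. When $L \gg \log(1/\delta)$, set $K := 2\log(1/\delta)$; since $e^{-x} \le \delta^2$ on $[K, L]$, it suffices to build a polynomial that tracks $e^{-x}$ on $[0, K]$ and has magnitude $O(\delta)$ on $[K, L]$. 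I would construct it as a product $p(x) = q(x)D(x)$, where $q$ is the degree-$O(\log(1/\delta))$ Taylor approximation of $e^{-x}$ from the first case, and $D$ is a damping polynomial obtained by taking a shifted-and-rescaled Chebyshev polynomial $T_m$ with $[K, L]$ mapped to $[-1, 1]$ and normalizing so that $D(0)=1$. Since $T_m(1+\epsilon) \asymp \cosh(m\sqrt{2\epsilon})$ outside $[-1, 1]$, choosing $m = O(\sqrt{L/K}\log(1/\delta)) = \tilde{O}(\sqrt{L\log(1/\delta)})$ ensures $\|D\|_{[K,L]} \le \delta/\|q\|_{\infty}$, yielding $\deg p = \tilde{O}(\sqrt{L\log(1/\delta)})$. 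The main technical obstacle is that this $D$ is only equal to $1$ exactly at $x = 0$ and varies monotonically toward $\pm \delta$ across $[0, K]$, so obtaining $|qD - e^{-x}| \le \delta$ throughout $[0, K]$ demands a refinement---either a multi-term smoothed approximation of the indicator $\mathbf{1}_{[0,K]}$ on $[0, L]$, or an adjusted $q$ that compensates for the variation of $D$---and this is the source of the extra $(\log(1/\delta))^2$ factor.

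For the lower bound, I argue by contradiction. Suppose $p$ has degree $d$ with $\|p - e^{-x}\|_{\infty,[a,b]} \le \delta\, e^{-a}$; by translation assume $a = 0$. The approximation bound gives $\|p\|_{\infty,[0,L]} \le 1 + \delta \le 9/8$, so Markov's inequality for polynomials on the interval $[0, L]$ yields $\|p'\|_{\infty,[0,L]} \le (2d^2/L)\|p\|_\infty \le 9d^2/(4L)$. On the other hand, since $L \ge \log 4$ the points $x=0$ and $x=\log 4$ both lie in $[0, L]$; the hypothesis $\delta \le 1/8$ gives $p(0) \ge 7/8$ and $p(\log 4) \le 1/4 + 1/8 = 3/8$, so the mean value theorem supplies $\|p'\|_{\infty,[0,L]} \ge (7/8 - 3/8)/\log 4 = 1/(2\log 4)$. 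Combining the two bounds on $\|p'\|$ forces $d^2 \ge 2L/(9\log 4) = \Omega(L)$, already matching the claimed order $\sqrt{L}$; the precise constant $\tfrac{1}{2}$ in the statement is recovered by optimizing the choice of intermediate point (in place of $\log 4$) and using a sharper form of Markov's inequality that exploits the fact that $\|p\|_{\infty,[0,L]}$ is attained near the single endpoint $x = 0$.
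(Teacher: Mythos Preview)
Your lower-bound argument is essentially the paper's: both combine Markov's inequality with a mean-value-theorem estimate on a short initial subinterval. The only difference is that you invoke the form $\|p'\|\le (2d^2/L)\|p\|_\infty$ rather than the height version $\|p'\|\le d^2\,(\max p-\min p)/L$; the latter yields the constant $\tfrac12$ directly, so no further optimization is needed.

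For the upper bound your route differs from the paper's, and the obstacle you flag is a real gap, not a detail. With the Chebyshev damper $D(x)=T_m(\ell(x))/T_m(\ell(0))$ (where $\ell$ sends $[K,L]$ to $[-1,1]$) one finds, for small $x$, that $1-D(x)\approx c\,x$ with $c=m/\sqrt{KL}$; requiring $e^{-x}\,|1-D(x)|\le\delta$ near $x=0$ forces $c\lesssim\delta$, whereas damping on $[K,L]$ requires $cK=m\sqrt{K/L}\gtrsim\log(1/\delta)$. With $K=\Theta(\log(1/\delta))$ these are incompatible for small $\delta$. Neither fix you propose resolves this: a polynomial that is $1\pm\delta$ on $[0,K/2]$ yet $\le\delta$ on $[K,L]$ faces the same linear-versus-logarithmic tension, and adjusting $q$ to absorb $1/D$ asks a degree-$O(\log(1/\delta))$ polynomial to track a function whose shape is dictated by the degree-$m$ polynomial $D$.

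The paper sidesteps this via a different mechanism. It invokes the Saff--Sch\"onhage--Varga theorem: there is a degree-$k$ polynomial $p_k^\star$ with $|e^{-x}-p_k^\star((1+x/k)^{-1})|\le c_1k\,2^{-k}$ uniformly on $[0,\infty)$. Taking $k=O(\log(1/\delta))$ absorbs the global decay of $e^{-x}$ once and for all; the remaining task is only to approximate the bounded function $(1+x/k)^{-1}$ on $[a,b]$, which a shifted Chebyshev does in degree $O\bigl(\sqrt{1+(b-a)/k}\,\log(1/\varepsilon)\bigr)$. Composing, and controlling the perturbation via the crude bound $\|p_k^\star\|_1\le(2k)^{k+1}$, gives the stated degree. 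The conceptual point is that the rational SSV form carries the exponential decay, so Chebyshev is only asked to approximate an easy local inverse---not to damp a polynomial that blows up on $[K,L]$.
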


\section{Organization of the Main Body of the Paper}
\noindent 
In Section \ref{sec:overview} we present a technical overview of our results and in Section \ref{sec:open} we discuss the  open problems arising from our work. 
The main body of the paper follows after it and is divided into three sections, each of which
have been written so that they can be read independently. Section
\ref{sec:balsep-main} contains a complete description and all the
proofs related to Theorem \ref{thm:bsmain} and Theorem
\ref{thm:bsmain2}. Section \ref{sec:exp} contains our results on
computing the matrix exponential; in particular the proofs of Theorems
\ref{thm:expRational}, \ref{thm:expPoly} and
\ref{thm:expRational2}. Section \ref{sec:poly} contains the proof of
our structural results on approximating $e^{-x}$ and the proof of
Theorem \ref{thm:exp-poly-approx}.

\section{Technical Overview of Our Results}
\label{sec:overview}

\subsection{Our Spectral Algorithm for Balanced Separator}
In this section, we provide an overview of Theorem \ref{thm:bsmain}. As pointed out in the introduction, our algorithm, {\alg}, when combined with the matrix-exponential-vector algorithm in Theorem \ref{thm:expPoly} results in a very simple and practical algorithm for {\BS}. 
We record the theorem here for completeness and then move on to the overview of {\alg} and its proof. The proof of this theorem appears in Section \ref{sec:gpanalysis}.
\begin{theorem}[Simple Spectral Algorithm for Balanced Separator]\label{thm:bsmain2}
Given an unweighted graph $G=(V,E)$, a balance parameter $b \in (0,\nfrac{1}{2}], \; b = \Omega(1)$ and a conductance value $\gamma \in (0,1),$ we give an algorithm, which runs in time $\tilde{O}(\nfrac{m}{\sqrt{\gamma}}),$ 
 that either outputs an $\Omega(b)$-balanced cut $S \subset V$ such that $\phi(S) \leq O(\sqrt{\gamma})$ or outputs a certificate that no $b$-balanced cut of conductance $\gamma$ exists. 
 \end{theorem}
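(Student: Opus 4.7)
The plan is to prove Theorem \ref{thm:bsmain2} by reusing the \alg algorithm that underlies Theorem \ref{thm:bsmain}, and simply swapping the matrix-exponential subroutine: instead of invoking the SDD-based algorithm of Theorem \ref{thm:expRational} (or its extension Theorem \ref{thm:expRational2}), we invoke the simple polynomial-based algorithm of Theorem \ref{thm:expPoly}. Since \alg is an \mmwu-style algorithm, its correctness, approximation ratio, and iteration count depend only on the output guarantee of the matrix-exponential oracle (an additive error of $\delta \|v\|$ with $\delta = 1/\poly(n)$), not on how that oracle is implemented. Thus the output guarantees of Theorem \ref{thm:bsmain2} --- either an $\Omega(b)$-balanced cut of conductance $O(\sqrt{\gamma})$, or a certificate that no $b$-balanced cut of conductance $\gamma$ exists --- are inherited verbatim from Theorem \ref{thm:bsmain}, and only the running time needs re-analysis.

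For the running time, the plan is to identify the matrix $A$ to which \alg applies $\exp(-A)$, verify it is symmetric PSD, and bound both $\|A\|$ and $t_A$. The matrix arising from the heat-kernel-based walk in \alg is (up to rescaling and projections) of the form $A = \eta L$ where $L$ is a normalized Laplacian-type matrix and $\eta = \tilde{O}(1/\gamma)$ is the step-size parameter dictated by the \mmwu scheme at target conductance $\gamma$. Since $L \succeq 0$ and $\|L\| = O(1)$, we get $A$ symmetric PSD with $\|A\| = \tilde{O}(1/\gamma)$, and since $L$ acts on a vector in $O(m)$ time (via a degree-normalized graph multiplication), we get $t_A = O(m)$. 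Plugging into Theorem \ref{thm:expPoly} with $\delta = 1/\poly(n)$, the cost of a single matrix-exponential-vector computation becomes
\[
O\bigl((t_A+n)\cdot k + k^2\bigr) \;=\; \tilde{O}\bigl(m\cdot \sqrt{1+\|A\|}\bigr) \;=\; \tilde{O}(m/\sqrt{\gamma}),
\]
where $k = \tilde{O}(\sqrt{1+\|A\|}) = \tilde{O}(1/\sqrt{\gamma})$.

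Finally, I would account for the outer loop of \alg. As shown in the analysis of Theorem \ref{thm:bsmain}, the algorithm performs only $\polylog(n)$ \mmwu iterations and a corresponding $\polylog(n)$ number of $\exp(-A)v$ computations, plus $\tilde{O}(m)$ additional work for rounding and cut-extraction (sampling from the walk distribution, sorting by embedding coordinate, sweep cuts, etc.). Multiplying through gives total running time $\tilde{O}(m/\sqrt{\gamma})$, matching the claimed bound.

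The step I expect to be the main obstacle is the bookkeeping around the PSD matrix to which the exponential is applied: the raw matrix in the \mmwu analysis typically involves projection onto a subspace orthogonal to the all-ones (or degree) vector, and may be modified by low-rank terms arising from the SDP dual. I need to verify that after these modifications, (i) PSD-ness is preserved (or can be restored by an additive shift absorbed into a global scalar factor $e^{-c}$), (ii) $t_A$ remains $O(m)$ (using that low-rank updates act on vectors in $O(n)$ time), and (iii) $\|A\|$ remains $\tilde{O}(1/\gamma)$. Once these three properties are confirmed, Theorem \ref{thm:expPoly} applies as a black box and the running time follows as above; correctness and approximation quality are unchanged from Theorem \ref{thm:bsmain}.
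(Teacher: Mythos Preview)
Your proposal is correct and matches the paper's approach essentially line for line: the paper's proof of Theorem~\ref{thm:bsmain2} says ``proved similarly'' to Theorem~\ref{thm:bsmain}, then swaps in the algorithm of Theorem~\ref{thm:expPoly}, bounds $\|A\| \le O(\tau) = \tilde{O}(1/\gamma)$, and concludes that a single exponential-vector product costs $\tilde{O}(m\sqrt{\tau}) = \tilde{O}(m/\sqrt{\gamma})$. Your anticipated obstacle about the projection and low-rank star-graph terms is also handled exactly as you suggest --- the paper notes (just before Theorem~\ref{thm:expRational2}) that multiplication by $H$ and $\Pi$ costs $O(n)$, so $t_A$ remains $O(m_M+n)$ and Theorem~\ref{thm:expPoly} applies directly.
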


\subsubsection{Comparison with the {\sc RLE} Algorithm}

Before we explain our algorithm, it is useful to review the {\sc RLE} algorithm. Recall that given $G,\gamma$ and $b,$ the goal of the {\BS} problem is to either certify that every $b$-balanced cut in $G$ has conductance at least $\gamma,$ or produce a $\Omega(b)$ balance cut in $G$ of conductance $O(\sqrt{\gamma}).$ 
{\sc RLE} does this by applying {\sc LE} iteratively to remove unbalanced cuts of conductance $O(\sqrt{\gamma})$ from $G.$
The iterations stop and the algorithm outputs a cut, when it either finds a
$(b/2)$-balanced cut of conductance $O(\sqrt{\gamma})$ or the union of
all unbalanced cuts found so far is $(b/2)$-balanced. Otherwise, the
algorithm terminates when the residual graph has spectral gap at least
$2 \gamma.$ In the latter case, any $b$-balanced cut must have at
least half of its volume lie within the final residual graph, and hence,
has conductance at least $\gamma$ in the original graph.
Unfortunately, this algorithm may require $\Omega(n)$ iterations in the worst case. For instance, this is true if the graph $G$ consists of $\Omega(n)$ components loosely connected to an expander-like core through cuts of low conductance. This example highlights the weakness of the {\sc RLE} approach: the second eigenvector of the Laplacian may only be correlated with one low-conductance cut and fail to capture at all even cuts of slightly larger conductance. This limitation makes it impossible for {\sc RLE} to make significant progress at any iteration. We now proceed to show how to fix RLE and present our algorithm at a high level.
\subsubsection{High-Level Idea of Our Algorithm}
Rather than working with the vertex embedding given by the
eigenvector, at iteration $t,$ we will consider the multi-dimensional
vector embedding represented by the transition probability matrix
$P^{(t)}$ of a certain random walk over the graph. We refer to this kind of walk as
an {\em Accelerated Heat Kernel Walk} ({\sc AHK}) and we describe it
formally in Section \ref{sec:overview-alg}.
 At each iteration $t=1,2,\ldots,$ the current {\sc AHK} walk is
 simulated for $\tau = \nfrac{\log n}{\gamma}$ time to obtain
 $P^{(t)}.$ For any $t,$ this choice of $\tau$ ensures that the walk must mix
 across all cuts of conductance much larger than $\gamma,$ hence
 emphasizing cuts of the desired conductance in the embedding
 $P^{(t)}.$
The embedding obtained in this way, can be seen as a weighted combination of multiple eigenvectors, with eigenvectors of low eigenvalue contributing more weight. 
Hence, the resulting embedding captures not only the cut corresponding to the second eigenvector, 
but also cuts associated with other eigenvectors of eigenvalue close to $\gamma.$
This enables our algorithm to potentially find many different low-conductance unbalanced cuts at once.
Moreover, the random walk matrix is more stable than the eigenvector under small perturbations of the graph, making it possible to precisely quantify our 
progress from one iteration to the next as a function of the mixing of the current random walk.
For technical reasons, we are unable to show that we make sufficient
progress if we just remove the unbalanced cuts found, as in {\sc
 RLE}. Instead, if we find a low-conductance unbalanced cut $S^{(t)}$
at iteration $t,$ we perform a {\em soft} removal, by modifying the
current walk $P^{(t)}$ to accelerate the convergence to stationarity on the set
$S^{(t)}.$ This ensures that a different cut is found using
$P^{(t+1)}$ in the next iteration. 
In particular, the {\sc AHK} walks we consider throughout the execution of the algorithm will behave like the standard heat kernel on most of the graph, except on a small unbalanced subset of vertices, where their convergence will be accelerated. We now present our algorithm in more detail. We first recall some definitions.

\subsubsection{Definitions}
 $G=(V,E)$ is the unweighted
instance graph, where $V = [n]$ and $\card{E}=m.$ We let $d \in \R^n$
be the degree vector of $G,$ \emph{i.e.}, $d_i$ is the degree of vertex $i.$
For a subset $ S \subseteq V,$ we define the edge volume as $\vol(S)
\defeq \sum_{i \in S} d_i.$ The total volume of $G$ is $2m.$ We denote
by $K_V$ the complete graph with weight $ \nfrac{d_id_j}{2m}$ between
every pair $i,j \in V.$ For $i \in V,$ $S_i$ is the star graph rooted
at $i$, with edge weight of $ \nfrac{d_id_j}{2m}$ between $i$
and $j,$ for all $j \in V.$ For an undirected graph $H=(V, E_H)$, let
$A(H)$ denote the adjacency matrix of $H$, and $D(H)$ the diagonal
matrix of degrees of $H$. The (combinatorial) Laplacian of $H$ is
defined as $L(H) \defeq D(H) - A(H)$.
By $D$ and $L$, we denote $ D(G)$ and $L(G)$ respectively for the input graph $G.$ 
For two matrices $A,B$ of equal dimensions, let $A \bullet B \defeq \Tr(A^\top B) = \sum_{ij} A_{ij}\cdot B_{ij}.$ ${\bf 0}$ denotes the all $0$s vector.

\subsubsection{The AHK Random Walk and its Mixing}
\label{sec:overview-alg}
We will be interested in continuous-time random walk processes over
$V$ that take into account the edge structure of $G$. The simplest
such process is the {\it heat kernel} process, which has already found
many applications in graph partitioning, particularly in the work of
Chung \cite{ChungExp}, and in many of the combinatorial algorithms for
solving graph partitioning {\SDP}s \cite{Lthesis}. The heat kernel is
defined as the continuous-time process having transition rate matrix
$-LD^{-1}$ and, hence, at time $\tau$ the probability-transition
matrix becomes $\exp({-\tau LD^{-1}}).$
The AHK random walk process has an additional parameter $\beta,$ which is a non-negative vector in $\R^n.$ The
transition rate matrix is then defined to be $- (L + \sum_{i \in V}
\beta_i L(S_i))D^{-1}.$ The effect of adding the star terms to the transition rate
matrix is that of accelerating the convergence of the process to
stationarity at vertices $i$ with large value of $\beta_i$, since a large
fraction of the probability mass that leaves these vertices is
distributed uniformly over the edges.
We denote by $P_{\tau}(\beta)$ the probability-transition $
\exp({-\tau(L + \sum_{i \in V} \beta_i L(S_i)) D^{-{1} }}).$ As is
useful in the study of spectral properties of non-regular graphs, we
will study $D^{-1} P_{\tau} (\beta).$ This matrix describes the
probability distribution over the edges of $G$ and has the advantage
of being symmetric and PSD: $D^{-1} P_{\tau}(\beta) =
D^{-\nfrac{1}{2}} \exp({-\tau D^{-\nfrac{1}{2}} (L + \sum_{i \in V}
 \beta_i L(S_i)) D^{-\nfrac{1}{2} }}) D^{-\nfrac{1}{2}}.$ In particular,
$D^{-1}P_{\tau}(\beta)$ can be seen as the Gram
matrix of the embedding given by the columns of its square root, which
in this case is just $D^{-\nfrac{1}{2}}P_{\nfrac{\tau}{2}}(\beta).$
This property will enable us to use geometric \SDP-rounding techniques
to analyze AHK walks.
Throughout the algorithm, we keep track of the mixing of $P^{(t)}$ by considering the {\it total deviation} of $P^{(t)}$ from stationarity, \emph{i.e.}, the sum over all vertices $i \in V$ of the $\ell_2^2$-distance from the stationary distribution of $P^{(t)} e_i.$
Here, $e_i$ denotes the vector in $\mathbb{R}^n$ which is $1$ at the
$i^\text{th}$ coordinate and $0$ elsewhere. We denote the contribution of
vertex $i$ to this distance by $\Psi(P_{\tau}(\beta), i).$
Similarly, the total deviation from stationarity over a subset $S
\subseteq V,$ is given by, $\Psi(P_{\tau}(\beta), S) \defeq \sum_{i \in S} \Psi(P_{\tau}(\beta), i).$
$\Psi(P_{\tau}(\beta), V )$ will play the role of potential function
in our algorithm. Moreover, it follows from the definition of
$\Psi$ that $\Psi(P_{\tau}(\beta), V )= L(K_V) \bullet
D^{-1}P_{\tau}(\beta).$ Finally, to connect to the high-level idea described earlier, for each iteration $t,$
we use $P^{(t)} \defeq P_\tau(\beta^{(t)})$ for $\tau = \nfrac{\log
 n}{\gamma}$ with $\beta^{(t)}\approx \nfrac{1}{\tau} \sum_{j=1}^{t-1} \sum_{i \in S^{(j)}} e_i$ and starting with $\beta^{(1)} = {\bf 0}.$
We now provide a slightly more detailed description of our algorithm from Theorem \ref{thm:bsmain} (called {\alg}) and its analysis. For reference, {\alg} appears in Figure \ref{fig:algorithm}.

\subsubsection{Our Algorithm and its Analysis}
The algorithm proceeds as follows: At iteration $t,$ it checks if the
total deviation of $P^{(t)},$ \emph{i.e.}, $\Psi(P^{(t)},V),$ is sufficiently
small (\emph{i.e.}, $P^{(t)}$ is mixing). In this case, we can guarantee that
no balanced cut of conductance less than $\gamma$ exists in $G.$ 
In more formal language, it appears below.

\vspace{2mm}
{\em {\bf (A)} (see Lemma \ref{lem:pot})
Let $S = \cup_{i=1}^t S^{(i)}.$
For any $t \geq 1,$ if $\; \Psi(P^{(t)}, V) \leq \nfrac{1}{\poly(n)},$ and $\vol(S) \le \nfrac{b}{100} \cdot 2m$, then
$
L + \nfrac{1}{\tau}\sum_{i \in S}  L(S_i) \succeq \Omega(\gamma) \cdot L(K_V).
$
Moreover, no $b$-balanced cut of conductance less than $\gamma$
exists in $G.$}
\vspace{2mm}

\noindent
This result has a simple explanation in terms of the {\sc AHK} random walk $P^{(t)}.$ Notice that $P^{(t)}$ is accelerated only on a small unbalanced set $S.$ Hence, if a balanced cut of conductance less than $\gamma$ existed, its convergence could not be greatly helped by the acceleration over $S.$ Thus, if $P^{(t)}$ is still mixing very well, no such balanced cut can exist. 
On the other hand, if $P^{(t)}$ has high total deviation (\emph{i.e.}, the
walk has not yet mixed), then, intuitively, some cut of low conductance exists in
$G.$ Formally, we show that, the embedding $\{v^{(t)}_i\}_{i \in V}$ has low quadratic form with respect to the Laplacian of $G.$ 

\vspace{2mm}
{\em {\bf (B)} (see Lemma \ref{lem:obj}) If $\Psi(P^{(t)}, V) \geq
 \frac{1}{\poly(n)},$ then $L \bullet D^{-1}P^{(t)} \leq O(\gamma)
 L(K_V) \bullet D^{-1}P^{(t)}.$ }
\vspace{2mm}

\noindent
From an \SDP-rounding perspective, this means that the embedding $P^{(t)}$ can
be used to recover a cut $S^{(t)}$ of conductance $O(\sqrt{\gamma}),$
using the \SDP-rounding techniques from OV. If $S^{(t)}$ or $\cup_{i=1}^t S^{(i)}$ is 
$\Omega(b)$-balanced, then we output that cut and terminate. Otherwise,
$S^{(t)}$ is unbalanced. In this case, we accelerate the convergence
from $S^{(t)}$ in the current {\sc AHK} walk by increasing
$(\beta^{(t)})_i$ for every $i \in S^{(t)}$ to give $\beta^{(t+1)},$
and using $\beta^{(t+1)}$ to produce $P^{(t+1)}$ and move on to the
next iteration.

The analysis of our algorithm bounds the number of iterations by using
the total deviation of $P^{(t)}$ from stationarity as a potential
function.
Using the techniques of OV, it is possible to show that, whenever an
unbalanced cut $S^{(t)}$ is found, most of the deviation of $P^{(t)}$
can be attributed to $S^{(t)}.$ In words, we can think of $ S^{(t)}$
as the main reason why $P^{(t)}$ is not mixing. Formally,

\vspace{1mm}
{\em {\bf (C)} (see Corollary \ref{cor:unbalanced}) At iteration $t$, if
 $\Psi(P^{(t)}, V) \geq \nfrac{1}{\poly(n)}$ and $S^{(t)}$ is not
 $\nfrac{b}{100}$-balanced, then w.h.p. $\Psi(P^{(t)}, S^{(t)}) \geq
 \nfrac{1}{2} \cdot \Psi(P^{(t)}, V).$}
\vspace{1mm}

\noindent
Moreover, we can show that accelerating the convergence of the walk from $S^{(t)}$ has the effect of removing from $P^{(t+1)}$ 
a large fraction of the deviation due to $S^{(t)}.$
The proof is a simple application of the Golden-Thompson inequality~\cite{Bhatia} and mirrors the main step in the \mmwu analysis.
Hence, we can show the total deviation of $P^{(t+1)}$ is just a constant fraction of that of $P^{(t)}.$

\vspace{1mm}
{\em {\bf (D)} (see Theorem \ref{thm:reduction})
If $\; \Psi(P^{(t)}, V) \geq \nfrac{1}{\poly(n)}$ and $S^{(t)}$ in not $\nfrac{b}{100}$ balanced, then w.h.p
\[
\Psi(P^{(t+1)}, V) \leq \Psi(P^{(t)}, V) - \nfrac{1}{3} \cdot \Psi(P^{(t)}, S^{(t)}) \leq \nfrac{5}{6} \cdot \Psi(P^{(t)}, V).
\]}
This potential reduction at every iteration allows us to argue that after $T=O(\log n)$ iterations, $P^{(T+1)}$ must have a small deviation from stationarity and yields a certificate that no balanced cut of conductance less than $\gamma$ exists in $G.$
Finally, to ensure that each iteration requires only $\tilde{O}(m)$ time,
we use the Johnson-Lindenstrauss Lemma to compute a $O(\log n)$-dimensional approximation to the embedding $P^{(t)}.$ 
To compute this approximation, we rely on the results on approximating
the matrix exponential discussed in Section \ref{sec:overview-exp}.

\subsubsection{Exponential Embeddings of Graph and Proof Ideas}
Now, we illustrate the usefulness of the exponential embedding
obtained from the AHK random walk, which is key to the proofs for (A)
and (B) above. We suppress some details pertinent to our
algorithm. Consider the AHK walk in the first iteration, with
$\beta^{(1)} = 0.$ Letting $C \defeq D^{-\nfrac{1}{2}} L
D^{-\nfrac{1}{2} },$ $P \defeq P^{(1)}=\exp(-\tau C).$ For the proof
of (A), it follows that if $\Psi(P,V) \leq \nfrac{1}{\poly (n)},$ then
by definition $ L(K_V) \bullet D^{-1}P= \Tr(\exp(-\tau C))-1 \leq
\nfrac{1}{\poly (n)}.$ Hence, $\lambda_2(C) \gtrapprox \nfrac{\log
 n}{\tau} = \gamma,$ by the choice of $\tau.$ This lower bound on the
second eigenvalue certifies that $G$ has no cut of conductance at most
$\gamma.$ For our algorithm, at iteration $t$, when $\beta^{(t)} \neq
0,$ we will ensure that the Laplacians of the stars have small enough
weight ($\approx \nfrac{1}{\tau}$) and small support
($\vol(\cup_{i=1}^t S^{(i)}) \leq \nfrac{b}{100}\cdot 2m$) for the
argument above to still yield a lower bound of $\gamma$ on the
conductance of any $b$-balanced cut.

For (B), observe that $L \bullet D^{-1}P = C \bullet \exp(-\tau C) =
\sum_i \lambda_i e^{-\tau \lambda_i},$ where $\lambda_i,$ for
$i=1,\ldots,n,$ are the eigenvalues of $C.$
For eigenvalues larger than
$2\gamma,$ $e^{-\tau \lambda_i}$ is bounded by $\nfrac{1}{\poly(n)}$
for $\tau = \nfrac{\log n}{\gamma}$. Since eigenvalues of a normalized
Laplacian are $O(1),$ the contribution to the sum above by eigenvalues
$\lambda_i > 2 \gamma$ is at most $\nfrac{1}{\poly(n)}$ overall.
This can be shown to be a small fraction of the total sum.
Hence, the quantity $L \bullet D^{-1}P$ is mostly determined by the
eigenvalues of value less than $2\gamma$ and we have $L \bullet
D^{-1}P \leq O(\gamma) \cdot L(K_V) \bullet D^{-1} P.$
The same analysis goes through when $t > 1.$

\subsection{Our Algorithms to Compute an Approximation to $\exp(-A)v$}
\label{sec:overview-exp}
In this section, we give an overview of the algorithms in Theorem
\ref{thm:expRational} and Theorem \ref{thm:expPoly} and their
proofs. The algorithm for Theorem~\ref{thm:expRational-psd} is very
similar to the one for Theorem~\ref{thm:expRational} and we give
the details in Section~\ref{sec:expRational-psd:proof}. A few
quick definitions: A matrix $M$ is called \emph{Upper Hessenberg} if, $(M)_{ij}
= 0$ for $i > j+1.$ $M$ is called \emph{tridiagonal} if $M_{ij} = 0$
for $i > j+1$ and for $j > i+1.$ Let $\lambda_1(M)$ and $\lambda_n(M)$
denote the largest and smallest eigenvalues of $M$ respectively.

As we mention in the introduction, the matrices that we need to
exponentiate for the {\BS} algorithm are no longer sparse or SDD.
Thus, Theorem \ref{thm:expRational} is insufficient for our
application. Fortunately, the following theorem suffices and its proof
is not very different from that of Theorem \ref{thm:expRational}, which
is explained below. Its proof appears in Section~\ref{sec:expRational2:proof}.
\begin{theorem}[Matrix Exponential Computation Beyond SDD]
\label{thm:expRational2}
Given a vector $v,$ a parameter $\delta \le 1$ and an $n \times n$
symmetric matrix $A=\Pi H M H \Pi$ where $M$ is SDD, $H$ is a diagonal
matrix with strictly positive entries and $\Pi$ is a rank $(n-1)$
projection matrix, $\Pi \defeq I-ww^\top$ ($w$ is explicitly known and
$\norm{w}=1$), there is an algorithm that computes a vector $u$ such
that $\norm{\exp(-A)v-u} \le \delta\norm{v}$ in time
$\tilde{O}((m_M+n)\log (2+\norm{H M H})).$ The tilde hides $\poly(\log
n)$ and $\poly(\log \nfrac{1}{\delta})$ factors.
\end{theorem}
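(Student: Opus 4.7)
The proof will follow the blueprint of Theorem~\ref{thm:expRational} with only a modified inner loop. Recall that in Theorem~\ref{thm:expRational}, the Saff-Schonhage-Varga rational approximation combined with an Eshof-Hochbruck-style Lanczos iteration reduces the computation of $\exp(-A)v$ to $\tilde O(\log(2+\|A\|))$ approximate solves of systems $(I+\alpha A)x = y$ for various positive~$\alpha$. Since $A = \Pi H M H \Pi$ is symmetric PSD with $\|A\| \le \|HMH\|$, this outer rational-approximation loop carries over unchanged and still uses only $\tilde O(\log(2+\|HMH\|))$ iterations; its only interaction with $A$ is through $A$-vector products, each of which costs $O(m_M+n)$ because it is one sparse $M$-multiply, two diagonal $H$-multiplies, and two rank-one $\Pi$-multiplies. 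What remains is to build, in $\tilde O(m_M+n)$ time, an approximate inverse for $I+\alpha A$: here $I+\alpha A$ is no longer SDD, so the Spielman-Teng solver cannot be invoked on it directly as in Theorem~\ref{thm:expRational}.

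To construct such an inverse, first peel off the projection via Sherman-Morrison-Woodbury. Expanding $\Pi = I - ww^\top$ yields
\[
\Pi HMH\Pi \;=\; HMH \;-\; wz^\top \;-\; zw^\top \;+\; s\,ww^\top,
\qquad z \defeq HMHw,\ \ s \defeq w^\top HMH w,
\]
so $I + \alpha A = B + UCU^\top$ with $B \defeq I + \alpha HMH$, $U \defeq [\,w\ \ z\,] \in \R^{n\times 2}$, and an explicit $2\times 2$ matrix $C$ depending on $\alpha$ and $s$. Sherman-Morrison-Woodbury then gives
\[
(I+\alpha A)^{-1} y \;=\; B^{-1} y \;-\; B^{-1} U \,\bigl(C^{-1} + U^\top B^{-1} U\bigr)^{-1} U^\top B^{-1} y,
\]
reducing each $(I+\alpha A)$-inversion to a constant number of $B$-inversions plus $O(n)$ extra arithmetic. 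To invert $B$, I will use the algebraic identity $I + \alpha HMH = H\,(H^{-2} + \alpha M)\,H$, which yields $B^{-1} = H^{-1}(H^{-2}+\alpha M)^{-1} H^{-1}$. Now $H^{-2} + \alpha M$ is SDD since $H^{-2}$ is a strictly positive diagonal and SDD is preserved under adding non-negative diagonals, and it has $O(m_M + n)$ non-zero entries; hence a single Spielman-Teng (or Koutis, Miller and Peng) solve applies $B^{-1}$ to any vector in $\tilde O(m_M+n)$ time to inverse-polynomial precision.

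The main obstacle, as in Theorem~\ref{thm:expRational}, is error control. The SDD solver returns approximate $B^{-1}$-applications with small multiplicative error, and this error must be pushed through two layers: (i) the Sherman-Morrison-Woodbury correction, which is the new layer absent from Theorem~\ref{thm:expRational}; and (ii) the $\tilde O(\log(2+\|HMH\|))$ iterations of the rational-approximation-Lanczos outer loop. For (i), one must show that the $2\times 2$ matrix $C^{-1} + U^\top B^{-1} U$ inverted by SMW is uniformly well-conditioned across all $\alpha$'s arising from the rational approximation; this will follow from $B \succeq I$, the explicit form of $C$, and the fact that $(I+\alpha A)^{-1}$ itself is well-behaved (its eigenvalues lie in $[\,1/(1+\alpha\|HMH\|),\,1\,]$), so an $\eps$-multiplicative error in $B^{-1}$ translates to at most an $O(\eps)$-multiplicative error in the approximate $(I+\alpha A)^{-1}$. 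For (ii), the bookkeeping from the proof of Theorem~\ref{thm:expRational} applies with only cosmetic modification, and setting the SDD solver precision to $1/\poly(n, \|HMH\|, 1/\delta)$ suffices. Putting (i) and (ii) together yields the claimed $\tilde O\bigl((m_M + n)\log(2+\|HMH\|)\bigr)$ running time.
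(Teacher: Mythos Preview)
Your plan is correct and follows the same high-level architecture as the paper: reuse the \textsc{ExpRational} outer loop of Theorem~\ref{thm:expRational} verbatim, and replace only the $\Invert_A$ subroutine with one that handles the $\Pi HMH\Pi$ structure via a Sherman--Morrison-type reduction to the SDD system $H^{-2}+\alpha M$. The factorization $I+\alpha HMH = H(H^{-2}+\alpha M)H$ is exactly what the paper uses.

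The one genuine difference is in how the projection $\Pi$ is peeled off. You expand $\Pi HMH\Pi$ as a rank-$2$ additive perturbation of $HMH$ and invoke the full Woodbury identity with a $2\times 2$ middle matrix. The paper instead exploits the special feature that $w$ is an eigenvector of $I+\alpha A$ with eigenvalue~$1$: it splits $y$ into its $w$-component (handled trivially) and its $\Pi$-component $z$, observes that on $\{w\}^\perp$ one has $(I+\alpha\Pi HMH\Pi)^{-1}z=(I+\alpha\Pi HMH)^{-1}z$, and then applies a \emph{single} rank-$1$ Sherman--Morrison. The payoff is that the scalar denominator $1+w^\top M_1(I+M_1)^{-1}w$ is manifestly $\ge 1$, so the error analysis is immediate. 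In your rank-$2$ route, the $2\times 2$ matrix $C^{-1}+U^\top B^{-1}U$ can have condition number polynomial in $\|HMH\|$ (e.g.\ when $z=HMHw$ is close to a multiple of $w$), so your assertion that an $\eps$-error in $B^{-1}$ yields an $O(\eps)$-error in $(I+\alpha A)^{-1}$ is too optimistic as stated---the blowup is $O(\mathrm{poly}(\|HMH\|)\cdot\eps)$. This is harmless for the final bound, since you already set the SDD precision to $1/\mathrm{poly}(n,\|HMH\|,1/\delta)$, but it is worth tightening that sentence. The paper's eigenvector trick avoids this bookkeeping entirely.
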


\noindent
Recall from Section \ref{sec:overview-alg} that our algorithm for
{\BS} requires us to compute $\exp(-A)v$ for a matrix $A$ of the form
$D^{-\nicefrac{1}{2}}(L + \sum_i \beta_i L(S_i))D^{-\nicefrac{1}{2}},$
where $\beta_i \ge 0.$ We first note that if we let $\Pi \defeq I -
\nfrac{1}{2m}\cdot (D^{\nicefrac{1}{2}}{ 1})(D^{\nicefrac{1}{2}}{
 1})^\top,$ the projection onto the space orthogonal to
$\nfrac{1}{\sqrt{2m}}\cdot D^{\nicefrac{1}{2}}{1},$ then, for each
$i,$ $D^{-\nicefrac{1}{2}} L(S_i) D^{-\nicefrac{1}{2}} = \Pi
(\nicefrac{d_i}{2m}\cdot I + e_i e_i^\top) \Pi.$ Since
$D^{\nicefrac{1}{2}}{1}$ is an eigenvector of
$D^{-\nicefrac{1}{2}}LD^{-\nicefrac{1}{2}},$ we have, $\Pi
D^{-\nicefrac{1}{2}}LD^{-\nicefrac{1}{2}} \Pi =
D^{-\nicefrac{1}{2}}LD^{-\nicefrac{1}{2}}.$ Thus, \[A = \Pi
D^{-\nicefrac{1}{2}}L D^{-\nicefrac{1}{2}} \Pi + \sum_i \beta_i \Pi
(\nicefrac{d_i}{2m}\cdot I + e_i e_i^\top)\Pi =\Pi
D^{-\nicefrac{1}{2}}(L + \sum_i \beta_i \nicefrac{d_i}{2m}\cdot D +
\sum_i \beta_i d_i\cdot e_i e_i^\top) D^{-\nicefrac{1}{2}}\Pi.\] This
is of the form $\Pi H MH \Pi,$ where $H \defeq D^{-\nicefrac{1}{2}}$
is diagonal and $M$ is SDD.
It is worth noting that since $A$ itself may be neither sparse nor
SDD, we cannot apply the Spielman-Teng SDD solver to approximate
$(I+\alpha A)^{-1}.$ The proof of the above theorem uses the Sherman-Morrison formula to extend the SDD solver
to fit our requirement. Moreover, to obtain a version of Theorem \ref{thm:expPoly} for such matrices, we do not have to do anything additional since multiplication by $H$ and $\Pi$
take $O(n)$ steps and hence, $t_A$ is still $O(m_M + n).$ The details appear in Section \ref{sec:expRational2:proof}. Finally, note that in our application, $\|HMH\|$ is ${\rm poly}(n).$ 

We now give an overview of the proofs of Theorem \ref{thm:expRational}
and Theorem \ref{thm:expPoly}. First, we explain a general method known as the Lanczos method, which is pervasive in numerical linear algebra. We then show how suitable adaptations of this can be combined with (old and new) structural results in approximation theory to obtain our results.

\subsubsection{Lanczos Method} 
Given an $n \times n$ symmetric PSD matrix $B$ and a function
$f:\mathbb{R} \mapsto \mathbb{R},$ we can define $f(B)$ as follows:
Let $u_1,\ldots,u_n$ be eigenvectors of $B$ with eigenvalues
$\lambda_1,\ldots,\lambda_n.$ Define $f(B) \defeq \sum_{i}
f(\lambda_i)u_iu_i^\top.$ We will reduce both our algorithms to
computing $f(B)v$ for a given vector $v,$ albeit with different $f$'s
and $B$'s. We point out the $f$'s and $B$'s required for Theorems
\ref{thm:expRational} and \ref{thm:expPoly} in Sections
\ref{sec:approx-rat} and \ref{sec:approx-poly} respectively.

Since exact computation of $f(B)$ usually requires diagonalization of
$B,$ which could take as much as $O(n^3)$ time (see~\cite{eigendecomp}),
we seek an approximation to $f(B)v$. The Lanczos method allows us to
do exactly that: It looks for an approximation to $f(B)v$ of the form
$p(B)v$, where $p$ is a polynomial of small degree, say $k$. Before we
describe how, we note that it computes this approximation in roughly
$O((t_B + n)k)$ time plus the time it takes to compute $f(\cdot)$ on a
$(k+1) \times (k+1) $ tridiagonal matrix, which can often be upper bounded by
$O(k^2)$ (see~\cite{eigendecomp}). Hence, the time is reduced to
$O((t_B+n)k+k^2).$ What one has lost in this process is accuracy: The
candidate vector $u$ output by the Lanczos method, is now only an
approximation to $f(B)v.$ The quality of approximation, or $\|f(B)v
-u\|,$ can be upper bounded by the {\em uniform error} of the best
degree $k$ polynomial approximating $f$ in the interval
$[\lambda_n(B),\lambda_1(B)].$ Roughly, $ \|f(B)v-u\| \approx (\min_{p_k
 \in \Sigma_k} \sup_{x \in [\lambda_1(B),\lambda_n(B)]} |f(x) -
p_k(x)|).$ Here $\Sigma_k$ is the collection of all real polynomials
of degree at most $k.$ Surprisingly, one does not need to know the
best polynomial and proving {\em existence} of good polynomials is
sufficient. By increasing $k,$ one can reduce this error and, indeed,
if one lets $k=n,$ there is no error. Thus, the task is reduced to
proving {existence} of low degree polynomials that approximate $f$
within the error tolerable for the applications.

\subsubsection*{\it Computing the Best Polynomial Approximation.} 
Now, we describe in detail, the Lanczos method and how it achieves the
error guarantee claimed above. Notice that for any
polynomial $p$ of degree at most $k,$ the vector $p(B)v$ lies in
$\calK \defeq \Span\{v,Bv,\ldots,B^kv\}$ -- called the \emph{Krylov
 subspace}.
The
Lanczos method iteratively creates an orthonormal basis
$\{v_i\}_{i=0}^{k}$ for $\calK$, such that $\forall\ i \le k,\
\Span\{v_0,\ldots,v_i\} = \Span\{v,\ldots,B^iv\}.$ Let $V_k$ be the $n
\times (k+1)$ matrix with $\{v_i\}_{i=0}^k$ as its columns. Thus,
$V_kV_k^\top $ denotes the projection onto
the Krylov subspace. We let $T_k$ be the $(k+1)\times (k+1)$ matrix
expressing $B$ as an operator restricted to $\calK$ in the basis
$\{v_i\}_{i=0}^k$, \emph{i.e.}, $T_k \defeq V_k^\top B V_k.$ Note that this is
not just a change of basis, since vectors in $\calK$ can be mapped by
$B$ to vectors outside $\calK$. Now, since $v,Bv \in \calK$, we must
have $Bv = (V_k V^\top_k) B (V_k V^\top_k) v = V_k (V^\top_k B V_k
)V_k^\top v = V_kT_kV_k^\top v.$ Iterating this argument, we get that
for all $i \le k$, $B^iv = V_kT_k^iV_k^\top v,$ and hence, by
linearity, $p(B)v = V_kp(T_k)V_k^\top v,$ for any polynomial $p$ of degree
at most $k.$

Now, a natural approximation for $f(B)v$ is $V_kf(T_k)V_k^\top v$. 
Writing $r_k(x) \defeq f(x)-p_k(x),$ where $p_k$ is any degree $k$
approximation to $f(x)$, the error in the approximation is
$f(B)v-V_{k}f(T_{k})V_{k}^{\top}v =r_k(B)v-V_kr_k(T_k)V_k^\top v,$
{\em for any choice of $p_{k}.$} Hence, the norm of the
error vector is at most $(\norm{r_k(B)} + \norm{r_k(T_k)})\norm{v},$
which is bounded by the value of $r_k$ on the eigenvalues of $B$
(eigenvalues of $T_k$ are a subset of eigenvalues of $B$). More
precisely, the norm of the error is bounded by $2\norm{v} \cdot
\max_{\lambda \in \text{Spectrum}(B)}| f(\lambda)-p_{k}(\lambda)|.$
Minimizing over $p_k$ gives the error bound claimed above.
Note that we do not explicitly need the approximating polynomial. It
suffices to prove that there exists a degree $k$ polynomial that
uniformly approximates $f$ well on an interval containing the spectrum
of $B$ and $T_k.$

If we construct the basis iteratively as above, $Bv_j \in
\Span\{v_0,\ldots,v_{j+1}\}$ by construction, and if $i > j+1,$ $v_i$
is orthogonal to this subspace and hence $v_i^\top (Bv_j)=0$. Thus,
$T_k$ is Upper Hessenberg. Moreover, if $B$ is symmetric, $v_j^\top
(Bv_i) = v_i^\top (Bv_j),$ and hence $T_k$ is symmetric and
tridiagonal. This means that while constructing the basis, at step
$i+1$, it needs to orthonormalize $Bv_i$ only w.r.t. $v_{i-1}$ and
$v_i$. Thus the total time required is $O((t_B + n)k)$, plus the
time required for the computation of $f(T_k)$, which can typically be
bounded by $O(k^2)$ for a tridiagonal matrix (using \cite{eigendecomp}).
This completes an overview of the Lanczos method. 
 The
{\lanczos} procedure described in Figure~\ref{fig:lanczos} in the main
body, implements the Lanczos method.
We now move on to describing how we apply it to obtain our two algorithms.

\subsubsection{Approximating $\exp(-A)v$ Using a Rational Approximation to $e^{-x}$}\label{sec:approx-rat}
\subsubsection*{\it Our Algorithm.} The starting point of the
algorithm that underlies Theorem \ref{thm:expRational} is a rather
surprising result by Saff, Sch\"{o}nhage and Varga (SSV)~\cite{SSV},
which says that for any integer $k,$ there exists a degree $k$
polynomial $p_k^\star$ such that, $p_k^\star((1+\nfrac{x}{k})^{-1})$
approximates $e^{-x}$ up to an error of $O(k\cdot 2^{-k})$
over the interval $[0,\infty)$
(Theorem~\ref{thm:rational-approximations},
Corollary~\ref{cor:pk-star}). Then, to compute $\exp(-A)v,$ one could
apply the Lanczos method with $B \defeq (I + \nfrac{A}{k})^{-1}$ and
$f(x) \defeq e^{k(1-\nfrac{1}{x})}.$ Essentially, this was the method
suggested by Eshof and Hochbruck \cite{EH}. The strong approximation
guarantee of the SSV result along with the guarantee of the Lanczos
method from the previous section, would imply that the order of the
Krylov subspace for $B$ required would be roughly $\log
\nfrac{1}{\delta},$ and hence, independent of $\|A\|.$ The running
time is then dominated by the computation $Bv=(I +
\nfrac{A}{k})^{-1}v.$

EH note that the computation of exact matrix inverse is a costly
operation ($O(n^3)$ time in general) and all known faster methods for
inverse computation incur some error. They suggest using the Lanczos
method with faster iterative methods, e.g. Conjugate Gradient, for
computing the inverse (or rather the product of the inverse with a
given vector) as a {\em heuristic}. They make no attempt to give a
theoretical justification of why approximate computation
suffices. Also note that, even if the computation was error-free, a
method such as Conjugate Gradient will have running time which varies
with $\sqrt{\nfrac{\lambda_1(A)}{\lambda_n(A)}}$ in general. Thus, the
EH method falls substantially short of resolving the hypothesis
mentioned in the introduction.

To be able to prove Theorem \ref{thm:expRational} using the SSV
guarantee, we have to adapt the Lanczos method in several ways, and
hence, deviate from the method suggested by EH: 1) EH construct $T_k$
as a tridiagonal matrix as Lanczos method suggests, but since the
computation is no longer exact, the basis $\{v_i\}_{i=0}^k$ is no
longer guaranteed to be orthonormal. As a result, the proofs of the
Lanczos method break down. Our algorithm, instead, builds an
orthonormal basis, which means that $T_k$ becomes an Upper Hessenberg
matrix instead of tridiagonal and we need to compute $k^2$ dot
products in order to compute $T_k.$ 2) With $T_k$ being asymmetric,
several nice spectral properties are lost, \emph{e.g.} real
eigenvalues and an orthogonal set of eigenvectors. We overcome this
fact by symmetrizing $T_k$ to construct $\widehat{T}_k = \frac{T_k +
 T_k^\top}{2}$ and computing our approximation with $\widehat{T}_k.$
This permits us to bound the quality of a polynomial approximation
applied to $\widehat{T}_k$ by the behavior of the polynomial on the eigenvalues
of $\widehat{T}_k$. 3) Our analysis is based on the SSV approximation
result, which is better than the variant proved and
used by EH. Moreover, for their \emph{shifting} technique, which is
the source of the $\norm{\exp(-A)}$ factor in the hypothesis, the
given proof in EH is incorrect and it is not clear if the given bound
could be achieved even under exact computation\footnote{EH show the
 existence of degree $k$ polynomials in $(1+\nu x)^{-1}$ for any
 \emph{constant} $\nu \in (0,1),$ that approximate $e^{-x}$ up to an
 error of $\exp(\nfrac{1}{2\nu}-\Theta(\sqrt{k(\nu^{-1}-1)})).$ In
 order to deduce the claimed hypothesis, it needs to be used for $\nu
 \approx \nfrac{1}{\lambda_{n}(A)},$ in which case, there is a factor of
 $e^{\lambda_n(A)}$ in the error, which could be huge.}. 4) Most
importantly, since $A$ is SDD, we are able to employ the 
Spielman-Teng solver (Theorem \ref{thm:Spielman-Teng}) to approximate
$(I+\nfrac{A}{k})^{-1}v$. This procedure, called {\expRational}, has
been described in Figure~\ref{fig:expRational} in the main body.

\subsubsection*{\it Error Analysis.}
To complete the proof of Theorem \ref{thm:expRational}, we need to
analyze the role of the error that creeps in due to approximate matrix
inversion. The problem is that this error, generated in each iteration
of the Krylov basis computation, propagates to the later steps. Thus,
small errors in the inverse computation may lead to the basis $V_k$
computed by our algorithm to be quite far from the $k$-th order Krylov
basis for $B,v.$
We first show that, assuming the error in computing the inverse is
small, $\widehat{T}_k$ can be used to approximate degree $k$
polynomials of $B= (I+\nfrac{A}{k})^{-1}$ when restricted to the
Krylov subspace, {\em i.e.} $\|p(B)v-V_kp(\widehat{T}_k)V_k^\top v\|
\lessapprox \norm{p}_1.$ Here, if $p \defeq \sum_{i=0}^k a_i \cdot
x^i,$ $\norm{p}_1 = \sum_{i \ge 0}^k |a_i|.$
This is the most technical part of the error analysis and
unfortunately, the only way we know of proving the error bound above
is by {\em tour de force}. A part of this proof is to show that the
spectrum of $\widehat{T}_k$ cannot shift far from the spectrum of $B.$

To bound the error in the candidate vector output by the algorithm,
{\em i.e.} $\|f(B)v-V_kf(\widehat{T}_k)V_k^\top v\|,$ we start by
expressing $e^{-x}$ as the sum of a degree $k$-polynomial $p_k$ in
$(1+\nfrac{x}{k})^{-1}$ and a remainder function $r_k.$ We use the
analysis from the previous paragraph to upper bound the error in the
polynomial part by $\approx \norm{p}_1.$ We bound the contribution of
the remainder term to the error by bounding $\norm{r_k(B)}$ and
$\|{r_k(\widehat{T}_k)}\|.$ This step uses the fact that eigenvalues
of $r_k(\widehat{T}_k)$ are $\{r_k(\lambda_i)\}_i,$ where
$\{\lambda_i\}_i$ are eigenvalues of $\widehat{T_k}.$ This is the
reason our algorithm symmetrizes $T_k$ to $\widehat{T}_k.$ To complete
the error analysis, we use the polynomials $p_k^\star$ from SSV and
bound $\norm{p_k^\star}_1.$ Even though we do not know $p_k^\star$
explicitly, we can bound its coefficients indirectly by writing it as
an interpolation polynomial. All these issues make the error analysis
highly technical. However, since the error analysis is crucial for our
algorithms, a more illuminating proof is highly desirable.
 
\subsubsection{Approximation Using Our Polynomial Approximation to $e^{-x}$}\label{sec:approx-poly}
More straightforwardly, combining the Lanczos method with the setting $B\defeq A$ and $f(x) \defeq e^{-x}$ along with the 
polynomial approximation to $e^{-x}$ that we prove in Theorem \ref{thm:exp-poly-approx}, we get that setting 
$k \approx \sqrt{\lambda_1(A)-\lambda_n(A)}\cdot\poly(\log
\nfrac{1}{\delta})$ suffices to obtain a vector $u$ that satisfies
$\norm{\exp(-A)v-u}\le \delta\norm{v}\norm{\exp(-A)}.$ This gives us
our second method for approximating $\exp(-A)v.$ 
Note that this algorithm avoids any inverse
computation and, as a result, the procedure and the proofs are simpler and the algorithm 
practical.

\subsection{Our Uniform Approximation for $e^{-x}$}
In this section, we give a brief overview of the proof of Theorem
\ref{thm:exp-poly-approx}. The details appear in Section
\ref{sec:poly} and can be read independently of the rest of the paper.
 
A straightforward approach to approximate $e^{-x}$ over $[a,b]$ is by
truncating its series expansion around $\frac{a+b}{2}.$ With a degree
of the order of $(b-a) + \log \nfrac{1}{\delta},$ these polynomials
achieve an error of $\delta \cdot e^{-\nfrac{(b+a)}{2}}$, for any
constant $\delta > 0.$ This approach is equivalent to approximating
$e^{\lambda}$ over $[-1,1],$ for $\lambda \defeq \nfrac{(b-a)}{2},$ by
polynomials of degree $O(\lambda + \log \nfrac{1}{\delta}).$ On the
flip side, it is known that if $\lambda$ is constant, the above result
is optimal (see e.g.~\cite{saff-rational}).
Instead of polynomials, one could consider approximations by rational
functions, as in \cite{rational1,rational2}. However, the author in
\cite{saff-rational} shows that, if both $\lambda$ and the degree of
the denominator of the rational function are constant, the required
degree of the numerator is only an additive constant better than that
for the polynomials. It might seem that the question of approximating
the exponential has been settled and one cannot do much
better. However, the result by SSV mentioned before, seems surprising
in this light. The lower bound does not apply to their result, since
the denominator of their rational function is unbounded.
In a similar vein, we ask the following question: If we are looking
for weaker error bounds, \emph{e.g.} $\delta\cdot e^{-a}$ instead of
$\delta\cdot e^{-\nfrac{(b+a)}{2}}$ (recall $b > a$), can we improve
on the degree bound of $O((b-a)+\log \nfrac{1}{\delta})$? Theorem
\ref{thm:exp-poly-approx} answers this question in the affirmative and
gives a new upper bound and an almost matching lower bound. We give an
overview of the proofs of both these results next.

\subsubsection*{\it Upper Bound.} We wish to show that there exists a
polynomial of degree of the order of $\sqrt{b-a}\cdot \poly(\log
\nfrac{1}{\delta})$ that approximates $e^{-x}$ on the interval
$[a,b],$ up to an error of $\delta\cdot e^{-a}$ for any $\delta > 0.$
Our approach is to approximate $(1+\nfrac{x}{k})^{-1}$ on the interval
$[a,b],$ by a polynomial $q$ of degree $l,$ and then compose the
polynomial $p_k^\star$ from the SSV result with $q$, to obtain
$p_k^\star(q(x))$ which is a polynomial of degree $k\cdot l$
approximating $e^{-x}$ over $[a,b]$. Thus, we are looking for
polynomials $q$ that minimize $|q(x)-\nfrac{1}{x}|$ over
$[1+\nfrac{a}{k},1+\nfrac{b}{k}]$. Slightly modifying the
optimization, we consider polynomials $q$ that minimize $|x\cdot q(x)
- 1|$ over $[1+\nfrac{a}{k},1+\nfrac{b}{k}]$. In
Section~\ref{sec:poly}, we show that the solution to this modified
optimization can be derived from the well-known Chebyshev
polynomials. For the right choice of $k$ and $l$, the composition of
the two polynomials approximates $e^{-x}$ to within an error of
$\delta \cdot e^{-a}$ over $[a,b],$ and has degree $\sqrt{b-a}\cdot
\poly(\log \nfrac{1}{\delta})$ . To bound the error in the composition
step, we need to bound the sum of absolute values of coefficients of
$p_k^\star,$ which we achieve by rewriting $p_k^\star$ as an
interpolation polynomial. The details appear in Section
\ref{sec:poly}.

\subsubsection*{\it Lower Bound.}
As already noted, since we consider a weaker error bound $\delta\cdot
e^{-a}$ and $\lambda \defeq \nfrac{(b-a)}{2}$ isn't a constant for our
requirements, the lower bounds mentioned above no longer
hold. Nevertheless, we prove that the square-root dependence on $b-a$
of the required degree is optimal. The proof is simple and we give the
details here: Using a theorem of Markov from approximation theory
(see~\cite{cheney-book}), we show that, any polynomial approximating
$e^{-x}$ over the interval $[a,b]$ up to an error of $\delta\cdot
e^{-a},$ for some constant $\delta$ small enough, must have degree of
the order of $\sqrt{b-a}.$ Markov's theorem says that the absolute
value of the derivative of a univariate polynomial $p$ of degree $k,$
which lives in a box of height $h$ over an interval of width $w,$ is
upper bounded by $\nfrac{d^2h}{w}.$ Let $p_k$ be a polynomial of
degree $k$ that $\delta\cdot e^{-a}$-approximates $e^{-x}$ in the
interval $[a,b].$ If $b$ is large enough and, $\delta$ a small enough
constant, then one can get a lower bound of $\Omega(e^{-a})$ on the
derivative of $p_k$ using the Mean Value Theorem. Also, one can
obtain an upper bound of $O(e^{-a})$ on the height of the box in which
$p_k$ lives. Both these bounds use the fact that $p_k$ approximates
$e^{-x}$ and is $\delta \cdot e^{-a}$ close to it. Since the width of
the box is $b-a,$ these two facts, along with Markov's theorem,
immediately imply a lower bound of $\Omega(\sqrt{b-a})$ on $k.$ This
shows that our upper bound is tight up to a factor of $\poly(\log
\nfrac{1}{\delta}).$

\section{Discussion and Open Problems}
\label{sec:open}
In this paper, using techniques from disparate areas such as random walks, SDPs, numerical linear algebra and approximation theory, we have settled the question of designing an asymptotically optimal $\tilde{O}(m)$ spectral algorithm for {BS} (Theorem \ref{thm:bsmain}) and alongwith provided a simple and practical algorithm (Theorem \ref{thm:bsmain2}). 
However, there are several outstanding problems that emerge from our work. 

The
main remaining open question regarding the design of spectral
algorithms for \BS is whether it is possible to obtain stronger
certificates that no sparse balanced cuts exist, in nearly-linear time.
This question is of practical importance in the construction of
decompositions of the graph into induced graphs that are
near-expanders, in nearly-linear time~\cite{ST2}. OV show that their
certificate, which is of the same form as that of \alg, is stronger
than the certificate of Spielman and Teng~\cite{ST2}. In particular,
our certificate can be used to produce decompositions into components
that are guaranteed to be subsets of induced expanders in $G.$
However, this form of certificate is still much weaker than that given
by {\sc RLE}, which actually outputs an induced expander of large volume.

With regards to approximating the Matrix exponential, a computation which plays an important role in SDP-based algorithms, random walks, numerical linear algebra and quantum computing, settling the
hypothesis remains the main open question. Further, as noted earlier, the error analysis plays a crucial role in making Theorem \ref{thm:expRational} and, hence, Theorem \ref{thm:bsmain} work, but its proof is rather long and difficult. A more illuminating proof of this would be highly desirable.

Another question is to close the gap between the upper and
lower bounds on polynomial approximations to $e^{-x}$ over an interval
$[a,b]$  in Theorem \ref{thm:exp-poly-approx}.

\section{The Algorithm for Balanced Separator}
\label{sec:balsep-main}
In this section we provide our spectral algorithm {\alg} and prove Theorem \ref{thm:bsmain}. We also mention how Theorem \ref{thm:bsmain2} follows easily from the proof of Theorem \ref{thm:bsmain} and Theorem \ref{thm:expPoly}. We first present the preliminaries for this section. 

\subsection{Basic Preliminaries}

\newcommand{\1}{{\bf 1}}
\newcommand{\0}{{\bf 0}}

\subsubsection*{\it Instance Graph and Edge Volume.} We denote by $G=(V,E)$ the unweighted instance graph, where $V = [n]$ and $\card{E}=m.$ We assume $G$ is connected. We let $d \in \R^n$ be the degree vector of $G,$ i.e. $d_i$ is the degree of vertex $i.$ For a subset $ S \subseteq V,$ we define the edge volume as $\vol(S) \defeq \sum_{i \in S} d_i.$ The total volume of $G$ is $2m.$ 
The conductance of
a cut $(S,\bar{S})$ is defined to be $\phi(S) \defeq \nfrac{|E(S,\bar{S})|}{\min
\{\vol(S),\vol (\overline{S})\}},$ where $\vol(S)$ is the sum of the degrees of the vertices in the set $S$. 
Moreover, a cut $(S, \bar{S})$ is $b$-balanced if $\min \{\vol(S), \vol(\bar{S})\} \geq b \cdot \Vol{V}.$

\subsubsection*{\it Special Graphs}
We denote by $K_V$ the complete graph with weight $\nfrac{d_id_j}{2m}$ between every pair $i,j \in V.$
For $i \in V,$ $S_i$ is the star graph rooted at $i$, with edge weight of $ \nfrac{d_id_j}{2m}$ between $i$ and $j,$ for all $j \in V.$

\subsubsection*{\it Graph matrices.} For an undirected graph $H=(V, E_H)$, let
$A(H)$ denote the adjacency matrix of $H$, and $D(H)$ the diagonal matrix of
degrees of $H$.
The (combinatorial) Laplacian of $H$ is defined as $L(H) \defeq D(H) - A(H)$.
Note that for all $x \in \R^{V}$, $x^\top L(H) x = \sum_{\{i,j\} \in E_H} (x_i - x_j)^2$. 
By $D$ and $L$, we denote $ D(G)$ and $L(G)$ respectively for the input graph $G.$ Finally, the natural random walk over $G$ has transition matrix $W \defeq AD^{-1}.$

\subsubsection*{\it Vector and Matrix Notation.}
We are working within the vector space $\mathbb{R}^n.$ We will denote
by $I$ the identity matrix over this space.  For a symmetric matrix
$A,$ we will use $A \succeq 0$ to indicate that $A$ is positive
semi-definite.
The expression $A \succeq B$ is equivalent to $A - B \succeq 0$. For two matrices $A,B$ of equal dimensions, let $A \bullet B \defeq \Tr(A^\top B) = \sum_{ij} A_{ij}\cdot B_{ij}.$ We denote by $\{e_i\}_{i=1}^n$ the standard basis for $\R^n.$ ${\bf 0}$ and $\1$ will denote the all $0$s and all $1$s vectors respectively.

\begin{fact}\label{fct:decomp}
$L(K_V) = D - \nfrac{1}{2m} \cdot D\1\1^\top D =  D^{\nfrac{1}{2}} (I - \nfrac{1}{2m} \cdot D^{\nfrac{1}{2}} \1 \1 D^{\nfrac{1}{2}})  D^{\nfrac{1}{2}}.$
\end{fact}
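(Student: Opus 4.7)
Both equalities are straightforward algebraic identities, so my plan is to verify them entry-wise rather than through any abstract manipulation.

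For the first equality, I would begin from the graph-theoretic definition $L(K_V) = D(K_V) - A(K_V)$. Since every pair $\{i,j\}$ in $K_V$ carries weight $\nfrac{d_id_j}{2m}$, we have $A(K_V)_{ij} = \nfrac{d_id_j}{2m}$ for $i \neq j$, and the degree of vertex $i$ in $K_V$ equals $\sum_{j\neq i} \nfrac{d_id_j}{2m} = d_i - \nfrac{d_i^2}{2m}$, using $\sum_j d_j = 2m$. On the other side, since $(\1\1^\top)_{ij} = 1$ for all $i,j$, a one-line computation gives $(D\1\1^\top D)_{ij} = d_i d_j$. Therefore the off-diagonal entries of $D - \nfrac{1}{2m}\cdot D\1\1^\top D$ are $-\nfrac{d_id_j}{2m}$ and the diagonal entries are $d_i - \nfrac{d_i^2}{2m}$, which matches $L(K_V)$ entry by entry. (Whether or not one wishes to include self-loops in $K_V$ does not affect the Laplacian, so this convention is harmless.)

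For the second equality, I would simply factor $D^{\nfrac{1}{2}}$ out on the left and on the right. Because $D^{\nfrac{1}{2}}$ is diagonal we have $D = D^{\nfrac{1}{2}} \cdot I \cdot D^{\nfrac{1}{2}}$ and $D\1\1^\top D = D^{\nfrac{1}{2}}\bigl(D^{\nfrac{1}{2}}\1\1^\top D^{\nfrac{1}{2}}\bigr)D^{\nfrac{1}{2}}$, from which the claimed factorization is immediate.

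There is no substantive obstacle here; the only subtlety is a typographic one, since the inner factor in the stated second expression should read $D^{\nfrac{1}{2}}\1\1^\top D^{\nfrac{1}{2}}$ (the transpose on the second $\1$ having been dropped).
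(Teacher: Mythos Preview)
Your verification is correct. The paper states this as a fact without proof, so there is no argument to compare against; your entry-wise check of the first equality and the trivial factorization for the second are exactly how one would justify it, and your remark about the missing transpose on the second $\1$ is accurate.
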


\subsubsection*{\it Embedding Notation.} We will deal with vector embeddings of $G$, where each vertex $i \in V$ is mapped to a vector $v_i \in \R^d$, for some $d \leq n.$ For such an embedding $\{v_i\}_{i \in V},$ we denote by $v_\avg$ the mean vector, i.e. $v_\avg \defeq \sum_{i \in V} \nfrac{d_i}{2m} \cdot v_i.$
Given a vector embedding $\{v_i \in \R^d\}_{i \in V},$ recall that $X$
is the Gram matrix of the embedding if $X_{ij} = v_i^\top v_j.$ A Gram
matrix $X$ is always PSD, \emph{i.e.}, $X \succeq 0$. For any $X \in \R^{n \times n}, X\succeq 0,$ we call $\{v_i\}_{i \in V}$ the {\it embedding corresponding to $X$} if $X$ is the Gram matrix of $\{v_i\}_{i \in V}.$ 
For $i \in V,$ we denote by $R_i$ the matrix such that $R_i \bullet X = \norm{v_i - v_\avg}^2.$

\begin{fact}\label{fct:degkv}
$ \sum_{i \in V} d_i R_i \bullet X= \sum_{i \in V} d_i \norm{v_i - v_\avg}^2 = \nfrac{1}{2m} \cdot \sum_{i<j } d_j d_i \norm{v_i - v_j}^2 =   L(K_V) \bullet X.$
\end{fact}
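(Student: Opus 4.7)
The claim is a chain of three equalities, each of which I expect to follow from a direct calculation; no step should be a genuine obstacle, so the plan is really to spell out the bookkeeping. Throughout, write $X$ for the Gram matrix of the embedding $\{v_i\}_{i\in V}$, so $X_{ij} = v_i^\top v_j$, $X \succeq 0$, and $v_\avg = \tfrac{1}{2m}\sum_i d_i v_i$.

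The first equality, $\sum_i d_i R_i \bullet X = \sum_i d_i \|v_i - v_\avg\|^2$, is immediate from the definition of $R_i$ as the (symmetric) matrix satisfying $R_i \bullet X = \|v_i - v_\avg\|^2$; I would simply quote this definition.

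For the second equality, I would expand both sides to the common expression $\sum_i d_i \|v_i\|^2 - 2m\|v_\avg\|^2$. On the left, expand $\|v_i - v_\avg\|^2 = \|v_i\|^2 - 2 v_i^\top v_\avg + \|v_\avg\|^2$ and use $\sum_i d_i v_i = 2m\,v_\avg$ together with $\sum_i d_i = 2m$ to cancel the cross term into $-2\cdot 2m\|v_\avg\|^2$, yielding $\sum_i d_i\|v_i\|^2 - 2m\|v_\avg\|^2$. On the right, write $\tfrac{1}{2m}\sum_{i<j} d_i d_j \|v_i-v_j\|^2 = \tfrac{1}{4m}\sum_{i,j} d_i d_j \|v_i-v_j\|^2$, expand the square, and use $\sum_{i,j} d_i d_j \|v_i\|^2 = \sum_{i,j} d_i d_j \|v_j\|^2 = 2m \sum_i d_i \|v_i\|^2$ together with $\sum_{i,j} d_i d_j v_i^\top v_j = \|\sum_i d_i v_i\|^2 = (2m)^2 \|v_\avg\|^2$; this again collapses to $\sum_i d_i \|v_i\|^2 - 2m\|v_\avg\|^2$.

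For the third equality, I would invoke Fact~\ref{fct:decomp}, which writes $L(K_V) = D - \tfrac{1}{2m} D\1\1^\top D$. Taking the trace-inner product with $X$ gives
\[
L(K_V)\bullet X = D \bullet X - \tfrac{1}{2m}(D\1)^\top X (D\1) = \sum_i d_i \|v_i\|^2 - \tfrac{1}{2m}\bigl\|\textstyle\sum_i d_i v_i\bigr\|^2 = \sum_i d_i\|v_i\|^2 - 2m\|v_\avg\|^2,
\]
matching the common value from the previous step. The only place to be careful is the identification $(D\1)^\top X (D\1) = \sum_{i,j} d_i d_j X_{ij} = \|\sum_i d_i v_i\|^2$, which uses that $X$ is the Gram matrix of $\{v_i\}$; everything else is linearity of the Frobenius inner product. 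No step presents a conceptual obstacle, so I expect the proof to be a short direct computation.
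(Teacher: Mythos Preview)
Your proof is correct. The paper states this as a bare Fact without proof, so there is nothing to compare against; your direct expansion of both sides to the common value $\sum_i d_i\|v_i\|^2 - 2m\|v_\avg\|^2$, together with the invocation of Fact~\ref{fct:decomp} for the $L(K_V)\bullet X$ term, is exactly the kind of routine verification the paper is implicitly deferring to the reader. One could shorten the third equality slightly by appealing directly to the definition of $K_V$ as the weighted complete graph with edge weights $d_id_j/2m$, which immediately gives $L(K_V)\bullet X = \tfrac{1}{2m}\sum_{i<j} d_id_j\|v_i-v_j\|^2$ from the Laplacian quadratic form, but your route via Fact~\ref{fct:decomp} is equally valid.
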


\subsection{{\sc AHK} Random Walks} \label{subsec:ahk}

The random-walk processes used by our algorithm are continuous-time Markov processes~\cite{parzen1999} over $V.$ In these processes, state transitions do not take place at specified discrete intervals, but follow exponential distributions described by a {\it transition rate matrix} $Q \in R^{n \times n}$, where $Q_{ij}$ specifies the rate of transition from vertex $j$ to $i$.
More formally, letting $p(\tau) \in \R^n$ be the probability distribution of the process at time $t \geq 0$, we have that
$\nfrac{\partial p(\tau)}{\partial \tau} = Q p(\tau)$
Given a transition rate matrix
\footnote{A matrix $Q$ is a valid transition rate matrix if its diagonal entries are non-positive and its off-diagonal entries are non-negative. 
Moreover, it must be that ${\bf 1} Q = 0,$ to ensure that probability mass is conserved.
} 
$Q$, the differential equation for $p(\tau)$ implies that
$
p(\tau) = e^{\tau Q} p(0).
$
In this paper, we will be interested in a class of continuous-time Markov processes over $V$ that take into account the edge structure of $G$. The simplest such process is the {\it heat kernel} process, which is defined as having transition rate matrix $Q =- (I - W) = -LD^{-1}.$ 
The heat kernel can also be interpreted as the probability transition matrix of the following discrete-time random walk: 
sample a number of steps $i$ from a Poisson distribution with mean $\tau$ and perform $i$ steps of the natural random walk over $G:$
$$
p(\tau) = e^{-\tau LD^{-1}} p(0) = e^{-\tau(I-W)} p(0)= e^{-\tau} \sum_{i=0}^\infty \frac{\tau^i}{i!} W^i p(0). 
$$

For the construction of our algorithm, we generalize the concept of heat kernel to a larger class of continuous-time Markov processes, which we name {\it Accelerated Heat Kernel} (\textsc{AHK}) processes.
A process $\cH(\beta)$ in this class is defined by a non-negative vector $\beta \in \R^n$ and the transition rate matrix of $\cH(\beta)$ is $Q(\beta) \defeq - (L + \sum_{i \in V} \beta_i L(S_i))D^{-1}.$ As this is the negative of a sum of Laplacian matrices, it is easy to verify that it is a valid transition rate matrix.  The effect of adding the star terms to the transition rate matrix is that of accelerating the convergence of the process to stationary at vertices $i$ with large value of $\beta_i$, as a large fraction of the probability mass that leaves these vertices is distributed uniformly over the edges. We denote by $P_\tau(\beta)$ the probability-transition matrix of $\cH(\beta)$ between time $0$ and $\tau,$ i.e. $P_\tau(0) = e^{\tau Q(\beta)}.$

\subsubsection*{\it Embedding View.}
A useful matrix to study $\cH(\beta)$ will be $D^{-1} P_{2\tau} (\beta).$ This matrix describes the probability distribution over the edges of $G$ and has the advantage of being symmetric and positive semidefinite: 
$$ 
D^{-1} P_{2\tau}(\beta) = D^{-\nfrac{1}{2}} e^{-(2\tau) D^{-\nfrac{1}{2}} (L + \sum_{i \in V} \beta_i L(S_i)) D^{-\nfrac{1}{2} }} D^{-\nfrac{1}{2}},
$$
Moreover, we have the following fact:
\begin{fact} \label{fct:sqroot}
$D^{-\nfrac{1}{2}} P_{\tau}(\beta)$ is a square root of $D^{-1} P_{2\tau}(\beta).$ 
\end{fact}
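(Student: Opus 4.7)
The plan is to exploit the similarity transformation that converts the (generally non-symmetric) generator $MD^{-1}$ (where $M \defeq L + \sum_i \beta_i L(S_i)$) into the symmetric matrix $\tilde N \defeq D^{-\nfrac{1}{2}} M D^{-\nfrac{1}{2}}.$ Once this is done, the fact reduces to the definition of $D^{-1}P_{2\tau}(\beta)$ already recorded in the paragraph just above.

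First I would write $MD^{-1} = D^{\nfrac{1}{2}} \tilde N D^{-\nfrac{1}{2}},$ which follows by direct inspection. Since the matrix exponential commutes with similarity, this yields
\[
P_\tau(\beta) \;=\; \exp\bigl(-\tau MD^{-1}\bigr) \;=\; D^{\nfrac{1}{2}} \exp(-\tau \tilde N)\, D^{-\nfrac{1}{2}}.
\]
Multiplying on the left by $D^{-\nfrac{1}{2}}$ gives the clean expression
\[
D^{-\nfrac{1}{2}} P_\tau(\beta) \;=\; \exp(-\tau \tilde N)\, D^{-\nfrac{1}{2}}.
\]

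Next, I would use that $\tilde N$ is symmetric (because $M$ and $D^{-\nfrac{1}{2}}$ are), so $\exp(-\tau \tilde N)$ is symmetric as well. Taking the transpose of the displayed identity gives $(D^{-\nfrac{1}{2}} P_\tau(\beta))^\top = D^{-\nfrac{1}{2}} \exp(-\tau \tilde N).$ Multiplying the two, the $D^{-\nfrac{1}{2}}$ factors sandwich two copies of $\exp(-\tau \tilde N),$ whose product is $\exp(-2\tau \tilde N)$ by functional calculus:
\[
\bigl(D^{-\nfrac{1}{2}} P_\tau(\beta)\bigr)^{\!\top}\bigl(D^{-\nfrac{1}{2}} P_\tau(\beta)\bigr) \;=\; D^{-\nfrac{1}{2}} \exp(-2\tau \tilde N)\, D^{-\nfrac{1}{2}}.
\]
By the formula for $D^{-1} P_{2\tau}(\beta)$ displayed immediately before the statement of Fact~\ref{fct:sqroot}, the right-hand side is exactly $D^{-1} P_{2\tau}(\beta),$ which is the desired ``square root'' relation under the Gram-matrix convention used in the paper (consistent with the embedding interpretation where the columns of $D^{-\nfrac{1}{2}} P_\tau(\beta)$ are the vectors $v_i$ and $D^{-1} P_{2\tau}(\beta)$ is their Gram matrix).

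There is no real obstacle here; the only subtlety is the non-symmetry of $D^{-\nfrac{1}{2}} P_\tau(\beta),$ which one must handle by taking the transpose on the correct side. The only algebraic ingredients needed are the similarity invariance of the matrix exponential and the fact that commuting symmetric exponentials add in the exponent.
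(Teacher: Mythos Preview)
Your proposal is correct and follows essentially the same approach as the paper: both compute $(D^{-\nfrac{1}{2}}P_\tau(\beta))^\top(D^{-\nfrac{1}{2}}P_\tau(\beta))$ and simplify to $D^{-1}P_{2\tau}(\beta)$. The only cosmetic difference is that you pass explicitly through the symmetric conjugate $\tilde N = D^{-\nfrac{1}{2}} M D^{-\nfrac{1}{2}}$, whereas the paper encodes the same fact as the one-line identity $e^{\tau Q(\beta)^\top} D^{-1} = D^{-1} e^{\tau Q(\beta)}$.
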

\begin{proof}
$$
\left(D^{-\nfrac{1}{2}} P_{\tau}(\beta)\right)^\top D^{-\nfrac{1}{2}} P_{\tau}(\beta) = e^{\tau(Q(\beta))^\top} D^{-1} e^{\tau Q(\beta)} =
D^{-1} e^{\tau Q(\beta)} e^{\tau Q(\beta)} = D^{-1} e^{2\tau Q(\beta)}. 
$$
\end{proof}

\noindent
Hence, $D^{-1} P_{2\tau}(\beta)$ is the Gram matrix of the embedding given by the columns of its square root $D^{-\nfrac{1}{2}} P_{\tau}(\beta).$ This property will enable us to use geometric \SDP techniques to analyze $\cH(\beta).$

\subsubsection*{\it Mixing.}
Spectral methods for finding low-conductance cuts are based on the idea that random walk processes mix slowly across sparse cuts, so that it is possible to detect such cuts by considering the starting vertices for which the probability distribution of the process strongly deviates from stationary.
We measure this deviation for vertex $i$ at time $t$ by the $\ell_2^2$-norm of the distance between $P_\tau(\beta) e_i$ and the uniform distribution {\it over the edges of }$G.$ We denote it by $\Psi(P_\tau(\beta), i):$
$$
\Psi(P_\tau(\beta), i) \defeq d_i \sum_{j \in V}  d_j \left(\frac{e_j^\top P_\tau(\beta)e_i}{d_j} - \frac{1}{2m}\right)^2 
$$

A fundamental quantity for our algorithm will be the total deviation from stationarity over a subset $S \subseteq V.$ We will denote $\Psi(P_t(\beta), S) \defeq \sum_{i \in S} \Psi(P_t(\beta), i).$
In particular, $\Psi(P_{\tau}(\beta), V )$  will play the role of potential function in our algorithm. The following facts express these mixing quantities in the geometric language of the embedding corresponding to $D^{-1} P_{2\tau}(\beta).$
\begin{fact}
$\Psi(P_\tau(\beta), i) = d_i R_i \bullet D^{-1} P_{2\tau}(\beta).$
\end{fact}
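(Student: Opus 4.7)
The plan is to unpack both sides using the embedding interpretation provided by Fact~\ref{fct:sqroot} and match them coordinate by coordinate against the definition of $\Psi(P_\tau(\beta),i)$.

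First, I would identify the embedding corresponding to the Gram matrix $X \defeq D^{-1} P_{2\tau}(\beta)$. By Fact~\ref{fct:sqroot}, a square root of $X$ is $B \defeq D^{-\nfrac{1}{2}} P_{\tau}(\beta)$, i.e.\ $B^\top B = X$. Thus the embedding is $v_i \defeq B e_i = D^{-\nfrac{1}{2}} P_{\tau}(\beta) e_i \in \R^n,$ whose $k$-th coordinate is $(v_i)_k = \frac{1}{\sqrt{d_k}} \, e_k^\top P_\tau(\beta) e_i$.

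Next, I would compute $v_\avg = \sum_{i\in V} \nfrac{d_i}{2m}\, v_i = B \, \bigl(\nfrac{d}{2m}\bigr)$. The key observation here is that the stationary distribution $\nfrac{d}{2m}$ lies in the kernel of the transition rate matrix $Q(\beta) = -(L + \sum_i \beta_i L(S_i))D^{-1}$: indeed $D^{-1}(\nfrac{d}{2m}) = \nfrac{\1}{2m}$ is annihilated by every Laplacian in the sum. Consequently $P_\tau(\beta)\bigl(\nfrac{d}{2m}\bigr) = e^{\tau Q(\beta)}\bigl(\nfrac{d}{2m}\bigr) = \nfrac{d}{2m}$, and therefore $(v_\avg)_k = \frac{1}{\sqrt{d_k}}\cdot \nfrac{d_k}{2m} = \nfrac{\sqrt{d_k}}{2m}$. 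This is the step where I expect the only real content to lie; once the stationary-distribution identity is in hand, the rest is bookkeeping.

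Now a direct subtraction gives, for each $k\in V$,
\[
(v_i - v_\avg)_k \;=\; \frac{1}{\sqrt{d_k}}\,\Bigl(e_k^\top P_\tau(\beta)e_i \;-\; \tfrac{d_k}{2m}\Bigr) \;=\; \sqrt{d_k}\,\Bigl(\tfrac{e_k^\top P_\tau(\beta)e_i}{d_k} - \tfrac{1}{2m}\Bigr).
\]
Squaring and summing over $k$ yields
\[
\norm{v_i - v_\avg}^2 \;=\; \sum_{k\in V} d_k \Bigl(\tfrac{e_k^\top P_\tau(\beta)e_i}{d_k} - \tfrac{1}{2m}\Bigr)^2,
\]
so multiplying by $d_i$ matches the definition of $\Psi(P_\tau(\beta),i)$ on the nose. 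Finally, by the defining property of $R_i$ we have $R_i \bullet X = \norm{v_i - v_\avg}^2$, and hence
\[
\Psi(P_\tau(\beta),i) \;=\; d_i \norm{v_i - v_\avg}^2 \;=\; d_i\, R_i \bullet D^{-1} P_{2\tau}(\beta),
\]
as claimed.
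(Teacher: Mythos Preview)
Your proof is correct and follows essentially the same approach as the paper: both use Fact~\ref{fct:sqroot} to identify the embedding $v_i = D^{-\nfrac{1}{2}} P_\tau(\beta) e_i$, evaluate $v_\avg$ via the stationary-distribution identity $P_\tau(\beta)(\nfrac{d}{2m}) = \nfrac{d}{2m}$ to get $v_\avg = \nfrac{D^{\nfrac{1}{2}}\1}{2m}$, and then match $d_i\norm{v_i - v_\avg}^2$ with the definition of $\Psi$. The only difference is presentational: you work coordinatewise, while the paper keeps everything in vector notation.
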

\begin{proof}
By Fact~\ref{fct:sqroot} and the definition of $R_i:$
\begin{align*}
d_i R_i \bullet D^{-1} P_{2\tau}(\beta) = d_i \norm{D^{-\nfrac{1}{2}} P_{\tau}(\beta) e_i - \sum_{j \in V} \frac{d_j}{2m} D^{-\nfrac{1}{2}} P_{\tau}(\beta) e_j}^2 = d_i \norm{D^{-\nfrac{1}{2}} P_{\tau}(\beta) e_i - \frac{D^{\nfrac{1}{2}}\1}{2m}}^2\\
=d_i \norm{D^{\nfrac{1}{2}} \left(D^{-1} P_{\tau}(\beta) e_i - \frac{\1}{2m}\right)}^2 = \Psi(P_\tau(\beta), i).
\end{align*}
\end{proof}

The following is a consequence of Fact~\ref{fct:degkv}:
\begin{fact} \label{fct:totdev}
$\Psi(P_\tau(\beta), V) =\sum_{i \in V} d_i R_i \bullet D^{-1} P_{2\tau}(\beta) = L(K_V) \bullet D^{-1} P_{2\tau}(\beta). $
\end{fact}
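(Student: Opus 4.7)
The plan is to chain together the two facts immediately preceding this statement with the definition of $\Psi(P_\tau(\beta), V)$. This is essentially a one-line computation, so I will keep the proposal brief.

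First, I would unfold the definition $\Psi(P_\tau(\beta), V) \defeq \sum_{i \in V} \Psi(P_\tau(\beta), i)$ given in the paragraph introducing the potential function. Then, term by term, I would apply the preceding Fact, which identifies $\Psi(P_\tau(\beta), i)$ with $d_i R_i \bullet D^{-1} P_{2\tau}(\beta)$. Using the bilinearity of the trace inner product $\bullet$, the sum $\sum_{i \in V} d_i R_i \bullet D^{-1} P_{2\tau}(\beta)$ can be rewritten as $\left(\sum_{i \in V} d_i R_i\right) \bullet D^{-1} P_{2\tau}(\beta)$.

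Finally, I would invoke Fact \ref{fct:degkv} with the choice $X = D^{-1} P_{2\tau}(\beta)$ (which is a legitimate Gram matrix by Fact \ref{fct:sqroot}, so the embedding viewpoint required by Fact \ref{fct:degkv} applies). That fact gives $\sum_{i \in V} d_i R_i \bullet X = L(K_V) \bullet X$, which yields the claimed identity $\Psi(P_\tau(\beta), V) = L(K_V) \bullet D^{-1} P_{2\tau}(\beta)$.

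There is no real obstacle here; the only thing to be careful about is ensuring that the embedding corresponding to $D^{-1}P_{2\tau}(\beta)$ (namely the columns of $D^{-\nfrac{1}{2}}P_\tau(\beta)$, by Fact \ref{fct:sqroot}) is the one with respect to which the $R_i$'s are defined, so that Fact \ref{fct:degkv} applies verbatim. Once that is observed, the identity follows immediately.
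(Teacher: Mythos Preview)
Your proposal is correct and matches the paper's approach exactly: the paper simply records this fact as ``a consequence of Fact~\ref{fct:degkv}'' without spelling out the one-line computation, and your chain of definitions (unfold $\Psi$, apply the preceding fact for each $i$, then invoke Fact~\ref{fct:degkv} with $X = D^{-1}P_{2\tau}(\beta)$) is precisely the intended argument.
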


\subsection{Algorithm Description}\label{sec:balsepalgo}

\subsubsection*{\it Preliminaries}
All the random walks in our algorithm will be run for time $\tau \defeq \nfrac{O(\log n)}{\gamma}.$ 
We will consider embeddings given by the columns of $\Degin P_{\tau}(\beta)$ for some choice of $\beta.$
Because we want our algorithm to run in time $\tilde{O}(m)$ and we are only interested in Euclidean distances between vectors in the embedding, 
we will use the Johnson-Lindenstrauss Lemma (see Lemma~\ref{lem:jl} in Section~\ref{sec:proofs}) to obtain an $O(\log n)$-dimensional embedding approximately preserving distances between columns of $\Degin P_\tau(\beta)$ up to a factor of $(1+\eps),$ where $\eps$ is a constant such that $\nfrac{1+\eps}{1-\eps} \leq \nfrac{4}{3}.$

Our algorithm \alg will call two subroutines {\sc FindCut} and {\sc ExpV}. 
{\sc FindCut} is an \SDP-rounding algorithm that uses random projections and radial sweeps to find a low-conductance cut, that is either $c$-balanced, for some constant $c=\Omega(b) \leq \nfrac{b}{100}$ defined in OV, or obeys a strong guarantee stated in Theorem~\ref{thm:findcut}. 
Such algorithm is implicit in \cite{OV} and is described precisely in Section~\ref{sec:findcut}.
{\sc ExpV} is a generic algorithm that approximately computes products of the form $P_\tau(\beta) u$ for unit vectors $u.$ {\sc Expv} can be chosen to be either the algorithm implied by Thereom~\ref{thm:expRational2}, which makes use of the Spielman-Teng solver, or that in Theorem~\ref{thm:expPoly}, which just applies the Lanczos method.

We are now ready to describe \alg, which will output a $c$-balanced cut of conductance $O(\sqrt{\gamma})$ or the string {\sc NO}, if it finds a certificate that no $b$-balanced cut of conductance less than $\gamma$ exists. \alg can also fail and output the string {\sc Fail}. We will show that this only happens with small probability. The algorithm \alg is defined in Figure~\ref{fig:algorithm}. The constants in this presentation are not optimized and are likely to be higher than what is necessary in practice. They can also be modified to obtain different trade-offs between the approximation guarantee and the output balance.

\begin{figure*}[htb]
  	\begin{tabularx}{\textwidth}{|X|}
	\hline
		
	{\bf Input:} An unweighted connected instance graph $G=(V,E),$ a constant balance value $b \in (0, \nfrac{1}{2}],$ 
	a conductance value $\gamma \in [\nfrac{1}{n^2},1).$
	\vspace{1mm}

	Let $S=0, \bar{S} = V.$ Set $\tau = \nfrac{\log n}{12 \gamma}$ and $\beta^{(1)} = \0.$
	\\
	At iteration $t=1, \ldots, T  = 12 \log n:$
	\begin{enumerate}[label=\arabic*.]
	 
	\item Denote $P^{(t)} \defeq P_{\tau}(\beta^{(t)}).$ Pick $k = O(\nfrac{\log n}{\eps^2})$ random unit vectors $\{u^{(t)}_1, u^{(t)}_2, \ldots, u^{(t)}_k \in \R^{n}\}$ and use the subroutine {\sc ExpV} to
	compute the embedding $\{v^{(t)}_i \in \R^k\}_{i \in V}$ defined as 
	\begin{equation*}
\left(v^{(t)}_i\right)_j = \sqrt{\frac{n}{k}} u_j^\top D^{-\nfrac{1}{2}} P^{(t)} e_i.
	\end{equation*} 
	Let $X^{(t)}$ be the Gram matrix corresponding to this embedding.
	
	\item If $L(K_V) \bullet X^{(t)} = \sum_{i \in V} d_i ||v^{(t)}_i - v^{(t)}_\avg||^2 \leq \frac{1+\eps}{n},$ 
	output {\sc NO} and terminate.
	
	\item Otherwise, run {\sc FindCut}$(G, b, \gamma, \{v^{(t)}_i\}_{i \in V}).$ {\sc FindCut} outputs a cut $S^{(t)}$ with $\phi(S^{(t)}) \leq O(\sqrt{\gamma})$ or fails,
	in which case we also output {\sc Fail} and terminate.
	
	\item If $S^{(t)}$ is $c$-balanced, output $S^{(t)}$ and terminate. If not, update $S \defeq S \cup S^{(t)}$. 
	If $S$ is $c$-balanced, output $S$ and terminate.
	
	\item Otherwise, update $\beta^{(t+1)} = \beta^{(t)} + \frac{ 72 \gamma}{T} \sum_{i \in S^{(t)}} e_i$  and proceed to the next iteration.
	\end{enumerate}
	
	\vspace{1mm}
	
	Output {\sc NO} and terminate.
\\
\hline 
\end{tabularx}
  \caption{The \alg Algorithm}
  \label{fig:algorithm}
\end{figure*}

At iteration $t=1,$ we have $\beta^{(1)} = \0,$ so that $P^{(1)}$ is just the probability transition matrix
of the heat kernel on $G$ for time $\tau.$ In general at iteration $t,$ \alg runs {\sc ExpV} to compute $O(\log n)$ random projections of $P^{(t)}$ and constructs an approximation $\{v^{(t)}_i\}_{i \in V}$ to the embedding given by the columns of $D^{-\nfrac{1}{2}} P^{(t)}.$
This approximate embedding has Gram matrix $X^{(t)}.$

In Step $2,$ \alg computes $L(K_V) \bullet X^{(t)},$ which is an estimate of the total deviation $\Psi(P^{(t)}, V)$ by Fact~\ref{fct:totdev}. If this deviation is small, the {\sc AHK} walk $P^{(t)}$ has mixed sufficiently over $G$ to yield a certificate that $G$ cannot have any $b$-balanced cut of conductance less than $\gamma.$ This is shown in Lemma~\ref{lem:pot}.
If the {\sc AHK} walk $P^{(t)}$ has not mixed sufficiently, we can use {\sc FindCut} to find a cut $S^{(t)}$ of low conductance $O(\sqrt{\gamma}),$ which is an obstacle for mixing. If $S^{(t)}$ is $c$-balanced , we output it and terminate. Similarly, if $S \cup S^{(t)}$ is $c$-balanced, as $\phi(S \cup S^{(t)}) \leq O(\sqrt{\gamma}),$ we can also output $S \cup S^{(t)}$ and exit.
Otherwise, $S^{(t)}$ is unbalanced and is potentially preventing \alg from detecting balanced cuts in $G.$ We then proceed to modify the {\sc AHK} walk, by increasing the values of $\beta^{(t+1)}$ for the vertices in $S^{(t)}.$ This change ensures that $P^{(t+1)}$ mixes faster from the vertices in $S^{(t)}$ and in particular mixes across $S^{(t)}.$
In particular, this means that, at any given iteration $t,$ the support of $\beta^{(t)}$ is $\cup_{r=1}^{t-1} S^{(r)},$ which is an unbalanced set.

The \alg algorithm exactly parallels the {\sc RLE} algorithm, introducing only two fundamental changes. 
First, we use the embedding given by the {\sc AHK} random walk $P^{(t)}$ in place of the eigenvector to find cuts in $G$ or in a residual graph. 
Secondly, rather than fully removing unbalanced low-conductance cuts from the graph, we modify $\beta^{(t)}$ at every iteration $t,$ so $P^{(t+1)}$ at the next iteration mixes across the unbalanced cuts found so far.

\subsection{Analysis}\label{sec:gpanalysis}

\renewcommand{\mwu}{MMWU\xspace}

The analysis of \alg is at heart a modification of the \mwu argument in OV, stated in a random-walk language. 
This modification allows us to deal with the different embedding used by \alg at every iteration with respect to OV.

In this analysis, the quantity $\Psi(P^{(t)}, V)$ plays the role of potential function.
Recall that, from a random-walk point of view, $\Psi(P^{(t)}, V)$ is the total deviation from stationarity of $P_{\tau}(\beta^{(t)})$ over all vertices as starting points. We start by showing that if the potential function is small enough, we obtain a certificate that no $b$-balanced cut of conductance at most $\gamma$ exists. In the second step, we show that, if an unbalanced cut $S^{(t)}$ of low conductance is found, the potential decreases by a constant fraction. Unless explicitly stated otherwise, all proofs are found in Section~\ref{sec:proofs}.

\subsubsection*{\it Potential Guarantee.} We argue that, if $\Psi(P^{(t)}, V)$ is sufficiently small, it must be the case that $G$ has no $b$-balanced cut of conductance less than $\gamma.$ A similar result is implicit in OV. This theorem has a simple explanation in terms of the {\sc AHK} random walk $P^{(t)}.$ Notice that $P^{(t)}$ is accelerated only on a small unbalanced set $S.$ Hence, if a balanced cut of conductance less than $\gamma$ existed, its convergence could not be greatly helped by the acceleration over $S.$ Then, if $P^{(t)}$ is still mixing very well, no such balanced cut can exist. 
\begin{lemma}\label{lem:pot}
Let $S = \cup_{i=1}^t S^{(i)}.$
If $\Psi(P^{(t)}, V) \leq \frac{4}{3n},$ and $\vol(S) \leq c \cdot 2m \leq \nfrac{b}{100} \cdot 2m$, then
$$
L + \sum_{i \in V} \beta_i^{(t)}  L(S_i) \succeq 3\gamma \cdot L(K_V).
$$
Moreover, this implies that no $b$-balanced cut of conductance less than $\gamma$ exists in $G.$
\end{lemma}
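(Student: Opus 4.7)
The plan is to separate the lemma into two logically independent parts. For the first part (the PSD inequality), I would translate the hypothesis on $\Psi$ into a spectral statement about the operator driving the AHK walk. Let $M \defeq L + \sum_i \beta_i^{(t)} L(S_i)$ and let $\widetilde{M} \defeq \Degin M \Degin$; this is symmetric PSD, and since each Laplacian annihilates the all-ones vector, $D^{\nfrac{1}{2}}\1$ is in the kernel of $\widetilde{M}$, with corresponding eigenvalue $0$. By Fact~\ref{fct:decomp}, $\Degin L(K_V)\Degin$ is precisely the orthogonal projection $\Pi$ onto the complement of $D^{\nfrac{1}{2}}\1/\sqrt{2m}$. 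Using Fact~\ref{fct:totdev} together with the identity $D^{-1}P_{2\tau}(\beta^{(t)})=\Degin \exp(-2\tau\widetilde{M})\Degin$, I would write
\[
\Psi(P^{(t)},V) \;=\; L(K_V)\bullet D^{-1}P_{2\tau}(\beta^{(t)}) \;=\; \Tr\!\bigl(\Pi \exp(-2\tau\widetilde{M})\bigr) \;=\; \sum_{i=2}^n e^{-2\tau\lambda_i(\widetilde M)},
\]
where $\lambda_2(\widetilde M)$ is the smallest non-zero eigenvalue of $\widetilde{M}$. Each summand is nonnegative, so the hypothesis $\Psi(P^{(t)},V)\le \nfrac{4}{3n}$ gives $e^{-2\tau\lambda_2(\widetilde M)} \le \nfrac{4}{3n}$, whence $\lambda_2(\widetilde M) \ge \frac{\log(3n/4)}{2\tau} \ge 3\gamma$ after plugging in $\tau = \nfrac{\log n}{12\gamma}$ (for $n$ large enough; the constants $12$ and $3$ are chosen with precisely this slack in mind). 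Since both $\widetilde M$ and $\Pi$ vanish on $D^{\nfrac{1}{2}}\1$, this eigenvalue lower bound yields $\widetilde M \succeq 3\gamma\,\Pi$, and conjugating by $D^{\nfrac{1}{2}}$ recovers $M \succeq 3\gamma L(K_V)$.

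For the second part (the conductance certificate), I would test the PSD inequality on the indicator vector $\chi_T$ of an arbitrary $b$-balanced cut $(T,\bar T)$. A direct computation gives $\chi_T^\top L \chi_T = |E(T,\bar T)|$, $\chi_T^\top L(K_V)\chi_T = \vol(T)\vol(\bar T)/(2m) \ge b(1-b)\cdot 2m$, and, for each $i$, $\chi_T^\top L(S_i)\chi_T \le d_i\cdot \max\{\vol(T),\vol(\bar T)\}/(2m) \le d_i$. Thus
\[
|E(T,\bar T)| \;\ge\; 3\gamma\cdot b(1-b)\cdot 2m \;-\; \sum_{i\in V}\beta_i^{(t)}\,d_i.
\]
The key bookkeeping is the bound on $\sum_i \beta_i^{(t)} d_i$: since $\beta^{(t)} = \frac{72\gamma}{T}\sum_{r<t}\sum_{i\in S^{(r)}} e_i$ and each $S^{(r)}$ that triggered an update was not $c$-balanced (hence has volume at most $c\cdot 2m \le \nfrac{b}{100}\cdot 2m$), I get $\sum_i \beta_i^{(t)} d_i \le \frac{72\gamma}{T}\cdot T \cdot c\cdot 2m = 144\,c\gamma m$. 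Plugging back and using $c \le \nfrac{b}{100}$ and $b \le \nfrac{1}{2}$, the dominant $3\gamma b(1-b)\cdot 2m$ term survives with a comfortable margin, giving $|E(T,\bar T)| \ge \gamma\cdot b\cdot 2m$, i.e.\ $\phi(T) \ge \gamma$.

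The routine parts are the algebraic rewriting of $\Psi$ and the quadratic-form evaluations, which I would only sketch. The step that requires some care is the bookkeeping on $\sum_i \beta_i^{(t)} d_i$: a naive estimate would use $\vol(\cup_r S^{(r)})$, which is too weak once the $S^{(r)}$ overlap across iterations, so the argument instead exploits the per-iteration volume bound $\vol(S^{(r)}) \le c\cdot 2m$ and the fact that there are at most $T$ updates, so that the $\nfrac{1}{T}$ in the update rule exactly cancels the $T$ iterations. This cancellation is where the constants $72$, $c \le \nfrac{b}{100}$, and the $3\gamma$ in the first part are tuned to conspire in producing the clean conclusion $\phi(T) \ge \gamma$.
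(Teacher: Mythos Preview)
Your Part 1 is correct and matches the paper's argument exactly: translate $\Psi$ into a trace, extract $e^{-2\tau\lambda_2}\le 4/(3n)$, and plug in $\tau=\log n/(12\gamma)$ to get $\lambda_2\ge 3\gamma$.

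Part 2 has a genuine gap. The implication ``$|E(T,\bar T)|\ge \gamma\cdot b\cdot 2m$, i.e.\ $\phi(T)\ge\gamma$'' is false: $b$-balanced only tells you $\min\{\vol(T),\vol(\bar T)\}\ge b\cdot 2m$, so the denominator in $\phi(T)$ can be as large as $m$, and your bound yields only $\phi(T)\ge 2b\gamma$, which is $<\gamma$ for $b<\nfrac{1}{2}$. The underlying problem is that you froze $\vol(T)$ at its minimum value $b\cdot 2m$ when lower bounding $\chi_T^\top L(K_V)\chi_T$, but then implicitly need it at its minimum again when dividing by it to form $\phi(T)$; for a cut with $\vol(T)$ near $m$ the two estimates are incompatible. (In fact, with your crude bound $\chi_T^\top L(S_i)\chi_T\le d_i$ and the stated constants, even the intermediate claim $|E(T,\bar T)|\ge \gamma b\cdot 2m$ fails once $b>0.427$.)

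The paper fixes this by testing against the centered, scale-normalized vector $x_U$ with $(x_U)_i=\sqrt{\vol(\bar U)/(2m\vol(U))}$ on $U$ and $-\sqrt{\vol(U)/(2m\vol(\bar U))}$ on $\bar U$; this is just a rescaling of $\chi_U$ (all three Laplacians annihilate $\1$), but it keeps the $\vol(U)$-dependence live throughout. One gets $x_U^\top L(K_V)x_U=1$, $x_U^\top L(S_i)x_U\le d_i/\vol(U)$, and $x_U^\top L x_U\le 2\phi(U)$, which together with the simpler pointwise bound $\beta_i^{(t)}\le 72\gamma$ (your per-iteration bookkeeping is fine but unnecessary) gives $2\phi(U)+72\gamma\cdot\vol(S)/\vol(U)\ge 3\gamma$; since $\vol(S)\le\vol(U)/100$, this yields $\phi(U)\ge 1.14\gamma$. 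Your approach can be salvaged by retaining the factor $\vol(\bar T)/(2m)$ in $\chi_T^\top L(S_i)\chi_T$ and dividing the whole inequality by $\vol(T)\vol(\bar T)/(2m)$ before estimating, which is exactly what the paper's $x_U$ does implicitly.
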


\subsubsection*{\it The Deviation of an Unbalanced Cut.}
In the next step, we show that, if the walk has not mixed sufficiently,  w.h.p.  the embedding $\{v^{(t)}_i\}_{i \in V},$ computed by \alg,  has low quadratic form with respect to the Laplacian of $G.$ From a \SDP-rounding perspective, this means that the embedding can be used to recover cuts of value close to $\gamma.$ This part of the analysis departs from that of OV, as we use our modified definition of the embedding.
\begin{lemma}\label{lem:obj}
If $\Psi(P^{(t)}, V) \geq \frac{1}{n},$ then w.h.p. $L \bullet X^{(t)} \leq O(\gamma) \cdot L(K_V) \bullet X^{(t)}.$ 
\end{lemma}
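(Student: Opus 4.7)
\bigskip

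\noindent\textbf{Proof plan for Lemma~\ref{lem:obj}.}

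My plan is to first discharge the Johnson--Lindenstrauss approximation so that the inequality need only be proved for the true embedding given by the columns of $D^{-\nfrac{1}{2}} P^{(t)}$, whose Gram matrix is $D^{-1} P_{2\tau}(\beta^{(t)})$ by Fact~\ref{fct:sqroot}. Both $L \bullet X^{(t)}$ and $L(K_V) \bullet X^{(t)}$ are weighted sums of squared distances between the vectors $v_i^{(t)}$, and with $k=O(\log n/\epsilon^2)$ projections, the JL lemma guarantees with high probability that for every pair $i,j$, $\|v_i^{(t)}-v_j^{(t)}\|^2$ lies within a $(1\pm \epsilon)$ factor of $\|\Degin P^{(t)}(e_i-e_j)\|^2$; the same holds for $\|v_i^{(t)}-v_\avg^{(t)}\|^2$. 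Consequently,
\[
\frac{L\bullet X^{(t)}}{L(K_V)\bullet X^{(t)}} \;\le\; \frac{1+\epsilon}{1-\epsilon}\cdot\frac{L\bullet D^{-1}P_{2\tau}(\beta^{(t)})}{L(K_V)\bullet D^{-1}P_{2\tau}(\beta^{(t)})},
\]
so it suffices to prove $L \bullet D^{-1}P_{2\tau}(\beta^{(t)}) \le O(\gamma)\cdot L(K_V)\bullet D^{-1}P_{2\tau}(\beta^{(t)})$.

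Next I would rewrite both quantities in terms of the normalized matrix $M \defeq D^{-\nfrac{1}{2}}\bigl(L+\sum_i \beta^{(t)}_i L(S_i)\bigr)D^{-\nfrac{1}{2}}$. By definition of $P_{2\tau}(\beta^{(t)})$ we have $D^{-1}P_{2\tau}(\beta^{(t)}) = \Degin \exp(-2\tau M) \Degin$. Since the star Laplacians are PSD, $D^{-\nfrac{1}{2}}LD^{-\nfrac{1}{2}} \preceq M$, and hence
\[
L\bullet D^{-1}P_{2\tau}(\beta^{(t)}) \;=\; (D^{-\nfrac{1}{2}}LD^{-\nfrac{1}{2}})\bullet \exp(-2\tau M) \;\le\; M\bullet \exp(-2\tau M).
\]
On the denominator side, by Fact~\ref{fct:decomp} we have $L(K_V) = D^{\nfrac{1}{2}}\Pi D^{\nfrac{1}{2}}$ with $\Pi$ the projection onto the complement of $D^{\nfrac{1}{2}}\1$, so $L(K_V)\bullet D^{-1}P_{2\tau}(\beta^{(t)}) = \Pi\bullet \exp(-2\tau M)$. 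Because $L+\sum_i\beta^{(t)}_iL(S_i)$ has $\1$ in its kernel, $M$ has $D^{\nfrac{1}{2}}\1$ in its kernel, so I can diagonalize $M$ on the range of $\Pi$ with eigenvalues $0\le \mu_1\le \cdots\le \mu_{n-1}$, giving
\[
M\bullet \exp(-2\tau M) \;=\; \sum_{j\ge 1}\mu_j e^{-2\tau\mu_j}, \qquad \Pi\bullet \exp(-2\tau M) \;=\; \sum_{j\ge 1}e^{-2\tau\mu_j}\;=\;\Psi(P^{(t)},V)\;\ge\;\tfrac{1}{n}.
\]

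Finally I would split the sum at threshold $2\gamma$. For the indices with $\mu_j\le 2\gamma$, $\mu_j e^{-2\tau\mu_j}\le 2\gamma\, e^{-2\tau\mu_j}$, contributing at most $2\gamma\cdot \Pi\bullet\exp(-2\tau M)$. For $\mu_j>2\gamma$, using $\tau=\Theta(\log n/\gamma)$ the factor $e^{-2\tau\mu_j}\le e^{-4\tau\gamma}$ is at most $n^{-\Omega(1)}$ (adjusting the constant in $\tau$ as needed), and $\mu_j$ is bounded by $\|M\|\le 2+2\sum_i \beta^{(t)}_i \le \mathrm{poly}(n)$ using $L(S_i)\preceq 2D(S_i)$ and the bound $\Degin D(S_i)\Degin \preceq I$ on each star. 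Thus the high-eigenvalue contribution is at most $n\cdot\|M\|\cdot n^{-\Omega(1)}$, which is much smaller than $\gamma/n\le \gamma\cdot\Psi(P^{(t)},V)$. Adding the two contributions gives $M\bullet\exp(-2\tau M)\le O(\gamma)\cdot \Pi\bullet\exp(-2\tau M)$, which combined with the JL reduction completes the proof. The main technical obstacle is the high-eigenvalue tail: showing that the exponential decay of $e^{-2\tau\mu}$ defeats both the polynomial bound on $\|M\|$ and the $1/n$ lower bound on $\Psi$; this is where one must choose the constant in $\tau$ large enough and carefully bound $\|M\|$ via $\|\Degin L(S_i)\Degin\|\le 2$ together with the bound $\sum_i\beta^{(t)}_i=O(\gamma n)$ coming from the update rule.
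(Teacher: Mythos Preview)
Your plan is correct and follows the same approach as the paper: reduce via Johnson--Lindenstrauss to the true Gram matrix $D^{-1}P_{2\tau}(\beta^{(t)})$, bound the numerator by $C^{(t)}\bullet e^{-2\tau C^{(t)}}$ (your $M$ is the paper's $C^{(t)}$), diagonalize, split the spectral sum at a threshold $\Theta(\gamma)$, and absorb the large-eigenvalue tail using $\Psi\ge 1/n$ together with $\gamma\ge 1/n^2$. One remark on the constants you flagged: $\tau=\log n/(12\gamma)$ is fixed by the algorithm, so it cannot be ``adjusted''; the paper instead raises the splitting threshold to $24\gamma$ (so $e^{-48\tau\gamma}=n^{-4}$) and uses the sharper bound $\|C^{(t)}\|=O(1)$ (Fact~\ref{lem:upbound}), obtained by aggregating the stars via $\sum_i\beta_i^{(t)}L(S_i)\preceq 72\gamma\cdot 2L(K_V)\preceq 144\gamma\,D$ rather than summing the per-star bounds $\|\Degin L(S_i)\Degin\|\le 2$.
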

\noindent
This guarantee on the embedding allows us to apply \SDP-rounding techniques in the subroutine {\sc FindCut}. The following result is implicit in~\cite{OV}. Its proof appears in Section~\ref{sec:findcut} for completeness.
\begin{theorem}\label{thm:findcut}
Consider an embedding $\{v_i \in \R^d\}_{i \in V}$ with Gram matrix $X$ such that $L \bullet X^{(t)} \leq \alpha L(K_V) \bullet X^{(t)},$ for $\alpha > 0.$ On input $(G,b,\alpha, \{v_i\}_{i \in V}),$ {\sc FindCut} runs in time $\tilde{O}(md)$ and w.h.p. outputs a cut $C$ with $\phi(C) \leq O(\sqrt{\alpha}).$ Moreover, there is a constant $c = \Omega(b) \leq \nfrac{b}{100}$ such that either $C$ is $c$-balanced or 
$$
\sum_{i \in C} d_i R_i \bullet X \geq \nfrac{2}{3} \cdot L(K_V) \bullet X. 
$$
\end{theorem}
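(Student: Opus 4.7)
\begin{proofsketch}
The plan is to mirror the SDP-rounding technique of Orecchia--Vishnoi~\cite{OV}, adapted to work with any embedding $\{v_i\}_{i \in V}$ satisfying the quadratic-form hypothesis $L \bullet X \le \alpha\, L(K_V) \bullet X,$ rather than assuming the stronger SDP feasibility condition used there. The subroutine {\sc FindCut} will proceed in three phases: random projection to one dimension, a degree-weighted sweep cut, and a balance dichotomy.

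For the projection phase, I would draw a Gaussian vector $g \in \R^d$ and set $x_i = g^\top v_i.$ Then $\E \sum_{(i,j) \in E}(x_i - x_j)^2 = L \bullet X$ and $\E \sum_{i,j} \nfrac{d_i d_j}{2m}(x_i - x_j)^2 = L(K_V) \bullet X,$ so in expectation the ratio of these one-dimensional quadratic forms is at most $\alpha.$ Taking $O(\log n)$ independent projections and applying standard concentration, with high probability some projection yields $\sum_{(i,j) \in E}(x_i - x_j)^2 \le O(\alpha) \sum_{i,j} \nfrac{d_i d_j}{2m}(x_i - x_j)^2$ together with $\sum_i d_i (x_i - x_\avg)^2 = \Omega(L(K_V) \bullet X).$ The second phase is then the classical degree-weighted sweep cut: sort the vertices by $x_i$ and return the prefix cut $C$ minimizing conductance. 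A Cheeger-style analysis, based on the one-dimensional Rayleigh-quotient bound just established, produces $\phi(C) \le O(\sqrt{\alpha}).$ Computing the projections takes $O(nd \log n)$ time and the sweep $\tilde O(m),$ which meets the $\tilde O(md)$ running-time claim.

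The third phase is the balance dichotomy, and it is the main obstacle. If $C$ is $c$-balanced for the constant $c = \Omega(b) \le \nfrac{b}{100}$ fixed in OV, output it. Otherwise, I must establish $\sum_{i \in C} d_i R_i \bullet X \ge \nfrac{2}{3}\, L(K_V) \bullet X.$ Following OV, the key observation is that, on the one-dimensional embedding, an unbalanced minimum-conductance sweep cut must concentrate the weighted variance $\sum_i d_i (x_i - x_\avg)^2$ on its small side, since otherwise a more balanced prefix cut would have smaller or comparable conductance. The plan is to run a case analysis on two candidate sweep cuts---the minimum-conductance prefix and a ``$c$-balanced maximum-deviation'' prefix---and show that one of them is either $c$-balanced or captures at least a $\nfrac{2}{3}$ fraction of the weighted variance. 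The hard step is translating this one-dimensional concentration statement into the multi-dimensional bound on $\sum_{i \in C} d_i R_i \bullet X = \sum_{i \in C} d_i \norm{v_i - v_\avg}^2;$ this is handled by averaging over the $O(\log n)$ projections so that the 1D weighted variance $\sum_i d_i (x_i - x_\avg)^2$ faithfully estimates $L(K_V) \bullet X$ uniformly over all cuts simultaneously, which requires a Johnson--Lindenstrauss-style union bound. Tuning the threshold $c$ against $b$ so that both $c = \Omega(b)$ and $c \le \nfrac{b}{100}$ hold, and absorbing all constants into the $O(\sqrt{\alpha})$ conductance bound, completes the proof.
\end{proofsketch}
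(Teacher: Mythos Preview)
Your sketch has the right overall shape but a genuine gap in the unbalanced case, and it misses the key structural idea the paper uses to close it.

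The paper does \emph{not} handle both cases with a one-dimensional random projection. Instead, {\sc FindCut} first computes the radial distances $r_i=\norm{v_i-v_\avg}$ and the set $R=\{i:r_i^2\le 32\nfrac{(1-b)}{b}\cdot\nfrac{\Psi}{2m}\}$, then branches. If $L(K_R)\bullet X\ge\nfrac{\Psi}{128}$ the embedding is \emph{roundable}: variance is spread over a large-volume set, and the projection-plus-sweep routine {\sc ProjRound} returns a $c$-balanced cut of conductance $O(\sqrt{\alpha})$. Otherwise, almost all of $\Psi$ sits on the small set $\bar R$, and the paper performs a \emph{radial} sweep directly on the values $r_i$: it takes $x_i=r_i-r_z$ (truncated at a volume-$\nfrac{b}{4}\cdot 2m$ threshold $z$), checks $x^\top L x\le\alpha\Psi$ via the triangle inequality $(r_i-r_j)^2\le\norm{v_i-v_j}^2$, and applies a Cheeger-type sweep lemma to find some $S_h$ with $\phi(S_h)\le O(\sqrt{\alpha})$. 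Because this cut is a prefix in the $r_i$-ordering, one immediately has $\sum_{i\in C}d_i r_i^2\ge\nfrac{2}{3}\Psi$; no projection is involved and no lifting is needed.

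The step in your outline that fails is the ``translation'' from one-dimensional to multi-dimensional variance. Your cut $C$ is chosen from a \emph{single} Gaussian projection $x_i=g^\top v_i$, and for a single projection $(x_i-x_\avg)^2$ is only an unbiased estimate of $\norm{v_i-v_\avg}^2$ with constant relative variance; there is no per-vertex concentration. A JL union bound over ``all cuts simultaneously'' cannot fix this: there are $2^n$ cuts, and moreover $C$ itself is a function of the projection, so the quantity you want to control is adaptive. Averaging over $O(\log n)$ projections gives you a low-dimensional embedding with good pairwise distances, but you still need a one-dimensional ordering to sweep, and the moment you pick one direction you are back to the same problem. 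The paper's fix is exactly to sweep on $r_i$ itself in the non-roundable case, which makes $\sum_{i\in C}d_iR_i\bullet X=\sum_{i\in C}d_ir_i^2$ large \emph{by construction}; this is the idea you are missing.
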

\noindent
The following corollary is a simple consequence of Lemma~\ref{lem:obj} and Theorem~\ref{thm:findcut}:
\begin{corollary}\label{cor:unbalanced}
At iteration $t$ of \alg, if $\Psi(P^{(t)}, V) \geq \frac{1}{n}$ and $S^{(t)}$ is not $c$-balanced, then w.h.p. $\Psi(P^{(t)}, S) \geq \nfrac{1}{2} \cdot \Psi(P^{(t)}, V).$
\end{corollary}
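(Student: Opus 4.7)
The plan is to combine Lemma~\ref{lem:obj}, Theorem~\ref{thm:findcut}, and the Johnson--Lindenstrauss guarantee built into the embedding $\{v_i^{(t)}\}_{i \in V}$. Under the hypothesis $\Psi(P^{(t)},V) \ge 1/n$, Lemma~\ref{lem:obj} gives (w.h.p.\ over the random projections) that $L \bullet X^{(t)} \le O(\gamma)\, L(K_V) \bullet X^{(t)}$, so that {\sc FindCut} can be invoked on $\{v_i^{(t)}\}$ with parameter $\alpha = O(\gamma)$. By Theorem~\ref{thm:findcut} it outputs a cut $S^{(t)}$ of conductance $O(\sqrt{\gamma})$, and since by assumption $S^{(t)}$ is not $c$-balanced, the second clause of that theorem applies to give
\[
\sum_{i \in S^{(t)}} d_i\, R_i \bullet X^{(t)} \;\ge\; \tfrac{2}{3}\, L(K_V) \bullet X^{(t)}.
\]

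The remaining step is to transfer this inequality from the JL-compressed Gram matrix $X^{(t)}$ back to the random-walk potential $\Psi$. By Fact~\ref{fct:sqroot}, the columns $\tilde v_i \defeq D^{-\nfrac{1}{2}} P^{(t)} e_i$ have Gram matrix exactly $D^{-1} P_{2\tau}(\beta^{(t)})$, and by the definition of $v_i^{(t)}$ one checks that $v_i^{(t)} = \sqrt{n/k}\, U^\top \tilde v_i$ where $U$ has columns $u_1,\dots,u_k$; hence also $v_\avg^{(t)} = \sqrt{n/k}\, U^\top \tilde v_\avg$ by linearity. Applying the Johnson--Lindenstrauss Lemma (Lemma~\ref{lem:jl}) to the $n$ fixed vectors $\{\tilde v_i - \tilde v_\avg\}_{i \in V}$ with $k = O(\log n/\eps^2)$ directions preserves all their squared norms up to a factor of $(1\pm\eps)$ with high probability, yielding
\[
R_i \bullet X^{(t)} = \norm{v_i^{(t)} - v_\avg^{(t)}}^2 \in (1 \pm \eps)\, \norm{\tilde v_i - \tilde v_\avg}^2 = (1 \pm \eps)\, R_i \bullet D^{-1} P_{2\tau}(\beta^{(t)}).
\]
Summing with weights $d_i$ over $i \in S^{(t)}$ and separately over $i \in V$, and using the identity $\Psi(P^{(t)}, i) = d_i\, R_i \bullet D^{-1} P_{2\tau}(\beta^{(t)})$ together with Fact~\ref{fct:totdev}, we get simultaneously $\sum_{i \in S^{(t)}} d_i\, R_i \bullet X^{(t)} \le (1+\eps)\, \Psi(P^{(t)}, S^{(t)})$ and $L(K_V) \bullet X^{(t)} \ge (1-\eps)\, \Psi(P^{(t)}, V)$.

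Plugging these into the inequality from {\sc FindCut} gives
\[
\Psi(P^{(t)}, S^{(t)}) \;\ge\; \tfrac{2}{3}\cdot \tfrac{1-\eps}{1+\eps}\, \Psi(P^{(t)}, V) \;\ge\; \tfrac{1}{2}\, \Psi(P^{(t)}, V),
\]
where the final step uses the algorithm's choice of $\eps$, for which $(1+\eps)/(1-\eps) \le 4/3$. Since $S^{(t)} \subseteq S \defeq \bigcup_{j \le t} S^{(j)}$ and $\Psi(\cdot,\cdot)$ is monotone in its second argument, the same bound holds for $S$ in place of $S^{(t)}$, establishing the corollary.

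The only subtle point, and the main bookkeeping obstacle, is verifying that the \emph{same} JL application preserves both the per-vertex quantities $R_i \bullet X^{(t)}$ and the global quantity $L(K_V) \bullet X^{(t)}$ in the directions we need; this works out because $v_\avg^{(t)}$ equals the image of $\tilde v_\avg$ under the random projection, so each $R_i \bullet X^{(t)}$ is just the squared length of a single projected vector rather than a signed combination of squared pairwise distances, and a union bound over $O(n)$ such vectors suffices.
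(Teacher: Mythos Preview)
Your proof is correct and follows essentially the same route as the paper's: invoke Lemma~\ref{lem:obj} to get the hypothesis of Theorem~\ref{thm:findcut}, use the second clause of that theorem (since $S^{(t)}$ is not $c$-balanced) to obtain $\sum_{i\in S^{(t)}} d_iR_i\bullet X^{(t)}\ge\tfrac23\,L(K_V)\bullet X^{(t)}$, and then apply Lemma~\ref{lem:jl} together with $(1+\eps)/(1-\eps)\le\tfrac43$ to transfer this back to $\Psi$. Your extra care in noting that $v_\avg^{(t)}$ is the projected image of $\tilde v_\avg$ (so that each $R_i\bullet X^{(t)}$ is a single projected squared norm) and your final monotonicity step from $S^{(t)}$ to $S$ are welcome clarifications not spelled out in the paper.
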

In words, at the iteration $t$ of \alg, the cut $S^{(t)}$ must either be $c$-balanced or be an unbalanced cut that contributes a large constant fraction of the total deviation of $P^{(t)}$ from the stationary distribution.
In this sense, $S^{(t)}$ is the main reason for the failure of $P^{(t)}$ to achieve better mixing. To eliminate this obstacle and drive the potential further down, $P^{(t)}$ is updated to $P^{(t+1)}$ by accelerating the convergence to stationary from all vertices in $S^{(t)}.$ Formally, this is achieved by adding weighted stars rooted at all vertices over $S^{(t)}$ to the transition-rate matrix of the {\sc AHK} random walk $P^{(t)}$. 

\subsubsection*{\it Potential Reduction.}
The next theorem crucially exploits the stability of the process $\cH(\beta^{(t)})$ and Corollary~\ref{cor:unbalanced} to show that the potential decreases by a constant fraction at every iteration in which an unbalanced cut is found.
More precisely, the theorem shows that accelerating the convergence from $S^{(t)}$ at iteration $t$ of \alg has the effect of eliminating at least a constant fraction of the total deviation due to $S^{(t)}.$
The proof is a simple application of the Golden-Thompson inequality~\cite{Bhatia} and mirrors the main step in the \mwu analysis.
\begin{theorem}\label{thm:reduction}
At iteration $t$ of \alg, if $\Psi(P^{(t)}, V) \geq \frac{1}{n}$ and $S^{(t)}$ is not $c$-balanced, then w.h.p.
$$
\Psi(P^{(t+1)}, V) \leq \Psi(P^{(t)}, V) - \nfrac{1}{3} \cdot \Psi(P^{(t)}, S^{(t)}) \leq \nfrac{5}{6} \cdot \Psi(P^{(t)}, V).
$$
\end{theorem}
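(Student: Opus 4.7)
The plan is to mirror the standard \mmwu/Golden-Thompson analysis applied to the exponential embedding $D^{-1}P^{(t)}$. First I would recast the potential in trace form. Let $M^{(t)} \defeq D^{-\nfrac{1}{2}}(L + \sum_i \beta_i^{(t)} L(S_i)) D^{-\nfrac{1}{2}}$ so that $D^{-1}P^{(t)} = D^{-\nfrac{1}{2}} e^{-\tau M^{(t)}} D^{-\nfrac{1}{2}}$, and note using Fact~\ref{fct:decomp} that $D^{-\nfrac{1}{2}}L(K_V)D^{-\nfrac{1}{2}} = \Pi$, the projection onto the orthogonal complement of $\nfrac{1}{\sqrt{2m}} D^{\nfrac{1}{2}}\1$. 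Since $D^{\nfrac{1}{2}}\1$ lies in the kernel of every $M^{(t)}$, the projection $\Pi$ commutes with each $M^{(t)}$ and $\Psi(P^{(t)}, V) = \Tr(\Pi e^{-\tau M^{(t)}}) = \Tr(e^{-\tau M^{(t)}}) - 1$.

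Next I would apply Golden-Thompson to the increment. By the update rule, $M^{(t+1)} = M^{(t)} + \eta Y$ where $\eta \defeq \nfrac{72\gamma}{T}$ and $Y \defeq \sum_{i \in S^{(t)}} D^{-\nfrac{1}{2}} L(S_i) D^{-\nfrac{1}{2}} \succeq 0$. Golden-Thompson gives $\Tr(e^{-\tau M^{(t+1)}}) \le \Tr(e^{-\tau M^{(t)}} e^{-\tau \eta Y})$, so after subtracting the constant kernel contribution,
\[
\Psi(P^{(t+1)}, V) \;\leq\; \Psi(P^{(t)}, V) \;+\; \Tr\bigl(e^{-\tau M^{(t)}}(e^{-\tau \eta Y}-I)\bigr).
\]
To make the last term negative and large enough, I would establish a width bound on $Y$. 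Using $\sum_{i \in S^{(t)}} L(S_i) \preceq 2 D_{S^{(t)}} + \nfrac{\vol(S^{(t)})}{m} D \preceq O(1)\cdot D$, one gets $\|Y\| = O(1)$, and combined with $\tau \eta = \nfrac{1}{2}$ by the choice of $\tau$ and $T$, the product $\tau \eta Y$ has spectral norm bounded by a fixed constant $\rho$. Then the convexity inequality $e^{-X} \preceq I - \tfrac{1-e^{-\rho}}{\rho} X$ for $0 \preceq X \preceq \rho I$ yields $e^{-\tau \eta Y} - I \preceq -\theta\,\tau\eta\, Y$ for an explicit constant $\theta > 0$, hence
\[
\Psi(P^{(t+1)}, V) \;\leq\; \Psi(P^{(t)}, V) \;-\; \theta\,\tau\eta \cdot \Tr\bigl(e^{-\tau M^{(t)}} Y\bigr).
\]

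To finish I would relate the trace to $\Psi(P^{(t)}, S^{(t)})$. Expanding,
\[
\Tr(e^{-\tau M^{(t)}} Y) = \sum_{i \in S^{(t)}} L(S_i) \bullet D^{-1}P^{(t)} = \sum_{i \in S^{(t)}} \sum_{j \in V} \tfrac{d_i d_j}{2m}\, \|v_i - v_j\|^2,
\]
where $\{v_i\}$ is the embedding corresponding to $D^{-1}P^{(t)}$. By Jensen's inequality applied to $v_\avg = \sum_j \tfrac{d_j}{2m} v_j$, the inner sum is at least $d_i \|v_i - v_\avg\|^2 = \Psi(P^{(t)}, i)$, giving $\Tr(e^{-\tau M^{(t)}} Y) \geq \Psi(P^{(t)}, S^{(t)})$. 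Combining everything and checking that the product of constants $\theta\,\tau\eta$ exceeds $\nfrac{1}{3}$ with the chosen parameters yields the first inequality. The second inequality then follows immediately from Corollary~\ref{cor:unbalanced}, which gives $\Psi(P^{(t)}, S^{(t)}) \geq \nfrac{1}{2} \cdot \Psi(P^{(t)}, V)$.

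The main obstacle is the width/constant bookkeeping: proving $\|Y\| = O(1)$ uniformly over iterations (leveraging $\vol(S^{(t)}) \leq 2m$), and pinning down the constants in the convexity step carefully enough to recover the factor $\nfrac{1}{3}$. A secondary subtlety is justifying that Golden-Thompson may be applied directly to $\Tr(e^{-\tau M^{(t+1)}})$ and that the trivial eigenvalue $1$ can be cleanly subtracted afterward; this works precisely because $\Pi$ commutes with both $M^{(t)}$ and $Y$, so no cross-terms between the kernel direction and the active spectrum arise.
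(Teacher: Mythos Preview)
Your proposal is correct and follows essentially the same route as the paper: trace form via $\Tr(e^{-2\tau C^{(t)}})-1$, Golden--Thompson, the convexity bound $e^{-X}\preceq I-\tfrac{1-e^{-\rho}}{\rho}X$, then $L(S_i)\succeq d_iR_i$ (your Jensen step) and finally Corollary~\ref{cor:unbalanced}. One bookkeeping point to fix: the potential is $L(K_V)\bullet D^{-1}P_{2\tau}(\beta^{(t)})$, not $D^{-1}P_\tau$, so the relevant product is $2\tau\eta=1$ (not $\tau\eta=\tfrac12$); with the paper's width bound $\sum_{i\in S^{(t)}}L(S_i)\preceq 2D$ this makes the exponent equal to $2$ and the constant $\tfrac{1-e^{-2}}{2}>\tfrac13$, which is exactly what you need.
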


We are now ready to prove Theorem~\ref{thm:bsmain} and Theorem~\ref{thm:bsmain2} by applying Lemma~\ref{thm:reduction} to show that after $O(\log n)$ iterations, the potential must be sufficiently low to yield the required certificate according to Lemma~\ref{lem:pot}.

\begin{proof}[Proof of Theorem~\ref{thm:bsmain}]
If \alg outputs a cut $S$ in Step $4$, by construction, we have that $\phi(S) \leq O(\sqrt{\gamma})$
and $S$ is $\Omega(b)$-balanced. 
Alternatively, at iteration $t,$ if $L(K_V) \bullet X^{(t)} \leq \nfrac{1+\eps}{n},$ we have by Lemma~\ref{lem:jl} that
$$
\Psi(P^{(t)}, V) = L(K_V) \bullet D^{-1} P_{2\tau}(\beta^{(t)}) \leq \frac{1}{1-\eps} L(K_V) \bullet X^{(t)} \leq \frac{1+\eps}{1-\eps} \cdot \frac{1}{1n}\leq \frac{4}{3n}.
$$
Therefore, by Lemma~\ref{lem:pot}, we have a certificate that no $b$-balanced cut of conductance less than $\gamma$ exists in $G.$
Otherwise, we must have $L(K_V) \bullet X^{(t)} \geq \nfrac{1+\eps}{n},$
which, by Lemma~\ref{lem:jl}, implies that
$$
\Psi(P^{(t)}, V) \geq \nfrac{1}{n}.
$$
Then, by Lemma~\ref{lem:obj} and Theorem~\ref{thm:findcut}, we have w.h.p. that {\sc FindCut} does not fail and outputs a cut $S^{(t)}$ with $\phi(S^{(t)}) \leq O(\sqrt{\gamma}).$ As \alg has not terminated in Step $4,$ it must be the case that $S^{(t)}$ is not $c$-balanced and, by Theorem~\ref{thm:reduction}, we obtain that w.h.p. $ \Psi(P^{(t+1)},V) \leq \nfrac{5}{6} \cdot \Psi(P^{(t)}, V).$ 
Now, 
$$
\Psi(P^{(1)}, V) = L(K_V) \bullet D^{-1} P_{2\tau}(\0) \leq I \bullet P_{2\tau}(\0)  \leq n.
$$
Hence, after $\nfrac{2 \log n}{\log(\nfrac{6}{5})} \leq 12 \log n= T$ iterations, w.h.p.
we have that $\Psi(P^{(T)}, V) \leq \nfrac{1}{n}$ and, by Lemma~\ref{lem:pot}, no $b$-balanced cut of conductance less than $\gamma$ exists.

We now consider the running time required by the algorithm at every iteration.
In Step $1,$ we compute $k = O(\log n),$ products of the form $D^{-\nfrac{1}{2}} P^{(t)}u,$ where $u$ is an unit vector,
using the {\sc ExpV} algorithm based on the Spielman-Teng solver, given in Theorem~\ref{thm:expRational2}. 
This application of Theorem~\ref{thm:expRational2} is explained in Section~\ref{sec:overview-exp}.
By the definition of $\beta^{(t)},$ at iteration $t$ we have:
\begin{align*}
\norm{HMH} = \norm{\tau \Degin (L + \sum_{i \in V} \nfrac{d_i}{2m} \cdot  \beta_i D + \sum_{i \in V}  \nfrac{d_i}{2m} \cdot  \beta_i e_i e_i^\top) \Degin} \leq \norm{\tau \Degin (L + 2\cdot 72 \cdot \gamma D) \Degin} \\
\leq O(\tau) = \poly(n). 
\end{align*}
Moreover, it is easy to see that our argument is robust up to an error $\delta = \nfrac{1}{\poly(n)}$ in this computation 
and the sparsity of $M$ is $O(m)$ so that the running time of a single matrix-exponential-vector product is $\tilde{O}(m).$
Given the embedding produced by Step $1,$ $L(K_V) \bullet X^{(t)}$ can be computed in time $\tilde{O}(nk)= \tilde{O}(n)$ by computing the distances $||v^{(t)}_i - v^{(t)}_\avg||^2$ for all $i \in V.$
By Theorem~\ref{thm:findcut}, Step $3$ runs in time $\tilde{O}(mk) = \tilde{O}(m).$ Finally, both Steps $4$ and $5$ can be performed in time $\tilde{O}(m).$ As there are at most $O(\log n)$ iterations, the theorem follows.
\end{proof}

\noindent
Theorem~\ref{thm:bsmain2} is proved similarly. It suffices to show that a single matrix-exponential-vector product requires time $\tilde{O}(\nfrac{m}{\sqrt{\gamma}}).$
\begin{proof}[Proof of Theorem~\ref{thm:bsmain2}]
Using the algorithm of Theorem~\ref{thm:expPoly}, we obtain that 
$\norm{A} \leq O(\tau),$ so that $k = \tilde{O}(\sqrt{\tau})=\tilde{O}(\nfrac{1}{\sqrt{\gamma}}) \leq \tilde{O}(n).$
Hence, the running time of a single computation for this method is $\tilde{O}(m\sqrt{\tau}) =\tilde{O}(\nfrac{m}{\sqrt{\gamma}}).$  
\end{proof}

\subsection{Proofs}\label{sec:proofs}
In this section we provide the proofs from the Section \ref{sec:balsep-main}. We start with some preliminaries.

\subsubsection{Preliminaries}

\subsubsection*{\it Vector and Matrix Notation.}
For a symmetric matrix $A,$ denote by $\lambda_i(A),$ the $i^\text{th}$ smallest eigenvalue of $A.$ For a vector $x \in \mathbb{R}^{n},$ let ${\rm supp}(x)$ be the set of vertices where $x$ is not zero.

\begin{fact} \label{fct:eigen}
$L \preceq 2 \cdot D$ and $L(S_i) \preceq 2 \cdot D.$ 
\end{fact}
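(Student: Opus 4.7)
The plan is to deduce both inequalities from a single elementary principle: for every weighted undirected graph $H$, its signless Laplacian $D(H)+A(H)$ is positive semidefinite, so $L(H)=D(H)-A(H)\preceq 2D(H)$. Once this is in hand, both claims reduce to pure bookkeeping.

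For the first inequality, applied to $H=G$ this immediately gives $L=L(G)\preceq 2D(G)=2D$. The PSD-ness of $D+A$ is the standard identity $x^{\top}(D+A)x=\sum_{\{i,j\}\in E}(x_i+x_j)^{2}\geq 0$, which one just quotes.

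For the second inequality, apply the same principle to $H=S_i$ to conclude $L(S_i)\preceq 2D(S_i)$. It then suffices to show the diagonal comparison $D(S_i)\preceq D$. This is a direct calculation from the definition of $S_i$: at vertex $i$ the weighted degree is $\sum_{j}\tfrac{d_id_j}{2m}=d_i\cdot\tfrac{\sum_j d_j}{2m}=d_i=D_{ii}$, and at any $j\neq i$ it is $\tfrac{d_id_j}{2m}\leq d_j=D_{jj}$ since $d_i\leq 2m$. Chaining $L(S_i)\preceq 2D(S_i)\preceq 2D$ finishes the proof.

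There is no real obstacle; the only thing to be careful about is the normalization of the star weights (so that the self-weighted degree at the root comes out to exactly $d_i$) and the trivial bound $d_i\leq 2m$. Both are immediate from the definitions recalled in Section~\ref{sec:balsep-main}.
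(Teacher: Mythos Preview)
The paper states Fact~\ref{fct:eigen} without proof, so there is nothing to compare against; your argument is correct and is exactly the standard route one would take. The key observation $D(H)+A(H)\succeq 0$ (equivalently $x^\top(D(H)+A(H))x=\sum_{\{i,j\}}w_{ij}(x_i+x_j)^2\ge 0$ for nonnegative weights) immediately yields $L(H)\preceq 2D(H)$, and your diagonal comparison $D(S_i)\preceq D$ is accurate under the paper's definition of $S_i$ (the root has weighted degree exactly $d_i$, every other vertex $j$ has weighted degree $d_id_j/2m\le d_j$).
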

\begin{fact}\label{fct:stardomination}
For all $i \in V,$ $L(S_i) = \nfrac{d_i}{2m} \cdot L(K_V) + d_i R_i.$
In particular, $L(S_i) \succeq d_i R_i.$
\end{fact}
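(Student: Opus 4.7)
The plan is to prove the matrix identity $L(S_i) = (d_i/2m)\,L(K_V) + d_i R_i$ by verifying that both sides agree as quadratic forms on an arbitrary $x \in \mathbb{R}^V$, and then to deduce the PSD inequality $L(S_i) \succeq d_i R_i$ from the fact that $L(K_V) \succeq 0$ (any Laplacian is PSD, and here $(d_i/2m) \geq 0$).

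For the quadratic-form verification, first I would write $L(S_i)$ from its definition as a weighted star Laplacian: for any $x \in \mathbb{R}^V$,
\[
x^\top L(S_i) x \;=\; \sum_{j \in V} \frac{d_i d_j}{2m} (x_i - x_j)^2.
\]
Expanding the square and using $\sum_j d_j = 2m$, this simplifies to
\[
x^\top L(S_i) x \;=\; d_i x_i^2 - 2 d_i x_i \, x_{\avg} + \frac{d_i}{2m} \sum_{j} d_j x_j^2,
\]
where $x_{\avg} \defeq \sum_j (d_j/2m)\, x_j$ is exactly the quantity appearing in the definition of $R_i$.

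Next, I would expand the right-hand side. By Fact~\ref{fct:decomp}, $L(K_V) = D - (1/2m)\,D\mathbf{1}\mathbf{1}^\top D$, so
\[
\frac{d_i}{2m}\, x^\top L(K_V) x \;=\; \frac{d_i}{2m} \sum_j d_j x_j^2 - d_i \, x_{\avg}^2.
\]
From the definition of $R_i$ (as the matrix such that $R_i \bullet X = \|v_i - v_{\avg}\|^2$ for the embedding with Gram matrix $X$, which gives $x^\top R_i x = (x_i - x_{\avg})^2$), I get
\[
d_i\, x^\top R_i x \;=\; d_i x_i^2 - 2 d_i x_i x_{\avg} + d_i x_{\avg}^2.
\]
Adding these two displays, the $d_i x_{\avg}^2$ terms cancel and the result matches $x^\top L(S_i) x$ computed above. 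Since this holds for every $x$ and both sides are symmetric matrices, the identity follows.

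Finally, the "in particular" clause is immediate: $L(K_V)$ is a Laplacian of a (positively) weighted graph, hence $L(K_V) \succeq 0$, and $d_i/(2m) > 0$, so $L(S_i) - d_i R_i = (d_i/2m)\, L(K_V) \succeq 0$. I do not anticipate any real obstacles here; the only subtlety is being careful with the weight normalization $d_i d_j/(2m)$ in the star and complete-graph definitions, and making sure the $\sum_j d_j = 2m$ identity is used at the right step in the expansion.
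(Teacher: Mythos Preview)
Your proposal is correct and follows exactly the approach the paper itself sketches (in a commented-out proof fragment): verify the identity by comparing the quadratic forms $x^\top L(S_i) x$ and $x^\top\bigl((d_i/2m)L(K_V)+d_iR_i\bigr)x$, expanding via $\sum_j d_j = 2m$ and the definitions of $K_V$, $S_i$, and $R_i$, and then deduce the PSD inequality from $L(K_V)\succeq 0$. There is nothing to add.
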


\subsubsection*{\it Notation for \alg.}
At iteration $t,$ we denote 
$$
C^{(t)} \defeq D^{-\nfrac{1}{2}} Q(\beta^{(t)}) D^{\nfrac{1}{2}} = 
D^{-\nfrac{1}{2}} (L + \sum_{i \in V} \beta^{(t)}_i L(S_i)) D^{-\nfrac{1}{2}}.
$$
The following are useful facts to record about $C^{(t)}:$

\begin{fact}\label{fct:eigenvalue}
The vector $D^{\nfrac{1}{2}}$ is the eigenvector of $C^{(t)}$ with smallest eigenvalue $0.$
\end{fact}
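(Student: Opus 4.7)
The plan is to verify two things: that $D^{\nicefrac{1}{2}}\mathbf{1}$ is annihilated by $C^{(t)}$, and that $0$ is indeed the \emph{smallest} eigenvalue (so the matrix is PSD with this vector in the kernel). Both are essentially direct computations from the definitions, so the proof will be short.

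First I would substitute the definition and compute
\[
C^{(t)} \cdot D^{\nicefrac{1}{2}}\mathbf{1} \;=\; D^{-\nicefrac{1}{2}}\Bigl(L + \sum_{i \in V} \beta^{(t)}_i\, L(S_i)\Bigr) D^{-\nicefrac{1}{2}} \cdot D^{\nicefrac{1}{2}}\mathbf{1} \;=\; D^{-\nicefrac{1}{2}}\Bigl(L + \sum_{i \in V} \beta^{(t)}_i\, L(S_i)\Bigr)\mathbf{1}.
\]
Since $L$ is a combinatorial Laplacian, $L\mathbf{1}=\mathbf{0}$, and similarly each star Laplacian $L(S_i)$ annihilates the all-ones vector. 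Hence the right-hand side is $\mathbf{0}$, showing that $D^{\nicefrac{1}{2}}\mathbf{1}$ is an eigenvector with eigenvalue $0$.

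Next I would argue that $0$ is the minimum eigenvalue by observing that $C^{(t)}$ is PSD. This follows because $L \succeq 0$ and each $L(S_i) \succeq 0$ (every graph Laplacian is PSD), the coefficients $\beta^{(t)}_i$ produced by \alg{} are non-negative by construction (initialized to $\mathbf{0}$ and only incremented in Step~$5$), and a non-negative combination of PSD matrices is PSD. Conjugating by the symmetric matrix $D^{-\nicefrac{1}{2}}$ preserves positive semidefiniteness, so $C^{(t)} \succeq 0$, which forces all eigenvalues to be $\geq 0$ and completes the proof.

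There is no real obstacle here: the statement is essentially a bookkeeping remark recording that the transition-rate-matrix-like object $C^{(t)}$ behaves like a normalized Laplacian, with the usual kernel spanned by $D^{\nicefrac{1}{2}}\mathbf{1}$. The only minor point to be careful about is confirming non-negativity of $\beta^{(t)}$ throughout the algorithm, which is immediate from inspecting the update rule in Step~$5$ of \alg.
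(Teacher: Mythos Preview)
Your proof is correct and is exactly the standard argument one would expect here; the paper itself states this as a fact without proof, so there is nothing to compare against. Your reading of the statement as referring to the vector $D^{\nicefrac{1}{2}}\mathbf{1}$ (the paper's notation is slightly abbreviated) is the intended one, and both steps — annihilation of $\mathbf{1}$ by the constituent Laplacians and positive semidefiniteness via non-negativity of $\beta^{(t)}$ — are cleanly justified.
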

\begin{fact}\label{lem:upbound}
$C^{(t)} \preceq O(1) \cdot I.$
\end{fact}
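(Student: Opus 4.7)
The plan is to bound $C^{(t)}$ by handling the contribution of $L$ and of $\sum_i \beta_i^{(t)} L(S_i)$ separately, both in the form $O(1) \cdot D$, and then conjugate by $D^{-\nfrac{1}{2}}$. The term involving $L$ is immediate: Fact~\ref{fct:eigen} gives $L \preceq 2D$, so $D^{-\nfrac{1}{2}} L D^{-\nfrac{1}{2}} \preceq 2I$.

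The main work is in bounding the sum of weighted star Laplacians, using two invariants of $\beta^{(t)}$ that are manifest from the description of \alg. First, since $\beta^{(1)} = \0$ and the update rule adds $\nfrac{72\gamma}{T}$ to each coordinate in $S^{(t)}$ over $T = 12\log n$ iterations, we always have $\beta_i^{(t)} \leq 72\gamma \leq O(1)$ (using $\gamma < 1$). Second, $\supp(\beta^{(t)}) \subseteq S \defeq \cup_{r < t} S^{(r)}$, and by the termination checks in Steps 4--5, while \alg is still running we have $\vol(S) < c \cdot 2m$.

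To turn these into a spectral bound, I will use $L(S_i) \preceq 2D(S_i)$ (Fact~\ref{fct:eigen}), where $D(S_i)$ is the diagonal degree matrix of the star $S_i$ with $(D(S_i))_{ii} = d_i$ and $(D(S_i))_{jj} = \nfrac{d_i d_j}{2m}$ for $j \neq i$. Summing, $\sum_i \beta_i^{(t)} D(S_i)$ is diagonal with $(j,j)$-entry equal to $\beta_j^{(t)} d_j + d_j \sum_{i \neq j} \beta_i^{(t)} \nfrac{d_i}{2m}$. The first summand is $O(d_j)$ by the $\ell_\infty$ bound on $\beta^{(t)}$; for the second, $\sum_i \beta_i^{(t)} \nfrac{d_i}{2m} \leq 72\gamma \cdot \nfrac{\vol(S)}{2m} \leq 72\gamma c = O(1)$ by both invariants. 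Hence $\sum_i \beta_i^{(t)} L(S_i) \preceq O(1) \cdot D$, and conjugating the combined bound $L + \sum_i \beta_i^{(t)} L(S_i) \preceq O(1) \cdot D$ by $D^{-\nfrac{1}{2}}$ yields $C^{(t)} \preceq O(1) \cdot I$.

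There is no serious obstacle here; the fact is essentially a bookkeeping check that the weights generated by \alg stay controlled both in $\ell_\infty$ and in the weighted sense $\sum_i \beta_i^{(t)} d_i$. The only care needed is to avoid the naive bound $\sum_i \beta_i^{(t)} L(S_i) \preceq 2D \sum_i \beta_i^{(t)}$, which is too lossy because $|\supp(\beta^{(t)})|$ may be large; instead one must exploit the off-diagonal scaling by $\nfrac{d_i}{2m}$ inside $D(S_i)$ together with the volume bound on $\supp(\beta^{(t)})$.
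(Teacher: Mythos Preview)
Your proof is correct. The paper states this fact without proof, so there is no line-by-line comparison to make; your argument is a valid verification.

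That said, the route can be shortened and the volume bound on $\supp(\beta^{(t)})$ is not actually needed. Using only the $\ell_\infty$ bound $\beta_i^{(t)} \leq 72\gamma$, one has
\[
\sum_{i} \beta_i^{(t)} L(S_i) \;\preceq\; 72\gamma \sum_{i \in V} L(S_i) \;=\; 144\gamma\, L(K_V) \;\preceq\; 144\gamma\, D \;\preceq\; 144\, D,
\]
since every weighted edge of $K_V$ appears in exactly two stars (so $\sum_{i} L(S_i) = 2L(K_V)$) and $L(K_V) \preceq D$ by Fact~\ref{fct:decomp}. This is precisely the estimate the paper invokes in the proof of Theorem~\ref{thm:reduction}, so it is closer in spirit to the surrounding text than your per-entry diagonal bookkeeping. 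One minor citation issue: Fact~\ref{fct:eigen} as written asserts $L(S_i) \preceq 2D$ with $D = D(G)$, not $L(S_i) \preceq 2D(S_i)$; the latter is of course true as well (the normalized Laplacian of any weighted graph has eigenvalues in $[0,2]$), but it is not literally what that fact states.
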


\subsubsection{Useful Lemmata}

\begin{lemma} \label{lem:tracedecomp}
$
\Psi(P^{(t)}, V) = L(K_V) \bullet D^{-1} P_{2\tau}(\beta^{(t)}) = \Tr(e^{-2\tau C^{(t)}}) - 1 .
$
\end{lemma}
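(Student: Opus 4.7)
The plan is to chain together Fact~\ref{fct:totdev}, the algebraic similarity between $Q(\beta^{(t)})$ and $-C^{(t)}$, Fact~\ref{fct:decomp} for $L(K_V)$, and Fact~\ref{fct:eigenvalue} that pins down the kernel of $C^{(t)}$. The first equality in the lemma is literally the statement of Fact~\ref{fct:totdev}, so the real content is the second equality.

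I would first rewrite $P_{2\tau}(\beta^{(t)})$ in terms of $C^{(t)}$. Setting $M \defeq L + \sum_i \beta_i^{(t)} L(S_i)$, we have $Q(\beta^{(t)}) = -MD^{-1}$ and $C^{(t)} = D^{-\nfrac{1}{2}} M D^{-\nfrac{1}{2}}$. Then $MD^{-1} = D^{\nfrac{1}{2}} C^{(t)} D^{-\nfrac{1}{2}}$, so by a standard conjugation identity for matrix exponentials,
\[
P_{2\tau}(\beta^{(t)}) \;=\; e^{-2\tau M D^{-1}} \;=\; D^{\nfrac{1}{2}}\, e^{-2\tau C^{(t)}}\, D^{-\nfrac{1}{2}},
\]
and hence $D^{-1} P_{2\tau}(\beta^{(t)}) = D^{-\nfrac{1}{2}} e^{-2\tau C^{(t)}} D^{-\nfrac{1}{2}}$.

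Next, I would plug in the decomposition $L(K_V) = D^{\nfrac{1}{2}}\bigl(I - \tfrac{1}{2m} D^{\nfrac{1}{2}}\1\1^\top D^{\nfrac{1}{2}}\bigr) D^{\nfrac{1}{2}}$ from Fact~\ref{fct:decomp} and use cyclicity of the trace:
\[
L(K_V) \bullet D^{-1}P_{2\tau}(\beta^{(t)}) \;=\; \Tr\!\left(\left(I - \tfrac{1}{2m}\, D^{\nfrac{1}{2}}\1\1^\top D^{\nfrac{1}{2}}\right) e^{-2\tau C^{(t)}}\right).
\]
This splits into $\Tr(e^{-2\tau C^{(t)}})$ minus $\tfrac{1}{2m}\,\1^\top D^{\nfrac{1}{2}} e^{-2\tau C^{(t)}} D^{\nfrac{1}{2}}\1$.

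The final step is where Fact~\ref{fct:eigenvalue} enters: since $D^{\nfrac{1}{2}}\1$ is in the kernel of $C^{(t)}$, it is fixed by $e^{-2\tau C^{(t)}}$. Therefore $\1^\top D^{\nfrac{1}{2}} e^{-2\tau C^{(t)}} D^{\nfrac{1}{2}}\1 = \1^\top D\1 = 2m$, and the second term is exactly $1$. This yields $L(K_V) \bullet D^{-1}P_{2\tau}(\beta^{(t)}) = \Tr(e^{-2\tau C^{(t)}}) - 1$, completing the proof. No step looks difficult; the only thing to be a bit careful about is the similarity transform in the first step, which relies on the elementary identity $e^{S X S^{-1}} = S e^{X} S^{-1}$ applied with $S = D^{\nfrac{1}{2}}$.
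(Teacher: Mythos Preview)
Your proposal is correct and follows essentially the same route as the paper: rewrite $D^{-1}P_{2\tau}(\beta^{(t)})$ as $D^{-\nfrac{1}{2}} e^{-2\tau C^{(t)}} D^{-\nfrac{1}{2}}$, apply Fact~\ref{fct:decomp} and cyclicity of trace, then use Fact~\ref{fct:eigenvalue} to identify the rank-one contribution as $1$. The only difference is that you spell out the similarity identity and the evaluation of the $\1^\top D^{\nfrac{1}{2}} e^{-2\tau C^{(t)}} D^{\nfrac{1}{2}}\1$ term more explicitly than the paper does.
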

\begin{proof}
By definition, we have
$$
\Psi(P^{(t)}, V) = L(K_V) \bullet D^{-1} P_{2\tau}(\beta^{(t)}) = 
 L(K_V) \bullet D^{-1} e^{-2\tau Q(\beta^{(t)}) } =
L(K_V) \bullet \Degin e^{-2\tau C^{(t)}} \Degin.
$$
Using Fact~\ref{fct:decomp} and the cyclic property of the trace function,
we obtain
$$
L(K_V) \bullet \Degin e^{-2\tau C^{(t)}} \Degin = 
(I - \nfrac{1}{2m} D^{\nfrac{1}{2}} \1 \1 D^{\nfrac{1}{2}}) \bullet e^{-2\tau C^{(t)}}. 
$$
Finally, by Fact~\ref{fct:eigenvalue}, we must have that the right-hand side equals $\Tr(e^{-2\tau C^{(t)}}) - 1,$ as required.
\end{proof}

\noindent
The following lemma is a simple consequence of the convexity of $e^{-x}.$ It is proved in~\cite{Lthesis}.
\begin{lemma}\label{lem:mexptolin2}
For a symmetric matrix $A \in \mathbb{R}^{n \times n}$ such that $\rho I \succeq A \succeq 0$ and $\tau > 0,$ we have
$$
e^{-\tau A} \preceq \left(I - \frac{(1-e^{-\tau \rho})}{\rho}A\right).
$$
\end{lemma}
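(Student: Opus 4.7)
The plan is to reduce the matrix inequality to a scalar inequality on the spectrum of $A$ and then invoke the convexity of $e^{-x}$. Since $A$ is symmetric and real, it is orthogonally diagonalizable, and both sides of the claimed inequality are polynomials (or power series) in $A$ evaluated in the same eigenbasis. Hence the Loewner ordering $e^{-\tau A} \preceq I - \frac{1-e^{-\tau\rho}}{\rho} A$ is equivalent to the pointwise scalar inequality
\[
e^{-\tau\lambda} \;\le\; 1 - \frac{1-e^{-\tau\rho}}{\rho}\,\lambda
\]
holding for every eigenvalue $\lambda$ of $A$. The hypothesis $0 \preceq A \preceq \rho I$ guarantees that all such $\lambda$ lie in the interval $[0,\rho]$, so it suffices to verify the scalar inequality on $[0,\rho]$.

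For the scalar step, consider the function $f(\lambda) \defeq e^{-\tau\lambda}$ on the interval $[0,\rho]$. This function is convex in $\lambda$ (since $f''(\lambda) = \tau^2 e^{-\tau\lambda} \ge 0$). The chord joining the endpoints $(0, f(0)) = (0,1)$ and $(\rho, f(\rho)) = (\rho, e^{-\tau\rho})$ is the affine function
\[
g(\lambda) \;=\; 1 + \frac{e^{-\tau\rho}-1}{\rho}\,\lambda \;=\; 1 - \frac{1-e^{-\tau\rho}}{\rho}\,\lambda.
\]
By convexity of $f$, we have $f(\lambda) \le g(\lambda)$ for all $\lambda \in [0,\rho]$, which is precisely the desired scalar inequality.

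There is really no obstacle here: the proof amounts to recognizing that the right-hand side of the claimed matrix inequality is exactly the chord of $e^{-\tau\lambda}$ on $[0,\rho]$, combined with the standard fact that a scalar inequality $\phi(\lambda) \le \psi(\lambda)$ on the spectrum of a symmetric matrix lifts to a Loewner inequality $\phi(A) \preceq \psi(A)$ when $\phi$ and $\psi$ are applied via the functional calculus in the same eigenbasis. Once these two observations are combined, the proof is complete in a couple of lines.
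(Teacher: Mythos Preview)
Your proof is correct and follows exactly the approach the paper indicates: reduce to the scalar inequality on $[0,\rho]$ via the spectral theorem and then invoke convexity of $e^{-\tau\lambda}$ to bound it by the chord through $(0,1)$ and $(\rho,e^{-\tau\rho})$. The paper does not spell out the argument (it cites \cite{Lthesis} and simply notes it is a consequence of convexity), so your write-up is in fact more detailed than what appears there.
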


\noindent
The following are standard lemmata.
\begin{lemma}[Golden-Thompson inequality~\cite{Bhatia}]\label{lem:gt}
Let $X,Y \in \mathbb{R}^{n \times n}$ be symmetric matrices. Then, $$\Tr\left({e^{X+Y}}\right) \leq \Tr\left({e^X e^Y}\right).$$
\end{lemma}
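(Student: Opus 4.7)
The Golden-Thompson inequality is a classical result of matrix analysis, and the paper cites \cite{Bhatia} for a full treatment. My plan is to sketch the standard route, which combines two ingredients: (i) the Lie product formula, and (ii) a trace-power inequality for products of positive semidefinite matrices.

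First, since $X$ and $Y$ are symmetric, for every positive integer $n$ the matrices $A_n \defeq e^{X/n}$ and $B_n \defeq e^{Y/n}$ are symmetric and positive definite. The Lie product formula gives
\[
e^{X+Y} \;=\; \lim_{n \to \infty} (A_n B_n)^n,
\]
and since the trace is continuous on $\mathbb{R}^{n\times n}$,
\[
\Tr\!\left(e^{X+Y}\right) \;=\; \lim_{n \to \infty} \Tr\!\left((A_n B_n)^n\right).
\]

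Next, I would invoke the Araki--Lieb--Thirring trace inequality: for any positive semidefinite $A, B$ and any integer $m \geq 1$,
\[
\Tr\!\left((A B)^m\right) \;\leq\; \Tr\!\left(A^m B^m\right).
\]
Applied with $A = A_n$, $B = B_n$, and $m = n$, this yields
\[
\Tr\!\left((A_n B_n)^n\right) \;\leq\; \Tr\!\left(A_n^n B_n^n\right) \;=\; \Tr\!\left(e^{X} e^{Y}\right).
\]
Since the right-hand side is independent of $n$, combining the two displays and passing to the limit gives the desired inequality $\Tr(e^{X+Y}) \leq \Tr(e^X e^Y)$.

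The main obstacle is the trace-power inequality in Step (ii); this is where the real work lies. The standard route is to establish it first for $m = 2^k$ by induction on $k$, using that for symmetric $A, B \succeq 0$ one has $(AB)^{2^k} = C_k C_k^{*}$ for an appropriate product $C_k$ built from $A$ and $B$, together with cyclicity of the trace and the Cauchy--Schwarz-type bound $|\Tr(UV^{*})| \leq \sqrt{\Tr(UU^{*})\,\Tr(VV^{*})}$. The dyadic subsequence $n = 2^k$ is enough for the argument above, since the Lie product formula holds along any subsequence tending to infinity.
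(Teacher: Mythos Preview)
The paper does not actually prove this lemma: it is stated as a standard result with a citation to Bhatia~\cite{Bhatia} and no proof is given. Your sketch via the Lie product formula together with the trace-power inequality $\Tr((AB)^m)\le\Tr(A^mB^m)$ for positive semidefinite $A,B$ is one of the standard routes to Golden--Thompson and is correct as outlined; restricting to the dyadic subsequence $n=2^k$ is indeed enough, since the Lie--Trotter limit exists and hence agrees with the limit along any subsequence.
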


\begin{lemma}[Johnson-Lindenstrauss]\label{lem:jl}
Given an embedding $\{v_i \in \mathbb{R}^d\}_{i \in V}$, $V=[n],$ let $u_1, u_2, \ldots, u_k$, be vectors sampled independently uniformly from the $n-1$-dimensional sphere of radius $\sqrt{\nfrac{n}{k}}.$ Let $U$ be the $k \times t$ matrix having the vector $u_i$ as $i^\text{th}$ row and let $\tilde{v}_i \defeq  U v_i$. Then, for $k_\eps \defeq O(\nfrac{\log n}{\delta^2}),$ for all $i, j \in V$ 
$$
(1 - \eps) \cdot \norm{v_i - v_j}^2 \leq \norm{\tilde{v}_i - \tilde{v}_j}^2 \leq (1 + \eps)\cdot  \norm{v_i - v_j}^2.
$$
\end{lemma}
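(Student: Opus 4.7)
The plan is to reduce the statement to a standard one-dimensional concentration bound applied to each pair $(i,j)$, and then take a union bound. The key observation is that, by linearity of the map $x \mapsto Ux$, we have $\tilde{v}_i - \tilde{v}_j = U(v_i - v_j)$, so it suffices to show that for \emph{every} fixed vector $w \in \R^d$, the quantity $\|Uw\|^2$ concentrates around $\|w\|^2$ with failure probability at most $1/n^{O(1)}$, and then apply a union bound over the at most $\binom{n}{2}$ pairs. This reduction replaces a bound about pairs with a single concentration inequality for a fixed vector.

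First I would compute the expectation. Writing $w \defeq v_i - v_j$, we have $\|Uw\|^2 = \sum_{\ell=1}^{k} (u_\ell^\top w)^2$. Since each $u_\ell$ is uniform on the sphere of radius $\sqrt{n/k}$ in $\R^n$, by rotational symmetry $\E[(u_\ell^\top w)^2] = (n/k) \cdot \|w\|^2/n = \|w\|^2/k$, so $\E[\|Uw\|^2] = \|w\|^2$. The scaling factor $\sqrt{n/k}$ in the definition is chosen precisely to make this identity hold.

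Next I would establish concentration. Since the $u_\ell$'s are independent, $\|Uw\|^2$ is a sum of $k$ i.i.d.\ random variables, each of which is (up to the scaling) a squared projection of a uniform unit vector onto a fixed direction. Such a random variable is subexponential with parameters controlled by $\|w\|^2/k$; concretely, one can invoke either Levy's concentration of measure on the sphere, or reduce to Gaussians by writing $u_\ell$ as a normalized Gaussian (with tight control on the norm, which concentrates around $\sqrt{n}$) and then use the standard chi-square tail bound. Either route yields
\[
\Pr\!\left[\big|\|Uw\|^2 - \|w\|^2 \big| > \eps \cdot \|w\|^2 \right] \;\leq\; 2\exp(-c \eps^2 k)
\]
for an absolute constant $c>0$. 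Choosing $k = O(\log n / \eps^2)$ with a sufficiently large constant makes this probability at most, say, $n^{-10}$.

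Finally I would apply a union bound: there are at most $\binom{n}{2} < n^2$ pairs $(i,j)$, so with probability at least $1 - n^{-8}$ all pairs simultaneously satisfy $(1-\eps)\|v_i-v_j\|^2 \le \|\tilde v_i - \tilde v_j\|^2 \le (1+\eps)\|v_i-v_j\|^2$, which is exactly the claim. I expect the main (purely technical) obstacle to be the concentration bound for squared projections of a uniformly-distributed point on the sphere, since the coordinates are not independent; however this is a textbook reduction, either by the Gaussian trick above or by an explicit computation using the Beta distribution of $(u_\ell^\top w/\|w\|)^2$. Everything else is a straightforward linearity-plus-union-bound argument.
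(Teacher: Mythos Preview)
The paper does not actually prove this lemma: it is listed under ``Useful Lemmata'' with the preamble ``The following are standard lemmata,'' alongside the Golden--Thompson inequality, and no proof is given. Your proposal is a correct outline of the standard textbook proof of Johnson--Lindenstrauss (reduce to a single fixed vector by linearity, compute the expectation using rotational symmetry, establish subexponential concentration for the sum of squared projections, then union bound over the $O(n^2)$ pairs), so there is nothing to compare --- your approach is exactly what any reference the paper might cite would do.
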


\subsubsection{Proof of Lemma~\ref{lem:pot}}

\begin{proof}

Let $S = \cup_{i=1}^t S^{(i)}$ and set $\beta \defeq \beta^{(t)}.$
By Lemma~\ref{lem:tracedecomp}, we have
$
\Tr(e^{-2\tau C^{(t)}}) - 1  \leq \nfrac{4}{3n}.
$
Hence,
$
\lambda_{n-1}(e^{-2\tau C^{(t)}}) \leq \nfrac{4}{3n},
$
which implies that, by taking logs, 
$$
\lambda_2(C^{(t)}) \geq \frac{\log n}{4\tau} \geq 3 \gamma.
$$
This can be rewritten in matrix terms, by Fact~\ref{fct:decomp} and Fact~\ref{fct:eigenvalue},and because $\supp(\beta) = S$ by the construction of \alg: 
\begin{equation}\label{eq:sdpcertificate}
L + \sum_{i \in S} \beta^{(t)}_i L(S_i) \succeq 3 \gamma \cdot L(K_V).
\end{equation}
which proves the first part of the Lemma.

For the second part, we start by noticing  that, for $i \in S,$  $\beta^{(t)}_i \leq 72\gamma \cdot  \nfrac{t}{T} \leq 72\gamma.$
Now for any $b$-balanced cut $U,$ with $\vol(U) \leq \vol(\bar{U}),$ consider the vector
$x_U$ defined as
$$
(x_U)_i \defeq  \left\{\!\!\!
\begin{array}{ll}
\sqrt{\frac{1}{2m} \cdot \frac{\vol(\bar{U})}{\vol(U)}} \;\; &\textrm{for} \; i \in U\\
\\
- \sqrt{\frac{1}{2m} \cdot \frac{\vol(U)}{\vol(\bar{U})}} \;\; &\textrm{for} \; i \in \bar{U}\\
\end{array}\right.
$$ 
Applying the guarantee of Equation~\ref{eq:sdpcertificate}, we obtain
\begin{align*}
x_U^\top L x_U + 72\gamma \cdot \sum_{i \in S} x_U^\top L(S_i) x_U \succeq 3 \gamma \cdot x_U^\top L(K_V) x_U.
\end{align*}
Notice that
\begin{align*}
x_U^\top L x_U &= \sum_{\{i,j\} \in E} ((x_U)_i - (x_U)_j)^2 = \frac{2m}{\vol(\bar{U})} \cdot \frac{\card{E(U, \bar{U})}}{\vol(U)} \leq 2 \cdot \phi(U),\\
x_U^\top L(S_i) x_U &= \sum_{j \in V} \frac{d_j}{2m} ((x_U)_i - (x_u)_j)^2) \leq \frac{2m}{\vol(U)} \cdot  \frac{d_i}{2m} = \frac{d_i}{\vol(U)},\\
x_U^\top L(K_V) x_U &= \sum_{i < j \in V} \frac{d_j d_i}{2m} ((x_U)_i - (x_u)_j)^2) = 1.
\end{align*}
Hence, our guarantee becomes
$$
2 \phi(U) + 72 \gamma \cdot \frac{\vol(S)}{\vol(U)} \geq 3 \gamma. 
$$
As $\vol(S) \leq \nfrac{b}{100} \cdot 2m \leq \nfrac{\vol(U)}{100},$ we have $\phi(U) \geq \gamma.$

\end{proof}

\subsubsection{Proof of Lemma~\ref{lem:obj}}
 
\begin{proof}
Consider
$
L \bullet D^{-1} P_{2\tau}(\beta{^{(t)}}) = L \bullet \Degin e^{-2\tau C^{(t)} } \Degin.
$
Using the cyclic property of trace and the definition of $C^{(t)},$ we have that
$$
L \bullet D^{-1} P_{2\tau}(\beta^{(t)}) \leq C^{(t)} \bullet e^{-2\tau C^{(t)} }.
$$
We now consider the spectrum of $C^{(t)}.$ By Fact~\ref{fct:eigenvalue}, the smallest eigenvalue is $0.$ 
Let the remaining eigenvalues be $\lambda_2 \leq \lambda_3 \leq \cdots \leq \lambda_n.$
Then, 
$
C^{(t)} \bullet e^{-2\tau C^{(t)}} = \sum_{i=2}^n \lambda_i e^{-2\tau \lambda_i}. 
$
We will analyze these eigenvalues in two groups. For the first group, we consider eigenvalues smaller than $24 \gamma$ and use Lemma~\ref{lem:tracedecomp}, together with the fact that $\gamma \geq \nfrac{1}{n^2}:$
$$
\sum_{i: \lambda_i \leq 24\gamma} \lambda_i e^{-2\tau \lambda_i} \geq \frac{1}{n^2} \cdot (\Tr(e^{-2\tau C^{(t)}}) - 1) = \frac{1}{n^3}.
$$
For the remaining eigenvalues, we have, by Lemma~\ref{lem:upbound}::
$$
\sum_{i: \lambda_i \geq 24 \gamma} \lambda_i e^{-2\tau \lambda_i} \leq  O(1)\cdot n \cdot e^{-2 \tau 24\gamma} \leq \frac{O(1)}{n^3}.  
$$
Combining these two parts, we have:
$$
L \bullet D^{-1} P_{2\tau}(\beta^{(t)}) \leq C^{(t)} \bullet e^{-2\tau C^{(t)}} \leq O(1) \cdot \sum_{i: \lambda_i \leq 24\gamma} \lambda_i e^{-2\tau \lambda_i} \leq O(\gamma) \cdot   L(K_V) \bullet D^{-1} P_{2\tau}(\beta^{(t)}).
$$ 
Now, we apply the Johnson-Lindenstrauss Lemma (Lemma~\ref{lem:jl}) to both sides of this inequality to obtain:
$$
L \bullet X^{(t)} \leq O(\gamma) \cdot L(K_V) \bullet X^{(t)} .
$$
\end{proof}

\subsubsection{Proof of Corollary~\ref{cor:unbalanced}}

\begin{proof}
By Lemma~\ref{lem:obj} and Theorem~\ref{thm:findcut}, we have that $S^{(t)}$ w.h.p. is either $c$-balanced or $\sum_{i \in S^{(t)}} d_i R_i \bullet X^{(t)} \geq \nfrac{2}{3} \cdot L(K_V) \bullet X^{(t)}.$ 
By Lemma~\ref{lem:jl} and as $\nfrac{1+\eps}{1-\eps} \leq \nfrac{4}{3},$ we have w.h.p.:
\begin{align*}
\Psi(P^{(t)}, S^{(t)}) = \sum_{i \in S^{(t)}} d_i R_i \bullet D^{-1} P_{2\tau}({\beta^{(t)}}) \geq \frac{1}{1+\eps} \cdot (\sum_{i \in S^{(t)}} d_i R_i \bullet X^{(t)}) \geq \frac{2}{3} \cdot L(K_V) \bullet X^{(t)} \geq \\
 \frac{2}{3} \cdot \frac{1-\eps}{1+\eps} \cdot L(K_V) \bullet D^{-1} P_{2\tau}({\beta^{(t)}}) = \frac{2}{3} \cdot \frac{1-\eps}{1+\eps} \cdot \Psi(P^{(t)},V) \geq \frac{1}{2} \cdot \Psi(P^{(t)}, V).
\end{align*}
\end{proof}

\subsubsection{Proof of Theorem~\ref{thm:reduction}}

\begin{proof}
By Lemma~\ref{lem:tracedecomp} and the Golden-Thompson inequality in Lemma~\ref{lem:gt}: 
$$
\Psi(P^{(t+1)}, V) = \Tr(e^{-2\tau C^{(t+1)}}) - 1 \leq \Tr\left(e^{-2\tau C^{(t)}} e^{-2\tau \Degin (\frac{\gamma}{T} \sum_{i \in S^{(t)}} L(S_i) ) \Degin}\right) - 1.
$$
We now apply Lemma~\ref{lem:mexptolin2} to the second term under trace.
To do this we notice that $\sum_{i \in S^{(t)}} L(S_i) \preceq 2 L(K_V) \preceq 2 D,$ so that
$$
\Degin \left(\frac{72\gamma}{T} \sum_{i \in S^{(t)}} L(S_i) \right) \Degin \preceq \frac{144\gamma}{T} I.
$$
Hence, we obtain 
$$
\Psi(P^{(t+1)}, V) \leq \Tr\left(e^{-2\tau C^{(t)}} \left(I - (1-e^{-288 \cdot \nfrac{\tau\gamma}{T}}) \cdot \frac{1}{2} \Degin ( \sum_{i \in S^{(t)}} L(S_i) ) \Degin\right) \right)-1.
$$
Applying the cyclic property of trace, we get
$$
\Psi(P^{(t+1)}, V) \leq  
\Psi(P^{(t)},V) - \frac{(1 -  e^{-288 \cdot \nfrac{ \tau \gamma}{T}})}{2} 
\sum_{i \in S^{(t)}} L(S_i)
\bullet D^{-1} P_{2\tau}({\beta^{(t)}}). 
$$
Next, we use Fact~\ref{fct:stardomination} to replace $L(S_i)$ by $R_i$ and notice that $288 \cdot \nfrac{ \tau \gamma}{T} = 2:$
$$
\Psi(P^{(t+1)}, V) \leq  
\Psi(P^{(t)},V) - \frac{(1 -  e^{-2})}{2} 
\sum_{i \in S^{(t)}} d_i R_i
\bullet D^{-1} P_{2\tau}({\beta^{(t)}}). 
$$
Then, we apply the definition of $\Psi(P^{(t)},S):$ 
$$
\Psi(P^{(t+1)}, V) \leq  \Psi(P^{(t)},V) - \nfrac{1}{3} \cdot \Psi(P^{(t)}, S).
$$
Finally, by Corollary~\ref{cor:unbalanced}, we know that w.h.p. $ \Psi(P^{(t)}, S^{(t)}) \geq \nfrac{1}{2} \cdot \Psi(P^{(t)},V)$ and the required result follows.
\end{proof}

\subsection{\SDP Interpretation}\label{sec:sdpinter}

OV designed an algorithm that outputs either a $\Omega(b)$-balanced cut of conductance $O(\sqrt{\gamma})$ or a certificate that no $b$-balanced cut of conductance $\gamma$ exists in time $\tilde{O}(\nfrac{m}{\gamma^2}).$
This algorithm uses the  \mwu of Arora and Kale~\cite{AK} to approximately solve an \SDP formulation of the \BS problem. The main technical contribution of their work is the routine {\sc FindCut} (implicit in their {\sc Oracle}), which takes the role of an approximate separation oracle for their \SDP.
In an iteration of their algorithm, OV use the \mwu update to produce a candidate \SDP-solution $Y^{(t)}.$ In one scenario, $Y^{(t)}$ does not have sufficiently low Laplacian objective value: 
\begin{equation}\label{eq:up1}
L \bullet Y^{(t)} \geq \Omega(\gamma) L(K_V) \bullet Y^{(t)}.
\end{equation}
In this case, the \mwu uses Equation~\ref{eq:up1} to produce a candidate solution $Y^{(t+1)}$ with lower objective value. 
Otherwise, {\sc FindCut} is run on the embedding corresponding to $Y^{(t)}.$ By Theorem~\ref{thm:findcut}, this yields either a cut of the required balance or a dual certificate that $Y^{(t)}$ is infeasible. This certificate has the form
\begin{equation}\label{eq:up2}
\gamma \cdot \sum_{i \in S^{(t)}} d_i R_i \bullet Y^{(t)} \geq \Omega(\gamma) L(K_V) \bullet Y^{(t)}
\end{equation}
and is used by the update to construct the next candidate $Y^{(t+1)}.$
The number of iterations necessary is determined by the width of the two possible updates described above. A simple calculation shows that the width of the update for Equation~\ref{eq:up1} is $\Theta(1),$ while for Equation~\ref{eq:up2}, it is only $O(\gamma).$
Hence, the overall width is $\Theta(1),$ implying that $O(\nfrac{\log n}{\gamma})$ iteration are necessary for the algorithm of OV to produce a dual certificate that the \SDP is infeasible and therefore no $b$-balanced cut of conductance $\gamma$ exists.

Our modification of the update is based on changing the starting candidate solutions from $Y^{(1)} \propto D^{-1}$ to $X^{(1)} \propto \Degin e^{-2\tau \Degin L \Degin} \Degin.$
In Lemma~\ref{lem:pot} and Lemma~\ref{lem:obj}, we show that this modification implies that all $X^{(t)}$ must now have 
$
L \bullet X^{(t)} \leq O(\gamma) \cdot L(K_V) \bullet X^{(t)}
$
or else we find a dual certificate that the \SDP is infeasible.
This additional guarantee effectively allows us to bypass the update of Equation~\ref{eq:up1} and only work with updates of the form given in Equation~\ref{eq:up2}. As a result, our width is now $O(\gamma)$ and we only require $O(\log n)$ iterations.

Another way to interpret our result is that all possible $\tau \approxeq \nfrac{\log n}{\gamma}$ updates of the form of Equation~\ref{eq:up1} in the algorithm of OV are regrouped into a single step, which is performed at the beginning of the algorithm.

\subsection{The {\sc FindCut} Subroutine}\label{sec:findcut}

Most of the material in this Section appears in \cite{OV} or in~\cite{Lthesis}. We reproduce it here in the language of this paper for completeness. The constants in these proofs are not optimized.

\subsubsection{Preliminaries}

\begin{fact} \label{fct:star} 
For a subset $S \subseteq V,$ 
$$
\sum_{i \in \bar{S}} d_i R_i \succeq \frac{\vol(S)}{2m}\left( L(K_V) - L(K_S)\right).
$$
\end{fact}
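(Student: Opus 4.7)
The plan is to reduce the PSD inequality to a statement about quadratic forms evaluated on an arbitrary vector embedding $\{v_i\in\R^d\}_{i\in V}$ with Gram matrix $X$. By the definition of $R_i$, the left-hand side satisfies
\[
\Bigl(\sum_{i\in\bar S} d_i R_i\Bigr)\bullet X \;=\; \sum_{i\in\bar S} d_i \norm{v_i - v_\avg}^2.
\]
For the right-hand side, I will interpret $L(K_S)$ as the Laplacian of the complete graph on $S$ with edge weight $d_id_j/\vol(S)$, the natural normalization paralleling that of $K_V$; the analogue of Fact~\ref{fct:degkv} applied to this sub-embedding then gives $L(K_S)\bullet X = \sum_{i\in S} d_i\norm{v_i - v_\avg^S}^2$, where $v_\avg^S \defeq \frac{1}{\vol(S)}\sum_{i\in S}d_iv_i$ and analogously for $v_\avg^{\bar S}$.

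Combining this with Fact~\ref{fct:degkv} for $L(K_V)\bullet X$ and the standard weighted-variance identity on $S$, namely $\sum_{i\in S}d_i\norm{v_i-v_\avg}^2 = \sum_{i\in S}d_i\norm{v_i-v_\avg^S}^2 + \vol(S)\,\norm{v_\avg^S-v_\avg}^2$, I obtain
\[
\bigl(L(K_V)-L(K_S)\bigr)\bullet X \;=\; \sum_{i\in\bar S} d_i\norm{v_i-v_\avg}^2 \;+\; \vol(S)\,\norm{v_\avg^S-v_\avg}^2.
\]
The centroid identity $v_\avg = \tfrac{\vol(S)}{2m}v_\avg^S + \tfrac{\vol(\bar S)}{2m}v_\avg^{\bar S}$ gives the pair of displacement formulas $v_\avg^S - v_\avg = \tfrac{\vol(\bar S)}{2m}(v_\avg^S-v_\avg^{\bar S})$ and $v_\avg^{\bar S} - v_\avg = -\tfrac{\vol(S)}{2m}(v_\avg^S-v_\avg^{\bar S})$, which I will use to express both the cross term above and the eventual right-hand side in the single quantity $\norm{v_\avg^S-v_\avg^{\bar S}}^2$.

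After multiplying through by $\vol(S)/(2m)$ and simplifying via $1-\vol(S)/(2m) = \vol(\bar S)/(2m)$, the target inequality collapses to
\[
\sum_{i\in\bar S} d_i\norm{v_i-v_\avg}^2 \;\ge\; \vol(\bar S)\,\norm{v_\avg^{\bar S}-v_\avg}^2,
\]
which follows immediately from the same variance decomposition applied on $\bar S$: the left-hand side equals $\sum_{i\in\bar S}d_i\norm{v_i-v_\avg^{\bar S}}^2 + \vol(\bar S)\norm{v_\avg^{\bar S}-v_\avg}^2$, and the first summand is non-negative. The only part requiring care is bookkeeping the three centroids $v_\avg,v_\avg^S,v_\avg^{\bar S}$ and the volume ratios consistently; once the right-hand side is rewritten in the form above, the essential content is just Jensen's inequality applied to the weighted average over $\bar S$.
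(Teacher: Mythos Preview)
Your proof is correct, and your reading of $K_S$ (complete graph on $S$ with edge weights $d_id_j/\vol(S)$) is the one the paper intends.

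The paper, however, takes a quite different and shorter route, working purely at the matrix level. It invokes Fact~\ref{fct:stardomination} (the identity $L(S_i) = \tfrac{d_i}{2m}L(K_V) + d_iR_i$) to rewrite
\[
\sum_{i\in\bar S} d_iR_i \;=\; \sum_{i\in\bar S} L(S_i) \;-\; \frac{\vol(\bar S)}{2m}\,L(K_V),
\]
and then notes the Laplacian inequality $\sum_{i\in\bar S} L(S_i) + \tfrac{\vol(S)}{2m}L(K_S) \succeq L(K_V)$, which holds ``by the definitions'' because every edge of $K_V$ is covered (edges inside $\bar S$ are even double-covered) by the stars rooted in $\bar S$ together with the scaled $K_S$. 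Subtracting gives the fact directly. Your argument instead unpacks everything into the embedding and reduces the statement to a weighted-variance (Jensen) inequality on $\bar S$; this is more hands-on and avoids the star-graph decomposition altogether, at the cost of more centroid bookkeeping. Both approaches are fine; the paper's is terser, yours makes the underlying convexity explicit.
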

\begin{proof}
By Fact~\ref{fct:stardomination},
$$
\sum_{i \in \bar{S}} d_i R_i = \sum_{i \in \bar{S}} L(S_i) - \frac{\vol(\bar{S})}{2m} L(K_V).
$$
Moreover, by the definitions it is clear that
$$
\sum_{i \in \bar{S}} L(S_i) + \frac{\vol(S)}{2m} L(K_S) \succeq L(K_V).
$$
Combining these two equations, we obtain the required statement.
\end{proof}

The following is a variant of the sweep cut argument of Cheeger's inequality~\cite{Fan}, tailored to ensure that a constant fraction of the variance of the embedding is contained inside the output cut.
\begin{lemma} \label{lem:cheeger}
Let $x \in \mathbb{R}^n, x \geq 0,$ such that  $x^\top L x \leq \lambda$ and $\vol(\supp(x)) \leq \nfrac{2m}{2}.$ Relabel the vertices so that $x_1 \geq x_2 \geq \cdots \geq x_{z-1} > 0$ and $x_{z} = \cdots = x_n = 0.$ For $i \in [z-1],$ denote by $S_i \subseteq V,$ the sweep cut $\{1, 2, \ldots, i\}.$ 
Further, assume that $\sum_{i=1}^n d_i x_i^2 \leq 1,$ and, for some fixed $k \in [z-1],$ $\sum_{i=k}^{n} d_i x_i^2 \geq \sigma.$
Then, there is a sweep cut $S_h$ of $x$ such that $ z-1 \geq h \geq k$ and $\phi(S_h) \leq \nfrac{1}{\sigma} \cdot \sqrt{ 2 \lambda}.$
\end{lemma}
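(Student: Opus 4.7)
The plan is to reduce to the standard Cheeger sweep-cut argument, applied not to $x$ itself but to a truncation $y$ whose nontrivial level sets are precisely the sweep cuts with $h \geq k$. Concretely, I would set $y_i \defeq \min(x_i, x_k)$ for every $i \in [n]$. Because $x$ is already sorted in non-increasing order with $x_z = 0$, this gives $y_1 = \cdots = y_k = x_k$, then $y_{k+1} = x_{k+1} \geq \cdots \geq y_{z-1} = x_{z-1} > 0$, and $y_z = \cdots = y_n = 0$. Four properties follow for free from the $1$-Lipschitz map $t \mapsto \min(t, x_k)$: (i) $y^\top L y = \sum_{\{i,j\}\in E}(y_i-y_j)^2 \leq \sum_{\{i,j\}\in E}(x_i-x_j)^2 \leq \lambda$; (ii) $\sum_i d_i y_i^2 \leq \sum_i d_i x_i^2 \leq 1$; (iii) since $y_i = x_i$ for all $i \geq k$, $\sum_{i \geq k} d_i y_i^2 = \sum_{i \geq k} d_i x_i^2 \geq \sigma$; and (iv) $\supp(y) = \supp(x)$, so $\vol(\supp(y)) \leq m$ and hence $\min\{\vol(S_h), \vol(\bar{S}_h)\} = \vol(S_h)$ for every $h \in [k, z-1]$.

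Next I would run the Cauchy-Schwarz step at the heart of Cheeger's inequality on $y$:
\[
\left(\sum_{\{i,j\}\in E} |y_i^2 - y_j^2|\right)^{\!2} \;\leq\; \left(\sum_{\{i,j\}\in E} (y_i - y_j)^2\right)\!\left(\sum_{\{i,j\}\in E} (y_i + y_j)^2\right) \;\leq\; \lambda \cdot 2\!\sum_i d_i y_i^2 \;\leq\; 2\lambda.
\]
Expanding $y_i^2 - y_j^2 = \sum_{h=i}^{j-1}(y_h^2 - y_{h+1}^2)$ and swapping summation order, the inner sum on the left equals $\sum_{h=1}^{n-1}(y_h^2 - y_{h+1}^2)\,|E(S_h, \bar{S}_h)|$; by construction of $y$, all terms with $h < k$ or $h \geq z$ vanish, so only $h \in \{k, \ldots, z-1\}$ contribute. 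Writing $\phi^\star \defeq \min_{h \in [k, z-1]} \phi(S_h)$, property (iv) gives $|E(S_h, \bar{S}_h)| \geq \phi^\star \vol(S_h) \geq \phi^\star \vol(\{k, \ldots, h\})$, and Abel summation yields
\[
\sum_{h=k}^{z-1}(y_h^2 - y_{h+1}^2)\,\vol(\{k, \ldots, h\}) \;=\; \sum_{i=k}^{z-1} d_i y_i^2 \;\geq\; \sigma.
\]
Chaining the inequalities gives $\sqrt{2\lambda} \geq \phi^\star \sigma$, i.e.\ $\phi^\star \leq \sqrt{2\lambda}/\sigma$, and the minimizing index $h$ is the desired sweep cut.

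I do not anticipate a serious obstacle; the only delicate point is designing a truncation that simultaneously (a) does not increase the Dirichlet energy, (b) retains the $\sigma$-mass on indices $\geq k$, and (c) kills the layer-cake contributions of all sweep cuts $S_h$ with $h < k$. The single choice $y = \min(x, x_k \mathbf{1})$ accomplishes all three at once, after which the proof reduces to the classical Cheeger manipulation plus routine bookkeeping that invokes the support hypothesis $\vol(\supp(x)) \leq m$ to identify $\min\{\vol(S_h), \vol(\bar{S}_h)\}$ with $\vol(S_h)$.
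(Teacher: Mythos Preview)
Your argument is correct. The paper does not actually supply a proof of this lemma; it only introduces it as ``a variant of the sweep cut argument of Cheeger's inequality'' and cites Chung. Your truncation $y_i = \min(x_i, x_k)$ followed by the Cauchy--Schwarz/layer-cake manipulation is precisely the standard way such variants are proved, and every step checks out: the $1$-Lipschitz contraction preserves the energy bound, the equality $y_i = x_i$ for $i \geq k$ preserves the $\sigma$-mass, the flattening for $i \leq k$ kills the sweep cuts $S_h$ with $h < k$, and the Abel summation you wrote is exactly $\sum_{i=k}^{z-1} d_i y_i^2$. One minor remark: you did not need to drop from $\vol(S_h)$ to $\vol(\{k,\ldots,h\})$; keeping $\vol(S_h)$ yields the full $\sum_{i} d_i y_i^2$ after Abel summation, which is also at least $\sigma$, but your weaker bound is of course sufficient.
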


\noindent
We will also need the following simple fact.
\begin{fact} \label{fct:triangle}
Given $v,u, t \in \mathbb{R}^h,$ $\left(\norm{v-t} - \norm{u-t}\right)^2 \leq \norm{v - u}^2.$
\end{fact}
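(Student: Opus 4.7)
The plan is to recognize this fact as the squared reverse triangle inequality applied to the vectors $v - t$ and $u - t$ in $\mathbb{R}^h$. The main identity driving everything is $(v - t) - (u - t) = v - u$, so the right-hand side $\norm{v-u}^2$ can be rewritten as $\norm{(v-t) - (u-t)}^2$, bringing both sides into a common form in terms of the two vectors $v - t$ and $u - t$.

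First I would derive the reverse triangle inequality from the ordinary triangle inequality. Writing $a = v - t$ and $b = u - t$, the triangle inequality gives $\norm{a} \leq \norm{a - b} + \norm{b}$ and symmetrically $\norm{b} \leq \norm{a - b} + \norm{a}$, so $|\norm{a} - \norm{b}| \leq \norm{a - b}$. Squaring both sides and substituting back yields $(\norm{v-t} - \norm{u-t})^2 \leq \norm{v - u}^2$, which is exactly the claim.

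An equivalent and perhaps even shorter route is to expand both sides directly: $(\norm{v-t} - \norm{u-t})^2 = \norm{v-t}^2 - 2\norm{v-t}\norm{u-t} + \norm{u-t}^2$, while $\norm{v-u}^2 = \norm{v-t}^2 - 2\langle v - t, u - t\rangle + \norm{u-t}^2$. Subtracting the two, the inequality reduces to $\langle v - t, u - t\rangle \leq \norm{v-t}\norm{u-t}$, which is immediate from Cauchy–Schwarz. There is no real obstacle here; the only thing to be careful about is tracking signs when squaring an absolute-value inequality, which is handled automatically because $(\cdot)^2$ is monotone on nonnegative reals and $\norm{v-u} \geq 0$.
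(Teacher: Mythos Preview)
Your proposal is correct; both routes you sketch (reverse triangle inequality derived from the ordinary triangle inequality, and direct expansion reducing to Cauchy--Schwarz) are valid and standard. The paper itself does not supply a proof for this fact, treating it as an elementary observation, so there is nothing further to compare.
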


\subsubsection{Roundable Embeddings and Projections}

The following definition of {\it roundable embedding} captures the case in which a vector embedding of the vertices $V$ highlights a balanced cut of conductance close to $\alpha$ in $G.$
Intuitively, in a roundable embedding, a constant fraction of the total variance is spread over a large set $R$ of vertices. 
\begin{definition}[Roundable Embedding]\label{def:roundable}
Given an embedding $\{v_i\}_{i \in V}$ with Gram matrix $X,$ denote by $\Psi$ the total variance of the embedding:
$ \Psi \defeq L(K_V) \bullet X.$
Also, let 
$R = \{i \in V : \norm{v_i - v_\avg}^2 \leq 32 \cdot \nfrac{(1-b)}{b} \cdot \frac{\Psi}{2m}\}.$ 
For $\alpha > 0,$ we say that $\{v_i\}_{i \in V}$ is roundable  for $(G, b, \alpha)$ if:
\begin{itemize}
\item $L \bullet X \leq \alpha \Psi,$
\item $L(K_R) \bullet X \geq 
\frac{\Psi}{128}.$
\end{itemize} 
\end{definition}

\noindent
A roundable embedding can be converted into a balanced cut of conductance $O(\sqrt{\alpha})$ by using a standard projection rounding, which is a simple extension of an argument already appearing in~\cite{ARV} and~\cite{AK}. The rounding procedure {\sc ProjRound} is described in Figure~\ref{fig:rounding} for completeness. It is analyzed in~\cite{OV} and~\cite{Lthesis}, where the following theorem is proved.
\begin{theorem}[Rounding Roundable Embeddings]\cite{OV, Lthesis} \label{thm:stdround}
If $\{v_i \in \mathbb{R}^h \}_{i \in V}$ is roundable for $(G,b,\alpha)$, then {\sc ProjRound}$(\{v_i\}_{i \in V}, b)$ produces a $\Omega(b)$- balanced cut of conductance $O\left(\sqrt{\alpha}\right)$ with high probability in time $\tilde{O}(n h + m).$
\end{theorem}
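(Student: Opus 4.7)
The plan is to adapt the standard random-hyperplane rounding from Arora--Rao--Vazirani and Arora--Kale to the roundable embedding setting. The algorithm \textsc{ProjRound} would: translate so that $v_{\avg} = 0$; draw a Gaussian vector $g \in \R^h$ with i.i.d.\ $N(0,1)$ coordinates; form scalar projections $x_i = g^\top v_i$; shift them so that the degree-weighted median is at $0$, giving nonnegative parts $x_i^+$ and $x_i^-$ on the two sides of the median; and finally run the Cheeger-style sweep cut of Lemma~\ref{lem:cheeger} on whichever side carries more projected variance from $R$, returning the resulting cut.

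First I would translate the two roundability assumptions into properties of the projection. Since $\E[(g^\top u)^2] = \norm{u}^2$ for any fixed $u$, we have $\E[x^\top L x] = L \bullet X \leq \alpha \Psi$ and, by Fact~\ref{fct:degkv} applied to the projected ``embedding'' $\{x_i\}$, $\E\!\bigl[\sum_{i \in R} d_i (x_i - x_{\avg,R})^2\bigr]$ lower-bounds $L(K_R) \bullet X \geq \Psi/128$ up to constants. By standard Gaussian concentration and anti-concentration (as in ARV), with at least constant probability both quantities concentrate within constant factors of their expectations; call this the \emph{good event}. Conditioning on it, the vectors $x^+$ and $x^-$ obtained after centering at the weighted median inherit $(x^\pm)^\top L x^\pm \leq x^\top L x$ and each has support of volume at most $m$. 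Normalizing so that $\sum_i d_i (x_i^\pm)^2 = 1$ on the chosen side, we apply Lemma~\ref{lem:cheeger} with $\lambda = O(\alpha)$.

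The main obstacle is the balance argument: we must identify a threshold $k$ in the sweep such that $\sum_{i \geq k} d_i (x_i^\pm)^2 \geq \sigma$ for a constant $\sigma$, and such that the cut $S_h$ returned by the sweep, for some $h \geq k$, contains volume $\Omega(b \cdot m)$. Here the definition of $R$ is crucial: every $i \in R$ satisfies $\norm{v_i}^2 \leq 32\,\tfrac{1-b}{b}\cdot\tfrac{\Psi}{2m}$, so on the good event the projected magnitudes $|x_i|$ for $i \in R$ are essentially capped at $O\bigl(\sqrt{(1-b)/b \cdot \Psi/m}\bigr)$. Vertices outside $R$ contribute, by Markov applied to $\sum_i d_i \norm{v_i}^2 = L(K_V)\bullet X = \Psi$, only an $O(b/(1-b))$-fraction of $\sum_i d_i x_i^2$ in expectation, which is a small constant. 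Hence the projected mass from $R$ above any moderate threshold is $\Omega(1)$, giving $\sigma = \Omega(1)$; and because each individual $i \in R$ carries capped mass $d_i \cdot O((1-b)/b \cdot \Psi/m)$, the total volume of those $R$-vertices needed to supply $\Omega(\Psi)$ of mass must be $\Omega(b \cdot m)$. The sweep cut $S_h$ therefore has volume $\Omega(b \cdot m)$ and conductance $O(\sqrt{\alpha})$ by Lemma~\ref{lem:cheeger}, and a symmetric argument applied to the complementary side shows both sides have $\Omega(b\cdot m)$ volume, yielding $\Omega(b)$-balance.

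For the running time, computing all projections costs $O(nh)$; sorting the $n$ projected values costs $\tilde{O}(n)$; evaluating conductance along the sweep can be done incrementally in $O(m)$ total time by maintaining a running count of edges crossing the current threshold using adjacency lists. A constant number of independent trials of the Gaussian projection boosts the success probability of the good event to $1 - 1/\poly(n)$, preserving the overall $\tilde{O}(nh + m)$ bound.
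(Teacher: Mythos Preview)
The paper does not prove this theorem; it explicitly cites \cite{OV} and \cite{Lthesis} for the analysis and only records the algorithm \textsc{ProjRound} in Figure~\ref{fig:rounding}. So there is no in-paper proof to compare against. That said, your proposal does not analyze \textsc{ProjRound} as described: the actual procedure projects, sorts, and then searches \emph{only among sweep cuts whose volume already lies in $[c\cdot 2m,(1-c)\cdot 2m]$}, returning the one of least conductance; it repeats $O(\log n)$ times. Your proposal instead runs an unrestricted sweep via Lemma~\ref{lem:cheeger} on one side of the weighted median and tries to argue after the fact that the cut is balanced.

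That balance argument has a genuine gap. You assert that for $i\in R$ the projections are ``essentially capped'' at $O\bigl(\sqrt{(1-b)/b\cdot\Psi/m}\bigr)$ and then divide total mass by this per-vertex cap to get volume $\Omega(b\cdot m)$. But $x_i=g^\top v_i$ is Gaussian with variance $\norm{v_i-v_{\avg}}^2$; it is not pointwise bounded. A union bound over $n$ vertices only yields $|x_i|\le O\bigl(\sqrt{\log n}\bigr)$ times the claimed cap with high probability, which degrades your volume conclusion to $\Omega(b\cdot m/\log n)$ and loses the $\Omega(b)$ balance. Separately, your claim that vertices in $\bar{R}$ contribute ``only an $O(b/(1-b))$-fraction of $\sum_i d_i x_i^2$ in expectation'' is not what Markov gives: Markov bounds $\vol(\bar{R})$, not $\sum_{i\in\bar{R}} d_i\norm{v_i-v_{\avg}}^2$, which could be as large as $\Psi$. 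The roundability hypothesis $L(K_R)\bullet X\ge\Psi/128$ is exactly what controls this --- it says the variance \emph{within} $R$ is already $\Omega(\Psi)$ --- but your write-up never invokes it in the balance step. The route taken in \cite{OV,Lthesis}, mirrored by the algorithm's restriction to balanced sweeps, is to first fix the balanced range and then argue (using $L(K_R)\bullet X\ge\Psi/128$ together with the norm cap defining $R$) that with constant probability a constant fraction of the projected variance lands inside that range, so a Cheeger sweep restricted to it succeeds. You should either follow that route or replace the per-vertex cap by a second-moment argument that controls the total projected mass of high-$|x_i|$ vertices without a $\log n$ loss.
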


\begin{figure*}[h]
  \begin{tabularx}{\textwidth}{|X|}
    \hline
  \begin{enumerate}[label=\arabic*.]
  \item {\bf {Input:}} An embedding $\{v_i \in \mathbb{R}^h \}_{i \in V} ,$ $b\in (0,\nfrac{1}{2}].$
  \item Let $c = \Omega(b) \leq \nfrac{b}{100}$ be a constant, fixed in the proof of Theorem~\ref{thm:stdround} in~\cite{OV}. 
\item For $t=1,2, \ldots, O(\log n)$:
\begin{enumerate}[label=\alph*.]
\item Pick a unit vector $u$ uniformly at random from $\mathbb{S}^{h-1}$ and let $x
\in \mathbb{R}^n$ with $x_i \defeq \sqrt{h} \cdot  { u^\top v_i}$.
\item Sort the vector $x.$ Assume w.l.og. that $x_1 \geq x_2 \geq \cdots \geq x_n$. Define $S_i \defeq \{j\in [n]:x_j \geq x_i\}$.
   
  \item  Let $S^{(t)} \defeq (S_i,\bar{S_i})$ which minimizes
  $ \phi(S_i)$ among sweep-cuts for which  $\vol(S_i) \in [c\cdot 2m, (1-c)\cdot 2m].$
\end{enumerate}
\item {\bf {Output:}} The cut $S^{(t)}$ of least conductance over all choices of $t.$ 
  \end{enumerate}\\
   \hline
    \end{tabularx}

  \caption{{\sc ProjRound}}
  \label{fig:rounding}
\end{figure*}

\subsubsection{Description of {\sc FindCut}}

In this subsection we describe the subroutine {\sc FindCut} and prove Theorem~\ref{thm:findcut}.

\begin{figure*}[htb]
  	\begin{tabularx}{\textwidth}{|X|}
    \hline
  	\begin{enumerate}[label=\arabic*.]
\item {\bf {Input:}} Instance graph $G,$ balance $b$, conductance value $\alpha$ and embedding $\{v_i\}_{i \in V},$ with Gram matrix $X.$ 

\item Let $r_i = \norm{v_i - v_\avg}$ for all $i \in V.$
Denote $\Psi \defeq L(K_V) \bullet X$ and define the set $R \defeq \{i \in V : r_i^2 \leq 32 \cdot \nfrac{(1-b)}{b} \cdot \nfrac{\Psi}{2m}\}.$
\item {\sc Case 1}: If $L \bullet X > \alpha \Psi,$ output {\sc FAIL} and terminate.
\item {\sc Case 2}: If $L(K_R) \bullet X \geq \nfrac{\Psi}{128},$ the embedding $\{v_i\}_{i \in V}$ is roundable for $(G, b, \alpha).$ Run {\sc ProjRound}, output the resulting cut and terminate.

\item \label{stp:cut} {\sc Case 3}: Relabel the vertices of $V$ such that $r_1 \geq r_2 \geq \cdots \geq r_n$ and let $S_i=\{1, \ldots,i\}$ be the $j^\text{th}$ sweep cut of $r.$ 
Let $z$ the smallest index such that $\vol(S_z) \geq \nfrac{b}{4} \cdot 2m.$ 	Output the most balanced sweep cut $C$ among $\{S_1, \ldots, S_{z-1}\},$ such that $\phi(C) \leq 40 \cdot \sqrt{\gamma}.$  
 \end{enumerate}\\
   \hline
    \end{tabularx}
  \caption{{\sc FindCut}}
  \label{fig:oracle}
\end{figure*}

\begin{theorem}[Theorem~\ref{thm:findcut} Restated]
Consider an embedding $\{v_i \in \R^d\}_{i \in V}$ with Gram matrix $X$ such that $L \bullet X^{(t)} \leq \alpha L(K_V) \bullet X^{(t)},$ for $\alpha > 0.$ On input $(G,b,\alpha, \{v_i\}_{i \in V}),$ {\sc FindCut} runs in time $\tilde{O}(md)$ and w.h.p. outputs a cut $C$ with $\phi(C) \leq O(\sqrt{\alpha}).$ Moreover, there is a constant $c = \Omega(b) \leq \nfrac{b}{100}$ such that either $C$ is $c$-balanced or 
$$
\sum_{i \in C} d_i R_i \bullet X \geq \nfrac{2}{3} \cdot L(K_V) \bullet X. 
$$
\end{theorem}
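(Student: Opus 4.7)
The algorithm has three branches. Branch~1 (Case~1) cannot fire by the hypothesis $L\bullet X\le\alpha\Psi$, so \textsc{FindCut} must enter Case~2 or Case~3, and my plan is to verify the conductance bound and the balance-or-variance dichotomy in each branch.

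\textbf{Case~2.} The algorithm's tests coincide with the two bullets of Definition~\ref{def:roundable}: Case~1 not firing gives $L\bullet X\le\alpha\Psi$, and the Case~2 branch triggers exactly when $L(K_R)\bullet X\ge\Psi/128$. So the embedding is roundable for $(G,b,\alpha)$, and Theorem~\ref{thm:stdround} applied to \textsc{ProjRound} produces a $\Omega(b)$-balanced cut of conductance $O(\sqrt\alpha)$ w.h.p. I would fix the constant $c=\Omega(b)\le b/100$ in Theorem~\ref{thm:findcut} to be at most the $\Omega(b)$ constant implicit in Theorem~\ref{thm:stdround}, so the cut is $c$-balanced and the first alternative holds.

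\textbf{Case~3.} Here $L(K_R)\bullet X<\Psi/128$, and I would extract two estimates. First, $\sum_i d_ir_i^2=L(K_V)\bullet X=\Psi$ together with the definition of $R$ gives, by Markov, $\vol(\bar R)\le\tfrac{b}{32(1-b)}\cdot 2m\le\tfrac{b}{16}\cdot 2m$; in particular $\vol(\bar R)<\tfrac{b}{4}\cdot 2m$, so $|\bar R|<z$ in the sweep and $\bar R\subseteq S_{z-1}$. Second, Fact~\ref{fct:star} applied with $S=R$ yields
\[
\sum_{i\in\bar R}d_iR_i\bullet X\ \ge\ \frac{\vol(R)}{2m}\bigl(\Psi-L(K_R)\bullet X\bigr)\ \ge\ \bigl(1-\tfrac{b}{16}\bigr)\cdot\tfrac{127}{128}\cdot\Psi\ \ge\ \tfrac{2}{3}\Psi,
\]
so most of the variance sits on $\bar R$, which occupies entirely the head of the sweep. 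Next, I would apply Lemma~\ref{lem:cheeger} to a suitable function of $r$ (the natural candidate being the truncation $x_i:=\max(r_i-r_z,0)$, so $\supp(x)\subseteq S_{z-1}$ has volume at most $m$, and Fact~\ref{fct:triangle} gives $x^\top Lx\le\sum_{(i,j)\in E}(r_i-r_j)^2\le L\bullet X\le\alpha\Psi$) to produce a sweep cut $S_h$ with $|\bar R|\le h\le z-1$ and $\phi(S_h)\le 40\sqrt\alpha$. Finally, the output $C$ in Step~5 is the most balanced qualifying sweep cut, so $\vol(C)\ge\vol(S_h)\ge\vol(\bar R)$; since sweeps are nested, $C\supseteq\bar R$. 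If $C$ is $c$-balanced I am done; otherwise the variance concentration on $\bar R$ transfers to $C$, giving $\sum_{i\in C}d_iR_i\bullet X\ge\sum_{i\in\bar R}d_iR_i\bullet X\ge\tfrac{2}{3}\Psi$, the second alternative.

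\textbf{Running time and main obstacle.} Computing the magnitudes $r_i$ and the mean $v_\avg$ is $O(nd)$, sorting and evaluating sweep conductances is $\tilde O(m)$, and \textsc{ProjRound} runs in $\tilde O(md)$ by Theorem~\ref{thm:stdround}, so \textsc{FindCut} is $\tilde O(md)$. The main obstacle I anticipate is the choice of the auxiliary vector $x$ and of $k$ when invoking Lemma~\ref{lem:cheeger}: I need the tail hypothesis $\sum_{i\ge k}d_ix_i^2\ge\sigma\sum_i d_ix_i^2$ with $k=|\bar R|$ and a constant $\sigma$, while keeping $x^\top Lx\le\alpha\Psi$. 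Turning the $\bar R$-variance estimate into an honest tail lower bound on $x$ (rather than on $r$) and checking that the truncation at $r_z$ preserves enough mass at the high-$r$ end is the delicate calculation the proof rests on; any reshuffling of constants in the definition of $R$ or of the Case~3 threshold $\Psi/128$ would be absorbed here.
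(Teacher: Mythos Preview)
Your overall plan matches the paper's: Case~1 is ruled out by hypothesis, Case~2 is handled by Theorem~\ref{thm:stdround}, and in Case~3 you truncate at $r_z$, bound $x^\top Lx$ via Fact~\ref{fct:triangle}, and invoke Lemma~\ref{lem:cheeger}. The running-time analysis and the use of Fact~\ref{fct:star} to show most of the variance lives on $\bar R$ are also the same.

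The gap is precisely where you flagged it, and it is not a matter of constants: your proposed choice $k=|\bar R|=g$ in Lemma~\ref{lem:cheeger} does not work. The tail $\sum_{i\ge g}d_i x_i^2$ can be an arbitrarily small fraction of $\Psi$. Concretely, almost all of the $\bar R$-variance can sit on vertex~$1$, so that $d_g r_g^2$ is tiny (vertex $g$ just barely clears the $\bar R$ threshold), while the vertices $g+1,\ldots,z-1$ in $R$ can all have $r_i$ only slightly larger than $r_z$, making $x_i=r_i-r_z\approx 0$. Then $\sum_{i\ge g}d_i x_i^2$ is $o(\Psi)$, $\sigma$ is not a constant, and Lemma~\ref{lem:cheeger} gives a conductance bound of $\sigma^{-1}\sqrt{2\alpha}$ which is useless. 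So you cannot force $C\supseteq\bar R$.

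The paper's fix is to abandon $k=g$ and instead pick $k$ as a \emph{variance quantile} inside $\bar R$: choose $k\le g$ so that $\sum_{j\le k}d_j r_j^2\ge\tfrac{3}{4}(1-\tfrac{5}{128})\Psi$ and simultaneously $\sum_{j=k}^{g}d_j r_j^2\ge\tfrac{1}{4}(1-\tfrac{5}{128})\Psi$. The second inequality, together with $r_z\le\tfrac12 r_i$ for all $i\le g$ (which follows from comparing the $\bar R$ threshold to $r_z^2\le\tfrac{4}{b}\cdot\tfrac{\Psi}{2m}$), gives a genuine constant tail $\sum_{i\ge k}d_i x_i^2\ge\tfrac{1}{20}\Psi$. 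Lemma~\ref{lem:cheeger} then yields $h\ge k$, hence $C\supseteq S_k$; and the \emph{first} inequality already guarantees $\sum_{i\in C}d_i r_i^2\ge\sum_{i\le k}d_i r_i^2\ge\tfrac{3}{4}(1-\tfrac{5}{128})\Psi\ge\tfrac{2}{3}\Psi$. The point is that you do not need $C\supseteq\bar R$; you only need $C$ to contain a prefix $S_k$ that already carries $\ge\tfrac{2}{3}\Psi$ of the variance, and the quantile split arranges both this and a constant tail for the Cheeger sweep.
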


\begin{proof}
By Markov's inequality,  $\vol(\bar{R}) \leq \nfrac{b}{(32 \cdot(1-b))} \cdot 2m \leq \nfrac{b}{16} \cdot 2m \leq \nfrac{1}{32} \cdot 2m.$ 
By assumption, {\sc Case 1} cannot take place. If {\sc Case 2} holds, then the embedding is roundable: by Theorem~\ref{thm:stdround}, {\sc ProjCut} outputs an $\Omega(b)$-balanced cut $C$ with conductance $O(\sqrt{\alpha}).$ If this is not the case, we are in {\sc Case 3.}

We then have $L(K_R) \leq \nfrac{\Psi}{128}$ and, by Fact \ref{fct:star}:
\begin{align*}
\sum_{i \in \bar{R}} d_i R_i \bullet X =  
\sum_{i \in \bar{R}} d_i r_i^2
\geq \frac{\vol(R)}{2m} \cdot \left(1 - \frac{1}{128}\right) \cdot \Psi \geq \left(1 - \frac{1}{32}\right) \cdot \left(1 - \frac{1}{128}\right) \cdot \Psi \\
\geq \left(1 - \frac{5}{128}\right) \cdot \Psi.
\end{align*}
It must be the case that $\bar{R} = S_g$ for some $g \in [n],$ with $g \leq z$ as $\vol(S_g) \leq \vol(S_z).$
Let $k \leq z$ be the the vertex in $\overline{R}$ such that $\sum_{j=1}^{k} d_j r_j^2 \geq  \nfrac{3}{4} \cdot (1 - \nfrac{5}{128})$ and  $\sum_{j=k}^{g} d_j r_j^2 \geq \nfrac{1}{4} \cdot (1 - \nfrac{5}{128}).$ 
By the definition of $z,$ we have $k \leq g < z$ and $r_z^2 \leq \nfrac{4} {b} \cdot \nfrac{\Psi}{2m} \leq 8 \cdot \nfrac{(1-b)}{b} \cdot \nfrac{\Psi}{2m}.$ Hence, we have $r_z \leq \nfrac{1}{2} \cdot r_i,$ for all $i \geq g.$
Define the vector $x$ as $x_i \defeq (r_i - r_z)$ for $i \in S_z$ and $r_i \defeq 0$ for $i \notin S_z.$ 
Notice that:
\begin{align*}
x^\top L x = \sum_{\{i,j\} \in E} (x_i - x_j)^2 
\leq \sum_{\{i,j\} \in E} (r_i - r_j)^2 \\
\stackrel{\rm Fact \; \ref{fct:triangle}}{\leq} 
\sum_{\{i,j\} \in E} \norm{v_i - v_j}^2 \leq  \alpha \Psi.
\end{align*}
Also, $x \geq 0$ and $\vol(\supp(x)) \leq \nfrac{b}{4} \cdot 2m \leq \nfrac{2m}{2},$ by the definition of $z.$ Moreover,
$$
\sum_{i=1}^n d_i x_i^2 = \sum_{i=1}^z d_i (r_i - r_z)^2 \leq \sum_{i=1}^z d_i r_i^2 \leq  \Psi, 
$$ 
and
\begin{align*}
\sum_{i=k}^n d_i x_i^2 = \sum_{i=k}^z d_i (r_i -r_z)^2 \\
\geq  \sum_{i=k}^g d_i (r_i - \nfrac{1}{2} \cdot r_i)^2 
\\= \nfrac{1}{4} \cdot \sum_{i=k}^g   d_i r_i^2  
\\ 
\geq \nfrac{1}{16} \cdot (1 - \nfrac{5}{128}) \cdot \Psi\geq \nfrac{1}{20} \cdot \Psi.
\end{align*}
Hence we can now apply Lemma~\ref{lem:cheeger} to the vector $\nfrac{1}{\Psi} \cdot x.$
This shows that there exists a sweep cut $S_h$ with $ z > h \geq k,$ such that $\phi(S_h) \leq 40 \cdot \sqrt{\gamma} .$ It also shows that $C,$ as defined in Figure \ref{fig:oracle}, must exist. Moreover, it must be the case that $S_k \subseteq S_h \subseteq C.$ As $h \geq k,$ we have
$$
\sum_{i \in C} d_i R_i \bullet X = \sum_{i \in C} d_i r_i^2 \geq \sum_{i=1}^k d_i r_i^2 \geq   \frac{3}{4} \cdot \left(1-\frac{5}{128}\right) \cdot \Psi \geq \frac{2}{3} \cdot \Psi =
\frac{2}{3} \cdot L(K_V) \bullet X.
$$
Finally,  using the fact that $\{v_i\}_{i \in V}$ is embedded in $d$ dimensions, we can compute $L \bullet \tilde{X}$ in time $O(dm).$ 
Moreover, $L(K_V) \bullet X$ can be computed in time $O(nd)$ by using the decomposition  
$L(K_V) \bullet X = \sum_{i \in V} d_i \norm{v_i - v_{\avg}}^2.$
By the same argument, we can compute $L(K_R) \bullet X$ in time $O(nd).$
The sweep cut over $r$ takes time $\tilde{O}(m).$
And, by Theorem~\ref{thm:stdround}, {\sc ProjRound} runs in time $\tilde{O}(md).$
 Hence, the total running time is $\tilde{O}(md).$ 
\end{proof}

\section{Computing $\exp(-A)v$}
\label{sec:exp}
In this section, we describe procedures for approximating $\exp(-A)v$
up to an $\ell_2$ error of $\delta\norm{v}$, given a symmetric PSD
matrix $A$ and a vector $v$ (\emph{w.l.o.g.,} $\norm{v}=1$). In
particular, we give the required procedures and proofs for
Theorems~\ref{thm:expRational}, Theorem~\ref{thm:expPoly} and
Theorem~\ref{thm:expRational2}. For this section, we will assume the
upper bound from Theorem~\ref{thm:exp-poly-approx:restated} (which is
a more precise version of Theorem~\ref{thm:exp-poly-approx}),
regarding polynomials approximating $e^{-x}$. Discussion about this
theorem and the proofs are included in Section~\ref{sec:poly}. We
restate the basic definitions used in this section for completeness.
\subsubsection*{\it Definitions.} We will always work with square $n
\times n$ matrices over $\mathbb{R}.$ For a matrix $M,$ abusing
notation, we will denote its exponential by $\exp(-M)$ which is
defined as $\sum_{i\geq 0} \frac{(-1)^i}{i!}M^{i}.$ $\norm{M}\defeq
\sup _{\|x\|=1} \norm{Mx}$ denotes the spectral norm of $M.$ $M$ is
said to be \emph{Symmetric and Diagonally Dominant} (SDD) if, $M_{ij}
= M_{ji},$ for all $i,j$ and $M_{ii} \ge \sum_j |M_{ij}|,$ for all
$i$.  $M$ is called \emph{Upper Hessenberg} if, $(M)_{ij} = 0$ for $i
> j+1.$ $M$ is called \emph{tridiagonal} if $M_{ij} = 0$ for $i > j+1$
and for $j > i+1.$ Let $\Lambda(M)$ denote the spectrum of a matrix
$M$ and let $\lambda_1(M)$ and $\lambda_n(M)$ denote the largest and the
smallest eigenvalues of $M$ respectively. For a matrix $M,$ let $m_M$
denote the number of non-zero entries in $M.$ Further, let $t_M$
denote the time required to multiply the matrix $M$ with a given
vector $w.$ In general $t_M$ depends on how $M$ is given as an input
and can be $\Theta(n^2)$. However, it is possible to exploit the
special structure of $M$ if given as an input appropriately: It is
possible to just multiply the non-zero entries of $M,$ giving $t_M =
O(m_M).$ Also, if $M$ is a rank one matrix $ww^\top,$ where $w$ is
known, we can multiply with $M$ in $O(n)$ time.  For any positive
integer $k,$ let $\Sigma_k$ denote the set of all polynomials with
degree at most $k$. Given a degree $k$ polynomial $p \defeq \sum_{i=0}^k a_i
\cdot x^i,$ the $\ell_1$ norm of $p,$ denoted as $\norm{p}_1$ is
defined as $\norm{p}_1 = \sum_{i \ge 0}^k |a_i|.$
\subsubsection*{\it Algorithms for Theorem~\ref{thm:expRational},
  \ref{thm:expRational-psd} and \ref{thm:expRational2}.}
Theorem~\ref{thm:expRational}, \ref{thm:expRational-psd} and
\ref{thm:expRational2} are based on a common algorithm we describe,
called {\expRational} (see Figure~\ref{fig:expRational}), which
requires a procedure $\Invert_A$ with the following guarantee: given a
vector $y,$ a positive integer $k$ and $\varepsilon_1 > 0,$
$\Invert_A(y,k,\varepsilon_1)$ returns a vector $u_1$ such that,
$\norm{(I+\nfrac{A}{k})^{-1}y - u_1} \le \varepsilon_1 \norm{y}.$ The
algorithms for the two theorems differ only in their implementation of
$\Invert_A.$ We prove the following theorem about {\expRational}.
\begin{theorem}[Running Time of {\expRational} given $\Invert_A$]
\label{thm:expRational:params}
Given a symmetric p.s.d. matrix $A \succeq 0$, a vector $v$ with
$\norm{v}=1,$ an error parameter $0< \delta \le 1$ and oracle access to
$\Invert_A,$ for parameters $k \defeq O(\log \nfrac{1}{\delta})$ and
$\epsilon_1 \defeq \exp(-\Theta(k \log k + \log (1+\norm{A}))),$ {\expRational}
computes a vector $u$ such that $\norm{\exp(-A)v-u} \le \delta$, in
time $O(T^{\text{inv}}_{A,k,\varepsilon_1}\cdot k + n\cdot k^2 +
k^3),$ where $T^{\text{inv}}_{A,k,\varepsilon_1}$ is the time
required by $\Invert_A(\cdot,k,\varepsilon_1).$
\end{theorem}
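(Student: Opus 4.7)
The plan follows the overview in Section~\ref{sec:overview-exp}. I apply the Lanczos method with $B \defeq (I + \nfrac{A}{k})^{-1}$ and $f(x) \defeq e^{k(1-\nfrac{1}{x})}$, so that $f(B) = \exp(-A)$, and combine it with the Saff--Sch\"onhage--Varga result, which produces a degree-$k$ polynomial $p_k^\star$ with $|e^{-x} - p_k^\star((1+\nfrac{x}{k})^{-1})| \le O(k \cdot 2^{-k})$ for every $x \ge 0$. Since $A \succeq 0$, we have $\Lambda(B) \subseteq (0,1]$, which is crucial for the residual bounds below.

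The first step is to pin down {\expRational}: starting from $v_0 \defeq v$, at iteration $i$ the algorithm calls $\Invert_A(v_i, k, \epsilon_1)$ to obtain an approximation to $Bv_i$, then fully orthonormalizes it against $v_0,\ldots,v_i$ to produce $v_{i+1}$. Let $V_k$ be the $n\times(k+1)$ matrix with these columns. Because each inversion is only approximate, the computed basis is not an exact Krylov basis and the usual three-term recurrence cannot be trusted; the matrix $T_k \defeq V_k^\top B V_k$ is therefore assembled as a genuine Upper Hessenberg matrix via $O(k^2)$ inner products. We symmetrize $\widehat{T}_k \defeq (T_k + T_k^\top)/2$ and return $u \defeq V_k\,f(\widehat{T}_k)\,V_k^\top v$.

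The core of the proof is a two-layer error analysis. The first layer, which is the main obstacle, asserts that for every polynomial $q$ of degree at most $k$,
\[
\|q(B)v - V_k\,q(\widehat{T}_k)\,V_k^\top v\| \;\le\; \|q\|_1 \cdot \mathrm{err}(\epsilon_1, k, \|A\|),
\]
where $\mathrm{err}$ is $\epsilon_1$ multiplied by a factor polynomial in $k$ and $(1+\|A\|)$. Inversion errors cascade through the recurrence, so the proof must show inductively that (i) the computed basis remains close to a genuine orthonormal Krylov basis for $B,v$, (ii) the spectrum $\Lambda(\widehat{T}_k)$ does not shift far from $\Lambda(B)$, in particular staying in a controlled interval of the positive real line, and (iii) symmetrizing $T_k$ into $\widehat{T}_k$ costs only additional error of the same order. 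I expect this to be the technically heaviest part, requiring a careful inductive argument combining the $\Invert_A$ guarantee, stability of Gram--Schmidt under small perturbations, and perturbation bounds for Hessenberg forms.

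The second layer converts this into the final error bound. Writing $P(y) \defeq p_k^\star(y)$ and $R(y) \defeq f(y) - P(y)$, we split
\[
\|\exp(-A)v - u\| \;\le\; \|P(B)v - V_k\,P(\widehat{T}_k)\,V_k^\top v\| + \|R(B)v\| + \|V_k\,R(\widehat{T}_k)\,V_k^\top v\|.
\]
The first term is handled by the first-layer bound with $q = P$; to apply it I bound $\|P\|_1 \le (1+\|A\|)^{O(1)} \cdot 2^{O(k\log k)}$ by rewriting $p_k^\star$ as an interpolation polynomial at suitably chosen nodes and estimating each coefficient. For the last two terms, $\Lambda(B) \subseteq (0,1]$ and by layer one $\Lambda(\widehat{T}_k)$ lies in a slightly enlarged subinterval, so both $\|R(B)\|$ and $\|R(\widehat{T}_k)\|$ inherit the scalar bound $O(k\cdot 2^{-k})$; this is precisely why the algorithm symmetrizes $T_k$, since without it we could not use real spectral calculus on $\widehat{T}_k$. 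Choosing $k = \Theta(\log \nfrac{1}{\delta})$ makes $k\cdot 2^{-k} \le \delta/4$, and forcing $\mathrm{err}(\epsilon_1,k,\|A\|)\cdot \|P\|_1 \le \delta/2$ yields exactly the stated $\epsilon_1 = \exp(-\Theta(k\log k + \log(1+\|A\|)))$. The running time is then immediate from the algorithm's description: $k$ calls to $\Invert_A(\cdot,k,\epsilon_1)$ at cost $T^{\text{inv}}_{A,k,\epsilon_1}$ each, $O(nk^2)$ work for full orthonormalization and for assembling the $k^2$ inner products defining $T_k$, and $O(k^3)$ time to eigendecompose the $(k+1)\times(k+1)$ symmetric matrix $\widehat{T}_k$ (via, e.g.,~\cite{eigendecomp}) and thereby evaluate $f(\widehat{T}_k)$. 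The totals combine to $O(T^{\text{inv}}_{A,k,\epsilon_1}\cdot k + nk^2 + k^3)$, as claimed.
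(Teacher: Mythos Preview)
Your proposal is correct and mirrors the paper's proof in Section~\ref{sec:error} closely: the two-layer split, the $\ell_1$-norm control on $p_k^\star$ via interpolation (Lemma~\ref{lem:L1-norm}), the spectral containment for $\widehat{T}_k$ (Lemma~\ref{lem:t-hat}), and the running-time accounting all match. Two small sharpenings worth noting: the paper's bound $\|p_k^\star\|_1 \le (2k)^{k+1}$ is independent of $\|A\|$---the $\log(1+\|A\|)$ term in $\epsilon_1$ arises instead from the positivity constraint $\epsilon_1\sqrt{k+1} < (1+\lambda_1(A)/k)^{-1}$, needed because $f$ is singular at $0$; and since $\Lambda(\widehat{T}_k)$ may protrude past $1$, the residual bound there does not follow directly from SSV but needs the short extension argument of Lemma~\ref{lem:approx-extended-interval}.
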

\noindent The proof of this theorem appears in Section
\ref{sec:error}. Theorem~\ref{thm:expRational} will follow from the
above theorem by using the Spielman-Teng SDD solver to implement the
$\Invert_A$ procedure (See Section~\ref{sec:expRational:proof}). For
Theorem~\ref{thm:expRational2}, we combine the SDD solver with the
Sherman-Morrison formula (for matrix inverse with rank 1 updates) to
implement the $\Invert_A$ procedure (See
Section~\ref{sec:expRational2:proof}).

\subsubsection*{\it Algorithm for Theorem~\ref{thm:expPoly}.} The procedure and
proof for Theorem~\ref{thm:expPoly} is based on the well-known Lanczos
method.  We give a description of the Lanczos method (\emph{e.g.}
see~\cite{Saad}) in Figure~\ref{fig:lanczos} and give a proof of a
well known theorem about the method that permits us to extend
polynomial approximations for a function $f$ over reals to
approximating $f$ over matrices (Theorem~\ref{thm:lanczos}).
Combining our result on polynomials approximating $e^{-x}$ from the
upper bound in Theorem \ref{thm:exp-poly-approx:restated} with the theorem
about the Lanczos method, we give a proof of the following theorem
that immediately implies Theorem~\ref{thm:expPoly}.
\begin{theorem}[Running Time Using {\lanczos}]
\label{thm:expPoly:params}
Given a symmetric p.s.d. matrix $A$, a vector $v$ with $\norm{v}=1$ and a
parameter $0< \delta\le 1$, for  
\[k \defeq O\left(\sqrt{\max\{\log^2 \nfrac{1}{\delta},
    (\lambda_1(A)-\lambda_n(A)) \cdot \log \nfrac{1}{\delta}\}}\cdot
  \left(\log \nicefrac{1}{\delta}\right) \cdot \log \log
  \nicefrac{1}{\delta}\right),\]
 and $f(x) = e^{-x},$ the procedure
{\lanczos} computes a vector $u$ such that $\norm{\exp(-A)v-u} \le
\norm{\exp(-A)}\delta.$ The time
taken by {\lanczos} is $O\left((n+t_A)k +k^2\right)$.
\end{theorem}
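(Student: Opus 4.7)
The plan is to invoke the Lanczos method described in Figure~\ref{fig:lanczos} with $B \defeq A$ and $f(x) \defeq e^{-x}$, and then combine the generic Lanczos error guarantee (Theorem~\ref{thm:lanczos}) with the polynomial approximation result for $e^{-x}$ from Theorem~\ref{thm:exp-poly-approx:restated}. The overview in Section~\ref{sec:overview-exp} already lays out the strategy: the Lanczos method reduces approximating $f(A)v$ to exhibiting a polynomial $p_k \in \Sigma_k$ that approximates $f$ uniformly on an interval containing the spectra of both $A$ and the compressed matrix $\widehat{T}_k$.

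First I would observe that since $A \succeq 0$, we have $\Lambda(A) \subseteq [\lambda_n(A), \lambda_1(A)]$, and the compressed matrix $T_k = V_k^\top A V_k$ satisfies $\Lambda(T_k) \subseteq [\lambda_n(A), \lambda_1(A)]$ as well, by the Courant--Fischer min--max characterization applied to the orthonormal basis $V_k$. In the symmetric case $\widehat{T}_k = T_k$, so there is no need to symmetrize. This means the error bound from Theorem~\ref{thm:lanczos} specializes to $\norm{\exp(-A)v - u} \le 2\norm{v}\cdot \min_{p_k \in \Sigma_k} \sup_{x \in [\lambda_n(A),\lambda_1(A)]} |e^{-x} - p_k(x)|$.

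Next, I would apply the upper bound of Theorem~\ref{thm:exp-poly-approx:restated} with $a = \lambda_n(A)$, $b = \lambda_1(A)$, and error tolerance $\delta/2$: there exists a polynomial of degree
\[
O\!\left(\sqrt{\max\{\log^2 \tfrac{1}{\delta},\,(\lambda_1(A)-\lambda_n(A))\cdot \log \tfrac{1}{\delta}\}}\cdot \log \tfrac{1}{\delta} \cdot \log\log \tfrac{1}{\delta}\right)
\]
achieving uniform error at most $(\delta/2) e^{-\lambda_n(A)}$ on $[\lambda_n(A), \lambda_1(A)]$. Since $A$ is PSD, $\norm{\exp(-A)} = e^{-\lambda_n(A)}$, so plugging this polynomial into the Lanczos error bound with $\norm{v}=1$ yields $\norm{\exp(-A)v - u} \le \delta \cdot \norm{\exp(-A)}$, as required. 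Choosing $k$ to match the above degree (which is exactly what the theorem statement specifies) makes this work.

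Finally, the running time bound $O((n+t_A)k + k^2)$ comes directly from the standard cost accounting of the Lanczos iteration, as sketched in Section~\ref{sec:overview-exp}: because $A$ is symmetric the tridiagonality of $T_k$ permits each of the $k$ steps to orthonormalize $Av_i$ against only $v_{i-1}$ and $v_i$, costing $O(t_A + n)$ each, for a total of $O((n+t_A)k)$; evaluating $f(T_k)$ on a $(k+1)\times(k+1)$ symmetric tridiagonal matrix then costs $O(k^2)$ via tridiagonal eigendecomposition~\cite{eigendecomp}, and the final reconstruction $V_k f(T_k) V_k^\top v$ also fits within this budget. The only subtlety worth flagging is the verification that $\Lambda(T_k)$ lies inside $[\lambda_n(A),\lambda_1(A)]$, but this is a one-line consequence of $V_k$ having orthonormal columns; I do not expect any real obstacle, since the heavy lifting is done entirely by Theorems~\ref{thm:lanczos} and~\ref{thm:exp-poly-approx:restated}.
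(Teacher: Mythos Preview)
Your proposal is correct and follows essentially the same approach as the paper: invoke Theorem~\ref{thm:lanczos} with $B=A$ and $f(x)=e^{-x}$, then plug in the polynomial from Theorem~\ref{thm:exp-poly-approx:restated} with $a=\lambda_n(A)$, $b=\lambda_1(A)$, and tolerance $\delta/2$, using $\norm{\exp(-A)}=e^{-\lambda_n(A)}$ to finish. One small remark: your mention of $\widehat{T}_k$ and symmetrization is out of place here, as that object belongs to the {\expRational} procedure rather than {\lanczos}; Theorem~\ref{thm:lanczos} already handles the spectrum of $T_k$ via $\Lambda(T_k)\subseteq\Lambda(A)$, so you need not argue about it separately.
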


\begin{remark}
  Note the $k^3$ term in the running time for
  Theorem~\ref{thm:expRational:params} and the $k^2$ term in the
  running time for Theorem~\ref{thm:expPoly:params}. This is the time
  required for computing the eigendecomposition of a $(k+1)\times
  (k+1)$ symmetric matrix. While this process requires $O(k^3)$ time
  in general, as in Theorem~\ref{thm:expRational:params}; in case of
  Theorem~\ref{thm:expPoly:params}, the matrix is tridiagonal and
  hence the time required is $O(k^2)$ (see~\cite{eigendecomp}).
\end{remark}

\subsubsection*{\it Organization.} We first describe the Lanczos method
and prove some of its properties in Section~\ref{sec:lanczos}. Then,
we give descriptions of the {\lanczos} and the {\expRational}
procedures in Section~\ref{sec:exp-procedures}. Assuming
Theorem~\ref{thm:expRational:params}, we give proofs of
Theorem~\ref{thm:expRational} and Theorem~\ref{thm:expRational2} in
Section~\ref{sec:expRational:proof} and
Section~\ref{sec:expRational2:proof} respectively by implementing the
respective $\Invert_A$ procedures. Finally, we give the error analysis
for {\expRational} and a proof for
Theorem~\ref{thm:expRational:params} in Section~\ref{sec:error}.

\subsection{Lanczos Method -- From Scalars to Matrices}
\label{sec:lanczos}
Suppose one is given a symmetric PSD matrix $B,$ and a function
$f:\mathbb{R} \mapsto \mathbb{R}.$ Then one can define $f(B)$ as
follows: Let $u_1,\ldots,u_n$ be eigenvectors of $B$ with eigenvalues
$\lambda_1,\ldots,\lambda_n.$ Define $f(B) \defeq \sum_{i}
f(\lambda_i)u_iu_i^\top.$ Given a vector $v,$ we wish to compute
$f(B)v.$ Since exact computation of $f(B)$ usually requires
diagonalization of $B,$ which is costly, we seek an approximation to
$f(B)v$. 

For a given positive integer $k,$ the Lanczos method looks for an
approximation to $f(B)v$ of the form $p(B)v,$ where $p$ is a
polynomial of degree $k.$ Note that for any polynomial $p$ of degree
at most $k,$ the vector $p(B)v$ is a linear combination of the vectors
$\{v,Bv,\ldots,B^kv\}$.  The span of these vectors is referred to as
the {\it Krylov Subspace} and is defined below.
\begin{definition}[Krylov Subspace]
  Given a matrix $B$ and a vector $v$, the Krylov subspace of order
  $k$, denoted by $\calK(B,v,k)$, is defined as the subspace that is
  spanned by the vectors $\{v,Bv,\ldots,B^kv\}$.
\end{definition}

\noindent Note that any vector in $\calK(B,v,k)$ has to be of the form
$p(B)v$, where $p$ is some degree $k$ polynomial.  The Lanczos method
starts by generating an orthonormal basis for $\calK(B,v,k)$. Let
$v_0,\ldots,v_k$ be any orthonormal basis for $\calK(B,v,k),$ and let
$V_k$ be the $n \times (k+1)$ matrix with $\{v_i\}_{i=0}^k$ as its
columns. Thus, $V_k^\top V_k = I_k$ and $V_kV_k^\top $ denotes the
projection onto the subspace. Also, let $T_k$ be the operator $B$ in
the basis $\{v_i\}_{i =0}^k,$ restricted to this subspace,
\emph{i.e.}, $T_k \defeq V_k^\top B V_k.$ Since, all the vectors
$v,Bv,\ldots,B^kv$ are in the subspace, any of these vectors (or a
linear combination of them) can be obtained by applying $T_k$ to $v$
(after a change of basis), instead of $B$. The following
lemma proves this formally.
\begin{lemma}[Exact Computation with Polynomials. See \emph{e.g.}~\cite{Saad}]
\label{lem:exact-power}
Let $V_k$ be the orthonormal basis, and $T_k$ be the operator $B$
restricted to $\calK(B,v,k)$ where $\norm{v}=1$, \emph{i.e.}, $T_k
= V_k^\top B V_k$. Let $p$ be a polynomial of degree at most $k$. Then,
\[p(B)v = V_k p(T_k) V_k^\top  v.\]
\end{lemma}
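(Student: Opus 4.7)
The plan is to prove the lemma by an induction on $i$ showing that $B^i v = V_k T_k^i V_k^\top v$ for every $0 \le i \le k$, and then invoke linearity to pass from monomials to an arbitrary polynomial $p$ of degree at most $k$. Two structural facts drive the argument: (i) since $\norm{v}=1$ and $v \in \calK(B,v,k)$, the projector $V_k V_k^\top$ onto the Krylov subspace fixes $v$, i.e.\ $V_k V_k^\top v = v$; and (ii) $V_k^\top V_k = I_{k+1}$, so the change of basis provided by $V_k$ is an isometry on the Krylov subspace.

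For the base case $i=0$, the identity $B^0 v = v = V_k V_k^\top v = V_k T_k^0 V_k^\top v$ follows directly from (i) and the convention $T_k^0 = I_{k+1}$. For the inductive step, assume $B^i v = V_k T_k^i V_k^\top v$ for some $i$ with $i+1 \le k$. The crucial observation is that since $i+1 \le k$, both $B^i v$ and $B^{i+1}v$ lie in $\calK(B,v,k)$, so in particular $V_k V_k^\top (B^{i+1}v) = B^{i+1}v$. Using this together with the induction hypothesis and $V_k^\top V_k = I_{k+1}$,
\[
B^{i+1}v \;=\; V_k V_k^\top B\,(B^i v) \;=\; V_k V_k^\top B V_k V_k^\top (B^i v) \;=\; V_k T_k\, V_k^\top B^i v \;=\; V_k T_k^{i+1} V_k^\top v,
\]
where in the middle equality I used that $B^i v = V_k V_k^\top B^i v$ (again by (i), since $B^i v \in \calK$), and in the last step I used that $V_k^\top B^i v = V_k^\top V_k T_k^i V_k^\top v = T_k^i V_k^\top v$. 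This closes the induction.

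With the monomial identity in hand, writing $p(x) = \sum_{i=0}^{k} a_i x^i$ and using linearity gives
\[
p(B)v \;=\; \sum_{i=0}^k a_i B^i v \;=\; \sum_{i=0}^k a_i V_k T_k^i V_k^\top v \;=\; V_k \Bigl(\sum_{i=0}^k a_i T_k^i\Bigr) V_k^\top v \;=\; V_k\, p(T_k)\, V_k^\top v,
\]
which is the claim.

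The only subtle point — and the one to watch carefully — is the boundary of the induction: the step from $i$ to $i+1$ relies on $B^{i+1}v$ still lying in $\calK(B,v,k)$, which is precisely why the statement is restricted to $\deg p \le k$. For $i+1 = k+1$ the projection $V_k V_k^\top$ need no longer fix $B^{i+1}v$, so the telescoping identity $V_k V_k^\top B V_k V_k^\top = V_k T_k V_k^\top$ (used as an operator on $\calK$) would fail to reproduce $B \cdot (B^k v)$ exactly. Since we only need $i \le k$, this causes no issue, but it is the place where the degree bound is genuinely used.
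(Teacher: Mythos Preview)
Your proof is correct and follows essentially the same approach as the paper: both arguments use that $V_kV_k^\top$ fixes each $B^j v$ for $j\le k$ to insert projectors between successive factors of $B$, collapsing $B^t v$ into $V_k T_k^t V_k^\top v$, and then extend by linearity. The only cosmetic difference is that you phrase the monomial step as an explicit induction, whereas the paper writes the same computation as a single chain $B^t v = (V_kV_k^\top)B(V_kV_k^\top)\cdots B(V_kV_k^\top)v = V_k T_k^t V_k^\top v$.
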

\begin{proof}
  Recall that $V_kV_k^\top$ is the orthogonal projection onto the
  subspace $\calK(B,v,k)$. By linearity, it suffices to prove this
  when $p$ is $x^t$ for $t \le k$. This is true for $t=0$ since $V_k
  V_k^\top v = v $. For any $j \le k$, $B^j v $ lies in
  $\calK(B,v ,k),$ thus, $\forall\ j\le k,\ V_k V_k^\top B^j v  =
  B^j v .$ Hence,
\begin{align*}
B^t v  & = (V_kV_k^\top )B(V_kV_k^\top )B\cdots B(V_kV_k^\top )v  \\
& = V_k(V_k^\top BV_k)(V_k^\top BV_k)\cdots (V_k^\top BV_k)V_k^\top v  = V_kT_k^t V_k^\top v 
\end{align*}
\end{proof}

The following lemma shows that $V_kf(T_k)V_k^\top v $ approximates
$f(B)v $ as well as the \emph{best} degree $k$ polynomial that
uniformly approximates $f$. The proof is based on the observation that
if we express $f$ as a sum of \emph{any} degree $k$ polynomial and an
\emph{error} function, the above lemma shows that the polynomial part
is exactly computed in this approximation. 
\begin{lemma}[Approximation by Best Polynomial (Lemma 4.1, \cite{Saad})]
\label{lem:exact-poly}
Let $V_k$ be the orthonormal basis, and $T_k$ be the operator $B$
restricted to $\calK(B,v ,k)$ where $\norm{v }=1$, \emph{i.e.}, $T_k
= V_k^\top B V_k$. Let $f\from \rea \to \rea$ be any
function such that $f(B)$ and $f(T_k)$ are well-defined. Then,
\[\norm{f(B)v  - V_kf(T_k)V_k^\top v } \le \min_{p_k \in \Sigma_k}
\left(\max_{\lambda \in \Lambda(B)} |f
  (\lambda)-p_k(\lambda)|+\max_{\lambda \in \Lambda(T_k)} |f
  (\lambda)-p_k(\lambda)|\right)\ .\]
\end{lemma}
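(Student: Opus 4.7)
The plan is to follow exactly the proof idea already sketched in the overview, formalizing the decomposition of $f$ as a degree-$k$ polynomial plus a remainder and controlling the two pieces separately.

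First, fix any polynomial $p_k \in \Sigma_k$ and define the remainder $r_k(x) \defeq f(x) - p_k(x)$. The error vector we wish to bound can then be split as
\begin{align*}
f(B)v - V_k f(T_k) V_k^\top v
&= \bigl(p_k(B)v + r_k(B)v\bigr) - \bigl(V_k p_k(T_k) V_k^\top v + V_k r_k(T_k) V_k^\top v\bigr) \\
&= r_k(B) v - V_k r_k(T_k) V_k^\top v,
\end{align*}
where the cancellation of the polynomial parts uses Lemma~\ref{lem:exact-power}, which gives $p_k(B)v = V_k p_k(T_k) V_k^\top v$ exactly. This decomposition is the key observation: no matter which $p_k$ we pick, only the remainder contributes to the error.

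Next, I would apply the triangle inequality to obtain
\[
\norm{f(B)v - V_k f(T_k) V_k^\top v} \le \norm{r_k(B) v} + \norm{V_k r_k(T_k) V_k^\top v}.
\]
For the first term, since $B$ is symmetric, $r_k(B)$ is symmetric with eigenvalues $\{r_k(\lambda) : \lambda \in \Lambda(B)\}$, so $\norm{r_k(B)} = \max_{\lambda \in \Lambda(B)} |r_k(\lambda)|$, and therefore $\norm{r_k(B)v} \le \max_{\lambda \in \Lambda(B)} |r_k(\lambda)|$ since $\norm{v} = 1$. For the second term, $V_k$ has orthonormal columns so $\norm{V_k} = \norm{V_k^\top} = 1$ and $\norm{V_k^\top v} \le 1$; moreover $T_k = V_k^\top B V_k$ is symmetric, so by the same spectral argument $\norm{r_k(T_k)} = \max_{\lambda \in \Lambda(T_k)} |r_k(\lambda)|$. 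Combining these yields
\[
\norm{f(B)v - V_k f(T_k) V_k^\top v} \le \max_{\lambda \in \Lambda(B)} |f(\lambda) - p_k(\lambda)| + \max_{\lambda \in \Lambda(T_k)} |f(\lambda) - p_k(\lambda)|.
\]
Since this holds for \emph{every} $p_k \in \Sigma_k$, taking the infimum over $p_k$ finishes the proof.

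There is essentially no serious obstacle: the argument is a clean application of the ``exact on the polynomial part'' property from Lemma~\ref{lem:exact-power} combined with the spectral characterization of the operator norm for symmetric matrices. The only point requiring a moment of care is noting that $T_k$ inherits symmetry from $B$ (since $T_k^\top = V_k^\top B^\top V_k = V_k^\top B V_k = T_k$), which is what makes the clean bound $\norm{r_k(T_k)} = \max_{\lambda \in \Lambda(T_k)}|r_k(\lambda)|$ valid; without symmetry one would only get weaker bounds via, e.g., pseudospectra, but here symmetry is free.
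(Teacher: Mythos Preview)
Your proof is correct and follows essentially the same approach as the paper: decompose $f = p_k + r_k$, use Lemma~\ref{lem:exact-power} to kill the polynomial part exactly, then bound the remainder via the spectral norm of $r_k(B)$ and $r_k(T_k)$ and minimize over $p_k$. The only difference is that you spell out a few details the paper leaves implicit (why $T_k$ is symmetric, why $\norm{V_k}=1$), but the argument is the same.
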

\begin{proof}
Let $p_k$ be any degree $k$ polynomial. Let $r_k \defeq f-p_k$. Then,
\begin{align*}
\norm{f(B)v  - V_kf(T_k) V_k^\top v } & \le \norm{p_k(B)v  - V_kp_k(T_k) V_k^\top v  } + \norm{r_k(B)v  - V_kr_k(T_k) V_k^\top v } \\
& \le 0 + \norm{r_k(B)} + \norm{V_kr_k(T_k) V_k^\top }  \qquad \qquad
\qquad \text{(Using Lemma~\ref{lem:exact-power})}\\
& = \max_{\lambda \in \Lambda(B)} |r_k(\lambda)|+\max_{\lambda \in \Lambda(T_k)} |r_k(\lambda)|.
\end{align*}
Minimizing over $p_k$ gives us our lemma.
\end{proof}

\noindent Observe that in order to compute this approximation, we do
not need to know the polynomial explicitly. It suffices to prove that
there exists a degree $k$ polynomial that uniformly approximates $f$
well on an interval containing the spectrum of $B$ and $T_k$ (For
exact computation, $\Lambda(T_k) \subseteq \Lambda(B)$.) Moreover, if
$k \ll n$, the computation has been reduced to a much smaller
matrix. We now show that an orthonormal basis for the Krylov
Subspace, $V_k,$ can be computed quickly and then describe the {\lanczos}
procedure.

\subsubsection{Efficiently Computing a Basis for the Krylov Subspace}
In this section, we show that if we construct the basis $\{v_i\}_{i=0}^k$ in a
particular way, the matrix $T_k$ has extra structure. In particular,
if $B$ is symmetric, we show that $T_k$ must be
\emph{tridiagonal}. This will help us
speed up the construction of the basis.

Suppose we compute the orthonormal basis $\{v_i\}_{i=0}^k$
iteratively, starting from $v_0=v$: For $i=0,\ldots,k$, we compute
$Bv_{i}$ and remove the components along the vectors
$\{v_0,\ldots,v_{i}\}$ to obtain a new vector that is orthogonal to
the previous vectors. This vector, scaled to norm 1, is defined to be
$v_{i+1}.$ These vectors, by construction, satisfy that for all $i \le
k,$ $\Span\{v_0,\ldots,v_i\} = \Span\{v,Bv,\ldots,B^kv\}.$ Note that
$(T_k)_{ij} = v_i^\top B v_j.$

If we construct the basis iteratively as above, $Bv_j \in
\Span\{v_0,\ldots,v_{j+1}\}$ by construction, and if $i > j+1,$ $v_i$
is orthogonal to this subspace and hence $v_i^\top (Bv_j)=0$. Thus,
$T_k$ is \emph{Upper Hessenberg}, \emph{i.e.}, $(T_k)_{ij} = 0$ for $i
> j+1$.

Moreover, if $B$ is symmetric, $v_j^\top (Bv_i) = v_i^\top (Bv_j),$
and hence $T_k$ is symmetric and tridiagonal. This means that at most
three coefficients are non-zero in each row. Thus, while constructing
the basis, at step $i+1$, it needs to orthonormalize $Bv_i$ only
w.r.t.  $v_{i-1}$ and $v_i$. This fact is used for efficient
computation of $T_k.$ The algorithm {\lanczos} appears in Figure
\ref{fig:lanczos} and the following meta-theorem summarizes the main
result regarding this method.

\begin{figure*}
\begin{tabularx}{\textwidth}{|X|}
\hline
\vspace{0mm}
{\bf Input}: A symmetric matrix $B \succeq 0$, a vector $v$ such that
$\norm{v}=1,$ a positive integer $k,$ and a function $f \from \rea \to \rea$. \\
{\bf Output}: A vector $u$ that is an approximation to $f(B)v$.
\begin{enumerate}[label=\arabic*.]
      \item Initialize $v_0 \defeq v.$
      \item For $i = 0$ to $k-1,$\hfill {\footnotesize (Construct an orthonormal basis to Krylov subspace of order $k$}
        \begin{enumerate}[label=\alph*.]
      \item If $i = 0$, compute $w_0 \defeq Bv_0$. Else, compute $w_i
= Bv_i - \beta_i v_{i-1}.$ \label{item:expPoly:2} \hfill {\footnotesize
  (Orthogonalize w.r.t. $v_{i-1}$)}
     \item Define $\alpha_i \defeq v_i^\top w_i$ and $w_i^\prime
\defeq w_i - \alpha_iv_i\ ^*$. \hfill {\footnotesize (Orthogonalize
  w.r.t. $v_i$)}
     \item Define $\beta_{i+1} \defeq \norm{w_i^\prime}$ and $v_{i+1}
\defeq w_i^\prime/\beta_{i+1}.$ \hfill {\footnotesize (Scaling it to norm 1)}
        \end{enumerate}
      \item Let $V_k$ be the $n\times (k+1)$ matrix whose columns are
        $v_0,\ldots,v_k$ respectively.

      \item Let $T_k$ be the $(k+1)\times(k+1)$ matrix such that for
        all $i$,
        $(T_k)_{ii} = v_i^\top B v_i = \alpha_i, (T_k)_{i,i+1} =
        (T_k)_{i+1,i} = v_{i+1}^\top B v_{i} =
        \beta_{i+1}$ and all other entries are 0. \hfill
        {\footnotesize (Compute $T_k \defeq V_k^\top B V_k $)}
     \item Compute ${\mathcal{B}} \defeq f\left(T_k\right)$
        exactly via eigendecomposition. Output the vector
        $V_k\mathcal{B}V_k^\top v$.
\vspace{0mm}
{\small \item[*] If $w_i^\prime = 0$, compute the approximation with the matrices $T_{i-1}$ and $V_{i-1},$ instead of $T_k$ and $V_k$. The error bounds still hold. }
\vspace{-3mm}
\end{enumerate}
\\ 
\hline
\end{tabularx}
  \caption{The {\lanczos} algorithm for approximating $f(B)v$}
  \label{fig:lanczos}
\end{figure*}

\begin{theorem}[{\lanczos} Theorem]
\label{thm:lanczos}
Given a symmetric p.s.d. matrix $B$, a vector $v$ with $\norm{v}=1,$ a
function $f$ and a positive integer parameter $k$ as inputs, the procedure
{\lanczos} computes a vector $u$ such that,
\[\norm{f(B)v-u} \le 2\cdot \min_{p_k \in \Sigma_k} \max_{\lambda \in
  \Lambda(B)} |f (\lambda)-p_k(\lambda)|\ .\]
Here $\Sigma_k$ denotes the set of all degree $k$ polynomials and
$\Lambda(B)$ denotes the spectrum of $B$.  The time taken by {\lanczos}
is $O\left((n+t_B)k +k^2\right).$
\end{theorem}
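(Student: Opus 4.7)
The plan is to argue correctness via the structural lemma already proved (Lemma~\ref{lem:exact-poly}), together with a spectral containment fact for the compressed matrix $T_k$, and then to account for the per-iteration cost.

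For correctness, I would first verify by induction on $i$ that the vectors $v_0,\ldots,v_k$ produced by the loop form an orthonormal basis of $\mathcal{K}(B,v,k)$, and that the tridiagonal matrix output in Step~4 equals $V_k^\top B V_k$. The nontrivial point is that the three-term recurrence $w_i = B v_i - \beta_i v_{i-1}$, followed by subtracting $\alpha_i v_i$, produces a vector orthogonal to \emph{all} of $v_0,\ldots,v_i$, not just to the two most recent vectors. This is exactly the symmetry argument sketched in Section~\ref{sec:lanczos}: because $B$ is symmetric and $B v_j\in\Span\{v_0,\ldots,v_{j+1}\}$ by construction, we get $v_i^\top B v_j = v_j^\top B v_i = 0$ whenever $|i-j|\ge 2$, so the compression $V_i^\top B V_i$ is tridiagonal, and subtracting off the $v_{i-1}$ and $v_i$ components of $B v_i$ suffices. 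The diagonal and subdiagonal entries $\alpha_i$, $\beta_{i+1}$ computed by the algorithm are then exactly $v_i^\top B v_i$ and $v_{i+1}^\top B v_i$, so $T_k = V_k^\top B V_k$ as required by Lemma~\ref{lem:exact-power}.

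Next, I would apply Lemma~\ref{lem:exact-poly} directly to the vector $u = V_k f(T_k) V_k^\top v$ produced by the algorithm, obtaining
\[
\norm{f(B)v - u} \;\le\; \min_{p_k\in\Sigma_k}\Bigl(\max_{\lambda\in\Lambda(B)}|f(\lambda)-p_k(\lambda)| + \max_{\mu\in\Lambda(T_k)}|f(\mu)-p_k(\mu)|\Bigr).
\]
The final step in the error analysis is to show $\Lambda(T_k)\subseteq[\lambda_n(B),\lambda_1(B)]$, so that the second maximum can be absorbed into the first (up to the factor of $2$). Since $V_k$ has orthonormal columns, any eigenvalue $\mu$ of $T_k = V_k^\top B V_k$ with unit eigenvector $y$ satisfies $\mu = (V_k y)^\top B (V_k y)$ with $\norm{V_k y}=1$, so $\mu$ lies in the numerical range of $B$, which equals $[\lambda_n(B),\lambda_1(B)]$ by the Courant--Fischer characterization (this is the standard Cauchy interlacing consequence). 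Since the best uniform approximating polynomial over $\Lambda(B)$ in the statement is really taken over the enclosing interval (the quantity one actually bounds via approximation theory), both maxima are controlled by the same uniform-approximation error, yielding the claimed factor of $2$.

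For the running time, each of the $k$ iterations of the main loop performs one matrix-vector product $B v_i$ in time $t_B$, two inner products and two AXPY-type vector operations in time $O(n)$, giving $O((n+t_B)k)$ total to build $V_k$ and the tridiagonal $T_k$. Because $T_k$ is symmetric tridiagonal of size $(k+1)\times(k+1)$, its eigendecomposition (and hence $f(T_k)$, applied spectrally) can be computed in $O(k^2)$ time using the algorithm cited from \cite{eigendecomp}. Finally, the output $V_k f(T_k) V_k^\top v$ is assembled by first computing the $(k+1)$-vector $V_k^\top v$ in $O(nk)$ time, applying $f(T_k)$ in $O(k^2)$ time, and multiplying by $V_k$ on the left in $O(nk)$ time. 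Summing the three contributions gives $O((n+t_B)k + k^2)$.

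The main obstacle is the spectral containment $\Lambda(T_k)\subseteq[\lambda_n(B),\lambda_1(B)]$ together with the reconciliation between the discrete set $\Lambda(B)$ appearing in the theorem statement and the interval that the approximating polynomial actually controls; the rest is bookkeeping on top of Lemmas~\ref{lem:exact-power} and~\ref{lem:exact-poly}. A minor subtlety is the degenerate case $w_i' = 0$, which is handled by the footnote in Figure~\ref{fig:lanczos}: this corresponds to exact termination of the Krylov process, in which case the subspace is $B$-invariant, $V_{i-1}f(T_{i-1})V_{i-1}^\top v = f(B)v$ exactly, and no approximation error is incurred.
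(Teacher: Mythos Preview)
Your approach is essentially the paper's: invoke Lemma~\ref{lem:exact-poly}, control the spectrum of $T_k$, and count operations. One point worth highlighting: the paper simply asserts $\Lambda(T_k)\subseteq\Lambda(B)$ to collapse the two maxima in Lemma~\ref{lem:exact-poly} into the single $\max_{\lambda\in\Lambda(B)}$ of the theorem. That assertion is false in general (Ritz values need not be eigenvalues of $B$; e.g.\ $B=\mathrm{diag}(1,2,3)$, $v=(1,1,1)/\sqrt{3}$, $k=1$ gives $\Lambda(T_1)=\{2\pm\sqrt{2/3}\}$), so your Rayleigh-quotient argument giving $\Lambda(T_k)\subseteq[\lambda_n(B),\lambda_1(B)]$ is the correct replacement. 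You rightly flag that this only yields $\max_{\lambda\in[\lambda_n(B),\lambda_1(B)]}$ rather than $\max_{\lambda\in\Lambda(B)}$ as literally stated; for the downstream use in Theorem~\ref{thm:expPoly:params} the polynomial bound from Theorem~\ref{thm:exp-poly-approx:restated} is over the interval anyway, so the distinction does not matter there, but strictly speaking neither argument proves the theorem exactly as written.
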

\begin{proof}
  The algorithm {\lanczos} implements the Lanczos method we've
  discussed here. The guarantee on $u$ follows from
  Lemma~\ref{lem:exact-poly} and the fact that $\Lambda(T_k) \subseteq
  \Lambda(B)$. We use the fact that $(T_k)_{ij} = v_i^\top B v_j$ and
  that $T_k$ must be \emph{tridiagonal} to reduce our work to just
  computing $O(k)$ entries in $T_k.$ The total running time is
  dominated by $k$ multiplications of $B$ with a vector, $O(k)$
  dot-products and the eigendecomposition of the tridiagonal matrix
  $T_k$ to compute $f(T_k)$ (which can be done in $O(k^2)$ time
  \cite{eigendecomp}), giving a total running time of $O\left((n+t_B)k
    +k^2\right).$
\end{proof}

\subsection{Procedures for Approximating $\exp(-A)v$.}
\label{sec:exp-procedures}
Having introduced the Lanczos method, we describe the algorithms we use
for approximating the matrix exponential.
\subsubsection{Using {\lanczos} for Approximating $\exp(-A)v$ -- Proof of
  Theorem ~\ref{thm:expPoly}}
Theorem~\ref{thm:expPoly} follows from
Theorem~\ref{thm:expPoly:params}, which is proved by combining
Theorem~\ref{thm:lanczos} about the approximation guarantee of the
{\lanczos} algorithm and Theorem~\ref{thm:exp-poly-approx:restated} (a
more precise version of Theorem~\ref{thm:exp-poly-approx}) about
polynomials approximating $e^{-x}.$ We now give a proof of Theorem~\ref{thm:expPoly:params}.
\begin{proof}
  We are given a matrix $A,$ a unit vector $v$ and an error parameter
  $\delta$. Let $p_{\lambda_n(A),\lambda_1(A),\nfrac{\delta}{2}}(x)$
  be the polynomial given by
  Theorem~\ref{thm:exp-poly-approx:restated} and let $k$ be its
  degree. We know from the theorem that
  $p_{\lambda_n(A),\lambda_1(A),\nfrac{\delta}{2}}(x)$ satisfies
  $\sup_{x \in [\lambda_n(A),\lambda_1(A)]}
  |e^{-x}-{p}_{\lambda_n(A),\lambda_1(A),\nfrac{\delta}{2}}(x)| \le
  \nfrac{\delta}{2}\cdot e^{-\lambda_n(A)}$, and that its degree is,
\[k \defeq O\left(\sqrt{\max\{\log^2 \nfrac{1}{\delta},
    (\lambda_1(A)-\lambda_n(A)) \cdot \log \nfrac{1}{\delta}\}}\cdot
  \left(\log \nicefrac{1}{\delta}\right) \cdot \log \log
  \nicefrac{1}{\delta}\right).\] 
Now, we run the {\lanczos} procedure with the matrix $A,$ the
vector $v,$ function $f(x) = e^{-x}$ and parameter $k$ as inputs, and
output the vector $u$ returned by the procedure. 
In order to prove the error guarantee, we use
Theorem~\ref{thm:lanczos} and bound the error using the polynomial
$p_{\lambda_n(A),\lambda_1(A),\nfrac{\delta}{2}}.$ Let $r(x) \defeq
\exp(-x)-p_{\lambda_n(A),\lambda_1(A),\nfrac{\delta}{2}}(x)$. We get,
\begin{align*}
\norm{\exp(-A)v-u} & \stackrel{Thm.~\ref{thm:lanczos}}{\le} 2
\max_{\lambda \in \Lambda(A)} |r_k(\lambda)| \stackrel{\Lambda(A)
  \subseteq [\lambda_n(A),\lambda_1(A)]}{\le} 2
\max_{\lambda \in [\lambda_n(A),\lambda_1(A)]} |r_k(\lambda)| \stackrel{Thm.~\ref{thm:exp-poly-approx:restated}}{\le}
\delta\cdot e^{-\lambda_n(A)}
\end{align*}
By Theorem~\ref{thm:lanczos}, the total running time is $O((n+t_A)k+k^2).$ 
\end{proof}

\subsubsection{The {\expRational} Algorithm.}
Now we move on to applying the Lanczos method in a way that was
suggested as a heuristic by Eshof and Hochbruck~\cite{EH}.  The
starting point here, is the following result by Saff, Schonhage and
Varga \cite{SSV}, that shows that simple rational functions provide
uniform approximations to $e^{-x}$ over $[0,\infty)$ where the error
term decays exponentially with the degree. Asymptotically, this result
is best possible, see \cite{newman}. 
\begin{theorem}[Rational Approximation \cite{SSV}]
\label{thm:rational-approximations}
There exists constants $c_1 \ge 1$ and $k_0$ such that, for any integer $k
\ge k_0$, there exists a polynomial $P_k(x)$ of degree $k-1$ such that,
\[\sup_{x \in [0,\infty)} \left\lvert \exp(-x) -
  \frac{P_k(x)}{(1+\nicefrac{x}{k})^k}\right\rvert \le c_1k\cdot 2^{-k}\ . \]
\end{theorem}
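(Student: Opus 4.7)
The natural approach is to convert the rational-approximation problem on $[0,\infty)$ into a polynomial-approximation problem on a compact interval, then invoke the classical machinery of complex-analytic best-approximation theory. My first move would be the M\"obius substitution $w = (x-k)/(x+k)$, which maps $[0,\infty)$ bijectively onto $[-1,1)$. Under this map one computes $1+x/k = 2/(1-w)$, so $(1+x/k)^k = 2^k(1-w)^{-k}$, and the exponential becomes $e^{-x} = \exp\!\bigl(-k(1+w)/(1-w)\bigr)$. A polynomial $P_k(x)$ of degree $k-1$ pulls back to an expression of the form $\tilde Q(w)\,(1-w)^{-(k-1)}$ with $\tilde Q$ of degree $\le k-1$ in $w$; after multiplying by $(1-w)^k/2^k$, the rational function $P_k(x)/(1+x/k)^k$ becomes $2^{-k}R(w)$ for some polynomial $R$ of degree $\le k$ vanishing at $w=1$. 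The theorem is therefore equivalent to proving the existence of such an $R$ satisfying
\[
\sup_{w\in[-1,1]} \bigl|2^k f(w) - R(w)\bigr| \;\le\; c_1 k,
\qquad f(w)\defeq \exp\!\bigl(-k(1+w)/(1-w)\bigr),
\]
where we set $f(1)\defeq 0$ to make $f$ continuous on $[-1,1]$.

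Next, I would produce such an $R$ by polynomial interpolation of $g\defeq 2^k f$ at a carefully chosen set of $k+1$ nodes $\{w_j\}\subset[-1,1]$ that concentrate near $w=1$. For any such nodes (with one of them pinned at $w=1$ to enforce $R(1)=0$), Hermite's contour-integral formula for the interpolation remainder gives
\[
g(w) - R(w) \;=\; \frac{1}{2\pi i}\oint_\Gamma \frac{g(\zeta)}{\zeta - w}\prod_{j=0}^{k}\frac{w-w_j}{\zeta-w_j}\,d\zeta,
\]
valid for any contour $\Gamma$ enclosing $[-1,1]$ on which $g$ is analytic. One then selects $\Gamma$ and the nodes jointly so as to make the product factor $\prod_j |w-w_j|/|\zeta - w_j|$ uniformly of order $2^{-k}$ on $[-1,1]$, while keeping $|g(\zeta)|$ controlled and the length of $\Gamma$ at most $O(k)$. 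The $c_1 k$ bound then falls out of the standard $\mathrm{length}(\Gamma) \cdot \sup_\Gamma |g| \cdot 2^{-k}$ estimate.

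The main obstacle is the essential singularity of $f$ at $\zeta = 1$: the textbook Bernstein-ellipse argument, which produces the usual geometric convergence rate for analytic functions on $[-1,1]$, cannot be applied in a black-box fashion because no Bernstein ellipse with foci $\pm 1$ containing $[-1,1]$ in its interior avoids the singularity at $w=1$. The key idea for overcoming this is that $|f(\zeta)| = \exp\!\bigl(-k\,\mathrm{Re}\,(1+\zeta)/(1-\zeta)\bigr)$ decays super-exponentially as $\zeta\to 1$ from within the half-plane $\mathrm{Re}\,(1+\zeta)/(1-\zeta)>0$ (the M\"obius pre-image of the right half-plane). I would therefore deform $\Gamma$ into a ``keyhole'' around $w=1$, letting it approach the singular point along arcs inside this good half-plane, and choose the distance of approach (and the clustering rate of the interpolation nodes) so that the super-exponential decay of $|f|$ exactly compensates the $1/|\zeta-w_j|$ blow-up from the nodes near $1$. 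Balancing these competing factors produces the desired $2^{-k}$ rate, with the extra polynomial factor $k$ absorbed into the length of $\Gamma$. Inverting the substitution $w=(x-k)/(x+k)$ then yields the theorem as stated.
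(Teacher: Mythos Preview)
The paper does not prove this theorem at all: it is quoted verbatim from Saff--Sch\"onhage--Varga \cite{SSV} and used as a black box (the only in-paper work is the reformulation in Corollary~\ref{cor:pk-star}, which is exactly your M\"obius reduction rewritten in the variable $t=(1+x/k)^{-1}$). So there is no ``paper's own proof'' to compare against beyond the citation.

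As for your sketch on its merits: the reduction step is correct and matches the paper's corollary, and the general plan---reduce to polynomial approximation of an analytic function on $[-1,1]$, then bound the interpolation remainder by a Hermite contour integral---is indeed the standard complex-analytic route to results of this type. But the proposal stops precisely where the actual difficulty begins. After the substitution you must approximate $g(w)=2^k\exp\!\bigl(-k(1+w)/(1-w)\bigr)$, which attains values as large as $2^k$ on $[-1,1]$ and has an essential singularity at $w=1$, by a degree-$k$ polynomial to within $O(k)$. The sentence ``Balancing these competing factors produces the desired $2^{-k}$ rate'' is carrying the entire theorem: you have not said which nodes, which contour, or why the super-exponential decay of $|g|$ near $\zeta=1$ beats the blow-up of $\prod_j|\zeta-w_j|^{-1}$ \emph{at the precise rate} that leaves exactly a factor of $2^{-k}$ (as opposed to, say, $3^{-k}$ or $k^{-1}2^{-k}$). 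In the original SSV argument and its descendants this balance is not a free parameter one tunes by hand; it is dictated by a potential-theoretic equilibrium problem, and getting the constant $2$ (rather than some unspecified geometric rate) requires identifying the correct limiting node distribution and computing its logarithmic potential. Your keyhole-contour heuristic does not substitute for that computation. So the framework is right, but what you have written is a plan for a proof, not a proof.
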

\noindent Note that the rational function given by the above lemma can be
written as a polynomial in $(1+\nicefrac{x}{k})^{-1}$. The following
corollary makes this formal. 
\begin{corollary}[Polynomial in $(1+\nfrac{x}{k})^{-1}$]
\label{cor:pk-star}
There exists constants $c_1 \ge 1$ and $k_0$ such that, for any integer $k
\ge k_0$, there exists a polynomial $p_k^\star(x)$ of degree $k$ such
that $p_k^\star(0)=0,$ and,
\begin{align}
\sup_{t \in (0,1]} \left\lvert e^{-\nfrac{k}{t}+k} - p^\star_k(t)
\right\rvert = \sup_{x \in [0,\infty)} \left\lvert e^{-x} -
  p_k^\star\left((1+\nfrac{x}{k})^{-1}\right)\right\rvert \le
c_1k\cdot 2^{-k}\ .
\end{align}
\end{corollary}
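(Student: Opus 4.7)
The plan is to derive the corollary by a direct change of variables applied to Theorem~\ref{thm:rational-approximations}. The substitution $t = (1+\nfrac{x}{k})^{-1}$ is a decreasing bijection between $x \in [0,\infty)$ and $t \in (0,1]$; the inverse is $x = \nfrac{k}{t} - k$, so that $e^{-x}$ becomes $e^{-\nfrac{k}{t}+k}$. This immediately explains why the two suprema in the statement coincide provided we can rewrite the SSV rational approximant as a genuine polynomial of degree $k$ in the new variable $t$.

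The key step is to exhibit the polynomial. Write $P_k(x) = \sum_{j=0}^{k-1} a_j x^j$, the polynomial of degree $k-1$ supplied by Theorem~\ref{thm:rational-approximations}. Under the substitution, $(1+\nfrac{x}{k})^{-k} = t^k$ and $x^j = (\nfrac{k}{t}-k)^j = k^j(1-t)^j t^{-j}$. Multiplying these together gives
\[
\frac{P_k(x)}{(1+\nfrac{x}{k})^k} \;=\; \sum_{j=0}^{k-1} a_j\, k^j (1-t)^j\, t^{k-j} \;=:\; p_k^\star(t).
\]
Since $0 \le j \le k-1$ each summand is a polynomial in $t$ of degree exactly $k$, so $p_k^\star$ has degree at most $k$. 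Moreover every term carries a factor $t^{k-j}$ with $k-j \ge 1$, which yields $p_k^\star(0)=0$ as required.

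With $p_k^\star$ in hand, the remaining content of the corollary is immediate: the equality of the two suprema is just the change of variables recorded above, and the quantitative bound $c_1 k\cdot 2^{-k}$ is inherited verbatim from Theorem~\ref{thm:rational-approximations} applied to the original rational function. I do not anticipate any genuine obstacle here; the only care needed is in bookkeeping the degree (to confirm it is at most $k$ rather than $2k-1$, which follows because the $t^{-j}$ factors are absorbed into the multiplier $t^k$) and in verifying the boundary values ($x=0 \leftrightarrow t=1$ and $x \to \infty \leftrightarrow t \to 0^+$) so that the suprema are over exactly the stated sets.
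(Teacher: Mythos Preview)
Your proposal is correct and follows essentially the same approach as the paper: the paper defines $p_k^\star(t) \defeq t^k \cdot P_k(\nfrac{k}{t}-k)$, which when expanded coefficient-by-coefficient is exactly your $\sum_{j=0}^{k-1} a_j\, k^j (1-t)^j\, t^{k-j}$, and then invokes the same change of variables $t=(1+\nfrac{x}{k})^{-1}$ to transfer the SSV bound. The only difference is cosmetic---you spell out the expansion explicitly, whereas the paper leaves it packaged as $t^k P_k(\nfrac{k}{t}-k)$ and simply observes that since $P_k$ has degree $k-1$, the result is a degree-$k$ polynomial with no constant term.
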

\begin{proof}
  Define $p^\star_{k}$ as $p_k^\star(t) \defeq t^k \cdot
  P_k\left(\nfrac{k}{t}-k\right),$ where $P_k$ is the polynomial from
  Theorem~\ref{thm:rational-approximations}. Note that since $P_k$ is
  a polynomial of degree $k-1$, $p^\star_k$ is a polynomial of degree
  $k$ with the constant term being zero, \emph{i.e.}, $p_k^\star(0) =
  0$.  Also, for any $k \ge k_0$,
\begin{align*}
  \sup_{t \in (0,1]} \left\lvert e^{-\nfrac{k}{t}+k} - p^\star_k(t)
  \right\rvert = \sup_{x \in [0,\infty)} \left\lvert e^{-x} -
    p_k^\star\left((1+\nfrac{x}{k})^{-1}\right)\right\rvert = \sup_{x
    \in [0,\infty)} \left\lvert e^{-x} -
    \frac{P_k(x)}{(1+\nicefrac{x}{k})^k}\right\rvert \le c_1k\cdot
  2^{-k}\ .
\end{align*}
\end{proof}

The corollary above inspires the application of the Lanczos method to
obtain the {\expRational} algorithm that appears in Figure
\ref{fig:expRational}.  We would like to work with the function $f(x)
= e^{k(1-\nfrac{1}{x})}$ and the matrix $B
\defeq (I+\nfrac{A}{k})^{-1}$ for some positive integer $k$ and use
the Lanczos method to compute approximation to $\exp(-A)v$ in the
Krylov subspace $\calK(B,v,k)$, for small $k$.  This is equivalent to
looking for uniform approximations to $\exp(-y)$ that are degree $k$
polynomials in $(1+\nfrac{y}{k})^{-1}.$

Unfortunately, we can't afford to exactly compute the vector
$(I+\nfrac{A}{k})^{-1}y$ for a given vector $y$. Instead, we will
resort to a fast but error-prone solver, \emph{e.g.} the Conjugate
Gradient method and the Spielman-Teng SDD solver
(Theorem~\ref{thm:Spielman-Teng}). Since the computation is now
approximate, the results for Lanczos method no longer apply. Dealing
with the error poses a significant challenge as the Lanczos method is
iterative and the error can propagate quite rapidly. A significant new
and technical part of the paper is devoted to carrying out the error
analysis in this setting. The details appear in Section
\ref{sec:error}.

Moreover, due to inexact computation, we can no longer assume $B$ is
symmetric. Hence, we perform complete orthonormalization while
computing the basis $\{v_i\}_{i=0}^k.$ We also define the symmetric
matrix $\widehat{T}_k \defeq \nfrac{1}{2}\cdot(T_k^\top + T_k)$ and
compute our approximation using this matrix. The complete procedure
{\expRational}, with the exception of specifying the choice of
parameters, is described in Figure \ref{fig:expRational}. We give a
proof of Theorem~\ref{thm:expRational:params} in
Section~\ref{sec:error}.
\begin{figure*}
\begin{tabularx}{\textwidth}{|X|}
\hline
\smallskip
{\bf Input}: A Matrix $A \succeq 0$, a vector $v$ such that $\norm{v}=1,$ and an approximation parameter $\epsilon$. \\
{\bf Output}: A vector $u$ such that $\norm{\exp(-A)v - u} \le \epsilon$.\\
{\bf Parameters:} Let $k \defeq O(\log \nfrac{1}{\eps})$ and
$\epsilon_1 \defeq \exp(-\Theta(k \log k + \log (1+\norm{A}))).$ 
\begin{enumerate}[label=\arabic*.]
      \item Initialize $v_0 \defeq v.$
      \item For $i = 0$ to $k-1,$ \hfill {\footnotesize  (Construct an orthonormal basis to Krylov subspace of order $k$
 )}
        \begin{enumerate}[label=\alph*.]
        \item \label{item:expRational:2} Call the procedure
          $\Invert_A(v_i,k,\varepsilon_1).$ The procedure returns a
          vector $w_i,$ such that, $\norm{(I+\nfrac{A}{k})^{-1}v_i - w_i} \le
          \varepsilon_1 \norm{v_i}.$ \hfill {\footnotesize (Approximate $(I+ \nfrac{A}{k})^{-1}v_i$)}
        \item For $j=0,\ldots,i$,
          \begin{enumerate}[label=\roman*.]
          \item Let $\alpha_{j,i} \defeq v_j^{\top}w_i$. \hfill
            {\footnotesize (Compute projection onto $w_i$)}
          \end{enumerate}
        \item Define $w_i^\prime \defeq w_i - \sum_{j=0}^i
          \alpha_{j,i}v_j$. \hfill
            {\footnotesize (Orthogonalize w.r.t. $v_j$ for  $j \le i$)}
       \item Let $\alpha_{i+1,i} \defeq
          \norm{w_i^\prime}\ ^*$ and $v_{i+1} \defeq w_i^\prime/\alpha_{i+1,i}$. \hfill
            {\footnotesize (Scaling it to norm $1$)}
        \item For $j=i+2,\ldots,k$, 
          \begin{enumerate}[label=\roman*.]
          \item Let $\alpha_{j,i} \defeq 0$.
          \end{enumerate}
        \end{enumerate}
      \item Let $V_k$ be the $n\times (k+1)$ matrix whose columns are
        $v_0,\ldots,v_k$ respectively. \hfill
                  \item Let $T_k$ be the $(k+1)\times(k+1)$ matrix
        $(\alpha_{i,j})_{i,j \in \{0,\ldots,k\}}$ and 
        $\widehat{T}_k \defeq \nfrac{1}{2}(T_k^\top +T_k)$. \hfill
            {\footnotesize (Symmetrize $T_k$)}
     \item Compute ${\mathcal{B}} \defeq \exp\left(k\cdot (I-\widehat{T}_k^{-1})\right)$ exactly and output the vector $V_k\mathcal{B}e_1$.
\vspace{0mm}
{\small \item[*] If $w^\prime_i = 0$, compute the approximation the matrices $T_{i-1}$
  and $V_{i-1},$ instead of $T_k$ and $V_k$. The error bounds still hold. }
\end{enumerate}
\\
\hline
\end{tabularx}
  \caption{The {\expRational} algorithm for approximating $\exp(-A)v$}
  \label{fig:expRational}
\end{figure*}
\subsection{Exponentiating PSD Matrices -- Proofs of
  Theorem~\ref{thm:expRational} and~\ref{thm:expRational2}}
In this section, we give a proof of Theorem~\ref{thm:expRational} and
Theorem~\ref{thm:expRational-psd}, assuming
Theorem~\ref{thm:expRational:params}. Our algorithms for these
theorems are based on the combining the {\expRational} algorithm with
appropriate $\Invert_A$ procedures.
\subsubsection{SDD Matrices -- Proof of Theorems~\ref{thm:expRational}}
\label{sec:expRational:proof}
For Theorem~\ref{thm:expRational} about exponentiating SDD matrices,
we implement the $\Invert_A$ procedure using the Spielman-Teng SDD
solver~\cite{ST3}. Here, we state an improvement on the Spielman-Teng
result by Koutis, Miller and Peng \cite{KMP}.
\begin{theorem}[SDD Solver \cite{KMP}]
\label{thm:Spielman-Teng}
Given a system of linear equations $Mx=b$, where the matrix $M$ is
SDD, and an error parameter $\epsilon > 0$, it is possible to obtain a
vector $u$ that is an approximate solution to the system, in the sense
that
\[\|u-M^{-1}b\|_M \le \epsilon \|M^{-1}b\|_M\ .\]
The time required for this computation is $\tilde{O}\left(m_M \log n
  \log \nfrac{1}{\epsilon}\right)$, where $M$ is an $n\times n$ matrix. (The tilde hides $\log \log n$ factors.)
\end{theorem}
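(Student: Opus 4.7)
The plan is to follow the strategy of Spielman--Teng, refined by Koutis--Miller--Peng, which combines graph-theoretic preconditioning with iterative refinement. First I would reduce the general SDD case to solving Laplacian systems $L(G)x=b$ on weighted graphs: a standard reduction (splitting positive/negative off-diagonal parts and appending an extra coordinate) turns an $n\times n$ SDD system with $m_M$ nonzeros into a Laplacian system on a graph with $O(m_M)$ edges and $O(n)$ vertices, preserving the $M$-norm error guarantee up to a constant. So it suffices to build a solver for Laplacians that achieves the same running time.

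The core step is to construct, for any weighted graph $G$ on $n$ vertices and $m$ edges, a sparser preconditioner graph $H$ with two properties: $(i)$ $L(H)$ is spectrally close to $L(G)$, i.e.\ $L(H) \preceq L(G) \preceq \kappa \cdot L(H)$ with condition number $\kappa = O(\mathrm{polylog}\, n)$, and $(ii)$ $H$ has a ``spine-heavy'' structure -- a low-stretch spanning tree $T$ of $G$ (Abraham--Bartal--Neiman or Abraham--Neiman) plus a small random sample of off-tree edges, with the sample size chosen proportional to each edge's stretch. The low-stretch tree gives total stretch $\tilde{O}(m)$, and by importance-sampling off-tree edges with probability proportional to stretch (and rescaling) one produces a sparsifier of $T\cup (\text{sample})$ of size $n + \tilde{O}(m/\kappa)$ that is spectrally equivalent to $G$ up to a factor of $\kappa$. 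This is the matrix Chernoff / effective resistance sparsification idea.

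With such a preconditioner in hand, I would solve $L(G)x = b$ by preconditioned Chebyshev iteration (or PCG), which drives the residual down by a constant factor in $O(\sqrt{\kappa})$ iterations, each costing one matrix--vector multiply with $L(G)$ (time $O(m)$) plus one approximate inversion of the preconditioner $L(H)$. The latter is handled \emph{recursively}: since $|E(H)| \le n + \tilde{O}(m/\kappa)$, choosing $\kappa$ to be a sufficiently large polylog reduces the off-tree edge count by a polylog factor, while the $n$ tree edges can be peeled off cheaply by eliminating degree-1 and degree-2 vertices of the tree (partial Cholesky on the tree, which preserves SDD structure and adds only $O(n)$ work). Thus each recursive level reduces the problem size by a polylog factor, giving $O(\log n / \log\log n)$ levels, and the work at level $i$ forms a geometric series summing to $\tilde{O}(m \log n \log \tfrac{1}{\epsilon})$ after accounting for the $O(\log \tfrac{1}{\epsilon})$ outer iterations of PCG needed for relative $M$-norm error $\epsilon$.

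The main obstacle will be controlling error propagation across the recursion: because each recursive call only \emph{approximately} inverts the preconditioner, one must use the ``preconditioned Chebyshev with inexact solves'' analysis to ensure the outer iteration still converges, which forces the inner tolerance at each level to be tightened by a modest polylog factor per level. A second subtle point is constructing the low-stretch tree and performing stretch-proportional sampling in nearly-linear time; I would appeal to the Abraham--Neiman tree construction and the fact that stretches of all off-tree edges with respect to a tree can be computed in $\tilde{O}(m)$ time using offline LCA / heavy-path decomposition. Putting the pieces together yields the stated $\tilde{O}(m_M \log n \log \tfrac{1}{\epsilon})$ running time with the $M$-norm relative error guarantee.
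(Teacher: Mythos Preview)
Your proposal is a reasonable high-level sketch of the Koutis--Miller--Peng argument, but there is nothing to compare it against: the paper does not prove this theorem. Theorem~\ref{thm:Spielman-Teng} is stated as a black-box citation of \cite{KMP} (building on Spielman--Teng \cite{ST3}) and is used only as a subroutine inside the $\Invert_A$ procedure for {\expRational}. No proof or proof sketch appears anywhere in the paper; the authors simply invoke the guarantee $\|u-M^{-1}b\|_M \le \epsilon\|M^{-1}b\|_M$ and the running time bound, then move on to show how to reduce their matrices (of the form $\Pi H M H \Pi$) to a call to this solver.

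So the ``gap'' here is not in your reasoning but in the task: you have attempted to prove a cited external result that the paper itself treats as given. If your goal was to supply the missing proof from \cite{KMP}, your outline hits the main ingredients (SDD-to-Laplacian reduction, low-stretch trees, stretch-based sampling, recursive preconditioned Chebyshev with inexact inner solves), though as written it remains a plan rather than a proof and would need the precise condition-number and error-propagation calculations to be complete.
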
 
We restate Theorem~\ref{thm:expRational} for completeness.
\begin{theorem}[Theorem~\ref{thm:expRational} Restated]
Given an $n \times n$ symmetric matrix $A$ which is SDD, a vector $v$ and a
parameter $\delta \le 1$, there is an algorithm that can compute a 
vector $u$ such that $\norm{\exp(-A)v-u} \le \delta\norm{v}$ in 
  time $\tilde{O}((m_A+n)\log (2+\norm{A})).$ The tilde hides $\poly(\log n)$ and $\poly(\log
\nfrac{1}{\delta})$ factors.
\end{theorem}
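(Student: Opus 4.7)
The plan is to apply Theorem~\ref{thm:expRational:params} with the choice $k = O(\log \nicefrac{1}{\delta})$ and $\varepsilon_1 = \exp(-\Theta(k\log k + \log(1+\norm{A})))$ prescribed there, and to implement the oracle $\Invert_A(y,k,\varepsilon_1)$ using the Spielman--Teng solver (Theorem~\ref{thm:Spielman-Teng}). Write $M \defeq I + \nfrac{A}{k}$. Since $A$ is SDD and $k>0$, $M$ is also SDD, with sparsity $m_M = O(m_A + n)$; this is the matrix on which we can invoke the solver. The overall algorithm is: run {\expRational}$(A,v,\delta)$ and, whenever it asks for $\Invert_A(y,k,\varepsilon_1)$, invoke Theorem~\ref{thm:Spielman-Teng} on $Mx = y$ with a suitably chosen internal accuracy $\varepsilon$ to produce the vector $w$ returned to {\expRational}.

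The main technical point is to translate the $M$-norm error guarantee of the SDD solver into the $\ell_2$ error guarantee required by $\Invert_A$. The key observation is that, because $A \succeq 0$, we have $M \succeq I$, and hence the $M$-norm dominates the $\ell_2$-norm: $\norm{x} \le \norm{x}_M$ for all $x$. Thus, if the solver returns $w$ with $\norm{w - M^{-1}y}_M \le \varepsilon \norm{M^{-1}y}_M$, then
\[
\norm{w - M^{-1}y} \;\le\; \norm{w - M^{-1}y}_M \;\le\; \varepsilon \norm{M^{-1}y}_M \;=\; \varepsilon \sqrt{y^\top M^{-1} y} \;\le\; \varepsilon \norm{y},
\]
where the last inequality uses $\lambda_{\max}(M^{-1}) \le 1$. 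So setting $\varepsilon \defeq \varepsilon_1$ is enough to satisfy the specification of $\Invert_A$.

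For the running time, each invocation of the solver costs $T^{\text{inv}}_{A,k,\varepsilon_1} = \tilde{O}(m_M \log n \log \nicefrac{1}{\varepsilon_1}) = \tilde{O}((m_A+n)\log(2+\norm{A}))$, because $\log \nicefrac{1}{\varepsilon_1} = \Theta(k\log k + \log(1+\norm{A}))$ and $k = O(\log \nicefrac{1}{\delta})$ is a $\poly(\log \nicefrac{1}{\delta})$ factor which gets swallowed by the tilde. Plugging into Theorem~\ref{thm:expRational:params}, the total time is $O(T^{\text{inv}}_{A,k,\varepsilon_1}\cdot k + nk^2 + k^3)$, which is $\tilde{O}((m_A+n)\log(2+\norm{A}))$ after absorbing the $\poly(\log n)$ and $\poly(\log\nicefrac{1}{\delta})$ factors into the tilde. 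Finally, the output vector $u$ satisfies $\norm{\exp(-A)v - u} \le \delta\norm{v}$ by rescaling $v$ to have unit norm and applying Theorem~\ref{thm:expRational:params}'s guarantee, which completes the proof.

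The non-trivial content lies entirely inside Theorem~\ref{thm:expRational:params}, whose proof handles the propagation of the $\Invert_A$ errors through the Krylov-basis construction; once that theorem is taken for granted, the only obstacle is the $M$-norm-to-$\ell_2$ conversion, which is clean because $M \succeq I$. Note that we do not need the more delicate Sherman--Morrison machinery of Theorem~\ref{thm:expRational2}: it is crucial here that $A$ itself is SDD so that $I + \nfrac{A}{k}$ is SDD and the solver applies directly.
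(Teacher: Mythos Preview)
Your proposal is correct and follows essentially the same approach as the paper: implement $\Invert_A$ by calling the SDD solver on $M=I+\nfrac{A}{k}$ (which is SDD since $A$ is), and convert the solver's $M$-norm guarantee into the required $\ell_2$ guarantee via $M\succeq I$ and $M^{-1}\preceq I$, then plug into Theorem~\ref{thm:expRational:params}. The paper's proof carries out exactly this argument with the same chain of inequalities and the same running-time accounting.
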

\begin{proof}
  We use the {\expRational} procedure to approximate the
  exponential. We only need to describe how to implement the
  $\Invert_A$ procedure for an SDD matrix $A$.  Recall that the
  procedure $\Invert_A$, given a vector $y,$ a positive integer $k$
  and real parameter $\varepsilon_1 >0,$ is supposed to return a
  vector $u_1$ such that $\norm{(I+\nfrac{A}{k})^{-1}y-u_1} \le
  \varepsilon_1\norm{y},$ in time
  $T^{\text{inv}}_{A,k,\varepsilon_1}.$ Also, observe that this is
  equivalent to approximately solving the linear system $(I+
  \nfrac{A}{k})z = y$ for the vector $z.$ 

  If the matrix $A$ is SDD, $(I+\nfrac{A}{k})$ is also SDD, and hence,
  we can use the Spielman-Teng SDD solver to implement $\Invert_A$. We
  use Theorem~\ref{thm:Spielman-Teng} with inputs $(I+\nfrac{A}{k}),$
  the vector $y$ and error parameter $\varepsilon_1.$ It returns a
  vector $u_1$ such that,
\[\|(I+\nfrac{A}{k})^{-1}y - u_1\|_{(I+\nfrac{A}{k})} \le
\epsilon_1\|(I+\nfrac{A}{k})^{-1}y\|_{(I+\nfrac{A}{k})}\ .\] 
This implies that,
\begin{align*}
\|(I+\nfrac{A}{k})^{-1}y - u_1\|^2 & = ((I+\nfrac{A}{k})^{-1}y - u_1)^\top ((I+\nfrac{A}{k})^{-1}y - u_1) \\
& \le ((I+\nfrac{A}{k})^{-1}y - u_1)^\top (I+\nfrac{A}{k}) ((I+\nfrac{A}{k})^{-1}y - u_1) \\
& \le  \norm{(I+\nfrac{A}{k})^{-1}y - u_1}^2_{(I+\nfrac{A}{k})} \le \epsilon_1^2 \cdot \|(I+\nfrac{A}{k})^{-1}y\|^2_{(I+\nfrac{A}{k})} \\
& = \epsilon_1^2 \cdot y^\top (I+\nfrac{A}{k})^{-1} y \le \epsilon_1^2 \cdot y^\top y\ ,
\end{align*}
which gives us $\norm{(I+\nfrac{A}{k})^{-1}y-u_1} \le
\varepsilon_1\norm{y}$, as required for $\Invert_A$. Thus,
Theorem~\ref{thm:expRational:params} implies that the procedure
{\expRational} computes a vector $u$ approximating $e^{-A}v$, as
desired.

The time required for the computation of $u_1$ is
$T^{\text{inv}}_{A,k,\varepsilon_1} = \tilde{O}\left((m_A+n) \log n
  \log \nfrac{1}{\epsilon_1}\right),$ and hence from
Theorem~\ref{thm:expRational:params}, the total running time is
$\tilde{O}\left((m_A+n) \log n (\log \nfrac{1}{\delta} + \log
  (1+\norm{A})) \log \nfrac{1}{\delta} + (\log
  \nfrac{1}{\delta})^3\right)$, where the tilde hides polynomial
factors in $\log \log n$ and $\log \log \nicefrac{1}{\delta}$.
\end{proof}

\subsubsection{General PSD Matrices -- Proof of
  Theorem~\ref{thm:expRational-psd}}
\label{sec:expRational-psd:proof}
For Theorem~\ref{thm:expRational-psd} about exponentiating general PSD
matrices, we implement the $\Invert_A$ procedure using the Conjugate
Gradient method.  We use the following theorem.
\begin{theorem}[Conjugate Gradient Method. See~\cite{cg}] 
Given a system of linear equations $Mx=b$ and an error parameter
$\epsilon > 0$, it is possible to obtain a vector $u$ that is an
approximate solution to the system, in the sense that
\[\|u-M^{-1}b\|_M \le \epsilon \|M^{-1}b\|_M.\] The time required for this
computation is $O\left(t_M \sqrt{\kappa(M)} \log \nfrac{1}{\epsilon}
\right),$ -- where $\kappa(M)$ denotes the condition
number of $M$.
\end{theorem}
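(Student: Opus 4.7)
The plan is to leverage the Krylov-subspace structure of the Conjugate Gradient iteration and reduce the error analysis to a polynomial approximation problem on the spectrum of $M$. First I would recall the standard CG setup: starting from $x_0 = 0$, the $k$-th iterate $x_k$ belongs to the Krylov subspace $\calK_k \defeq \Span\{b, Mb, \ldots, M^{k-1}b\}$ and is characterized as the unique vector in $\calK_k$ minimizing $\norm{x - M^{-1}b}_M$; this optimality follows from the $M$-orthogonality of the conjugate directions produced by the three-term recurrence that defines CG, and can be established by induction on $k$. The per-iteration cost is dominated by a single matrix-vector product with $M$, along with $O(n)$ work for the vector updates and inner products, giving $O(t_M + n)$ per step.

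Next, every $x_k \in \calK_k$ has the form $q(M)b$ for some polynomial $q$ of degree at most $k-1$. Writing $x^\star \defeq M^{-1}b$, the error satisfies $x_k - x^\star = -p(M) x^\star$ where $p(\lambda) \defeq 1 - \lambda q(\lambda)$ is a polynomial of degree at most $k$ with $p(0) = 1$. Since $M$ is symmetric with real non-negative spectrum and $x_k$ is optimal in the $M$-norm over $\calK_k$, one obtains
\[\norm{x_k - M^{-1}b}_M \le \min_{p \in \Sigma_k,\, p(0)=1}\ \max_{\lambda \in \Lambda(M)} |p(\lambda)| \cdot \norm{M^{-1}b}_M.\]
This reduces the running-time bound to the question of how well one can uniformly approximate zero on $[\lambda_n(M), \lambda_1(M)]$ by a degree-$k$ polynomial constrained to take value $1$ at the origin.

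The crux of the argument, and the step I expect to be the main obstacle, is bounding this min-max by a quantity decaying geometrically in $k/\sqrt{\kappa(M)}$ rather than merely $k/\kappa(M)$. The classical device is the shifted-and-scaled Chebyshev polynomial of the first kind: letting $T_k$ denote the degree-$k$ Chebyshev polynomial, I would consider
\[p^\star(\lambda) \defeq T_k\!\bigl(\tfrac{\lambda_1(M)+\lambda_n(M)-2\lambda}{\lambda_1(M)-\lambda_n(M)}\bigr) \ \Big/\ T_k\!\bigl(\tfrac{\lambda_1(M)+\lambda_n(M)}{\lambda_1(M)-\lambda_n(M)}\bigr),\]
which satisfies $p^\star(0)=1$ and, via the equioscillation property of $T_k$ on $[-1,1]$ together with its rapid growth outside, achieves the bound
\[\max_{\lambda \in [\lambda_n(M),\lambda_1(M)]} |p^\star(\lambda)| \le 2\Bigl(\tfrac{\sqrt{\kappa(M)}-1}{\sqrt{\kappa(M)}+1}\Bigr)^k.\]
Choosing $k = O(\sqrt{\kappa(M)} \log \nfrac{1}{\epsilon})$ therefore drives the relative $M$-norm error below $\epsilon$, yielding the claimed total running time of $O\bigl(t_M \sqrt{\kappa(M)} \log \nfrac{1}{\epsilon}\bigr)$. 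The main delicate point is the translation of the standard Chebyshev bound on $[-1,1]$ to the shifted interval and the evaluation of $T_k$ at the point $\tfrac{\kappa+1}{\kappa-1}$, where the closed-form $T_k(\cosh\theta) = \cosh(k\theta)$ gives the $\sqrt{\kappa(M)}$ speedup over the naive geometric rate.
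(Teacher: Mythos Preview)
The paper does not prove this statement; it is quoted as a classical result with the citation ``See~\cite{cg}'' and used as a black box in the proof of Theorem~\ref{thm:expRational-psd-restated}. Your outline is the standard textbook argument (Krylov optimality of CG in the $M$-norm, reduction to polynomial min--max on the spectrum, and the shifted Chebyshev bound), and it is correct; there is simply nothing in the paper to compare it against.
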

We restate Theorem~\ref{thm:expRational-psd} for completeness.
\begin{theorem}[Theorem~\ref{thm:expRational-psd} Restated]
\label{thm:expRational-psd-restated}
Given an $n \times n$ symmetric PSD matrix $A$, a vector $v$ and a
parameter $\delta \le 1$, there is an algorithm that can compute a
vector $u$ such that $\norm{\exp(-A)v-u} \le \delta\norm{v}$ in time
$\tilde{O}\left((t_A+n) \sqrt{1+ \norm{A}} \log (2+\norm{A})\right).$
Here the tilde hides $\poly(\log n)$ and $\poly(\log
\nfrac{1}{\delta})$ factors.
\end{theorem}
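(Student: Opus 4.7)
The plan is to follow the same template as the proof of Theorem~\ref{thm:expRational} (the SDD case), and invoke Theorem~\ref{thm:expRational:params} with a new implementation of the subroutine $\Invert_A$ suited to general symmetric PSD matrices. Since $A \succeq 0$, the matrix $M \defeq I + \nfrac{A}{k}$ is symmetric positive definite with $t_M = O(n + t_A)$ (we can apply $M$ by one application of $A$ followed by an addition), and its condition number satisfies $\kappa(M) = \nfrac{1+\lambda_1(A)/k}{1+\lambda_n(A)/k} \le 1+\nfrac{\|A\|}{k}$. Thus Conjugate Gradient applied to the system $Mz = y$ with target accuracy $\varepsilon_1$ produces, in time $O\bigl((n+t_A)\sqrt{1+\nfrac{\|A\|}{k}}\log \nfrac{1}{\varepsilon_1}\bigr)$, a vector $u_1$ with $\|M^{-1}y - u_1\|_M \le \varepsilon_1 \|M^{-1}y\|_M$.

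The next step is to convert this $M$-norm guarantee into the $\ell_2$-guarantee that $\Invert_A$ requires. I will mimic the short chain of inequalities used in the proof of Theorem~\ref{thm:expRational}: using $M \succeq I$ gives $\|M^{-1}y - u_1\|^2 \le \|M^{-1}y - u_1\|_M^2 \le \varepsilon_1^2 \|M^{-1}y\|_M^2 = \varepsilon_1^2\, y^\top M^{-1} y \le \varepsilon_1^2 \|y\|^2$, where the last inequality uses $M^{-1} \preceq I$. So Conjugate Gradient furnishes a valid implementation of $\Invert_A(y,k,\varepsilon_1)$ running in time $T^{\text{inv}}_{A,k,\varepsilon_1} = O\bigl((n+t_A)\sqrt{1+\nfrac{\|A\|}{k}}\,\log \nfrac{1}{\varepsilon_1}\bigr)$.

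Plugging this bound into Theorem~\ref{thm:expRational:params} with $k = O(\log \nfrac{1}{\delta})$ and $\varepsilon_1 = \exp(-\Theta(k\log k + \log(1+\|A\|)))$ (so that $\log \nfrac{1}{\varepsilon_1} = \tilde{O}(\log(2+\|A\|))$, where the tilde hides $\poly\log \nfrac{1}{\delta}$), the output vector $u$ satisfies $\|\exp(-A)v - u\| \le \delta\|v\|$ (after scaling $v$ to unit norm) and the total running time is
\[
O\!\left(T^{\text{inv}}_{A,k,\varepsilon_1}\cdot k + nk^2 + k^3\right)
\;=\; O\!\left((n+t_A)\cdot k\sqrt{1+\tfrac{\|A\|}{k}}\,\log \tfrac{1}{\varepsilon_1} + nk^2 + k^3\right).
\]
Using $k\sqrt{1+\nfrac{\|A\|}{k}} = \sqrt{k^2 + k\|A\|} \le \sqrt{k}\cdot\sqrt{1+\|A\|}$ and absorbing the $\sqrt{k}$ into the tilde, this becomes $\tilde{O}\bigl((n+t_A)\sqrt{1+\|A\|}\,\log(2+\|A\|)\bigr)$, while the $nk^2 + k^3$ terms are dominated. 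This matches the claimed bound.

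The only mildly delicate point is making sure that the guarantee output by CG plugs cleanly into the error analysis underlying Theorem~\ref{thm:expRational:params}; but since that theorem is stated abstractly in terms of any $\Invert_A$ meeting the stated $\ell_2$ specification, no further work is needed beyond the $M$-norm to $\ell_2$-norm conversion above. In particular, the intricate propagation-of-error analysis of {\expRational} is reused verbatim, with no modification needed for the switch from the Spielman--Teng solver to Conjugate Gradient.
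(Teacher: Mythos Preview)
Your approach is essentially identical to the paper's: implement $\Invert_A$ via Conjugate Gradient on $M=I+\nfrac{A}{k}$, convert the $M$-norm guarantee to an $\ell_2$ guarantee using $M\succeq I$ and $M^{-1}\preceq I$, and plug into Theorem~\ref{thm:expRational:params}.

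There is one small algebraic slip. You write $k\sqrt{1+\|A\|/k}=\sqrt{k^2+k\|A\|}\le \sqrt{k}\cdot\sqrt{1+\|A\|}$, but the right-hand side squared is $k+k\|A\|$, which is \emph{smaller} than $k^2+k\|A\|$ for $k>1$; the inequality goes the wrong way. The easy fix (and what the paper does) is simply to use $\kappa(M)\le 1+\|A\|/k\le 1+\|A\|$ to bound each CG call by $O\bigl((n+t_A)\sqrt{1+\|A\|}\log\nfrac{1}{\varepsilon_1}\bigr)$, and then absorb the full factor of $k=O(\log\nfrac{1}{\delta})$ (not $\sqrt{k}$) from the $k$ calls into the tilde. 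The final running time is unchanged, so this is a cosmetic correction rather than a genuine gap.
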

  
\begin{proof}
  We use the {\expRational} procedure to approximate the
  exponential. We run the Conjugate Gradient method with the on input
  $(I+\nfrac{A}{k}),$ the vector $y$ and error parameter
  $\varepsilon_1.$ The method returns a vector $u_1$ with the same
  guarantee as the SDD solver. As in Theorem~\ref{thm:expRational},
  this implies $\norm{(I+\nfrac{A}{k})^{-1}y-u_1} \le
  \varepsilon_1\norm{y}$, as required for $\Invert_A$.  Thus,
  Theorem~\ref{thm:expRational:params} implies that the procedure
  {\expRational} computes a vector $u$ approximating $e^{-A}v$, as
  desired.

We can compute $u_1$ in time
$T^{\text{inv}}_{A,k,\varepsilon_1}  = O\left(t_A
\sqrt{\frac{1+\nfrac{1}{k} \cdot \lambda_1(A)}{1+\nfrac{1}{k}\cdot
\lambda_n(A)}} \log \nfrac{1}{\varepsilon_1} \right) = O\left(t_A
\sqrt{1+\norm{A} } \log \nfrac{1}{\varepsilon_1} \right),$  and hence from
Theorem~\ref{thm:expRational:params}, the total running time is
 $$\tilde{O}\left(t_A \sqrt{1+\norm{A}}(\log
\nfrac{1}{\delta} + \log (1+\norm{A})) \log \nfrac{1}{\delta} +
(\log \nfrac{1}{\delta})^2\right),$$ where the tilde hides polynomial
factors in $\log \log n$ and $\log \log \nicefrac{1}{\delta}$.
\end{proof}

\subsection{Beyond SDD - Proof of Theorem \ref{thm:expRational2}}
\label{sec:expRational2:proof}
In this section, we give a proof of Theorem~\ref{thm:expRational2},
which we restate below.
\begin{theorem}[Theorem~\ref{thm:expRational2} Restated]
  Given an $n \times n$ symmetric matrix $A=\Pi H M H \Pi$ where $M$
  is SDD, $H$ is a diagonal matrix with strictly positive entries and
  $\Pi$ is a rank $(n-1)$ projection matrix $= 1-ww^\top$ ($w$ is
  explicitly known and $\norm{w}=1$), a vector $v$ and a parameter
  $\delta \le 1$, there is an algorithm that can compute a vector $u$
  such that $\norm{\exp(-A)v-u} \le \delta\norm{v}$ in time
  $\tilde{O}((m_M+n)\log (2+\norm{H M H})).$ The tilde hides
  $\poly(\log n)$ and $\poly(\log \nfrac{1}{\delta})$ factors.
\end{theorem}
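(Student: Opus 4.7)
The plan is to reduce the problem to the {\expRational} framework (Figure~\ref{fig:expRational}) via Theorem~\ref{thm:expRational:params}, following the same pattern as in the proofs of Theorems~\ref{thm:expRational} and~\ref{thm:expRational-psd}; only one piece is missing, namely an implementation of the $\Invert_A$ subroutine which, on input $(y,k,\varepsilon_1),$ returns $u_1$ with $\norm{(I+\nfrac{A}{k})^{-1}y - u_1} \le \varepsilon_1\norm{y}$ in time $\tilde{O}((m_M+n)\log(2+\norm{HMH})\log \nfrac{1}{\varepsilon_1}).$ The challenge is that $A = \Pi HMH\Pi$ is neither SDD nor sparse, so the Spielman--Teng solver cannot be invoked directly on $(I+\nfrac{A}{k});$ the idea is to use the Sherman--Morrison--Woodbury formula to reduce this inversion to a constant number of approximate solves of a related SDD system.

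First I would carry out the Woodbury decomposition. Setting $\alpha \defeq \nfrac{1}{k}$ and $N \defeq HMH,$ expanding $\Pi = I - ww^\top$ gives
\[
I + \alpha A \;=\; X - (aw^\top + wa^\top) + (w^\top a)ww^\top \;=\; X - UV^\top,
\]
where $X \defeq I + \alpha N,$ $a \defeq \alpha Nw,$ $U \defeq [a,\,w]$ and $V \defeq [w,\,a - (w^\top a)w].$ The critical observation is that $X = H(H^{-2} + \alpha M)H,$ and $H^{-2} + \alpha M$ is SDD, since $H^{-2}$ is a non-negative diagonal matrix and adding a non-negative diagonal to an SDD matrix preserves the SDD property. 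Moreover, this matrix has $O(m_M + n)$ nonzero entries, so $X^{-1}$ can be approximately applied to any vector via one Spielman--Teng solve (Theorem~\ref{thm:Spielman-Teng}) followed by $O(n)$ work for the two diagonal scalings by $H^{-1}.$ The Woodbury identity then yields
\[
(I+\alpha A)^{-1} \;=\; X^{-1} + X^{-1} U \bigl(I_2 - V^\top X^{-1} U\bigr)^{-1} V^\top X^{-1},
\]
which I would evaluate by (i) performing three approximate SDD solves to obtain $\tilde z_0 \approx X^{-1}y,$ $\tilde z_1 \approx X^{-1}a,$ and $\tilde z_2 \approx X^{-1}w;$ (ii) forming and exactly inverting the $2\times 2$ matrix $I_2 - V^\top[\tilde z_1,\tilde z_2];$ and (iii) assembling the final approximation to $(I + \alpha A)^{-1} y.$

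The hard part will be the error analysis, since each SDD solve is only approximate and the errors must be tracked through the Woodbury expression. To control them, I would bound $\norm{X^{-1}} \le 1$ (using $X \succeq I$), $\norm{a} \le \alpha \norm{N} \le \norm{HMH}/k,$ and hence $\norm{U},\norm{V} = O(1+\norm{HMH});$ together with an upper bound on $\norm{(I_2 - V^\top X^{-1} U)^{-1}}$ derived from $\norm{(I+\alpha A)^{-1}} \le 1$ (which follows from $A \succeq 0$) and $\norm{X^{-1}} \le 1.$ Combining these bounds, performing each SDD solve to accuracy $\varepsilon' = \varepsilon_1/\poly(1+\norm{HMH})$ suffices to meet the $\Invert_A$ guarantee, and since $\log \nfrac{1}{\varepsilon'} = O(\log \nfrac{1}{\varepsilon_1} + \log(2+\norm{HMH})),$ one call to $\Invert_A$ runs in time $\tilde{O}((m_M+n)\log(2+\norm{HMH})\log \nfrac{1}{\varepsilon_1}),$ as required.

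Plugging this implementation of $\Invert_A$ into Theorem~\ref{thm:expRational:params} with $k = O(\log \nfrac{1}{\delta})$ and $\varepsilon_1 = \exp(-\Theta(k\log k + \log(1+\norm{HMH}))),$ {\expRational} outputs $u$ with $\norm{\exp(-A)v - u} \le \delta\norm{v}$ in total time $\tilde{O}((m_M+n)\log(2+\norm{HMH})),$ where the tilde hides $\poly(\log n)$ and $\poly(\log \nfrac{1}{\delta})$ factors, matching the statement of the theorem.
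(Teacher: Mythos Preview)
Your proposal is correct, and it follows the same high-level strategy as the paper (reduce to Theorem~\ref{thm:expRational:params} by implementing $\Invert_A$ via an SDD solve on $H^{-2}+\alpha M$), but the reduction to the SDD system is organized differently. You expand $I+\alpha\Pi N\Pi$ directly as a rank-$2$ perturbation of $X=I+\alpha N$ and invoke the Woodbury identity, requiring three approximate SDD solves and the inversion of a $2\times 2$ capacitance matrix. The paper instead first exploits the observation that $w$ is an eigenvector of $I+\alpha\Pi N\Pi$ with eigenvalue $1$: it splits $y=(w^\top y)w+z$ with $z\perp w$, handles the $w$-component trivially, and shows that on the orthogonal component $(I+\alpha\Pi N\Pi)^{-1}z=(I+\alpha\Pi N)^{-1}z$, which is only a rank-$1$ perturbation of $X$; then a single Sherman--Morrison application suffices, using just two approximate SDD solves and no auxiliary matrix inversion. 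Your route is more mechanical and would work even without the eigenvector structure of $\Pi$, at the cost of one extra solve and a slightly more delicate error analysis (you must also verify that the approximate $2\times2$ matrix stays invertible with controlled inverse norm; your sketch via $(I_2-V^\top X^{-1}U)^{-1}=I_2+V^\top(I+\alpha A)^{-1}U$ handles this). The paper's route is a bit more economical and makes the error propagation simpler.
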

\begin{proof}
  In order to prove this, we will use the {\expRational}
  procedure. For $A = \Pi HMH \Pi,$
  Lemma~\ref{lem:invert-proj-diag-sdd} given below implements the
  required $\Invert_A$ procedure. A proof of this lemma is given later
  in this section.
\begin{lemma}[$\Invert_A$ Procedure for Theorem~\ref{thm:expRational2}]
\label{lem:invert-proj-diag-sdd}
Given a positive integer $k,$ vector $y,$ an error parameter
$\varepsilon_1,$ a rank $(n-1)$ projection matrix $\Pi = I - ww^\top$
(where $\norm{w} = 1$ and $w$ is explicitly known), a diagonal matrix
$H$ with strictly positive entries, and an invertible SDD matrix $M$
with $m_M$ non-zero entries; we can compute a vector $u$ such that
$\norm{(I + \nfrac{1}{k}\cdot \Pi H M H\Pi)^{-1}y - u} \le
\varepsilon_1\norm{(I+\nfrac{1}{k}\cdot \Pi HMH\Pi)^{-1}y},$ in time
$\tilde{O}((m_M+n) \log n \log \frac{1+\norm{HMH}}{\varepsilon_1}).$ (The
tilde hides $\poly(\log \log n)$ factors.)
\end{lemma}

Assuming this lemma, we prove our theorem by combining this lemma with
Theorem~\ref{thm:expRational:params} about the {\expRational}
procedure, we get that we can compute the desired vector $u$
approximating $e^{-A}v$ in total time
$$\tilde{O}\left((m_M +n) \log
  n (\log \nfrac{1}{\delta} + \log (1+\norm{HMH})) \log
  \nfrac{1}{\delta} + (\log \nfrac{1}{\delta})^3\right),$$ where the
tilde hides polynomial factors in $\log \log n$ and $\log \log
\nicefrac{1}{\delta}.$
\end{proof}

In order to prove Lemma~\ref{lem:invert-proj-diag-sdd}, we need to
show how to approximate the inverse of a matrix of the form $HMH,$
where $H$ is diagonal and $M$ is SDD. The following lemma achieves this.
\begin{lemma}
\label{lem:invert-diag-sdd}
Given a vector $y,$ an error parameter $\varepsilon_1,$ a diagonal
matrix $H$ with strictly positive entries, and an invertible SDD
matrix $M$ with $m_m$ non-zero entries; we can compute a vector $u$
such that $\norm{(HMH)^{-1}y - u}_{HMH} \le
\varepsilon_1\norm{(HMH)^{-1}y}_{HMH},$ in time $\tilde{O}((m_M+n)
\log n \log \nfrac{1}{\varepsilon_1}). $ (The tilde hides factors of
$\log \log n.$)
\end{lemma}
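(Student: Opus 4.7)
\medskip

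\noindent
\textbf{Proof proposal.} The plan is to reduce the problem of approximately solving a linear system in $HMH$ to approximately solving one in the SDD matrix $M$, so that we can invoke the Spielman--Teng/KMP solver from Theorem~\ref{thm:Spielman-Teng} as a black box. The key observation is the identity $\|a\|_{HMH}^2 = a^\top HMH a = (Ha)^\top M (Ha) = \|Ha\|_M^2,$ which converts the $HMH$-norm on the ambient space into the $M$-norm after the change of variables $z = Hx$.

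Concretely, given input $y$, I first compute $y' \defeq H^{-1} y$ in $O(n)$ time (recall $H$ is a diagonal matrix with strictly positive entries, so $H^{-1}$ is known explicitly and its application is trivial). I then call the SDD solver of Theorem~\ref{thm:Spielman-Teng} on the system $M z = y'$ with error parameter $\varepsilon_1$; this returns in time $\tilde{O}(m_M \log n \log \nfrac{1}{\varepsilon_1})$ a vector $\hat z$ satisfying
\[
\|\hat z - M^{-1} y'\|_M \;\le\; \varepsilon_1 \|M^{-1} y'\|_M.
\]
Finally, I output $u \defeq H^{-1} \hat z$, which again takes only $O(n)$ time.

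To verify correctness, note that $(HMH)^{-1} y = H^{-1} M^{-1} H^{-1} y = H^{-1} M^{-1} y'$, so that $H\bigl(u - (HMH)^{-1} y\bigr) = \hat z - M^{-1} y'$ and $H (HMH)^{-1} y = M^{-1} y'$. Applying the norm identity from the first paragraph to both sides,
\[
\|u - (HMH)^{-1} y\|_{HMH} = \|\hat z - M^{-1} y'\|_M \le \varepsilon_1 \|M^{-1} y'\|_M = \varepsilon_1 \|(HMH)^{-1} y\|_{HMH},
\]
which is exactly the guarantee required by the lemma. The total running time is dominated by the single SDD solve, giving $\tilde{O}((m_M + n) \log n \log \nfrac{1}{\varepsilon_1})$ as claimed.

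There is essentially no serious obstacle here beyond making the reduction precise: the main subtlety is to choose the error measurement in the outer $HMH$-norm so that it exactly matches (up to the change of variables $z = Hx$) the $M$-norm error guarantee produced by the Spielman--Teng solver. Once that correspondence is set up, no amplification of the relative error is incurred, so we can pass $\varepsilon_1$ directly into the solver rather than paying any conditioning factor depending on $H$ or $\|HMH\|$.
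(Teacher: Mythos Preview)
Your proof is correct and follows essentially the same approach as the paper: reduce to an SDD solve in $M$ via the change of variables $z=Hx$, call the solver of Theorem~\ref{thm:Spielman-Teng} on $Mz=H^{-1}y$, and post-multiply by $H^{-1}$. The paper carries out the norm equivalence $\|a\|_{HMH}=\|Ha\|_M$ by explicitly expanding the quadratic forms rather than stating the identity upfront, but the argument is otherwise identical.
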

\begin{proof}
  Observe that $(HMH)^{-1}y = H^{-1}M^{-1}H^{-1}y.$ Use the SDD solver (Theorem~\ref{thm:Spielman-Teng})
  with inputs $M,$ vector $H^{-1}y$ and parameter $\varepsilon_1$ to
  obtain a vector $u_1$ such that,
  \[\norm{M^{-1}(H^{-1}y) - u_1}_{M} \le
  \varepsilon_1\norm{M^{-1}H^{-1}y}_{M} .\] Return the vector $u
  \defeq H^{-1}u_1.$ We can bound the error in the output vector $u$
  as follows,
\begin{align*}
  \norm{(HMH)^{-1}y - u}^2_{HMH} & = \norm{H^{-1}M^{-1}H^{-1}y -
    H^{-1}u_1}^2_{HMH} \\
  & = (H^{-1}M^{-1}H^{-1}y - H^{-1}u_1)^\top (HMH)
  (H^{-1}M^{-1}H^{-1}y
  - H^{-1}u_1) \\
  & = (M^{-1}H^{-1}y - u_1)^\top M (M^{-1}H^{-1}y  - u_1) \\
  & = \norm{M^{-1}(H^{-1}y) - u_1}^2_{M}\\
  & \le \varepsilon_1^2\norm{M^{-1}H^{-1}y}^2_{M} = \varepsilon_1^2
  (M^{-1}H^{-1}y)^\top M (M^{-1}H^{-1}y) \\
  & = \varepsilon_1^2 (H^{-1}M^{-1}H^{-1}y)^\top (HMH)
  (H^{-1}M^{-1}H^{-1}y) \\
  & = \varepsilon_1^2 \norm{H^{-1}M^{-1}H^{-1}y}^2_{HMH} = \varepsilon_1^2
  \norm{(HMH)^{-1}y}^2_{HMH}
\end{align*}

Thus, $\norm{(HMH)^{-1}y-u}_{HMH} \le \varepsilon_1
\norm{(HMH)^{-1}y}_{HMH}.$ Since $H$ is diagonal, multiplication by
$H^{-1}$ requires $O(n)$ time. Hence, the total time is dominated by
the SDD solver, giving a total running time of $\tilde{O}((m_M+n) \log n
\log \nfrac{1}{\varepsilon_1}).$
\end{proof}
\noindent Now, we prove Lemma~\ref{lem:invert-proj-diag-sdd}. 
\begin{lemma}[Lemma~\ref{lem:invert-proj-diag-sdd} Restated]
Given a positive integer $k,$ vector $y,$ an error parameter
$\varepsilon_1,$ a rank $(n-1)$ projection matrix $\Pi = I - ww^\top$
(where $\norm{w} = 1$ and $w$ is explicitly known), a diagonal matrix
$H$ with strictly positive entries, and an invertible SDD matrix $M$
with $m_M$ non-zero entries; we can compute a vector $u$ such that
$\norm{(I + \nfrac{1}{k}\cdot \Pi H M H\Pi)^{-1}y - u} \le
\varepsilon_1\norm{(I+\nfrac{1}{k}\cdot \Pi HMH\Pi)^{-1}y},$ in time
$\tilde{O}((m_M+n) \log n \log \frac{1+\norm{HMH}}{\varepsilon_1}).$ (The
tilde hides $\poly(\log \log n)$ factors.)
\end{lemma}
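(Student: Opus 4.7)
The plan is to reduce inversion of the non-SDD matrix $B := I + \tfrac{1}{k}\Pi HMH\Pi$ to two approximate inversions of the SDD-structured matrix $A_k := I + \tfrac{1}{k}HMH$, which is already handled by Lemma~\ref{lem:invert-diag-sdd} after rewriting $A_k = H(H^{-2} + \tfrac{1}{k}M)H$; note that $M' := H^{-2} + \tfrac{1}{k}M$ is SDD (since adding a nonnegative diagonal matrix preserves diagonal dominance) and satisfies $m_{M'} = O(m_M + n)$. The key structural observation I would use is the identity $B = \Pi A_k \Pi + ww^\top$, verified by expanding $\Pi A_k \Pi + ww^\top = \Pi^2 + \tfrac{1}{k}\Pi HMH\Pi + ww^\top = \Pi + ww^\top + \tfrac{1}{k}\Pi HMH\Pi = B$. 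Consequently $Bw = w$ (because $\Pi w = 0$) and $B$ maps $w^\perp$ into itself, so $B$ respects the orthogonal splitting $\mathbb{R}^n = \mathrm{span}(w) \oplus w^\perp$ and can be inverted blockwise.

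Writing $y = \alpha w + z$ with $\alpha := w^\top y$ and $z := \Pi y \in w^\perp$, the true answer decomposes as $B^{-1}y = \alpha w + x$ where $x \in w^\perp$ is the unique solution of $\Pi(I + \tfrac{1}{k}HMH)x = z$. Equivalently $A_k x = z + cw$ for some scalar $c$, so
\[
x = A_k^{-1}z + c\, A_k^{-1}w, \qquad c = -\frac{w^\top A_k^{-1}z}{w^\top A_k^{-1}w},
\]
with $c$ fixed by the constraint $w^\top x = 0$. Since $A_k \succ 0$, we have $w^\top A_k^{-1}w > 0$; quantitatively, $A_k \preceq (1 + \norm{HMH}/k)\,I$ gives $w^\top A_k^{-1}w \ge 1/R$ where $R := 1 + \norm{HMH}/k$. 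The algorithm then invokes Lemma~\ref{lem:invert-diag-sdd} twice on $A_k = HM'H$, with right-hand sides $z$ and $w$ and a small error parameter $\varepsilon_2$ (to be chosen), yielding $\hat{p} \approx A_k^{-1}z$ and $\hat{q} \approx A_k^{-1}w$; it then forms $\hat{c} := -w^\top \hat{p}/w^\top \hat{q}$ and outputs $u := \alpha w + \hat{p} + \hat{c}\,\hat{q}$.

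The main obstacle will be propagating the errors, since Lemma~\ref{lem:invert-diag-sdd} provides only a relative energy-norm guarantee and the division in the formation of $\hat{c}$ threatens to amplify that error. Using $I \preceq A_k$, the energy-norm bound converts to $\norm{A_k^{-1}z - \hat{p}} \le \varepsilon_2 \norm{A_k^{-1}z}_{A_k} \le \varepsilon_2 \norm{z}$ (because $\norm{A_k^{-1}z}_{A_k}^2 = z^\top A_k^{-1}z \le \norm{z}^2$), and similarly $\norm{A_k^{-1}w - \hat{q}} \le \varepsilon_2$. Once $\varepsilon_2 \le 1/(2R)$, the triangle inequality gives $|w^\top \hat{q}| \ge 1/(2R)$, and a standard perturbation-of-quotient estimate yields $|c - \hat{c}| \le O(R^2 \varepsilon_2 \norm{z})$. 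Decomposing the error as $B^{-1}y - u = (A_k^{-1}z - \hat{p}) + c(A_k^{-1}w - \hat{q}) + (c - \hat{c})\hat{q}$ and combining with $\norm{B^{-1}y} \ge \norm{x} \ge \norm{z}/R$ (which follows from the operator-norm bound $\norm{B|_{w^\perp}} \le R$), the total error is $O(R^3 \varepsilon_2 \norm{B^{-1}y})$. Choosing $\varepsilon_2 = \varepsilon_1/\mathrm{poly}(R) = \varepsilon_1/\mathrm{poly}(1 + \norm{HMH})$ thus achieves the claimed bound, and since $\log(1/\varepsilon_2) = O(\log((1+\norm{HMH})/\varepsilon_1))$, the two calls to Lemma~\ref{lem:invert-diag-sdd} together with $O(n)$ overhead for the inner products and vector combinations deliver the stated running time of $\tilde{O}((m_M + n)\log n \log((1+\norm{HMH})/\varepsilon_1))$.
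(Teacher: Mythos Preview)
Your proposal is correct and follows essentially the same strategy as the paper: split $y$ along $w$ and $w^\perp$, handle the $w$-component trivially (since $w$ is a unit eigenvector of $B$), and reduce the $w^\perp$-component to two approximate solves with $A_k = I + \tfrac{1}{k}HMH = H(H^{-2}+\tfrac{1}{k}M)H$ via Lemma~\ref{lem:invert-diag-sdd}. The one noteworthy difference is the derivation of the combination formula: the paper passes through $(I+\Pi M_1)^{-1}$ and applies the Sherman--Morrison identity, obtaining the coefficient $\frac{w^\top M_1\beta_1}{1+w^\top M_1\beta_2}$ (which requires two extra matrix--vector products with $M_1$), whereas you parametrize $x = A_k^{-1}z + c\,A_k^{-1}w$ and solve the single scalar equation $w^\top x = 0$ directly, yielding the simpler coefficient $-\frac{w^\top\hat p}{w^\top\hat q}$. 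Both routes produce the same vector, and your error propagation (with loss polynomial in $R = 1+\|HMH\|/k$) is sound and leads to the same running time; your derivation is arguably a bit cleaner since it avoids the Sherman--Morrison bookkeeping and the auxiliary products with $M_1$.
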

\begin{proof}
  We sketch the proof idea first. Using the fact that $w$ is an
  eigenvector of our matrix, we will split $y$ into two components --
  one along $w$ and one orthogonal. Along $w,$ we can easily compute
  the component of the required vector. Among the orthogonal
  component, we will write our matrix as the sum of $I +
  \nfrac{1}{k}\cdot HMH$ and a rank one matrix, and use the
  Sherman-Morrison formula to express its inverse. Note that we can
  use Lemma~\ref{lem:invert-proj-diag-sdd} to compute the inverse of
  $I + \nfrac{1}{k}\cdot HMH.$ The procedure is described in
  Figure~\ref{fig:invert_A} and the proof for the error analysis is
  given below.

  Let $M_1 \defeq \nfrac{1}{k}\cdot HMH.$ Then, $I + \nfrac{1}{k}\cdot
  \Pi HMH \Pi  = I + \Pi M_1 \Pi.$ Without loss of generality,
  we will assume that $\norm{y}=1$.  Note that $I + \Pi M_1 \Pi \succ
  0,$ and hence is invertible.  Let $z \defeq y - (w^\top y)w$. Thus,
  $w^\top z = 0.$ Since $w$ is an eigenvector of $(I+\Pi M_1\Pi)$ with
  eigenvalue 1, we get,
\begin{align}
(I+\Pi M_1 \Pi)^{-1}y = (I+\Pi M_1 \Pi)^{-1}z + (w^\top  y)w.
\label{eq:projection-inverse1}
\end{align}
Let's say $t\defeq (I+\Pi M_1 \Pi)^{-1}z$. Then, $t + \Pi M_1 \Pi t = z$. Left-multiplying by $w^\top $, we get, $w^\top t = w^\top z = 0.$ Thus, $\Pi t = t,$ and hence $(I + \Pi M_1)t = z$, or equivalently, $t = (I + \Pi M_1)^{-1} z.$
\begin{align}
(I+\Pi M_1 \Pi)^{-1}z & = 
(I + \Pi M_1)^{-1}z = (I + M_1 - ww^\top M_1)^{-1}z = (I + M_1 - w(M_1w)^\top )^{-1}z \nonumber \\
& = \left((I + M_1)^{-1} - \frac{(I+M_1)^{-1}ww^\top
    M_1(I+M_1)^{-1}}{1 + w^\top M_1(I+M_1)^{-1}w}\right)z \nonumber\\
& \qquad \qquad \qquad \qquad \textrm{(Sherman-Morrison formula)} \nonumber\\
& = (I + M_1)^{-1}z - \frac{w^\top M_1(I+M_1)^{-1}z}{1 + w^\top M_1(I+M_1)^{-1}w}(I+M_1)^{-1}w
\label{eq:projection-inverse2}
\end{align}
Since we can write $I + M_1 = I + HMH = H(H^{-2} + M)H,$ we can use
Lemma~\ref{lem:invert-diag-sdd} to estimate $(I+M_1)^{-1}z$ and
$(I+M_1)^{-1}w$. Using Equation~\eqref{eq:projection-inverse2}, the
procedure for estimating $(I+\Pi M_1 \Pi)^{-1}x$ is described in
Figure~\ref{fig:invert_A}.
\begin{figure*}
\begin{tabularx}{\textwidth}{|X|}
\hline
\vspace{-3mm}
\begin{enumerate}[label=\arabic*.]
\item Compute $z\defeq y - (w^\top y)w.$
\item Estimate $(I+M_1)^{-1}z$ with error parameter $\frac{\varepsilon_1}{6(1+\norm{M_1})}.$ Denote the vector returned by $\beta_1$.
\item Estimate $(I+M_1)^{-1}w$ with error parameter $\frac{\varepsilon_1}{6(1+\norm{M_1})}.$ Denote the vector returned by $\beta_2$.
\item Compute
\begin{equation}
u_1 \defeq \beta_1 - \frac{w^\top M_1\beta_1}{1+w^\top
  M_1\beta_2}\beta_2 + (w^\top y)w.
\label{eq:exp-projection-estimate}
\end{equation}
Return $u_1$.
\vspace{-2mm}
\end{enumerate}
\\
\hline
\end{tabularx}
  \caption{The $\Invert_A$ procedure for Theorem~\ref{thm:expRational2}}
  \label{fig:invert_A}
\end{figure*}

We need to upper bound the error in the above estimation
procedure. From the assumption, we know that $\beta_1 = (I+M_1)^{-1}z
- e_1,$ where $\norm{e_1}_{(I+M_1)} \le
\frac{\varepsilon_1}{6(1+\norm{M_1})} \norm{(I+M_1)^{-1}z}_{(I+M_1)}$,
and $\beta_2 = (I+M_1)^{-1}x - e_2,$ where $,\norm{e_2}_{(I+M_1)} \le
\frac{\varepsilon_1}{6(1+\norm{M_1})}\norm{(I+M_1)^{-1}w}_{(I+M_1)}.$
Combining Equations~\eqref{eq:projection-inverse1} and
\eqref{eq:projection-inverse2} and subtracting
Equation~\eqref{eq:exp-projection-estimate}, we can write the error
as,
\begin{align*}
& (I+\Pi M_1 \Pi)^{-1}y - u_1 = (I + M_1)^{-1}z - \beta_1 -\frac{w^\top M_1(I+M_1)^{-1}z}{1 + w^\top M_1(I+M_1)^{-1}w}(I+M_1)^{-1}w + \frac{w^\top M_1\beta_1}{1+w^\top  M_1\beta_2}\beta_2 \\
& \qquad  = e_1 -\frac{w^\top M_1(I+M_1)^{-1}z}{1 + w^\top M_1(I+M_1)^{-1}w}(I+M_1)^{-1}w + \frac{w^\top M_1[(I+M_1)^{-1}z - e_1]}{1 + w^\top M_1[(I+M_1)^{-1}w - e_2]}[(I+M_1)^{-1}w - e_2] \\
& \qquad = e_1 +\frac{w^\top M_1(I+M_1)^{-1}z \cdot w^\top M_1e_2}{(1 + w^\top M_1(I+M_1)^{-1}w) (1 + w^\top M_1[(I+M_1)^{-1}w - e_2])}(I+M_1)^{-1}w \\
& \qquad  \qquad
- \frac{w^\top M_1e_1}{1 + w^\top M_1[(I+M_1)^{-1}w - e_2]}[(I+M_1)^{-1}w - e_2]
- \frac{w^\top M_1(I+M_1)^{-1}z}{1 + w^\top M_1[(I+M_1)^{-1}w - e_2]}e_2
\end{align*}
Let us first bound the scalar terms. Note that $\norm{z} \le \norm{y} =1.$
\[|w^\top M_1(I+M_1)^{-1}z| \le \norm{w}\norm{M_1(I+M_1)^{-1}z} \le \norm{M_1(I+M_1)^{-1}} \le 1\ ,\]
\begin{align*}
w^\top M_1e_1 \le \norm{w}\norm{M_1}\norm{e_1} \le
\norm{M_1}\norm{e_1}_{(I+M_1)} & \le \nfrac{\varepsilon_1}{6}\cdot
\norm{(I+M_1)^{-1}z}_{(I+M_1)} \\
& =
\nfrac{\varepsilon_1}{6}\cdot\norm{(I+M_1)^{-\nfrac{1}{2} }z} \le
\nfrac{\varepsilon_1}{6}.
\end{align*}
Similarly, $w^\top M_1e_2 \le \nfrac{\varepsilon_1}{6}\cdot.$ Also,
$M_1(I+M_1)^{-1} \succeq 0$ and hence $w^\top M_1(I+M_1)^{-1}w \ge 0$. Thus,
\begin{align*}
\norm{(I+\Pi M_1 \Pi)^{-1}y - u_1} & \le \norm{(I+\Pi M_1 \Pi)^{-1}y -
  u_1}_{(I+M_1)}  \\
& \le \norm{e_1}_{(I+M_1)} + \frac{1\cdot \nfrac{\varepsilon_1}{6}}{1 \cdot (1-\nfrac{\varepsilon_1}{6})}\norm{(I+M_1)^{-1}w}_{(I+M_1)} \\
& \quad + \frac{\nfrac{\varepsilon_1}{6}}{
  (1-\nfrac{\varepsilon_1}{6})}(\norm{(I+M_1)^{-1}w}_{(I+M_1)} +
\norm{e_2}_{(I+M_1)}) +
\frac{1}{1-\nfrac{\varepsilon_1}{6}}\norm{e_2}_{(I+M_1)}\\
& \le \frac{\varepsilon_1}{6(1+\norm{M_1})} \norm{(I+M_1)^{-1}z}_{(I+M_1)} + \frac{4\cdot
  \nfrac{\varepsilon_1}{6}}{1-\nfrac{\varepsilon_1}{6}}\norm{(I+M_1)^{-1}w}_{(I+M_1)}\\
& \le  \nfrac{\varepsilon_1}{6} \norm{(I+M_1)^{-\nfrac{1}{2}}z} +
\nfrac{4\varepsilon_1}{5} \norm{(I+M_1)^{-\nfrac{1}{2}}w}\le \varepsilon_1
\end{align*}
Other than the estimation of $(I+M_1)^{-1}z$ and $(I+M_1)^{-1}w$, 
we need to compute a constant number of dot products and a constant
number of matrix-vector products with the matrix $M_1.$ Multiplying a
vector with $M_1 = \nfrac{1}{k}\cdot HMH$ takes time $O(m_M+n)$, giving a total
time of $\tilde{O}((m_M+n) \log n
\log \frac{1+\norm{HMH}}{\varepsilon_1})$
\end{proof}

\subsection{Error Analysis for {\expRational}}\label{sec:error}
In this section, we give the proof of Theorem~\ref{thm:expRational:params},
except for the proof of a few lemmas, which have been presented in the
Section~\ref{appendix:proofs:error} for better readability.

\subsubsection*{\it Proof Overview.}
At a very high-level, the proof follows the outline of the proof for
Lanczos method. We first show that assuming the error in computing the
inverse is small, $\widehat{T}_k$ can be used to approximate small
powers of $B= (I+\nfrac{A}{k})^{-1}$ when restricted to the Krylov
subspace, {\em i.e.} for all $i \le k,$ $\|B^i v-V_k \widehat{T}^i_k
V_k^\top v\| \lessapprox \varepsilon_2,$ for some small
$\varepsilon_2.$. This implies that we can bound the error in
approximating $p((I+\nfrac{A}{k})^{-1})$ using $p(\widehat{T}_k)$, by
$\varepsilon_2\norm{p}_1,$ where $p$ is a polynomial of degree at most
$k.$ This is the most technical part of the error analysis because we
need to capture the propagation of error through the various
iterations of the algorithm. We overcome this difficulty by expressing
the final error as a sum of $k$ terms, with the $i^\text{th}$ term
expressing how much error is introduced in the final candidate vector
because of the error in the inverse computation during the
$i^\text{th}$ iteration. Unfortunately, the only way we know of
bounding each of these terms is by {\em tour de force}. A part of this
proof is to show that the spectrum of $\widehat{T}_k$ cannot shift far
from the spectrum of $B.$

To bound the error in the candidate vector output by the algorithm,
{\em i.e.} $\|f(B)v-V_kf(\widehat{T}_k)V_k^\top v\|,$ we start by
expressing $e^{-x}$ as the sum of a degree $k$-polynomial $p_k$ in
$(1+\nfrac{x}{k})^{-1}$ and a remainder function $r_k.$ We use the
analysis from the previous paragraph to upper bound the error in the
polynomial part by $\varepsilon_2 \norm{p}_1.$ We bound the
contribution of the remainder term to the error by bounding
$\norm{r_k(B)}$ and $\|{r_k(\widehat{T}_k)}\|.$ This step uses the
fact that eigenvalues of $r_k(\widehat{T}_k)$ are
$\{r_k(\lambda_i)\}_i,$ where $\{\lambda_i\}_i$ are eigenvalues of
$\widehat{T_k}.$ To complete the error analysis, we use the
polynomials $p_k^\star$ from Corollary~\ref{cor:pk-star} and bound its
$\ell_1$ norm. Even though we do not know $p_k^\star$ explicitly, we
can bound $\norm{p_k^\star}_1$ indirectly by writing it as an
interpolation polynomial and using that the values it assumes in
$[0,1]$ have to be small in magnitude.

\begin{proof}
For notational convenience, define $B \defeq
(I+\nfrac{A}{k})^{-1}.$ Since the computation of $Bv_i$ is not exact
in each iteration, the eigenvalues of $\widehat{T}_k$ need not be
eigenvalues of $B$. Also, Lemma \ref{lem:exact-power} no longer holds,
\emph{i.e.}, we can't guarantee that $V_k\widehat{T}_k^te_1$ is identical
to $B^tv_0.$ However, we can prove the following lemma that proves
bounds on the spectrum of $\widehat{T}_k$ and also bounds the norm of the
difference between the vectors $V_k\widehat{T}_k^te_1$ and $B^tv_0.$
This is the most important and technically challenging part of the proof.
\begin{lemma}[Approximate Computation with $\widehat{T}_k$. Proof in Sec.~\ref{appendix:proofs:error}]
\label{lem:t-hat}
The coefficient matrix $\widehat{T}_k$ generated satisfies the following:
\begin{enumerate}[label=\arabic*.]
\item The eigenvalues of $\widehat{T}_k$ lie in  $\left[
    \left(1+\frac{\lambda_1(A)}{k}\right)^{-1} -
    \epsilon_1\sqrt{k+1},\left(1+\frac{\lambda_n(A)}{k}\right)^{-1}
    + \epsilon_1\sqrt{k+1} \right].$
\item For any $t \le k$, if $\epsilon_1 \le \epsilon_2 /
  (8(k+1)^{\nicefrac{5}{2}})$ and $\epsilon_2 \le 1$, we have, 
$\norm{B^t v_0 - V_k\widehat{T}_k^te_1} \le \epsilon_2\ .$
\end{enumerate}
\end{lemma}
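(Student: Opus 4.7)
\medskip
\noindent\textbf{Proof Proposal.}

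For Part~1, the plan is to write $T_k$ as a perturbation of $V_k^\top B V_k$ and then appeal to Weyl's inequality together with Cauchy interlacing. Concretely, define the error vectors $e_i := w_i - B v_i$, so that $\|e_i\|\le \epsilon_1 \|v_i\|=\epsilon_1$ by the guarantee on $\Invert_A$. The orthonormalisation step yields $w_i = V_k T_k e_{i+1}$ for $i=0,\dots,k-1$, and (interpreting the algorithm as computing one additional product $w_k \approx Bv_k$ to fill the last column of $T_k$) the columns of $T_k$ are exactly $V_k^\top w_i = V_k^\top B v_i + V_k^\top e_i$. Hence $T_k = V_k^\top B V_k + V_k^\top E$ where $E=[e_0\mid \cdots \mid e_k]$. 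Since $V_k$ has orthonormal columns, $\|V_k^\top E\|\le \|E\|_F \le \epsilon_1\sqrt{k+1}$. Symmetrising gives $\widehat{T}_k = V_k^\top B V_k + F$ with $\|F\|\le \epsilon_1\sqrt{k+1}$. By Weyl's inequality the spectrum of $\widehat{T}_k$ lies within $\epsilon_1\sqrt{k+1}$ of that of $V_k^\top B V_k$, and by Cauchy interlacing the latter lies in $[\lambda_n(B),\lambda_1(B)] = [(1+\lambda_1(A)/k)^{-1},(1+\lambda_n(A)/k)^{-1}]$, yielding Part~1.

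For Part~2, the plan is to induct on $t$, tracking $\delta^{(t)} := B^t v_0 - V_k \widehat{T}_k^t e_1$ with $\delta^{(0)}=0$. Substituting $B^{t-1}v_0 = V_k\widehat{T}_k^{t-1}e_1 + \delta^{(t-1)}$ into $B^t v_0 = B\cdot B^{t-1}v_0$ gives the recurrence
\[
\delta^{(t)} = (BV_k - V_k\widehat{T}_k)\,\widehat{T}_k^{t-1}e_1 \;+\; B\,\delta^{(t-1)},
\]
which unrolls to $\delta^{(t)} = \sum_{s=0}^{t-1} B^{t-1-s}(BV_k-V_k\widehat{T}_k)\widehat{T}_k^s e_1$. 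Using $\|B\|\le 1$ and, from Part~1, $\|\widehat{T}_k\|\le 1+\epsilon_1\sqrt{k+1}$, it remains to bound $\|(BV_k - V_k\widehat{T}_k)\widehat{T}_k^s e_1\|$ for each $s\le k-1$.

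The main obstacle is that $BV_k - V_k\widehat{T}_k$ is \emph{not} uniformly small in spectral norm: its first $k$ columns equal $-e_i + V_k(T_k-\widehat{T}_k)e_{i+1}$ and have norm $O(\epsilon_1\sqrt{k+1})$, but its last column $BV_k e_{k+1} - V_k\widehat{T}_k e_{k+1} = (I-V_k V_k^\top)Bv_k - V_kV_k^\top e_k$ can have norm of order $\|Bv_k\|=O(1)$, because $Bv_k$ typically escapes the Krylov span (in exact arithmetic, by the Lanczos coefficient $\beta_{k+1}$). To sidestep this, I would prove an auxiliary claim that the last coordinate of $\widehat{T}_k^s e_1$ is tiny for $s\le k-1$: namely $|e_{k+1}^\top \widehat{T}_k^s e_1| = O(\epsilon_1\,\mathrm{poly}(k))$. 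This uses the fact that $\widehat{T}_k$ is \emph{approximately} tridiagonal---its entries outside the tridiagonal band are exactly the symmetrised versions of the inner products $v_j^\top e_i$ and $v_j^\top w_i$ for $j>i+1$, each of size $O(\epsilon_1)$. Writing $\widehat{T}_k = T^{\mathrm{tri}} + T^{\mathrm{off}}$ with $\|T^{\mathrm{off}}\|\le O(\epsilon_1\sqrt{k+1})$, and noting that $(T^{\mathrm{tri}})^s e_1$ is supported on the first $s+1$ coordinates (so has zero $(k+1)$-th entry for $s\le k-1$), an expansion of $\widehat{T}_k^s = (T^{\mathrm{tri}}+T^{\mathrm{off}})^s$ shows that any path reaching the $(k+1)$-th coordinate must use $T^{\mathrm{off}}$ at least once, contributing a factor of $\epsilon_1\sqrt{k+1}$.

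Combining these ingredients, the contribution of the problematic last column of $BV_k - V_k\widehat{T}_k$ is tamed to $O(\epsilon_1\,\mathrm{poly}(k))$, and every other column contributes at most $O(\epsilon_1\sqrt{k+1})$ times $\|\widehat{T}_k^s e_1\|\le (1+\epsilon_1\sqrt{k+1})^s\le e$. The per-step error is therefore $O(\epsilon_1(k+1)^{3/2})$, and summing over $t\le k$ iterations gives $\|\delta^{(t)}\|\le O(\epsilon_1(k+1)^{5/2})$, which is at most $\epsilon_2$ under the hypothesis $\epsilon_1\le \epsilon_2/(8(k+1)^{5/2})$. The constants are absorbed by the factor $8$ in the hypothesis.
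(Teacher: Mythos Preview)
Your proposal is correct and follows the same overall architecture as the paper: Part~1 via $T_k = V_k^\top BV_k \pm V_k^\top E$ plus Weyl/interlacing, and Part~2 via the telescoping identity
\[
B^tv_0 - V_k\widehat{T}_k^t e_1 \;=\; \sum_{j=1}^{t} B^{t-j}\bigl(BV_k-V_k\widehat{T}_k\bigr)\widehat{T}_k^{\,j-1}e_1,
\]
together with the observation that the troublesome rank-one piece $\alpha_{k+1,k}\,v_{k+1}e_{k+1}^\top$ of $BV_k - V_k\widehat{T}_k$ contributes only through the scalar $e_{k+1}^\top\widehat{T}_k^{\,j-1}e_1$, which is tiny for $j-1<k$.

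Where you diverge from the paper is in how you control that scalar. The paper writes $\widehat{T}_k = T_k + \tfrac{1}{2}(V_k^\top E - E^\top V_k)$ and uses that $T_k$ is \emph{exactly} Upper Hessenberg by construction, so $e_{k+1}^\top T_k^{\,j-1}e_1=0$ for $j-1<k$; the perturbation has norm $\le \epsilon_1\sqrt{k+1}$ directly, and a word-expansion of $(T_k+\text{pert.})^{j-1}$ finishes it. You instead split $\widehat{T}_k = T^{\mathrm{tri}} + T^{\mathrm{off}}$ and claim $T^{\mathrm{off}}$ is small. This does work, but your one-line justification (``entries outside the tridiagonal band are \ldots $v_j^\top e_i$ and $v_j^\top w_i$ \ldots each of size $O(\epsilon_1)$'') skips the only nontrivial point: for $j<i-1$ one has $v_j^\top w_i = v_j^\top B v_i + v_j^\top e_i$, and it is $v_j^\top B v_i$ that must be shown $O(\epsilon_1)$. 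The way to see that is precisely the Hessenberg structure: $(T_k)_{ij}=0$ for $i>j+1$ forces $v_i^\top Bv_j = v_i^\top u_j$, and then symmetry of $B$ gives the other triangle. So your tridiagonal route ultimately leans on the same Hessenberg fact the paper exploits directly. Note also that to get $\|T^{\mathrm{off}}\|\le O(\epsilon_1\sqrt{k+1})$ (and hence match the $(k+1)^{5/2}$ in the hypothesis) rather than the $O(\epsilon_1 k)$ that naive entrywise bounding gives, you need the observation that the off-band part of $V_k^\top BV_k$ coincides with the off-band part of $V_k^\top E$, whose Frobenius norm is at most $\|E\|_F\le\epsilon_1\sqrt{k+1}$. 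With these two points filled in, your argument and the paper's yield the same bound; the paper's decomposition is simply the more economical of the two.
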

Here is an idea of the proof of the above lemma: Since, during every
iteration of the algorithm, the computation of $Bv_i$ is approximate,
we will express $BV_k$ in terms of $T_k$ and an error matrix $E$. This
will allow us to express $\widehat{T}_k$ in terms of $T_k$ and a
different error matrix. The first part of the lemma will follow
immediately from the guarantee of the $\Invert_A$ procedure.

For the Second part, we first express $BV_k - V_k \widehat{T}_k$ in
terms of the error matrices defined above. Using this, we can write
the telescoping sum $B^tV_k - V_k \widehat{T}^t_k = \sum_{j=1}^t
B^{t-j}(BV_k - V_k\widehat{T}_k)\widehat{T}^{j-1}_k.$ We use triangle
inequality and a \emph{tour de force} calculation to bound each
term. A complete proof is included in Section~\ref{appendix:proofs:error}.

\noindent 
As a simple corollary, we can bound the error in the computation of
the polynomial, in terms of the $\ell_1$ norm of the polynomial being
computed.
\begin{corollary}[Approximate Polynomial Computation. Proof in
  Sec.~\ref{appendix:proofs:error}]
\label{cor:approx-poly-err-l1}
  For any polynomial $p$ of degree at most $k$, if $\epsilon_1 \le
  \epsilon_2 / (2(k+1)^{\nicefrac{3}{2}})$ and $\epsilon_2 \le 1$,
\[\norm{p(B) v_0 - V_kp(\widehat{T}_k)e_1} \le \epsilon_2 \norm{p}_1.\]
\end{corollary}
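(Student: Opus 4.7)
The plan is to exploit linearity to reduce the polynomial bound to the per-power estimates supplied by Lemma~\ref{lem:t-hat}. Expanding $p(x) = \sum_{i=0}^{k} a_i x^i$, so that $\|p\|_1 = \sum_{i=0}^{k} |a_i|$, linearity gives
\[
p(B) v_0 - V_k p(\hat{T}_k) e_1 \;=\; \sum_{i=0}^{k} a_i \bigl(B^i v_0 - V_k \hat{T}_k^{\,i} e_1\bigr),
\]
and the triangle inequality yields
\[
\|p(B) v_0 - V_k p(\hat{T}_k) e_1\| \;\le\; \sum_{i=0}^{k} |a_i|\,\|B^i v_0 - V_k \hat{T}_k^{\,i} e_1\| \;\le\; \|p\|_1 \cdot \max_{0 \le i \le k} \|B^i v_0 - V_k \hat{T}_k^{\,i} e_1\|.
\]
It therefore suffices to verify $\|B^i v_0 - V_k \hat{T}_k^{\,i} e_1\| \le \epsilon_2$ for every $0 \le i \le k$ under the corollary's hypothesis.

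My first move would be to apply part~2 of Lemma~\ref{lem:t-hat} directly. However, the lemma asks for the stronger condition $\epsilon_1 \le \epsilon_2/(8(k+1)^{5/2})$, while the corollary only supplies $\epsilon_1 \le \epsilon_2/(2(k+1)^{3/2})$, so a black-box invocation is lossy by a factor of roughly $k$. To close this gap, I would revisit the telescoping identity
\[
B^t V_k - V_k \hat{T}_k^{\,t} \;=\; \sum_{j=1}^{t} B^{t-j}\bigl(B V_k - V_k \hat{T}_k\bigr) \hat{T}_k^{\,j-1}
\]
used inside the proof of Lemma~\ref{lem:t-hat}, and read off a sharper per-power estimate. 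Here $\|B^{t-j}\| \le 1$ because $B = (I + A/k)^{-1}$ with $A \succeq 0$, while part~1 of Lemma~\ref{lem:t-hat} shows that the spectrum of $\hat{T}_k$ is concentrated near $[0,1]$, so $\|\hat{T}_k^{\,j-1}\|$ is bounded by a constant. Combining these with the single-step error $\|B V_k - V_k \hat{T}_k\| = O(\epsilon_1 \sqrt{k+1})$ and summing $t \le k+1$ terms should deliver a per-power bound of order $\epsilon_1 (k+1)^{3/2}$, which meets the target $\epsilon_2$ precisely under the corollary's hypothesis.

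The main obstacle is this constant-matching step: one must peek inside the proof of Lemma~\ref{lem:t-hat} rather than quote it as a black box, in order to avoid losing an extra factor of $k$ when summing intermediate powers. Once the refined per-monomial estimate is in hand, the triangle-inequality reduction above immediately produces the claimed $\epsilon_2 \|p\|_1$ bound and completes the proof.
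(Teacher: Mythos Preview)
Your reduction via linearity and the triangle inequality is exactly what the paper does; its proof of the corollary simply writes $p(x)=\sum_t a_t x^t$, applies the triangle inequality, and invokes Lemma~\ref{lem:t-hat} termwise. You are also right that the corollary's stated hypothesis $\epsilon_1 \le \epsilon_2/(2(k+1)^{3/2})$ is weaker than the $\epsilon_1 \le \epsilon_2/(8(k+1)^{5/2})$ that Lemma~\ref{lem:t-hat} actually requires; the paper's own proof ignores this mismatch and just says ``the last inequality follows from the previous lemma as $\epsilon_2,\epsilon_1$ satisfy the required conditions.'' In the only place the corollary is used (the proof of Theorem~\ref{thm:expRational:params}) the parameters are chosen with $\epsilon_1 = \epsilon_2/(8(k+1)^{5/2})$, so the discrepancy is almost certainly a typo in the corollary's hypothesis rather than a claim of a genuinely sharper bound.

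Your attempt to \emph{prove} the sharper bound, however, has a real gap. From Equation~\eqref{eq:approx-poly1} in the paper,
\[
BV_k - V_k\widehat{T}_k \;=\; \bigl(I-\tfrac{1}{2}V_kV_k^\top\bigr)E + \tfrac{1}{2}V_kE^\top V_k + \alpha_{k,k+1}\,v_{k+1}e_{k+1}^\top,
\]
and while the first two summands do have norm $O(\epsilon_1\sqrt{k+1})$, the rank-one spillover term $\alpha_{k,k+1}v_{k+1}e_{k+1}^\top$ has norm $|\alpha_{k,k+1}| \le 1+\epsilon_1$, which is $O(1)$, not $O(\epsilon_1\sqrt{k+1})$. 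So your asserted ``single-step error $\|BV_k - V_k\widehat{T}_k\| = O(\epsilon_1\sqrt{k+1})$'' is false, and the naive telescoping bound would be useless. The paper tames this term by exploiting that $T_k$ is Upper Hessenberg, so $e_{k+1}^\top T_k^{\,r} e_1 = 0$ for $r<k$; one then bounds $e_{k+1}^\top \widehat{T}_k^{\,j-1}e_1$ by comparing $\widehat{T}_k^{\,j-1}$ to $T_k^{\,j-1}$ via a binomial expansion (Equation~\eqref{eq:approx-poly-err3}). That comparison introduces an extra factor of $t$, yielding a per-power bound of order $\epsilon_1(k+1)^{5/2}$ rather than $\epsilon_1(k+1)^{3/2}$. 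In short, the $(k+1)^{3/2}$ target you are aiming for does not follow from the argument you sketch, and the correct fix is simply to read the corollary's hypothesis as matching the lemma's.
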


\noindent 
Using this corollary, we can prove an analogue of Lemma \ref{lem:exact-poly}, giving
error bounds on the procedure in terms of degree $k$ polynomial
approximations. The proof is very similar and is based on writing $f$
as a sum of a degree $k$ polynomial and an \emph{error function}.
\begin{lemma}[Polynomial Approximation for {\expRational}. Proof in Sec.~\ref{appendix:proofs:error}]
\label{lem:min-poly-expv1}
Let $V_k$ be the ortho-normal basis and $\widehat{T}_k$ be the matrix of
coefficients generated by {\expRational}. Let $f$ be any function such that
$f(B)$ and $f(T_k)$ are defined. Define $r_k(x) \defeq f(x) - p(x).$ Then,
\begin{align}
\label{eq:final-error}
\norm{f(B)v_0 - V_kf(\widehat{T}_k)e_1} \le \min_{p \in \Sigma_k}
\left(\epsilon_2 \norm{p}_1 + \max_{\lambda \in \Lambda(B)}
  |r_k(\lambda)| +  \max_{\lambda \in \Lambda(\widehat{T}_k)}
  |r_k(\lambda)| \right).
\end{align}
\end{lemma}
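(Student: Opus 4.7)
The plan is to mimic the classical argument from Lemma~\ref{lem:exact-poly}, but replace the exact polynomial identity ``$p(B)v_0 = V_k p(T_k) V_k^\top v_0$'' with the approximate identity from Corollary~\ref{cor:approx-poly-err-l1}. First, I would fix any polynomial $p \in \Sigma_k$ and write $f = p + r_k$ with $r_k \defeq f-p$. Then, by linearity,
\[
f(B) v_0 - V_k f(\widehat{T}_k) e_1 \;=\; \bigl[p(B) v_0 - V_k p(\widehat{T}_k) e_1\bigr] \;+\; \bigl[r_k(B) v_0 - V_k r_k(\widehat{T}_k) e_1\bigr],
\]
and the triangle inequality reduces the task to controlling these two summands separately.

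For the polynomial summand I would invoke Corollary~\ref{cor:approx-poly-err-l1} directly: under the hypothesis on $\eps_1,\eps_2$ it yields $\norm{p(B)v_0 - V_k p(\widehat{T}_k) e_1} \le \eps_2 \norm{p}_1$. For the remainder summand, I would apply the triangle inequality once more and handle each piece by a spectral-norm argument. Since $A$ is symmetric PSD, $B = (I + A/k)^{-1}$ is symmetric, so $\norm{r_k(B)} = \max_{\lambda \in \Lambda(B)} |r_k(\lambda)|$, which together with $\norm{v_0} = 1$ gives $\norm{r_k(B) v_0} \le \max_{\lambda \in \Lambda(B)} |r_k(\lambda)|$. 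Similarly, $\widehat{T}_k$ is symmetric by construction (it is the symmetrization of $T_k$), and the columns of $V_k$ were built to be orthonormal, so $\norm{V_k x} = \norm{x}$ for every $x \in \R^{k+1}$; hence
\[
\norm{V_k r_k(\widehat{T}_k) e_1} \;\le\; \norm{r_k(\widehat{T}_k)} \cdot \norm{e_1} \;=\; \max_{\lambda \in \Lambda(\widehat{T}_k)} |r_k(\lambda)|.
\]
Combining these three estimates yields the inequality in the statement for the fixed $p$; finally, minimizing the right-hand side over $p \in \Sigma_k$ gives the lemma.

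I do not expect any serious obstacle here, since the heavy lifting was already done in Lemma~\ref{lem:t-hat} (to establish that $\widehat{T}_k$ is symmetric with a controlled spectrum) and in Corollary~\ref{cor:approx-poly-err-l1} (to absorb the non-exactness of polynomial evaluation into the $\eps_2 \norm{p}_1$ term). The only subtlety worth double-checking is that the hypotheses on $\eps_1$ and $\eps_2$ required by Corollary~\ref{cor:approx-poly-err-l1} are indeed available in the setting of the lemma; if not, they must be stated as additional assumptions, matching those already used in Lemma~\ref{lem:t-hat}.
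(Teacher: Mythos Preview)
Your proposal is correct and follows essentially the same approach as the paper: split $f = p + r_k$, apply Corollary~\ref{cor:approx-poly-err-l1} to the polynomial part, bound the remainder via the spectral norms of $r_k(B)$ and $r_k(\widehat{T}_k)$ using the symmetry of both matrices and the orthonormality of $V_k$, and then minimize over $p$. If anything, your write-up is slightly more careful than the paper's, which contains a couple of harmless typos (e.g.\ writing $V_k r_k(B) e_1$ where $r_k(B) v_0$ is intended).
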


\noindent In order to control the second error term in the above
lemma, we need to bounds the eigenvalues of $\widehat{T}_k$, which is
provided by Lemma~\ref{lem:t-hat}.
 
For our application, $f(t) = f_{k}(t) \defeq
\exp\left(k\cdot\left(1-\nfrac{1}{t}\right)\right)$ so that
$f_{k}((1+\nfrac{x}{k})^{-1}) = \exp(-x)$. This function is
discontinuous at $t=0$. Under exact computation of the inverse, the
eigenvalues of $\widehat{T}_k$ would be the same as the eigenvalues of $B$
and hence would lie in $(0,1]$. Unfortunately, due to the errors, the
eigenvalues of $\widehat{T}_k$ could be outside the interval. Since $f$ is
discontinuous at 0, and goes to infinity for small negative values, in
order to get a reasonable approximation to $f$, we will ensure that
the eigenvalues of $\widehat{T}_k$ are strictly positive, \emph{i.e.},
$\epsilon_1\sqrt{k+1} < (1+\nfrac{1}{k}\cdot \lambda_1(A))^{-1}$.

We will use the polynomials $p_k^\star$ from
Corollary~\ref{cor:pk-star} in Lemma~\ref{lem:min-poly-expv1} to bound
the final error. We will require the following lemma to bound the
$\ell_1$-norm of $p_k^\star.$
\begin{lemma}[$\ell_1$-norm Bound. Proof in Sec.~\ref{appendix:proofs:error}]
\label{lem:L1-norm}
Given a polynomial $p$ of degree $k$ such that $p(0) = 0$ and 
\[\sup_{t \in (0,1]} \left\lvert e^{-\nfrac{k}{t}+k} - p(t) \right\rvert = \sup_{x \in [0,\infty)} \left\lvert e^{-x} -
  p\left((1+\nfrac{x}{k})^{-1}\right)\right\rvert \le 1\ ,\] we must
have $\norm{p}_1 \le (2k)^{k+1}.$
\end{lemma}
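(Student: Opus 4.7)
}
The plan is to recover $p$ from its values at $k+1$ carefully chosen points via Lagrange interpolation, and then bound $\norm{p}_1$ by the sum of $\ell_1$-norms of the Lagrange basis polynomials, each scaled by $|p(t_i)|$. The saving grace is that the hypothesis gives us uniform pointwise control on $p$ throughout $[0,1]$: for every $t\in(0,1]$, one has $0<e^{-k/t+k}\le 1$, so the approximation inequality yields $|p(t)|\le 2$; combined with $p(0)=0$, this bounds $p$ by $2$ at \emph{any} set of nodes we pick in $[0,1]$.

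First I would fix the equispaced nodes $t_i\defeq i/k$ for $i=0,1,\ldots,k$ (including the node $t_0=0$, where we already know $p(t_0)=0$). Since $p$ has degree at most $k$, Lagrange interpolation gives the exact identity
\[
p(t)\;=\;\sum_{i=0}^{k}\,p(t_i)\,\ell_i(t),\qquad \ell_i(t)\defeq \prod_{j\neq i}\frac{t-t_j}{t_i-t_j}.
\]
Next I would pass to $\ell_1$-norms using the triangle inequality and the submultiplicativity $\norm{qr}_1\le\norm{q}_1\norm{r}_1$ for polynomials. For each factor, $\norm{t-t_j}_1=1+t_j\le 2$, and the denominator product is the elementary computation
\[
\prod_{j\neq i}|t_i-t_j|\;=\;\prod_{j\neq i}\frac{|i-j|}{k}\;=\;\frac{i!\,(k-i)!}{k^{k}}.
\]
Hence $\norm{\ell_i}_1\le 2^{k}\cdot k^{k}/(i!(k-i)!)$.

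Combining these bounds with $|p(t_i)|\le 2$ (and $p(t_0)=0$), I would get
\[
\norm{p}_1\;\le\;2\cdot 2^{k}\sum_{i=0}^{k}\frac{k^{k}}{i!(k-i)!}\;=\;\frac{2^{k+1}k^{k}}{k!}\sum_{i=0}^{k}\binom{k}{i}\;=\;\frac{4^{k}\cdot 2\,k^{k}}{k!}.
\]
A short estimate (e.g.\ $2^{k}\le k\cdot k!$ for $k\ge 1$) shows this is at most $(2k)^{k+1}$, as required.

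The computation is essentially routine, so I do not expect a serious obstacle; the mildly delicate point is simply choosing \emph{which} interpolation nodes to use. Equispaced nodes in $[0,1]$ are natural but make $\prod_{j\neq i}|t_i-t_j|^{-1}$ blow up like $k^{k}/k!$, which is why we need the generous slack $(2k)^{k+1}$ on the right-hand side. Chebyshev-type nodes would give a tighter numerator estimate but are unnecessary here, since the lemma only needs the crude $(2k)^{k+1}$ bound for its downstream use in controlling the polynomial-approximation error term in Lemma~\ref{lem:min-poly-expv1}.
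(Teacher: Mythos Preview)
Your proof is correct and follows essentially the same approach as the paper: Lagrange interpolation at the equispaced nodes $0,\nfrac{1}{k},\ldots,1$, the pointwise bound $|p(t_i)|\le 2$, and a crude estimate on the size of the interpolation coefficients. The paper expands the Lagrange formula and bounds each monomial coefficient $a_l$ separately (using $|t_i-t_j|\ge \nfrac{1}{k}$ and that each elementary symmetric term is at most $1$), whereas you package the same computation via $\norm{\ell_i}_1$ and submultiplicativity of $\norm{\cdot}_1$; your route is a bit cleaner and in fact gives the tighter intermediate bound $2^{2k+1}k^k/k!$ before relaxing to $(2k)^{k+1}$.

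One small slip: the inequality ``$2^k\le k\cdot k!$ for $k\ge 1$'' fails at $k=1$ (it gives $2\le 1$), so your last step does not literally cover that case. This is harmless---for $k=1$ the polynomial is $p(t)=at$ with $|a|=|p(1)|\le 2\le 4=(2k)^{k+1}$---but you should either state the inequality for $k\ge 2$ and dispatch $k=1$ separately, or just note that the lemma is only ever applied with $k\ge k_0$ from Corollary~\ref{cor:pk-star}.
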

This lemma is proven by expressing $p$ as the interpolation polynomial
on the values attained by $p$ at the $k+1$ points
$0,\nfrac{1}{k},\ldots,\nfrac{k}{k},$ which allows us to express the
coefficients in terms of these values. We can bound these values, and
hence, the coefficients, since we know that $p$ isn't too far from the
exponential function. A complete proof is included in
Section~\ref{appendix:proofs:error}.

Corollary~\ref{cor:pk-star} shows that
$p_k^\star(t)$ is a good uniform approximation to
$e^{-\nfrac{k}{t}+k}$ over the interval $(0,1]$. Since $\Lambda(B) \subseteq
(0,1],$ this will help us help us bound the second error term in
Equation~\eqref{eq:final-error}. Since $\widehat{T}_k$ can have
eigenvalues larger that 1, we need to bound the error in approximating
$f_{k}(t)$ by $p_{k}^{\star}(t)$ over an interval $(0,\beta],$ where
$\beta \ge 1$. The following lemma, gives us the required error
bound. This proof for this lemma bounds the error over $[1,\beta]$ by
applying triangle inequality and bounding the change in $f_k$ and $p$
over $[1,\beta]$ separately.
\begin{lemma}[Approximation on Extended
  Interval. Proof in Sec.~\ref{appendix:proofs:error}]
\label{lem:approx-extended-interval}
For any $\beta \ge 1$, any degree $k$ polynomial $p$ satisfies,
\[\sup_{t \in (0,\beta]} |p(t) - f_{k}(t)| \le \norm{p}_1 \cdot (\beta^k-1) + (f_{k}(\beta)-f_{k}(1))+ \sup_{t \in (0,1]}
|p(t)-f_{k}(t)|\ . \]
\end{lemma}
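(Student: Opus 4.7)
The plan is to split the interval $(0,\beta]$ as $(0,1]\cup[1,\beta]$ and observe that on the first piece the desired bound is immediate, so that all the work is on the second piece. For $t\in[1,\beta]$, the natural move is a three-term triangle inequality
\[
|p(t)-f_k(t)|\;\le\;|p(t)-p(1)|\;+\;|p(1)-f_k(1)|\;+\;|f_k(1)-f_k(t)|,
\]
and then to bound each of these terms separately by one of the three summands appearing on the right-hand side of the statement.

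The middle term is handled for free: $1\in(0,1]$, so $|p(1)-f_k(1)|\le \sup_{t\in(0,1]}|p(t)-f_k(t)|$. The third term is handled by monotonicity: since $f_k(t)=\exp(k(1-1/t))$ is strictly increasing on $(0,\infty)$, we have $f_k(1)\le f_k(t)\le f_k(\beta)$ for $t\in[1,\beta]$, so $|f_k(1)-f_k(t)|\le f_k(\beta)-f_k(1)$. The first term is where the $\|p\|_1$ factor comes in: writing $p(t)=\sum_{i=0}^{k} a_i t^i$, we get
\[
|p(t)-p(1)|\;\le\;\sum_{i=0}^{k}|a_i|\,|t^i-1|\;\le\;\sum_{i=0}^{k}|a_i|\,(\beta^i-1)\;\le\;(\beta^k-1)\sum_{i=0}^{k}|a_i|\;=\;\|p\|_1(\beta^k-1),
\]
using $\beta\ge 1$ together with $t\in[1,\beta]$ to obtain $0\le t^i-1\le \beta^i-1\le \beta^k-1$ for every $0\le i\le k$.

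Adding these three bounds gives the desired inequality on $[1,\beta]$; combining with the trivial bound on $(0,1]$ (which is dominated by the same right-hand side since $\beta^k-1\ge 0$ and $f_k(\beta)-f_k(1)\ge 0$) and taking the supremum over $t\in(0,\beta]$ yields the lemma. There is no real obstacle here, the only thing to be a little careful about is the monotonicity step and the observation that $\beta^i-1$ is monotone in $i$ only because $\beta\ge 1$, which is exactly the hypothesis.
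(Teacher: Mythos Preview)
Your proof is correct and follows essentially the same approach as the paper: split the interval as $(0,1]\cup[1,\beta]$, use the three-term triangle inequality on $[1,\beta]$, bound $|p(t)-p(1)|$ via the $\ell_1$ norm and the monotonicity of $t\mapsto t^i$ for $t\ge 1$, bound $|f_k(1)-f_k(t)|$ via the monotonicity of $f_k$, and absorb $|p(1)-f_k(1)|$ into the supremum over $(0,1]$. The paper's argument is identical in structure and in each of the individual bounds.
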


\noindent We bound the final error using the polynomial
$p_{k}^\star$ in Equation \eqref{eq:final-error}. We will use the
above lemma for $\beta \defeq 1+\epsilon_1\sqrt{k+1}$ and assume that $\epsilon_1\sqrt{k+1} < (1+\nfrac{1}{k}\cdot \lambda_1(A))^{-1}.$ 
\begin{align*}
\norm{f(B)v_0 - V_kf(\widehat{T}_k)e_1} & \le \epsilon_2\norm{p_{k}^\star}_1
+ \max_{\lambda \in \Lambda(B)} |r_k(\lambda)| +  \max_{\lambda \in \Lambda(\widehat{T}_k)} |r_k(\lambda)| \\
& \le \epsilon_2\norm{p_{k}^\star}_1 + \sup_{\lambda \in
  (0,1]} |(f_{k} - p_{k}^\star)(\lambda)|+ \sup_{\lambda \in
  (0,\beta]} |(f_{k} - p_{k}^\star)(\lambda)|. \\
& \qquad \qquad \text{(Since $\Lambda(B) \subseteq (0,1]$ and
  $\Lambda(\widehat{T}_k) \subseteq (0,\beta]$ )} \\
& \le \epsilon_2\norm{p_{k}^\star}_1 + \sup_{t \in (0,1]}
|p_{k}^\star(t)-f_{k}(t)| + \\
& \qquad \qquad \norm{p_{k}^\star}_1 \cdot (\beta^k-1) +
(f_{k}(\beta)-f_{k}(1))+ \sup_{t \in (0,1]} |p_{k}^\star(t)-f_{k}(t)|
\\
& = \norm{p_{k}^\star}_1 \cdot (\epsilon_2 + \beta^k-1) +
(\exp\left(\nfrac{k(\beta-1)}{\beta}\right) -1)+ 2\sup_{t \in (0,1]}
|p_{k}^\star(t)-f_{k}(t)|\ .
\end{align*}
Given $\delta < 1$, we plug in the following parameters,
\[k\defeq\max\{k_0,\log_2 \nicefrac{8c_1}{\delta} + 2
\log_2 \log_2 \nicefrac{8c_1}{\delta}\} = O\left(\log
  \nicefrac{1}{\delta}\right)\ ,\]
\[\epsilon_1 \defeq
\nfrac{\delta}{32}\cdot(k+1)^{-\nicefrac{5}{2}}\cdot(1+\nfrac{1}{k}\cdot
\lambda_1(A))^{-1}\cdot (2k)^{-k-1},\ \beta \defeq
1+\epsilon_1\sqrt{k+1},\ \epsilon_2 \defeq
8(k+1)^{\nicefrac{5}{2}}\epsilon_1 \ ,\]
where $k_0,c_1$ are the constants given by
Corollary~\ref{cor:pk-star}. Note that these parameters satisfy the
condition $\epsilon_1\sqrt{k+1} <
(1+\nfrac{1}{k}\cdot\lambda_1(A))^{-1}$. Corollary~\ref{cor:pk-star}
implies that $p_k^\star(0)=0$ and
\begin{align}
\sup_{t \in (0,1]}
|p_{k}^\star(t)-f_{k}(t)| & \le
\frac{\delta}{8}\cdot \frac{\log_2 \nicefrac{8c_1}{\delta} + 2 \log_2 \log_2
\nicefrac{8c_1}{\delta}}{(\log_2 \nicefrac{8c_1}{\delta})^2} \nonumber\\ & \le
\frac{\delta}{8} \cdot \frac{1}{\log_2 \nicefrac{8c_1}{\delta}} \left(1+ 2\cdot \frac{\log_2 \log_2
\nicefrac{8c_1}{\delta}}{\log_2 \nicefrac{8c_1}{\delta}} \right) \le
\frac{\delta}{8} \cdot \frac{1}{3} \cdot 3 \le \frac{\delta}{8}\ ,
\label{eq:rational-error2}
\end{align}
where the last inequality uses $\delta \le 1 \le c_1$ and $\log_2 x \le x,
\forall x \ge 0$. Thus, we can use Lemma~\ref{lem:L1-norm} to
conclude that $\norm{p_k^\star}_1 \le (2k)^{k+1}.$

\noindent
We can simplify the following expressions,
\[\exp\left(\nfrac{k(\beta-1)}{\beta}\right)-1 \le
\exp\left(k\epsilon_1\cdot\sqrt{k+1}\right) -1 \le
\exp(\nfrac{\epsilon_2}{8}) -1 \le (1+\nfrac{\epsilon_2}{4})-1 =
\nfrac{\epsilon_2}{4}\ ,\]
\[\beta^k-1 = (1+\epsilon_1 \sqrt{k+1})^k -1 \le \exp(k\cdot
\epsilon_1\sqrt{k+1}) -1 \le \nfrac{\epsilon_2}{4} .\]
Thus the total error $\norm{u-\exp(-A)v} = \norm{f(B)v_0 -
  V_kf(\widehat{T}_k)e_1} \le (2k)^{k+1}\cdot 2\epsilon_2 + \epsilon_2 +
\nfrac{\delta}{4} \le \delta$.

\subsubsection*{\it Running Time.} The running time for the procedure is
dominated by $k$ calls to the $\Invert_A$ procedure with parameters $k$
and $\epsilon_1$, computation of at most $k^2$ dot-products and the
exponentiation of $\widehat{T}_k$. The exponentiation of
$\widehat{T}_k$ can be done in time $O(k^3)$~\cite{eigendecomp}. Thus the total
running time is $O(T^{\text{inv}}_{A,k,\varepsilon_1}\cdot k + n\cdot
k^2 + k^3).$
\noindent This completes the proof of the Theorem \ref{thm:expRational:params}.
\end{proof}

\subsubsection{Remaining Proofs}
In this section, we give the remaining proofs in Section~\ref{sec:error}.
\label{appendix:proofs:error}
\begin{lemma}[Lemma \ref{lem:t-hat} Restated]
\label{lem:t-hat:restated}
The coefficient matrix $\widehat{T}_k$ generated satisfies the following:
\begin{enumerate}[label=\arabic*.]
\item The eigenvalues of $\widehat{T}_k$ lie in the interval $\left[
    \left(1+\frac{\lambda_1(A)}{k}\right)^{-1} -
    \epsilon_1\sqrt{k+1},\left(1+\frac{\lambda_n(A)}{k}\right)^{-1}
    + \epsilon_1\sqrt{k+1} \right].$
\item For any $t \le k$, if $\epsilon_1 \le \epsilon_2 /
  (8(k+1)^{\nicefrac{5}{2}})$ and $\epsilon_2 \le 1$, we have, 
$\norm{B^t v_0 - V_k\widehat{T}_k^te_1} \le \epsilon_2\ .$
\end{enumerate}
\end{lemma}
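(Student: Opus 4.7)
My plan is to recast the approximate Lanczos iteration as a matrix identity $BV_k = V_k T_k + E$, where $B = (I+A/k)^{-1}$ and the columns of $E$ capture two kinds of error: the per-step inversion error from $\Invert_A$ (each interior column has norm at most $\epsilon_1$) and, on the last column, the usual Krylov ``spill'' against $V_k^\perp$ (the analogue of the $\alpha_{k+1,k}v_{k+1}$ term in exact Lanczos). Left-multiplying by $V_k^\top$ and using orthonormality gives $T_k = V_k^\top B V_k - V_k^\top E$, and since $B$ is symmetric the compression $V_k^\top B V_k$ is symmetric, so symmetrizing yields
\[
\widehat{T}_k \;=\; V_k^\top B V_k \;-\; \tfrac{1}{2}\bigl(V_k^\top E + E^\top V_k\bigr),
\]
identifying $\widehat{T}_k$ as a symmetric perturbation of $V_k^\top B V_k$ of operator-norm size at most $\|E\|$.

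\textbf{Part 1.} With this setup, Part 1 is essentially immediate. Since $V_k$ has orthonormal columns, $V_k^\top B V_k$ is a principal compression of $B$, so by Cauchy interlacing its spectrum lies in $[\lambda_{\min}(B),\lambda_{\max}(B)] = [(1+\lambda_1(A)/k)^{-1},(1+\lambda_n(A)/k)^{-1}]$. I would then bound $\|E\| \le \|E\|_F \le \epsilon_1\sqrt{k+1}$ by summing per-column contributions, and invoke Weyl's inequality for symmetric perturbations to obtain the claimed spectral interval for $\widehat{T}_k$.

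\textbf{Part 2.} For Part 2, I would expand the error using the telescoping identity
\[
B^t V_k - V_k \widehat{T}_k^{\,t} \;=\; \sum_{j=1}^{t} B^{\,t-j}\bigl(BV_k - V_k\widehat{T}_k\bigr)\widehat{T}_k^{\,j-1},
\]
apply it to $e_1$ (noting $v_0 = V_k e_1$), and bound each factor: $\|B\|\le 1$; by Part~1, $\|\widehat{T}_k\|^{j-1} \le (1+\epsilon_1\sqrt{k+1})^{k-1}$; and $\|BV_k - V_k\widehat{T}_k\| \le \|E\| + \|T_k-\widehat{T}_k\| \le 2\epsilon_1\sqrt{k+1}$, using that $T_k - \widehat{T}_k = \tfrac{1}{2}(T_k - T_k^\top)$ is controlled by $\|V_k^\top E\|$. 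Summing over $j \le t \le k$ yields an upper bound of order $k\,\epsilon_1\sqrt{k+1}\cdot(1+\epsilon_1\sqrt{k+1})^{k-1}$; the hypothesis $\epsilon_1 \le \epsilon_2/(8(k+1)^{5/2})$ is calibrated so that the exponential factor is at most a small constant while the prefactor is at most $\epsilon_2$.

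\textbf{Main obstacle.} The hard step will be controlling the last column of $E$: this is the Krylov spill and is generically of size $\Theta(1)$ rather than $\epsilon_1$, so a crude column-wise Frobenius bound is insufficient. The delicate point is to show that this single bad column does not inflate $\|(BV_k - V_k\widehat{T}_k)\widehat{T}_k^{\,j-1}e_1\|$. I would attack this by splitting $E = E_{\mathrm{int}} + E_{\mathrm{spill}}$ into the small interior columns and a rank-one spill term, and tracking the two contributions separately through the powers of $\widehat{T}_k$, while simultaneously forcing $(1+\epsilon_1\sqrt{k+1})^{k-1} = O(1)$. This careful bookkeeping, together with the slack needed to keep the spill harmless, is precisely what accounts for the $(k+1)^{5/2}$ factor in the hypothesis (rather than a naive $(k+1)^{3/2}$), and is the reason the error analysis in the paper is called a \emph{tour de force}.
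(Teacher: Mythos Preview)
Your overall architecture matches the paper: the recurrence $BV_k = V_kT_k + (\text{inversion error}) + (\text{spill})$, the symmetrization formula $\widehat{T}_k = V_k^\top B V_k - \tfrac12(V_k^\top E + E^\top V_k)$, Weyl/interlacing for Part~1, and the telescoping sum $B^tV_k - V_k\widehat{T}_k^{\,t} = \sum_j B^{t-j}(BV_k-V_k\widehat{T}_k)\widehat{T}_k^{\,j-1}$ for Part~2 are exactly what the paper does. You also correctly isolate the main difficulty as the $\Theta(1)$-sized spill column.

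However, there is a genuine gap in your handling of the spill. You propose to ``track the two contributions separately through the powers of $\widehat{T}_k$'' but never say \emph{why} the spill contribution stays small. The spill contributes $\alpha_{k+1,k}\,v_{k+1}\,\bigl(e_{k+1}^\top \widehat{T}_k^{\,j-1} e_1\bigr)$ to the $j$-th summand, and without further input $|e_{k+1}^\top \widehat{T}_k^{\,j-1} e_1|$ can be $\Theta(1)$ (since $\|\widehat{T}_k\|\approx 1$), giving a total of order $k$ after summing over $j$ --- far too large, regardless of how small $\epsilon_1$ is. The mechanism the paper uses, which your proposal omits, is \emph{structural}: because $T_k$ is Upper Hessenberg (a single nonzero subdiagonal), one has $e_{k+1}^\top T_k^{\,r} e_1 = 0$ for every $r < k$. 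Hence
\[
e_{k+1}^\top \widehat{T}_k^{\,j-1} e_1 \;=\; e_{k+1}^\top\bigl(\widehat{T}_k^{\,j-1} - T_k^{\,j-1}\bigr) e_1,
\]
and now the difference $\widehat{T}_k - T_k = \tfrac12(V_k^\top E - E^\top V_k)$ (with $E$ the inversion error only) has norm $O(\epsilon_1\sqrt{k+1})$, so a binomial expansion of $(T_k + \text{small})^{j-1} - T_k^{j-1}$ controls this scalar. Without the Hessenberg observation your plan cannot close: merely forcing $(1+\epsilon_1\sqrt{k+1})^{k-1}=O(1)$ caps the growth of powers but does nothing to make the $\Theta(1)$ spill entry small.

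A minor point on Part~1: your bound $\|E\|_F\le \epsilon_1\sqrt{k+1}$ is inconsistent with your own later observation that the last column of $E$ is $\Theta(1)$. This is harmless for Part~1 because the spill $\alpha_{k+1,k}v_{k+1}e_{k+1}^\top$ is orthogonal to the columns of $V_k$ and therefore drops out of $V_k^\top E$ and $E^\top V_k$ --- but you should say so explicitly rather than asserting a Frobenius bound that is literally false for your choice of $E$. (The paper keeps the spill separate from $E$ from the outset, which avoids this confusion.)
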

\begin{proof}
Given a vector $y,$ a positive integer $k$ and real parameter
$\varepsilon_1 >0,$ $\Invert_A(y,k,\varepsilon_1)$ returns a vector
$u_1$ such that $\norm{By-u_1} \le
\varepsilon_1\norm{y},$ in time $T^{\text{inv}}_{A,k,\varepsilon_1}.$ Thus, for
each $i$, the vector $w_i$ satisfies
$\|Bv_i - w_i\| \le \epsilon_1\norm{v_i} = \epsilon_1 .$
\noindent Also define $u_i$ as $u_i \defeq Bv_i-w_i$.
Thus, we get, $\norm{u_i} \le \epsilon_1$. Let $E$ be the $n\times
(k+1)$ matrix with its columns being $u_0,\ldots,u_k$. We can write
the following recurrence,
\begin{align}
\label{eq:modified-recurrence}
BV_k & = V_kT_k + E + \alpha_{k,k+1}v_{k+1}e^\top _{k+1}\ ,
\end{align}
where each column of $E$ has $\ell_2$ norm at most $\epsilon_1$. Note
that we continue to do complete orthonormalization, so $V_k^\top V_k =
I_k$. Thus, $T_k$ is not tridiagonal, but rather \emph{Upper
  Hessenberg}, \emph{i.e.}, $(T_k)_{ij} = 0$ whenever $i > j+1$.

\noindent Multiplying both sides of Equation \eqref{eq:modified-recurrence} by $V_k^\top $, we get $T_k = V_k^\top BV_k - V_k^\top E$. This implies,
\begin{align}
\label{eq:Tk-hat}
\widehat{T}_k & =  V_k^\top BV_k  - \nfrac{1}{2}\cdot(V_k^\top E + E^\top  V_k) \\
&  = V_k^\top (V_kT_k + E + \alpha_{k,k+1}v_{k+1}e^\top _{k+1})  - \nfrac{1}{2}\cdot(V_k^\top E + E^\top  V_k) \qquad \text{(Using \eqref{eq:modified-recurrence}})\nonumber \\
& = T_k + \nfrac{1}{2}\cdot(V_k^\top E - E^\top V_k).
\label{eq:Tk-hat2}
\end{align}

\noindent Define $E_1 \defeq \nfrac{1}{2}\cdot(V_k^\top E + E^\top  V_k)$. Thus, using Equation \eqref{eq:Tk-hat}, $\widehat{T}_k = V^\top _kBV_k - E_1$. Let us first bound the norm of $E_1$.
\begin{align*}
\norm{E_1}\le \nfrac{1}{2}\cdot(\norm{V_k^\top E}+\norm{E^\top V_k}) \le \nfrac{1}{2}\cdot(\norm{E}+\norm{E^\top }) \le \norm{E}_F \le \epsilon_1 \sqrt{k+1} \ .
\end{align*}

\noindent Since $\widehat{T}_k = V_k^\top BV_k - E_1$. 
 We have,
\begin{align*}
\lambda_{\max}(\widehat{T}_k) & \le \lambda_{1}(B) + \norm{E_1} \le (1+\nfrac{1}{k}\cdot \lambda_{n}(A))^{-1} + \epsilon_1\sqrt{k+1}\ ,\\
\lambda_{\min}(\widehat{T}_k) & \ge \lambda_{n}(B) - \norm{E_1}  \ge (1+\nfrac{1}{k}\cdot \lambda_{1}(A))^{-1} - \epsilon_1\sqrt{k+1}\ .
\end{align*}
(We use $\lambda_{\max}$ and $\lambda_{\min}$ for the largest and
smallest eigenvalues of $\widehat{T}_k$ respectively in order to avoid confusion
since $\widehat{T}_k$ is a $(k+1)\times (k+1)$ matrix and not an $n
\times n$ matrix.)

\noindent First, let us compute $BV_k - V_k\widehat{T}_k$.
\begin{align}
BV_k - V_k\widehat{T}_k \stackrel{\eqref{eq:modified-recurrence},\eqref{eq:Tk-hat2}}{=}~ & V_kT_k + E + \alpha_{k,k+1}v_{k+1}e^\top _{k+1} - V_k\left(T_k + \nfrac{1}{2}\cdot(V_k^\top E - E^\top V_k)\right) \nonumber\\ 
\label{eq:approx-poly1}
~=~~ & \left(I-\nfrac{1}{2}\cdot V_kV_k^\top \right)E + \nfrac{1}{2}\cdot V_kE^\top V_k + \alpha_{k,k+1}v_{k+1}e^\top _{k+1}\ .
\end{align}

\noindent Now,
\begin{align}
  \norm{B^tv_0 - V_k\widehat{T}^t_ke_1} & = \norm{\sum_{j=1}^t B^{t-j}(BV_k - V_k\widehat{T}_k)\widehat{T}^{j-1}_ke_1} \qquad \qquad \qquad \text{(Telescoping sum)}\nonumber\\
  & \stackrel{\eqref{eq:approx-poly1}}{=} \norm{\sum_{j=1}^t B^{t-j}\left(\left(I-\nfrac{1}{2}\cdot V_kV_k^\top \right)E + \nfrac{1}{2}\cdot V_kE^\top V_k + \alpha_{k,k+1}v_{k+1}e^\top _{k+1}\right)\widehat{T}^{j-1}_ke_1} \nonumber\\
  &
  \stackrel{\Delta-\text{ineq.}}{\le} \sum_{j=1}^t\norm{ B^{t-j}\left(\left(I-\nfrac{1}{2}\cdot V_kV_k^\top \right)E + \nfrac{1}{2}\cdot V_kE^\top V_k\right)\widehat{T}^{j-1}_ke_1}  \nonumber\\
  & \label{eq:approx-poly-err1} \qquad + \norm{\sum_{j=1}^t
    B^{t-j}\left(\alpha_{k,k+1}v_{k+1}e^\top
      _{k+1}\right)\widehat{T}^{j-1}_ke_1}\ .
\end{align}

\noindent We can bound the first term in Equation \eqref{eq:approx-poly-err1} as follows.
\begin{align}
  & \sum_{j=1}^t\norm{ B^{t-j}\left(\left(I-\nfrac{1}{2}\cdot
        V_kV_k^\top \right)E + \nfrac{1}{2}\cdot V_kE^\top
      V_k\right)\widehat{T}^{j-1}_ke_1} \le
  \sum_{j=1}^t\norm{\left(I-\nfrac{1}{2}\cdot V_kV_k^\top \right)E + \nfrac{1}{2}\cdot V_kE^\top V_k}\norm{\widehat{T}_k}^{j-1} \nonumber\\
  & \qquad \qquad \qquad \qquad \qquad \qquad \qquad \qquad \qquad \qquad  \qquad\qquad \text{(Using $\norm{B} \le 1$)} \nonumber\\
  & \qquad \qquad \qquad \qquad \qquad \qquad  \le \left(\sum_{j=1}^t (1+\epsilon_1\sqrt{k+1})^{j-1} \right) \norm{\left(I-\nfrac{1}{2}\cdot V_kV_k^\top \right)E + \nfrac{1}{2}\cdot V_kE^\top V_k} \nonumber\\
  & \qquad \qquad \qquad \qquad\qquad \qquad \qquad \qquad \qquad \qquad  \qquad\qquad \text{(Using $\norm{\widehat{T}_k} \le 1+\epsilon_1 \sqrt{k+1}$)} \nonumber\\
  & \qquad \qquad \qquad \qquad \qquad \qquad  \le t(1+\epsilon_1\sqrt{k+1})^{t-1} \left(\norm{\left(I-\nfrac{1}{2}\cdot V_kV_k^\top \right)}\norm{E} + \nfrac{1}{2}\cdot \norm{V_k}\norm{E^\top }\norm{V_k}\right) \nonumber\\
  & \qquad \qquad \qquad \qquad \qquad \qquad \le
  2t\epsilon_1\sqrt{k+1}(1+\epsilon_1\sqrt{k+1})^{t-1}.
\label{eq:approx-poly-err2}
\end{align}

\noindent The second term in Equation \eqref{eq:approx-poly-err1} can be bounded as follows.
\begin{align}
  & \norm{\sum_{j=1}^t B^{t-j}\left(\alpha_{k,k+1}v_{k+1}e^\top _{k+1}\right)\widehat{T}^{j-1}_ke_1} \le |\alpha_{k,k+1}|\sum_{j=1}^t \norm{B}^{t-j}\norm{v_{k+1}e^\top _{k+1}\widehat{T}^{j-1}_ke_1} \nonumber\\
  & \qquad \qquad \le (1+\epsilon_1)\sum_{j=1}^t \norm{B}^{t-j}\norm{v_{k+1}e^\top _{k+1}\left(T_k + \nfrac{1}{2}\cdot (V_k^\top E - E^\top V_k) \right)^{j-1}e_1} \nonumber\\
  &  \qquad \qquad\qquad \qquad \text{(Using $|\alpha_{k,k+1}| \le \norm{w_k} \le \norm{Bv_k} + \epsilon_1 \le 1+\epsilon_1$ and \eqref{eq:Tk-hat2})} \nonumber\\
  & \qquad \qquad\le (1+\epsilon_1)\sum_{j=1}^t \norm{B}^{t-j}\norm{v_{k+1}e^\top _{k+1}\left(\left(T_k + \nfrac{1}{2}\cdot (V_k^\top E - E^\top V_k) \right)^{j-1} - T_k^{j-1}\right)e_1} \nonumber\\
  & \qquad \qquad\qquad \qquad \text{(Using $e^\top _{k+1}T_k^re_1 = 0$ for $r < k$ as $T_k$ is Upper Hessenberg)} \nonumber\\
  &  \qquad \qquad\le (1+\epsilon_1)\sum_{j=1}^t \norm{v_{k+1}e^\top _{k+1}}\left(\sum_{i=1}^{j-1} \binom{j-1}{i} \norm{\nfrac{1}{2}\cdot (V_k^\top E - E^\top V_k)}^{i}\norm{T_k}^{j-1-i}\right) \nonumber\\
  & \qquad \qquad\qquad \qquad \text{(Using sub-multiplicity of $\norm{\cdot}$ and $\norm{B} \le 1$)} \nonumber\\
  & \qquad \qquad\le (1+\epsilon_1)\sum_{j=1}^t \left(\sum_{i=1}^{j-1} \binom{j-1}{i} (\epsilon_1\sqrt{k+1})^i (1+\epsilon_1\sqrt{k+1})^{j-1-i}\right) \nonumber\\
  & \qquad \qquad \qquad \qquad\text{(Using $\norm{v_{k+1}e^\top _{k+1}} \le 1,\norm{T_k} \le (1+\epsilon_1\sqrt{k+1})$} \nonumber\\
  & \qquad \qquad \qquad \qquad \qquad \qquad \qquad \text{ and $\norm{\nfrac{1}{2}\cdot (V_k^\top E - E^\top V_k)} \le \epsilon_1\sqrt{k+1}$)} \nonumber\\
  & \qquad \qquad\le (1+\epsilon_1)\sum_{j=1}^t ((1+2\epsilon_1\sqrt{k+1})^{j-1} -1) \nonumber\\
  & \qquad \qquad\le t (1+\epsilon_1)
  ((1+2\epsilon_1\sqrt{k+1})^{t-1} -1). \label{eq:approx-poly-err3}
\end{align}

\noindent Combining Equations \eqref{eq:approx-poly-err1},\eqref{eq:approx-poly-err2} and \eqref{eq:approx-poly-err3}, we get,
\begin{align*}
\norm{B^tv_0 - V_k\widehat{T}^t_ke_1} & \le 2t\epsilon_1\sqrt{k+1}(1+\epsilon_1\sqrt{k+1})^{t-1} +  t (1+\epsilon_1) ((1+2\epsilon_1\sqrt{k+1})^{t-1} -1)  \\
& \le 2t\epsilon_1\sqrt{k+1}e^{\epsilon_1(t-1)\sqrt{k+1}}+ te^{\epsilon_1}(e^{2 \epsilon_1(t-1)\sqrt{k+1}} - 1) \qquad \text{(Using $1+x \le e^x)$} \\
& \le 2t\epsilon_1\sqrt{k+1}e^{\epsilon_1(t-1)\sqrt{k+1}}+ t(e^{2 \epsilon_1t\sqrt{k+1}} - 1) \\
& \le \nfrac{\epsilon_2}{4}\cdot e^{\nicefrac{\epsilon_2}{8(k+1)}} + k(e^{\nicefrac{\epsilon_2}{4(k+1)}}-1) \qquad \text{(Using $\epsilon_1 \le \nfrac{\epsilon_2}{8(k+1)^{\nicefrac{5}{2}}}$ and $t \le k$)} \\
& \le \nfrac{\epsilon_2}{4}\cdot e^{\nicefrac{\epsilon_2}{8(k+1)}} + k\nfrac{\epsilon_2}{4(k+1)}\cdot\left(1+\nfrac{\epsilon_2}{4(k+1)}\right) \qquad \text{(Using $e^x \le 1+ x + x^2$ for $0 \le x \le 1$)} \\
& \le \nfrac{\epsilon_2}{2} + \nfrac{\epsilon_2}{2} \le \epsilon_2\qquad \text{(Using $\epsilon_2 \le 1$ and $k \ge 0$)}.
\end{align*}

\noindent This proves the lemma. 
\end{proof}

\begin{corollary}[Corollary~\ref{cor:approx-poly-err-l1} Restated]
  For any polynomial $p$ of degree at most $k$, if $\epsilon_1 \le
  \epsilon_2 / (2(k+1)^{\nicefrac{3}{2}})$ and $\epsilon_2 \le 1$,
\[\norm{p(B) v_0 - V_kp(\widehat{T}_k)e_1} \le \epsilon_2 \norm{p}_1.\]
\end{corollary}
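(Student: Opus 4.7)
The plan is to derive this corollary as an essentially immediate consequence of Lemma~\ref{lem:t-hat} by exploiting linearity of both sides in the polynomial $p$ and then applying the triangle inequality. Concretely, writing $p(x) = \sum_{t=0}^k a_t x^t$, observe that
\[
p(B) v_0 - V_k p(\widehat{T}_k) e_1 \;=\; \sum_{t=0}^k a_t \bigl(B^t v_0 - V_k \widehat{T}_k^{\,t} e_1\bigr),
\]
since $p(B)v_0 = \sum_t a_t B^t v_0$ and, because $p$ acts on $\widehat{T}_k$ coefficient-wise, $V_k p(\widehat{T}_k) e_1 = \sum_t a_t V_k \widehat{T}_k^{\,t} e_1$.

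Taking norms and applying the triangle inequality then gives
\[
\norm{p(B) v_0 - V_k p(\widehat{T}_k) e_1} \;\le\; \sum_{t=0}^k |a_t| \cdot \norm{B^t v_0 - V_k \widehat{T}_k^{\,t} e_1}.
\]
The goal is therefore to bound each term $\norm{B^t v_0 - V_k \widehat{T}_k^{\,t} e_1}$ uniformly by $\epsilon_2$ under the stated hypothesis on $\epsilon_1$; summing then yields $\epsilon_2 \sum_{t=0}^k |a_t| = \epsilon_2 \norm{p}_1$, which is exactly the desired bound.

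To get the per-term bound, I would invoke Part~2 of Lemma~\ref{lem:t-hat}, which establishes exactly $\norm{B^t v_0 - V_k \widehat{T}_k^{\,t} e_1} \le \epsilon_2$ for every $t \le k$ under a suitable smallness condition on $\epsilon_1$ relative to $\epsilon_2$. The main technical point is to verify that the corollary's hypothesis $\epsilon_1 \le \epsilon_2 / (2(k+1)^{3/2})$ suffices to drive the tail bound on $\norm{B^t v_0 - V_k \widehat{T}_k^{\,t} e_1}$ below $\epsilon_2$. Tracing through the explicit estimate
\[
\norm{B^t v_0 - V_k \widehat{T}_k^{\,t} e_1} \;\le\; 2t\epsilon_1\sqrt{k+1}\,e^{\epsilon_1(t-1)\sqrt{k+1}} + t\,e^{\epsilon_1}\bigl(e^{2\epsilon_1(t-1)\sqrt{k+1}} - 1\bigr)
\]
from the proof of Lemma~\ref{lem:t-hat} with $t \le k$, the hypothesis implies $\epsilon_1 t \sqrt{k+1} \le \epsilon_2/2 \le 1/2$, so the exponentials are $O(1)$ and both terms reduce to a small constant times $\epsilon_1 (k+1)^{3/2}$, which is bounded by $\epsilon_2$ (with the stated constant $2$ chosen precisely to absorb the constant factors coming from $e^{1/2}$ and the linearization $e^x - 1 \le 2x$ on $[0,1]$).

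I expect no significant obstacle here beyond bookkeeping the constants; the real work has already been done in establishing Lemma~\ref{lem:t-hat}, and the corollary is essentially a direct "integration" of that pointwise bound against the coefficients of $p$. The only mildly subtle step is matching the constants in the hypothesis — specifically, confirming that the slightly relaxed condition $\epsilon_1 \le \epsilon_2/(2(k+1)^{3/2})$ (rather than $\epsilon_1 \le \epsilon_2/(8(k+1)^{5/2})$) is enough once we exploit that $t \le k$ keeps the exponential factors bounded by a small constant.
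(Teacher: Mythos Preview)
Your overall approach---write $p(x)=\sum_{t=0}^k a_t x^t$, expand by linearity, apply the triangle inequality, and bound each $\norm{B^t v_0 - V_k\widehat{T}_k^{\,t} e_1}$ by $\epsilon_2$ via Lemma~\ref{lem:t-hat}---is exactly what the paper does. That part is fine.

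The gap is in your constant-tracking paragraph. You claim that under the weaker hypothesis $\epsilon_1 \le \epsilon_2/(2(k+1)^{3/2})$ both terms in the Lemma~\ref{lem:t-hat} estimate reduce to a constant times $\epsilon_1 (k+1)^{3/2}$. This is false for the second term. Linearizing $e^x-1 \le 2x$, the second term $t\,e^{\epsilon_1}\bigl(e^{2\epsilon_1(t-1)\sqrt{k+1}}-1\bigr)$ is of order $\epsilon_1\, t^2\sqrt{k+1}$, and since $t$ ranges up to $k$ this is $\Theta\bigl(\epsilon_1 (k+1)^{5/2}\bigr)$, not $\Theta\bigl(\epsilon_1 (k+1)^{3/2}\bigr)$. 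Plugging in your hypothesis gives a bound of order $k\epsilon_2$, which is not controlled by $\epsilon_2$. So the relaxed condition $\epsilon_1 \le \epsilon_2/(2(k+1)^{3/2})$ does \emph{not} suffice to recover the per-power bound $\epsilon_2$ from the displayed estimate; you genuinely need the $(k+1)^{5/2}$ denominator, as in the lemma's own hypothesis.

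The paper's proof of the corollary simply says the parameters ``satisfy the required conditions'' and invokes the lemma directly. In other words, the paper does not attempt the sharpening you tried; the discrepancy between the $2(k+1)^{3/2}$ in the corollary statement and the $8(k+1)^{5/2}$ in Lemma~\ref{lem:t-hat} appears to be a typo in the corollary (and is harmless downstream, since Theorem~\ref{thm:expRational:params} sets $\epsilon_1$ using the stronger $8(k+1)^{5/2}$ bound anyway). Your argument becomes correct if you just cite Lemma~\ref{lem:t-hat} under its stated hypothesis rather than trying to weaken it.
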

\begin{proof}
Suppose $p(x)$ is the polynomial $\sum_{t = 0}^k a_t\cdot x^t,$ 
\begin{align*}
\norm{p(B) v_0 - V_kp(\widehat{T}_k)e_1} & = \norm{\sum_{t = 0}^k
  a_t\cdot B^t v_0 - V_k\sum_{t = 0}^k a_t\cdot \widehat{T}^t_k e_1} \\
& \le \sum_{t = 0}^k |a_t|\cdot \norm{B^t v_0 - V_k \widehat{T}_k^t
  e_1}  \le \epsilon_2 \sum_{t = 0}^k |a_t| = \epsilon_2 \norm{p}_1\ ,
\end{align*}
where the last inequality follows from the previous lemma as
$\epsilon_2,\epsilon_1$ satisfy the required conditions.
\end{proof}

\begin{lemma}[Lemma \ref{lem:min-poly-expv1} Restated]
\label{lem:min-poly-expv1:restated}
Let $V_k$ be the orthonormal basis and $\widehat{T}_k$ be the matrix of coefficients generated by the above procedure. Let $f$ be any function such that $f(B)$ and $f(T_k)$ are defined. Then,
\begin{align}
\norm{f(B)v_0 - V_kf(\widehat{T}_k)e_1} \le \min_{p \in \Sigma_k}
\left(\epsilon_2 \norm{p}_1 + \max_{\lambda \in \Lambda(B)}
  |r_k(\lambda)| +  \max_{\lambda \in \Lambda(\widehat{T}_k)}
  |r_k(\lambda)| \right).
\end{align}
\end{lemma}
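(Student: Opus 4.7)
The plan is to mimic the classical exact-arithmetic proof of Lemma \ref{lem:exact-poly}, splitting $f$ as (any) degree-$k$ polynomial plus a remainder and estimating the two contributions separately, with the polynomial piece handled by the new approximate-computation Corollary \ref{cor:approx-poly-err-l1} and the remainder piece handled by spectral bounds. Concretely, for any $p \in \Sigma_k$, writing $r_k = f - p$, I would apply the triangle inequality:
\[
\norm{f(B)v_0 - V_k f(\widehat{T}_k) e_1} \le \norm{p(B)v_0 - V_k p(\widehat{T}_k) e_1} + \norm{r_k(B) v_0} + \norm{V_k r_k(\widehat{T}_k) e_1}.
\]

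The first term is exactly what Corollary \ref{cor:approx-poly-err-l1} bounds: it is at most $\epsilon_2 \norm{p}_1$ (this is where the error from approximate inversion is absorbed, and where the symmetrization of $T_k$ into $\widehat{T}_k$ and the full Gram–Schmidt orthonormalization in {\expRational} are used implicitly via the earlier analysis). For the second term, since $A \succeq 0$, the matrix $B = (I + A/k)^{-1}$ is symmetric, so $\norm{r_k(B)} = \max_{\lambda \in \Lambda(B)} |r_k(\lambda)|$, and combined with $\norm{v_0}=1$ this gives $\norm{r_k(B)v_0} \le \max_{\lambda \in \Lambda(B)} |r_k(\lambda)|$. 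For the third term, $\widehat{T}_k = \tfrac{1}{2}(T_k + T_k^\top)$ is symmetric by construction, so its spectrum is real and $\norm{r_k(\widehat{T}_k)} = \max_{\lambda \in \Lambda(\widehat{T}_k)} |r_k(\lambda)|$; since $V_k$ has orthonormal columns $\norm{V_k} \le 1$, and $\norm{e_1} = 1$, sub-multiplicativity yields $\norm{V_k r_k(\widehat{T}_k) e_1} \le \max_{\lambda \in \Lambda(\widehat{T}_k)} |r_k(\lambda)|$.

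Adding the three bounds gives an upper bound that depends on the choice of $p \in \Sigma_k$, and taking the minimum over all such $p$ produces precisely the right-hand side in the statement. There is no substantial obstacle here: the whole point of symmetrizing $T_k$ in Figure \ref{fig:expRational} and of the rather delicate Lemma \ref{lem:t-hat} / Corollary \ref{cor:approx-poly-err-l1} was to reduce this final step to a routine triangle-inequality argument. The only mild subtlety is that, unlike in the exact Lanczos setting, the spectrum of $\widehat{T}_k$ is no longer contained in $\Lambda(B)$, so the third term must be kept as a separate maximum over $\Lambda(\widehat{T}_k)$ rather than being absorbed into the second—this is exactly why the statement of the lemma lists both maxima, and is what will force the later application (in the proof of Theorem \ref{thm:expRational:params}) to also control $r_k$ on an interval slightly larger than $(0,1]$, via Lemma \ref{lem:approx-extended-interval}.
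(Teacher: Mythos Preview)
Your proposal is correct and follows essentially the same approach as the paper: split $f = p + r_k$, bound the polynomial piece by Corollary~\ref{cor:approx-poly-err-l1}, bound the two remainder pieces via $\norm{r_k(B)}$ and $\norm{r_k(\widehat{T}_k)}$ using that both matrices are symmetric, and minimize over $p$. Your remark about why the two maxima must be kept separate (since $\Lambda(\widehat{T}_k) \not\subseteq \Lambda(B)$ in the approximate setting) is exactly the point the paper makes as well.
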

\begin{proof}
Let $p$ be any degree $k$ polynomial. Let $r_k \defeq f-p$. We 
express $f$ as $p+r_k$ and use the previous lemma to bound the error
in approximating $p(B)v_0$ by $V_kf(\widehat{T}_k)e_1.$ 
\begin{align*}
\norm{f(B)v_0 - V_kf(\widehat{T}_k)e_1} & \le \norm{p(B)V_ke_1 - V_kp(\widehat{T}_k)e_1} + \norm{V_kr_k(B)e_1 - V_kr_k(\widehat{T}_k)e_1}\\
& \le \epsilon_2 \norm{p}_1 + \norm{V_kr_k(B)e_1} + \norm{V_kr_k(\widehat{T}_k)e_1}\\
& \le \epsilon_2 \norm{p}_1 + \norm{r_k(B)} + \norm{r_k(\widehat{T}_k)} \\
& \le \epsilon_2 \norm{p}_1 + \max_{\lambda \in \Lambda(B)} |r_k(\lambda)| +  \max_{\lambda \in \Lambda(\widehat{T}_k)} |r_k(\lambda)|.
\end{align*}
Minimizing over $p$ gives us our lemma.
\end{proof}

\begin{lemma}[Lemma~\ref{lem:L1-norm} Restated]
Given a polynomial $p$ of degree $k$ such that $p(0) = 0$ and 
\[\sup_{t \in (0,1]} \left\lvert e^{-\nfrac{k}{t}+k} - p(t) \right\rvert = \sup_{x \in [0,\infty)} \left\lvert e^{-x} -
  p\left((1+\nfrac{x}{k})^{-1}\right)\right\rvert \le 1\ ,\]
we must have 
$\norm{p}_1 \le (2k)^{k+1}.$
\end{lemma}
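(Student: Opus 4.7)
The plan is to represent $p$ through Lagrange interpolation at the $k+1$ equally spaced nodes $t_j \defeq j/k$ for $j = 0, 1, \ldots, k$. Since $p$ has degree at most $k$, we have exactly
\[
p(t) \;=\; \sum_{j=0}^{k} p(t_j)\,L_j(t), \qquad L_j(t) \;\defeq\; \prod_{\substack{0 \le i \le k\\ i \neq j}} \frac{t - t_i}{t_j - t_i}.
\]
The hypothesis $p(0) = 0$ kills the $j=0$ term, so the sum is effectively over $j = 1, \ldots, k$. The overall strategy is to bound each $|p(t_j)|$ using the approximation assumption, bound each $\|L_j\|_1$ by a combinatorial calculation, and then sum.

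First, I would control the values $|p(t_j)|$. At $t_j = j/k$ with $1 \le j \le k$, the target function $e^{-k/t+k}$ takes the value $e^{k(1 - k/j)}$, and since $j \le k$ this exponent is non-positive, giving $e^{-k/t_j + k} \le 1$. The approximation hypothesis then yields $|p(t_j)| \le 1 + e^{-k/t_j + k} \le 2$ for each $j \in \{1,\dots,k\}$.

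Next, I would bound $\|L_j\|_1$. The denominator of $L_j$ is
\[
\prod_{i \neq j} \frac{j-i}{k} \;=\; \frac{(-1)^{k-j}\, j!\,(k-j)!}{k^{k}},
\]
so its absolute value equals $j!(k-j)!/k^{k}$. For the numerator, I would invoke the key algebraic fact that the $\ell_1$-norm on polynomials is submultiplicative: $\|fg\|_1 \le \|f\|_1 \|g\|_1$ (immediate from the triangle inequality applied to the Cauchy product). Since $\|t - i/k\|_1 = 1 + i/k \le 2$ for every $0 \le i \le k$, submultiplicativity yields
\[
\Bigl\|\prod_{i \neq j}(t - i/k)\Bigr\|_1 \;\le\; \prod_{i \neq j}\bigl(1 + i/k\bigr) \;\le\; 2^{k}.
\]
Combining gives $\|L_j\|_1 \le 2^{k}\,k^{k}/(j!(k-j)!)$.

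Assembling, the triangle inequality for $\|\cdot\|_1$ gives
\[
\|p\|_1 \;\le\; \sum_{j=0}^{k} |p(t_j)|\,\|L_j\|_1 \;\le\; 2 \cdot 2^{k}\,k^{k}\sum_{j=0}^{k} \frac{1}{j!(k-j)!} \;=\; \frac{2 \cdot 4^{k}\,k^{k}}{k!},
\]
using $\sum_{j=0}^{k}\binom{k}{j} = 2^{k}$. A crude bound $k! \ge 1$ together with straightforward arithmetic (or $k! \ge (k/e)^{k}$ and a slack comparison) then yields $\|p\|_1 \le (2k)^{k+1}$, as claimed.

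\paragraph{Main obstacle.} The only genuinely non-trivial ingredient is the submultiplicativity of the $\ell_1$ norm of polynomial coefficients, which is what allows the factor-by-factor bound on the numerator $\prod_{i\neq j}(t - i/k)$; everything else is routine bookkeeping. The rest of the care goes into the final numeric comparison to match the stated form $(2k)^{k+1}$, which is comfortably looser than what the interpolation bound produces.
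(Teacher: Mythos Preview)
Your approach is essentially the paper's: Lagrange interpolation at the equally spaced nodes $0,1/k,\dots,1$, bounding $|p(j/k)|\le 2$ from the approximation hypothesis, and controlling the size of the Lagrange basis. The only cosmetic difference is that the paper bounds each coefficient $a_\ell$ of $p$ directly (elementary symmetric sums of $\{i/k\}$ are at most $\binom{k}{k-\ell}$), whereas you bound $\|L_j\|_1$ via submultiplicativity; both routes give the same numerator bound $2^k$.

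There is one genuine slip in your final arithmetic. Your displayed bound $\|p\|_1 \le 2\cdot 4^k k^k/k!$ keeps the $j=0$ term (bounded by $2\|L_0\|_1$ rather than by $0$), and your suggested closure ``$k!\ge 1$'' would then require $2^k\le k$, which is false. Two clean fixes: (i) actually drop the $j=0$ term as you announced, getting $\|p\|_1\le 2\cdot 2^k k^k (2^k-1)/k!$, and then verify $2^k-1\le k\cdot k!$ for all $k\ge 1$ (check $k=1,2$; for $k\ge 2$ use $k!\ge 2^{k-1}$); or (ii) follow the paper and use the cruder uniform bound $\prod_{i\ne j}|j/k-i/k|\ge k^{-k}$ on the denominator, so that summing over the $k$ surviving indices $j=1,\dots,k$ produces a clean factor $k\cdot k^k$ and yields $\|p\|_1\le 2\cdot 2^k\cdot k^{k+1}=(2k)^{k+1}$ with no further work.
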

\begin{proof}
We know that $p(0) = 0.$ Interpolating at the $k+1$ points $t
= 0,\nfrac{1}{k},\nfrac{2}{k},\ldots,1,$ we can use Lagrange's interpolation
formula to give,
\[p(x) \equiv \sum_{i=1}^k \frac{\prod_{0 \le j \le k, j\neq i} (x-\nfrac{j}{k})}{\prod_{0 \le j \le k, j\neq i} (\nfrac{i}{k}-\nfrac{j}{k})}p(\nfrac{i}{k}) .\]
The above identity is easily verified by evaluating the expression at
the interpolation points and noting that it is a degree $k$ polynomial
agreeing with $p$ at $k+1$ points. Thus, if we were to write $p(x) =
\sum_{l=1}^k a_l \cdot x^l$ (note that $a_0 = 0$), we can express the coefficients $a_l$ as
follows.
\[a_l = \sum_{i=1}^k \frac{{\displaystyle \sum_{\stackrel{0 \le j_1 <
        \ldots < j_{k-l} \le k }{j_1,\ldots,j_{k-l} \neq i}}
    (-1)^{k-l} \nfrac{j_1}{k}\cdot \ldots \cdot \nfrac{j_{k-l}}{k}}}
{\prod_{0 \le j \le k, j\neq i}
  (\nfrac{i}{k}-\nfrac{j}{k})}p(\nfrac{i}{k}).\] Applying triangle
inequality, and noting that $p(t)$ is a 1-uniform approximation to
$e^{-\nfrac{k}{t}+k}$ for $t \in (0,1],$ we get,
\[|a_l| \le \sum_{i=1}^k \frac{ \binom{k}{k-l}}
{(\nfrac{1}{k})^k} |p(\nfrac{i}{k})| \le \sum_{i=1}^k \frac{ \binom{k}{k-l}}
{(\nfrac{1}{k})^k} \left(e^{-\frac{k(k-i)}{i}} +
  \nfrac{\delta}{2}\right) \le 2\cdot k^{k+1}\binom{k}{k-l}\ .\]
Thus, we can bound the $\ell_1$ norm of $p$ as follows,
\[\norm{p}_1 = \sum_{l=1}^k |a_l| \le \sum_{l=1}^k 2\cdot
k^{k+1}\binom{k}{k-l} \le (2k)^{k+1}\]
\end{proof}

\begin{lemma}[Lemma \ref{lem:approx-extended-interval} Restated]
\label{lem:approx-extended-interval:restated}
For any $\beta \ge 1$, any degree $k$ polynomial $p$ satisfies,
\[\sup_{t \in (0,\beta]} |p(t) - f_{k}(t)| \le \norm{p}_1 \cdot (\beta^k-1) + (f_{k}(\beta)-f_{k}(1))+ \sup_{t \in (0,1]}
|p(t)-f_{k}(t)| \]
\end{lemma}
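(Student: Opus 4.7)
The plan is to split the supremum over $(0,\beta]$ into the two pieces $(0,1]$ and $[1,\beta]$. On $(0,1]$ the bound is immediate from the last term on the right-hand side, since the other two terms $\|p\|_1(\beta^k-1)$ and $f_k(\beta)-f_k(1)$ are both nonnegative for $\beta \ge 1$. So all the work is in bounding $|p(t)-f_k(t)|$ uniformly for $t \in [1,\beta]$, and a triangle-inequality sandwich through the value at $t=1$ will do it:
\[
|p(t)-f_k(t)| \;\le\; |p(t)-p(1)| \;+\; |p(1)-f_k(1)| \;+\; |f_k(1)-f_k(t)|.
\]

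First I would bound $|p(t)-p(1)|$ on $[1,\beta]$ using the $\ell_1$-norm. Writing $p(x)=\sum_{i=0}^{k} a_i x^i$, for any $t \ge 1$ we have $|t^i-1| \le t^k-1 \le \beta^k-1$ whenever $0 \le i \le k$, so the triangle inequality on the coefficients gives
\[
|p(t)-p(1)| \;\le\; \sum_{i=0}^{k} |a_i|\,|t^i-1| \;\le\; (\beta^k-1)\sum_{i=0}^{k}|a_i| \;=\; \|p\|_1 (\beta^k-1).
\]
Next, the middle term $|p(1)-f_k(1)|$ is at most $\sup_{t\in(0,1]}|p(t)-f_k(t)|$, which is exactly the third summand in the claimed bound. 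Finally, for the third term, note that $f_k(t)=\exp(k(1-1/t))$ is increasing on $[1,\beta]$ (its derivative $f_k(t)\cdot k/t^2$ is positive), so $|f_k(t)-f_k(1)| = f_k(t)-f_k(1) \le f_k(\beta)-f_k(1)$. Summing the three estimates yields the desired inequality uniformly over $t\in[1,\beta]$, and combined with the trivial bound on $(0,1]$ completes the proof.

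There is no real obstacle here; the statement is essentially an identity-plus-triangle-inequality argument, and the only thing to check carefully is the monotonicity of $f_k$ on $[1,\beta]$ and the elementary bound $|t^i-1|\le t^k-1$ for $t\ge 1$ and $i\le k$. Both are one-line verifications.
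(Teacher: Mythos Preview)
Your proposal is correct and follows essentially the same argument as the paper: split the interval into $(0,1]$ and $[1,\beta]$, and on $[1,\beta]$ apply the triangle inequality $|p(t)-f_k(t)|\le |p(t)-p(1)|+|p(1)-f_k(1)|+|f_k(1)-f_k(t)|$, bounding each term exactly as you do. The only cosmetic difference is that the paper records the intermediate step $|p(t)-p(1)|\le \|p\|_1\cdot\max_{0\le i\le k}|t^i-1|$ before passing to $\|p\|_1(\beta^k-1)$.
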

\begin{proof}
  Given a degree $k$ polynomial $p$ that approximates $f_k$ over
  $(0,1],$ we wish to bound the approximation error over $(0,\beta]$
  for $\beta \ge 1$. We will split the error bound over $(0,1]$ and
  $[1,\beta]$. Since we know that $f_k(1)-p(1)$ is small, we will
  bound the error over $[1,\beta]$ by applying triangle inequality and
  bounding the change in $f_k$ and $p$ over $[1,\beta]$ separately.

  Let $\beta > 0$.  First, let us calculate $\sup_{t \in [1,\beta]}
  |p(t) - f_{k}(t)|.$
\begin{align*}
\sup_{t \in [1,\beta]} |p(t) - f_{k}(t)| & \le \sup_{t
  \in [1,\beta]} (|p(t) - p(1)| +
|p(1)-f_{k}(1)| + |f_{k}(1)-f_{k}(t)|) \\
& \le \sup_{t \in [1,\beta]} (\norm{p}_1 \cdot \max_{0
\le i \le k}|t^i - 1^i| + |p(1)-f_{k}(1)| +
|f_{k}(1)-f_{k}(t)|) \\
& \le \sup_{t \in [1,\beta]} (\norm{p}_1 \cdot \max_{0
\le i \le k}|t^i - 1^i|) + |p(1)-f_k(1)| + \sup_{t \in
[1,\beta]} |f_k(1)-f_k(t)| \\
& \le \norm{p}_1 \cdot (\beta^k-1) +
|p(1)-f_{k}(1)| + \sup_{t \in [1,\beta]}
|f_{k}(1)-f_{k}(t)| \\
& \qquad \qquad \qquad \text{(Since $t \ge 1$ and $t^k$ is increasing for $t \ge 0$)} \\
& \le \norm{p}_1 \cdot (\beta^k-1) + 
|p(1)-f_{k}(1)| + (f_{k}(\beta)-f_{k}(1))\\
& \qquad \qquad \qquad \text{(Since $f_{k}(t)$ is an increasing
  function for $t \ge 0$).} 
\end{align*}
Now, we can bound the error over the whole interval as follows.
\begin{align*}
\sup_{t \in (0,\beta]} |p(t) - f_{k}(t)| & =
\max\{\sup_{t \in (0,1]} |p(t) - f_{k}(t)|,\sup_{t
    \in [1,\beta]} |p(t) - f_{k}(t)|\} \\
& \le \max\{\sup_{t \in (0,1]} |p(t) - f_{k}(t)|,
  \\
& \qquad \norm{p}_1 \cdot (\beta^k-1) + 
|p(1)-f_{k}(1)| + (f_{k}(\beta)-f_{k}(1))\} \\
& = \norm{p}_1 \cdot (\beta^k-1) +
(f_{k}(\beta)-f_{k}(1))+ \sup_{t \in (0,1]}
|p(t)-f_{k}(t)|\ .
\end{align*}
\end{proof}

\section{Uniform Approximations to $e^{-x}$}
\label{sec:poly}
In this section, we discuss uniform approximations to $e^{-x}$ and
prove give a proof of Theorem~\ref{thm:exp-poly-approx} that shows the
existence of polynomials that approximate $e^{-x}$ uniformly over the
interval $[a,b],$ whose degree grows as $\sqrt{b-a}$ and also gives a
lower bound stating that this dependence is necessary.  We restate a
more precise version of the theorem here for completeness.
\begin{theorem}[Uniform Approximation to $e^{-x}$]
\label{thm:exp-poly-approx:restated} $\\$
\begin{itemize}
\item{\bf Upper Bound.} For every $0 \le a < b,$ and a given error parameter $0<\delta \le 1$,
there exists a polynomial ${p}_{a,b,\delta}$ that satisfies,
\[\sup_{x \in [a,b]} |e^{-x}-{p}_{a,b,\delta}(x)| \le
\delta\cdot e^{-a},\] and has degree $O\left(\sqrt{\max\{\log^2
\nfrac{1}{\delta},(b-a) \cdot \log \nfrac{1}{\delta} \}}\cdot
\left(\log \nicefrac{1}{\delta}\right) \cdot \log \log
\nicefrac{1}{\delta} \right)$. 
\item{\bf Lower Bound.} For every $0 \leq a < b $ such that $a + \log_e 4 \le b,$ and $\delta
\in (0,\nfrac{1}{8}],$ any polynomial $p(x)$ that approximates
$e^{-x}$ uniformly over the interval $[a,b]$ up to an error of
$\delta\cdot e^{-a},$ must have degree at least
$\frac{1}{2}\cdot\sqrt{b-a}\ .$
\end{itemize}
\end{theorem}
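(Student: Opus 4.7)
For the upper bound, my plan is to reduce to the case $a = 0$ and then combine the Saff-Sch\"onhage-Varga rational approximation from Corollary~\ref{cor:pk-star} with a Chebyshev-based polynomial approximation of $(1+y/k)^{-1}$. The shift reduction is immediate: if $\hat{p}$ is a degree-$d$ polynomial with $\sup_{y \in [0, b-a]} |e^{-y} - \hat{p}(y)| \le \delta$, then $p(x) \defeq e^{-a} \hat{p}(x-a)$ has the same degree and satisfies $\sup_{x \in [a,b]} |e^{-x} - p(x)| \le \delta \cdot e^{-a}$, so I may assume $a = 0$ and set $L \defeq b - a$. I would then fix $k = \Theta(\log(1/\delta))$ large enough that $c_1 k \cdot 2^{-k} \le \delta/4$, so that Corollary~\ref{cor:pk-star} makes $p_k^\star((1+y/k)^{-1})$ a $(\delta/4)$-uniform approximation to $e^{-y}$ on $[0,\infty)$. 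What remains is to construct a polynomial $q$ of small degree $\ell$ with $|q(y) - (1+y/k)^{-1}| \le \epsilon_q$ on $[0, L]$, and output the composition $p_k^\star(q(y))$, of degree $k \ell$.

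To construct $q$, I would recast the problem in the variable $u = 1 + y/k \in [1, 1 + L/k]$ and look for a polynomial $P(u) \defeq u \cdot \tilde{q}(u) - 1$ of degree $\ell+1$ with the constraint $P(0) = -1$ minimizing $\|P\|_\infty$ on this interval; this is a classical constrained Chebyshev problem whose solution is a scaled-and-shifted Chebyshev polynomial, yielding $\|P\|_\infty \le 2\rho^{-(\ell+1)}$ with $\rho = |z^\ast| + \sqrt{(z^\ast)^2 - 1}$ and $z^\ast = -(1 + 2k/L)$. Since $u \ge 1$, this immediately gives $|q(y) - (1+y/k)^{-1}| \le \|P\|_\infty$, and an elementary computation shows $\log \rho = \Theta(\sqrt{k/L})$ when $L \gtrsim k$ and $\Theta(\log(k/L))$ otherwise. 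For the composition error I would expand
\[
p_k^\star(u) - p_k^\star(v) = \sum_{i} a_i (u^i - v^i)
\]
to bound $|p_k^\star(u) - p_k^\star(v)| \le \|p_k^\star\|_1 \cdot k \cdot (1 + \epsilon_q)^{k-1} \cdot |u - v|$, where the key ingredient is Lemma~\ref{lem:L1-norm}'s estimate $\|p_k^\star\|_1 \le (2k)^{k+1}$. That lemma would be proved by rewriting $p_k^\star$ as its Lagrange interpolant at the $k+1$ nodes $0, 1/k, \ldots, k/k$, so that the coefficients can be controlled using only the fact that $p_k^\star$ is uniformly close to $e^{-k/t + k}$ (and hence bounded by roughly $2$) at these nodes. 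Choosing $\epsilon_q$ so that $\log(1/\epsilon_q) = \Theta(k \log k + \log(1/\delta)) = \Theta(\log(1/\delta) \log\log(1/\delta))$ brings the composition error down to $\delta/2$; plugging this into the Chebyshev estimate for $\ell$ yields a total degree $k \ell$ matching the claimed $O(\sqrt{\max\{\log^2(1/\delta), L \log(1/\delta)\}} \cdot \log(1/\delta) \log\log(1/\delta))$ in both regimes.

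For the lower bound, I would combine the Mean Value Theorem with Markov's theorem from approximation theory, which states that a degree-$k$ polynomial whose range on an interval of width $w$ has height at most $h$ satisfies $\|p'\|_\infty \le k^2 h/w$ there. Suppose $p$ has degree $k$ and $|p(x) - e^{-x}| \le \delta \cdot e^{-a}$ on $[a, b]$, with $b \ge a + \log 4$ and $\delta \le 1/8$. The trick is to apply MVT on the \emph{sub-interval} $[a, a + \log 4]$, whose length is a universal constant: combining $e^{-a} - e^{-(a + \log 4)} = \tfrac{3}{4} e^{-a}$ with the approximation bound gives $p(a) - p(a + \log 4) \ge \tfrac{1}{2} e^{-a}$, so that some $\xi \in [a, b]$ has $|p'(\xi)| \ge e^{-a}/(2 \log 4) = \Omega(e^{-a})$, \emph{independent} of $b - a$. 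At the same time, the approximation bound forces the range of $p$ on all of $[a, b]$ to have height at most $(1 + 2\delta) e^{-a} = O(e^{-a})$, so Markov applied on $[a, b]$ yields $\|p'\|_\infty \le O(k^2 e^{-a}/(b - a))$. Comparing the two bounds on $|p'(\xi)|$ gives $k \ge \tfrac{1}{2}\sqrt{b - a}$ with the stated constants. Throughout the argument the main obstacle is Lemma~\ref{lem:L1-norm}: since no closed form for $p_k^\star$ is known, its coefficient sum must be controlled indirectly via interpolation, and the resulting size $(2k)^{k+1}$ is exactly what forces the extra $\log\log(1/\delta)$ factor appearing in the upper bound.
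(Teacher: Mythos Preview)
Your proposal is correct and follows essentially the same approach as the paper: the upper bound composes the Saff--Sch\"onhage--Varga polynomial $p_k^\star$ with a Chebyshev-based approximation of $(1+y/k)^{-1}$, controlling the composition error via the $\ell_1$-norm bound $\|p_k^\star\|_1 \le (2k)^{k+1}$ obtained from Lagrange interpolation; the lower bound combines the Mean Value Theorem on a constant-length subinterval with Markov's inequality on the full interval $[a,b]$. The only cosmetic differences are that the paper bounds $|(1+\nu y)^{-i} - q^\star(y)^i|$ through $|((1+\nu y)q^\star(y))^i - 1|$ rather than your direct $|u^i - v^i|$ estimate, and uses the MVT subinterval $[a, a+\log(2/(1-4\delta))]$ instead of $[a, a+\log 4]$, but both variants yield the stated bounds.
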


\subsubsection*{\it Organization}
We first discuss a few preliminaries (Section~\ref{sec:poly:prelims})
and discuss relevant results that were already known and compare our
result to the existing lower bounds
(Section~\ref{sec:poly:previous-results}). Finally, we give a proof of
the upper bound in Theorem~\ref{thm:exp-poly-approx} in
Section~\ref{sec:exp-poly-approx-proof} and of the lower bound in
Section~\ref{sec:sqrt-lower-bound-proof}. Readers familiar with
standard results in approximation theory can skip directly to the
proofs in Section~\ref{sec:exp-poly-approx-proof}
and~\ref{sec:sqrt-lower-bound-proof}.

\subsection{Preliminaries}
\label{sec:poly:prelims}
Given an interval $[a,b],$ we are looking for low-degree polynomials
(or rational functions) that approximate the function $e^{-x}$ in the
$\sup$ norm over the interval.  
\begin{definition}[$\delta$-Approximation]
A function $g$ is called a $\delta$-approximation to a
function $f$ over an interval $\calI,$ if, $\sup_{x \in \calI}
|f(x)-g(x)| \le \delta$. 
\end{definition}
\noindent Such approximations are known as \emph{uniform
  approximations} in approximation theory and have been studied quite
extensively. We will consider both finite and infinite intervals
$\calI$.

\noindent For any positive integer $k,$ let $\Sigma_k$ denote the set
of all degree $k$ polynomials. We also need to define the $\ell_1$
norm of a polynomial.
\begin{definition}[$\ell_1$ Norm of a Polynomial]
  Given a degree $k$ polynomial $p \defeq \sum_{i=0}^k a_i \cdot x^i,$
  the $\ell_1$ norm of $p,$ denoted as $\norm{p}_1$ is defined as
  $\norm{p}_1 = \sum_{i \ge 0}^k |a_i|.$
\end{definition}

\subsection{Known Approximation Results and Discussion.}
\label{sec:poly:previous-results}
Approximating the exponential is a classic question in Approximation
Theory, see \emph{e.g.}~\cite{cheney-book}. We ask the following
question:

\vspace{1mm}
{\bf Question:} {\it Given $\delta \le 1$ and $a < b,$ what is the smallest
degree of a polynomial that is an $\delta\cdot e^{-a}$-approximation
to $e^{-x}$ over the interval $[a,b]$?}
\vspace{1mm}

This qustion has been studied in the following form: Given $\lambda,$
what is the best low degree polynomial (or rational function)
approximation to $e^{\lambda x}$ over $[-1,1]$? In a sense, these
questions are equivalent, as is shown by the following lemma, proved
using a linear shift of variables. A proof is included in
Section~\ref{appendix:proofs:poly}.
\begin{lemma}[Linear Variable Shift for Approximation]
\label{lem:approx-linear-shift}
For any non-negative integer $k$ and $l$ and real numbers $b > a,$
\[ \min_{p_k \in \Sigma_k, q_l \in \Sigma_l}\sup_{t \in [a,b]}
\left|e^{-t} - \frac{p_k(t)}{q_l(t)} \right| = e^{-\frac{b+a}{2}}
\cdot \min_{p_k \in \Sigma_k, q_l \in \Sigma_l}\sup_{x \in [-1,1]}
\left|e^{\frac{(b-a)}{2} x} - \frac{p_k(x)}{q_l(x)} \right|
\]
\end{lemma}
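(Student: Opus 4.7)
The plan is a straightforward linear change of variables. I will use the affine bijection $\varphi \colon [-1,1] \to [a,b]$ given by $t = \varphi(x) \defeq \tfrac{b+a}{2} - \tfrac{b-a}{2}x$ (so that $\varphi(1) = a$ and $\varphi(-1) = b$). The point of choosing the minus sign is that under this substitution $-t = -\tfrac{b+a}{2} + \tfrac{b-a}{2}x$, hence $e^{-t} = e^{-\frac{b+a}{2}} \cdot e^{\frac{(b-a)}{2}x}$, which lines up exactly with the exponent appearing on the right-hand side of the claim.

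Next, I will observe that pullback along $\varphi$ induces a bijection of $\Sigma_k$ (respectively $\Sigma_l$) onto itself: for any $p_k \in \Sigma_k$ set $\tilde{p}_k(x) \defeq p_k(\varphi(x))$; since $\varphi$ is a nonconstant affine map, $\tilde{p}_k$ is a polynomial of the same degree, and the inverse substitution $x = \varphi^{-1}(t)$ recovers $p_k$ from $\tilde{p}_k$. The same holds for $q_l \in \Sigma_l$. Therefore the correspondence $(p_k,q_l) \leftrightarrow (\tilde{p}_k,\tilde{q}_l)$ is a bijection on $\Sigma_k \times \Sigma_l$, and $\sup_{t \in [a,b]} |e^{-t} - p_k(t)/q_l(t)| = \sup_{x \in [-1,1]} |e^{-\frac{b+a}{2}} e^{\frac{(b-a)}{2}x} - \tilde{p}_k(x)/\tilde{q}_l(x)|$, because $\varphi$ is a homeomorphism between the two intervals.

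Finally, I will factor the constant $e^{-\frac{b+a}{2}}$ out of the absolute value. Writing the expression inside as $e^{-\frac{b+a}{2}}\bigl(e^{\frac{(b-a)}{2}x} - e^{\frac{b+a}{2}} \tilde{p}_k(x)/\tilde{q}_l(x)\bigr)$ and setting $\hat{p}_k(x) \defeq e^{\frac{b+a}{2}} \tilde{p}_k(x)$ (still a polynomial in $\Sigma_k$, since this is just scaling by a nonzero constant), $\hat{q}_l \defeq \tilde{q}_l$, the map $(\tilde{p}_k,\tilde{q}_l) \mapsto (\hat{p}_k,\hat{q}_l)$ is another bijection on $\Sigma_k \times \Sigma_l$. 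Taking the infimum over all $(p_k,q_l)$ on the left corresponds bijectively to taking the infimum over all $(\hat{p}_k,\hat{q}_l)$ on the right, yielding the claimed equality.

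There is no real obstacle; the only points requiring a sentence of care are (i) verifying that the affine substitution really does preserve the degree of each polynomial (so that the minima are taken over genuinely equivalent sets), and (ii) checking that scaling the numerator by the nonzero constant $e^{\frac{b+a}{2}}$ is legitimate, which it is because $\Sigma_k$ is a vector space. Both are immediate, so the argument is essentially a one-line change of variables decorated with bookkeeping on the polynomial classes.
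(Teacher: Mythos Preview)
Your proposal is correct and is essentially identical to the paper's own proof: the paper also uses the substitution $t = \tfrac{b+a}{2} - \tfrac{b-a}{2}x$, rewrites $e^{-t} = e^{-\frac{b+a}{2}} e^{\frac{(b-a)}{2}x}$, and then absorbs the affine reparametrization and constant scaling into new polynomials $p_k', q_l'$ ranging over the same classes $\Sigma_k, \Sigma_l$. You have simply been more explicit than the paper about why the induced maps on $\Sigma_k \times \Sigma_l$ are bijections.
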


Using the above lemma, we can translate the known results to our
setting. As a starting point, we could approximate $e^{-x}$ by
truncating the Taylor series expansion of the exponential. We state
the approximation achieved in the following lemma. A proof is included
in Section~\ref{appendix:proofs:poly}.
\begin{lemma}[Taylor Approximation]
\label{lem:taylor}
The degree $k$ polynomial obtained by truncating Taylor's expansion of
$e^{-t}$ around the point $\nfrac{b+a}{2}$ is a uniform approximation
to $e^{-t}$ on the interval $[a,b]$ up to an error of
\[e^{-\frac{b+a}{2}} \cdot \sum_{i=k+1}^\infty
\frac{1}{i!}\left(\frac{b-a}{2}\right)^i,\] 
which is smaller than $\delta\cdot e^{-\frac{b+a}{2}}$ for $k \ge
\max\{\frac{e^2(b-a)}{2},\log \nfrac{1}{\delta}\}$
\end{lemma}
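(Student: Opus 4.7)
The plan is to proceed in two straightforward stages: first bound the tail of Taylor's expansion by a concrete series, and then show that this series decays fast enough under the stated assumption on $k$.

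For the first stage, write $c \defeq \nfrac{b+a}{2}$ and recall that the Taylor expansion of $e^{-t}$ about $c$ is $e^{-t} = e^{-c} \sum_{i \ge 0} \nfrac{(-1)^i (t-c)^i}{i!}$. The degree $k$ truncation is $p_k(t) \defeq e^{-c} \sum_{i=0}^{k} \nfrac{(-1)^i(t-c)^i}{i!}$. Applying the triangle inequality term by term and using that $|t - c| \le \nfrac{b-a}{2}$ for all $t \in [a,b]$ immediately yields
\[
  |e^{-t} - p_k(t)| \;\le\; e^{-c} \sum_{i=k+1}^{\infty} \frac{|t-c|^i}{i!} \;\le\; e^{-\nfrac{b+a}{2}} \sum_{i=k+1}^{\infty} \frac{1}{i!}\left(\frac{b-a}{2}\right)^{i},
\]
which is the first claim of the lemma.

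For the second stage, set $r \defeq \nfrac{(b-a)}{2}$ and suppose $k \ge \max\{e^2 r, \log \nfrac{1}{\delta}\}$. The key quantitative estimate I would use is Stirling's inequality $i! \ge (i/e)^i$, so that for each $i \ge k+1 \ge e^2 r$ one has $\nfrac{r^i}{i!} \le (er/i)^i \le e^{-i}$. Summing the resulting geometric series gives
\[
  \sum_{i=k+1}^{\infty} \frac{r^i}{i!} \;\le\; \sum_{i=k+1}^{\infty} e^{-i} \;=\; \frac{e^{-(k+1)}}{1-e^{-1}} \;\le\; e^{-k},
\]
and the condition $k \ge \log \nfrac{1}{\delta}$ then forces this bound to be at most $\delta$, as required.

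There is no real obstacle here; the only thing to watch is that the condition $k \ge e^2 r$ is exactly what makes $er/i$ drop below $1/e$ uniformly in the tail, triggering the geometric decay, and that the mild constant slack in the step from $\nfrac{e^{-(k+1)}}{1-e^{-1}}$ to $e^{-k}$ is absorbed by the second condition $k \ge \log \nfrac{1}{\delta}$. If a sharper constant were desired, one could replace $e^2$ by any constant strictly larger than $e$, but the value $e^2$ keeps the final arithmetic clean and aligns with the statement of the lemma.
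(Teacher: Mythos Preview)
Your proposal is correct and essentially identical to the paper's own proof: both bound the Taylor tail termwise using $|t-c|\le \nfrac{(b-a)}{2}$, then apply the Stirling-type estimate $i!\ge (i/e)^i$ so that each tail term is at most $e^{-i}$ once $k\ge \nfrac{e^2(b-a)}{2}$, and finish with the same geometric-series computation $\sum_{i\ge k+1} e^{-i}=\nfrac{e^{-k}}{e-1}\le e^{-k}\le \delta$. The only cosmetic difference is that you carry the sign $(-1)^i$ explicitly in the Taylor expansion whereas the paper suppresses it, but after taking absolute values the arguments coincide.
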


\noindent A lower bound is known in the case where the size of the
interval is fixed, \emph{i.e.}, $b-a = O(1).$
\begin{proposition}[Lower Bound for Polynomials over Fixed Interval,
  \cite{poly-lb,saff-rational}]
For any $a,b \in \rea$ such that $b-a$ is \emph{fixed}, as $k$ goes to
infinity, the best approximation achieved by a degree $k$ polynomial
has error
\[(1+o(1)) \frac{1}{(k+1)!}\left(\frac{b-a}{2}\right)^{k+1}\cdot
e^{-\frac{b+a}{2}}.\]
\end{proposition}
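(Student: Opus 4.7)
The plan is to first apply Lemma~\ref{lem:approx-linear-shift} with $\lambda \defeq (b-a)/2$ to reduce the claim to the classical statement that the best degree-$k$ polynomial approximation of $e^{\lambda x}$ on $[-1,1]$ has sup-norm error $(1+o(1))\,\lambda^{k+1}/(k+1)!$ as $k\to\infty$ with $\lambda$ held fixed; the prefactor $e^{-(b+a)/2}$ is then peeled off cleanly by the change of variables.

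My central tool will be the Chebyshev-series expansion
\[
e^{\lambda x} \;=\; I_0(\lambda) \;+\; 2\sum_{n\ge 1} I_n(\lambda)\,T_n(x),
\]
where $T_n$ is the $n$-th Chebyshev polynomial of the first kind and $I_n$ is the modified Bessel function of the first kind, for which the large-order asymptotic $I_n(\lambda)\sim (\lambda/2)^n/n!$ as $n\to\infty$ with $\lambda$ fixed is standard. For the upper bound I would truncate the series after the degree-$k$ term and use $|T_n(x)|\le 1$ to bound the tail by $\sum_{n>k}2\,|I_n(\lambda)|$, which the Bessel asymptotic forces to be of the claimed order up to a $(1+o(1))$ factor.

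For the matching lower bound I would invoke Chebyshev's equioscillation theorem, or equivalently a duality argument using the Chebyshev inner product with weight $1/\sqrt{1-x^2}$: every polynomial $p$ of degree at most $k$ is orthogonal to $T_{k+1}$ in this inner product, so the $T_{k+1}$-coefficient of $e^{\lambda x}-p(x)$ stays pinned at $2I_{k+1}(\lambda)$ regardless of $p$. Estimating that coefficient as a weighted integral of $e^{\lambda x}-p(x)$ then extracts a lower bound on $\|e^{\lambda x}-p\|_\infty$ in terms of $I_{k+1}(\lambda)$, which combined with the Bessel asymptotic matches the upper bound.

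The main obstacle is pinning down the exact leading constant: both the truncation-based upper bound and the duality-based lower bound initially come with spurious numerical factors (most notably powers of $2$ inherited from the Chebyshev normalization and from bounding $\int |T_{k+1}|/\sqrt{1-x^2}$) that must be absorbed into the $(1+o(1))$ term. Overcoming this cleanly requires working with the equioscillation characterization of the optimal $p$ rather than with the truncated Chebyshev series, and exploiting that for the entire function $e^{\lambda x}$ every Chebyshev coefficient beyond the $(k+1)$-st is negligible compared to it, so that the alternation theorem forces the $k+2$ extreme values of the optimal error to all lie within $(1+o(1))$ of the single dominant coefficient.
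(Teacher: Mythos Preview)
The paper does not give a proof of this proposition; it is quoted from the external references as a known background result, so there is no in-paper argument for you to be compared against. Your reduction via Lemma~\ref{lem:approx-linear-shift} followed by the Chebyshev--Bessel expansion $e^{\lambda x}=I_0(\lambda)+2\sum_{n\ge1}I_n(\lambda)T_n(x)$ together with the duality/equioscillation argument is exactly the classical route to the sharp asymptotic for $E_k(e^{\lambda x};[-1,1])$.

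One correction to your discussion of the constants, though: the powers of $2$ you call ``spurious'' are not artifacts that can be absorbed into the $o(1)$. Carried out carefully, your method yields the Bernstein asymptotic
\[
E_k\bigl(e^{\lambda x};[-1,1]\bigr)\;=\;(1+o(1))\,\frac{\lambda^{k+1}}{2^{k}\,(k+1)!}\,,
\]
since the dominant Chebyshev coefficient is $2I_{k+1}(\lambda)\sim 2(\lambda/2)^{k+1}/(k+1)!=\lambda^{k+1}/(2^{k}(k+1)!)$ and the remaining coefficients are of strictly lower order. This is a factor $2^{k}$ \emph{smaller} than the expression displayed in the proposition; the displayed expression matches the Taylor-truncation error of Lemma~\ref{lem:taylor} rather than the optimal polynomial error. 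So your plan actually proves the correct (sharper) classical statement --- do not try to make the $2^{k}$ disappear, because it is the content of the theorem, and the proposition as printed here is stated somewhat loosely.
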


In essence, this theorem states that if the \emph{size of the interval
  is fixed}, the polynomials obtained by truncating the Taylor series
expansion achieve asymptotically the least error possible and hence,
the best asymptotic degree for achieving a $\delta\cdot
e^{-\frac{b+a}{2}}$-approximation. In addition,
Saff~\cite{saff-rational} also shows that if, instead of polynomials,
we allow rational functions where the degree of the denominator is a
constant, the degree required for achieving a $\delta\cdot
e^{-\frac{b+a}{2}}$-approximation changes at most by a constant.

These results indicate that tight bounds on the answer to our question
should be already known. In fact, at first thought, the optimality of
the Taylor series polynomials seems to be in contradiction with our
results. However, note the two important differences:
\begin{enumerate}[label=\arabic*.]
\item The error in our theorem is $e^{-a}\cdot \delta,$ whereas, the
  Taylor series approximation involves error $e^{-\frac{b+a}{2}}\cdot
  \delta,$ which is \emph{smaller}, and hence requires larger degree.
\item Moreover, the lower bound applies only when the length of the
  interval $(b-a)$ is constant, in which case, our theorem says that
  the required degree is $\poly(\log \nfrac{1}{\delta}),$ which is
  $\Omega(\log \nfrac{1}{\delta}),$ in accordance with the lower
  bound.
\end{enumerate}

If the length of the interval $[a,b]$ grows unbounded (as is the case
for our applications to the Balanced Separator problem in the previous
sections), the main advantage of using polynomials from
Theorem~\ref{thm:exp-poly-approx:restated} is the improvement in the
degree from linear in $(b-a)$ to $\sqrt{b-a}.$

\subsection{Proof of Upper Bound in Theorem~\ref{thm:exp-poly-approx}}
\label{sec:exp-poly-approx-proof}
In this section, we use Theorem~\ref{thm:rational-approximations} by
Saff, Schonhage and Varga \cite{SSV}, rather, more specifically,
Corollary~\ref{cor:pk-star} to give a proof of the upper bound result
in Theorem~\ref{thm:exp-poly-approx}. We restate
Corollary~\ref{cor:pk-star} for completeness.
\begin{corollary}[Corollary~\ref{cor:pk-star} Restated, \cite{SSV}]
\label{cor:pk-star:restated}
There exists constants $c_1 \ge 1$ and $k_0$ such that, for any integer $k
\ge k_0$, there exists a polynomial $p_k^\star(x)$ of degree $k$ such
that $p_k^\star(0)=0,$ and,
\begin{align}
\label{eq:pk-approx}
\sup_{t \in (0,1]} \left\lvert e^{-\nfrac{k}{t}+k} - p^\star_k(t)
\right\rvert = \sup_{x \in [0,\infty)} \left\lvert e^{-x} -
  p_k^\star\left((1+\nfrac{x}{k})^{-1}\right)\right\rvert \le
c_1k\cdot 2^{-k}\ .
\end{align}
\end{corollary}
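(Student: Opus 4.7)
The plan is to derive the corollary as a direct algebraic consequence of Theorem~\ref{thm:rational-approximations} via an explicit change of variables, namely the bijection $t \mapsto x$ given by $t = (1+\nicefrac{x}{k})^{-1}$, which maps $(0,1]$ onto $[0,\infty)$. This change of variables is the whole content of the corollary.

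First, I would take the degree-$(k-1)$ polynomial $P_k$ supplied by Theorem~\ref{thm:rational-approximations} and define
\[
p_k^\star(t) \;\defeq\; t^k \cdot P_k\!\left(\tfrac{k}{t}-k\right).
\]
I then need to check three things:
\begin{itemize}
\item \emph{$p_k^\star$ is a polynomial of degree $k$.} Since $P_k$ has degree $k-1$, the expression $P_k(k/t - k)$ is a linear combination of $t^{-j}$ for $j=0,\ldots,k-1$. Multiplying by $t^k$ converts each $t^{-j}$ into $t^{k-j}$, with $k-j \in \{1,\ldots,k\}$. Hence $p_k^\star$ is a polynomial in $t$ of degree exactly $k$.
\item \emph{$p_k^\star(0)=0$.} From the computation above, every monomial appearing in $p_k^\star(t)$ has strictly positive degree, so no constant term is present.
\item \emph{The approximation bound.} Substituting $t = (1+\nicefrac{x}{k})^{-1}$, one gets $\nicefrac{k}{t} - k = x$ and $t^k = (1+\nicefrac{x}{k})^{-k}$, so
\[
p_k^\star\!\left((1+\tfrac{x}{k})^{-1}\right) \;=\; \frac{P_k(x)}{(1+\nicefrac{x}{k})^k}.
\]
\end{itemize}

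Finally, because the map $x \mapsto (1+\nicefrac{x}{k})^{-1}$ is a continuous bijection from $[0,\infty)$ onto $(0,1]$, the two suprema in the corollary are literally equal, and by Theorem~\ref{thm:rational-approximations} both are bounded by $c_1 k \cdot 2^{-k}$ for all $k \ge k_0$. There is no real obstacle here; the only thing to be careful about is verifying that no constant term sneaks in and that the degree count is right, which both follow from the fact that $P_k$ has degree $k-1$ rather than $k$.
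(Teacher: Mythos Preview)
Your proof is correct and matches the paper's argument essentially line for line: the paper defines $p_k^\star(t) \defeq t^k \cdot P_k(k/t - k)$, notes that since $P_k$ has degree $k-1$ this is a polynomial of degree $k$ with zero constant term, and then uses the substitution $t = (1+x/k)^{-1}$ to identify the two suprema and invoke Theorem~\ref{thm:rational-approximations}. The only minor point is that ``degree exactly $k$'' is a slight overclaim (the $t^k$ coefficient is $P_k(-k)$, which could in principle vanish), but ``degree at most $k$'' is all that is needed or claimed.
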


Our approach is to compose the polynomial $p_k^\star$ given by
Corollary~\ref{cor:pk-star:restated} with polynomials approximating
$(1+\nfrac{x}{k})^{-1}$ , to construct polynomials approximating
$e^{-x}$. We first show the existence of polynomials approximating
$x^{-1},$ and from these polynomials, we will derive approximations to
$(1+\nfrac{x}{k})^{-1}.$

Our goal is to find a polynomial $q$ of degree $k,$ that minimizes
$\sup_{x \in [a,b]} |q(x)-\nfrac{1}{x}|.$ We slightly modify this
optimization to minimizing $\sup_{x \in [a,b]} |x\cdot q(x)-1|.$ Note
that $x\cdot q(x) - 1$ is a polynomial of degree $k+1$ which evaluates
to $-1$ at $x=0,$ and conversely every polynomial that evaluates to
$-1$ at $0$ can be written as $x\cdot q(x)-1$ for some $q$. So, this
is equivalent to minimizing $\sup_{x \in [a,b]} |q_1(x)|$, for a
degree $k+1$ polynomial $q_1$ such that $q_1(0)=-1$. By scaling and
multiplying by $-1$, this is equivalent to finding a polynomial $q_2,$
that maximizes $q_2(0),$ subject to $\sup_{x \in [a,b]} |q_2(x)| \le
1$. If we shift and scale the interval $[a,b]$ to $[-1,1]$, the
optimal solution to this problem is known to be given by the well
known Chebyshev polynomials. We put all these ideas together to prove
the following lemma. A complete proof is included in
Section~\ref{appendix:proofs:poly}.
\begin{lemma}[Approximating $x^{-1}$]
\label{lem:Chebyshev1}
For every $\epsilon > 0$, $b > a > 0$, there exists a polynomial
$q_{a,b,\epsilon}(x)$ of degree $\ceil{\sqrt{\frac{b}{a}}\log
  \frac{2}{\epsilon}}$ such that $\sup_{x \in [a,b]} |x\cdot
q_{a,b,\epsilon}(x) - 1| \le \epsilon.$
\end{lemma}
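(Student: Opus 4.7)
The plan is to reformulate the approximation problem as a classical extremal problem that is exactly solved by Chebyshev polynomials. First I would observe that any polynomial $r$ of degree $k+1$ with $r(0)=1$ can be written as $r(x) = 1 - x\,q(x)$ for some polynomial $q$ of degree $k$, since $r(x)-1$ is divisible by $x$. Under this substitution, $|x\,q(x)-1| = |r(x)|$, so the task reduces to exhibiting an $r$ of degree $k+1$ with $r(0)=1$ satisfying $\sup_{x\in[a,b]}|r(x)| \le \epsilon$; the polynomial $q_{a,b,\epsilon}(x) \defeq (1-r(x))/x$ then has the claimed properties.

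The candidate for $r$ is the standard Chebyshev extremal solution for the problem ``minimize the sup-norm on $[a,b]$ subject to $r(0)=1$''. Let $L\colon[a,b]\to[-1,1]$ be the affine bijection $L(x) = (2x - a - b)/(b-a)$, and let $T_{k+1}$ denote the degree $k+1$ Chebyshev polynomial of the first kind. Define
\[
r(x) \defeq \frac{T_{k+1}(L(x))}{T_{k+1}(L(0))},
\]
which satisfies $r(0)=1$ by construction and obeys $\sup_{x\in[a,b]}|r(x)| \le 1/|T_{k+1}(L(0))|$, since $|T_{k+1}|\le 1$ on $[-1,1]$. The problem is now entirely reduced to a lower bound on $|T_{k+1}(L(0))|$, where $L(0) = -(a+b)/(b-a)$ has absolute value strictly greater than $1$.

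For this lower bound I would invoke the identity $T_n(\cosh\theta) = \cosh(n\theta)$, which together with $\cosh(n\theta)\ge \tfrac{1}{2}e^{n\theta}$ gives $|T_n(y)|\ge \tfrac{1}{2}\bigl(|y|+\sqrt{y^2-1}\bigr)^n$ for all $|y|\ge 1$. The key algebraic step is to simplify $|y|+\sqrt{y^2-1}$ at $y = L(0)$: since $y^2-1 = 4ab/(b-a)^2$, one obtains the clean identity
\[
|y| + \sqrt{y^2-1} \;=\; \frac{a+b + 2\sqrt{ab}}{b-a} \;=\; \frac{\sqrt{b}+\sqrt{a}}{\sqrt{b}-\sqrt{a}}.
\]
Applying the elementary bound $\log\tfrac{1+t}{1-t} \ge 2t$ with $t = \sqrt{a/b}\in(0,1)$ then yields $\log\bigl((\sqrt{b}+\sqrt{a})/(\sqrt{b}-\sqrt{a})\bigr) \ge 2\sqrt{a/b}$, so for $k+1 = \lceil \sqrt{b/a}\,\log(2/\epsilon)\rceil$ we get $|T_{k+1}(L(0))| \ge \tfrac{1}{2}\exp\bigl((k+1)\cdot 2\sqrt{a/b}\bigr) \ge 1/\epsilon$, as required.

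I do not expect any genuine obstacle: the argument is a short, classical calculation once Chebyshev extremality is invoked. The only mildly delicate point is the algebraic collapse of $|y|+\sqrt{y^2-1}$ into the ratio $(\sqrt{b}+\sqrt{a})/(\sqrt{b}-\sqrt{a})$, which is what produces the $\sqrt{b/a}$ scaling rather than the naive $b/a$ one would get from a more cavalier estimate.
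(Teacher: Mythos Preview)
Your proposal is correct and follows essentially the same approach as the paper: both construct $q$ via the rescaled Chebyshev polynomial $T_{k+1}\bigl((b+a-2x)/(b-a)\bigr)$ normalized at $x=0$, then lower-bound $|T_{k+1}((a+b)/(b-a))|$ using the identity $|y|+\sqrt{y^2-1}=(\sqrt{b}+\sqrt{a})/(\sqrt{b}-\sqrt{a})$. Your use of $\log\tfrac{1+t}{1-t}\ge 2t$ actually gives a slightly sharper exponent than the paper's cruder step $\tfrac{1-1/\sqrt{\kappa}}{1+1/\sqrt{\kappa}}\le 1-1/\sqrt{\kappa}$, so your choice $k+1=\lceil\sqrt{b/a}\log(2/\epsilon)\rceil$ is more than sufficient (and yields $q$ of degree one less than the lemma claims, which is of course fine).
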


\noindent As a simple corollary, we can approximate
$(1+\nfrac{x}{k})^{-1},$ or rather generally, $(1+\nu x)^{-1}$ for
some $\nu > 0,$ by polynomials. A proof is included in
Section~\ref{appendix:proofs:poly}.
\begin{corollary}[Approximating $(1+\nu x)^{-1}$]
  For every $\nu >0,\epsilon > 0$ and $b > a \ge 0$, there exists a
  polynomial $q^\star_{\nu,a,b,\epsilon}(x)$ of degree
  $\ceil{\sqrt{\frac{1+\nu b}{1+\nu a}}\log \frac{2}{\epsilon}}$ such
  that
$\sup_{x \in [a,b]} |(1+\nu x)\cdot q^\star_{\nu,a,b,\epsilon}(x) - 1| \le
\epsilon.$
\label{cor:Chebyshev1-shift}
\end{corollary}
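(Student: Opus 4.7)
The plan is to derive this corollary directly from Lemma \ref{lem:Chebyshev1} by a linear change of variables that maps the interval $[1+\nu a, 1+\nu b]$ to the role of $[a,b]$ in the lemma. Specifically, I would set $y = 1 + \nu x$, so that as $x$ ranges over $[a,b]$, the variable $y$ ranges over the interval $[a', b']$ with $a' \defeq 1+\nu a$ and $b' \defeq 1+\nu b$. Because $\nu > 0$ and $a \ge 0$, we have $a' \ge 1 > 0$ and $b' > a'$, so the hypotheses of Lemma \ref{lem:Chebyshev1} are satisfied.

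Then I would apply Lemma \ref{lem:Chebyshev1} with parameters $(a', b', \epsilon)$ to obtain a polynomial $q_{a',b',\epsilon}(y)$ of degree $\lceil \sqrt{b'/a'}\cdot \log(2/\epsilon)\rceil = \lceil \sqrt{(1+\nu b)/(1+\nu a)}\cdot \log(2/\epsilon)\rceil$ satisfying $\sup_{y \in [a',b']} |y\cdot q_{a',b',\epsilon}(y) - 1| \le \epsilon$. I would then define the desired polynomial by composition with the linear map, namely
\[
q^\star_{\nu,a,b,\epsilon}(x) \defeq q_{a',b',\epsilon}(1+\nu x).
\]
Since $1+\nu x$ is a degree-one polynomial in $x$, the composition $q^\star_{\nu,a,b,\epsilon}$ is a polynomial in $x$ of the same degree as $q_{a',b',\epsilon}$, matching the degree bound claimed in the corollary.

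Finally, I would verify the approximation guarantee by the substitution $y = 1+\nu x$:
\[
\sup_{x \in [a,b]} \bigl|(1+\nu x)\cdot q^\star_{\nu,a,b,\epsilon}(x) - 1\bigr| = \sup_{y \in [a',b']} \bigl|y \cdot q_{a',b',\epsilon}(y) - 1\bigr| \le \epsilon,
\]
which is exactly the stated bound. There is no real obstacle here: the content of the corollary is entirely contained in Lemma \ref{lem:Chebyshev1}, and the substitution $y = 1+\nu x$ is an affine change of variable that preserves polynomial degree and the supremum norm on the corresponding interval. The only small check is that the endpoint $a'=1+\nu a$ is strictly positive even in the boundary case $a=0$, which is guaranteed by the assumption $\nu > 0$.
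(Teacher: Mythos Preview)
Your proposal is correct and is essentially identical to the paper's own proof: the paper also defines $q^\star_{\nu,a,b,\epsilon}(x) \defeq q_{1+\nu a,\,1+\nu b,\,\epsilon}(1+\nu x)$ and verifies the bound by the same substitution $t = 1+\nu x$, noting that the linear change of variable preserves the degree.
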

The above corollary implies that the expression $(1+\nu x)\cdot
q^\star$ is within $1\pm \varepsilon$ on $[a,b]$. If $\varepsilon$ is
small, for a small positive integer $t$, $[(1+\nu x)\cdot q^\star]^t$
should be at most $1\pm O(t\varepsilon).$ The following lemma, proved
using the binomial theorem proves this formally. A proof is included
in Section~\ref{appendix:proofs:poly}.
\begin{lemma}[Approximating $(1+\nu x)^{-t}$]
\label{lem:Chebyshev-power}
For all real $\epsilon > 0$, $b > a \ge 0$ and positive integer $t$; if $t\epsilon \le 1$, then,
\[\sup_{x \in [a,b]} |((1+\nu x) \cdot q^\star_{\nu,a,b,\epsilon}(x))^t - 1| \le
2t\epsilon,\]
where $q^\star_{\nu,a,b,\epsilon}$ is the polynomial given by
Corollary~\ref{cor:Chebyshev1-shift}.
\end{lemma}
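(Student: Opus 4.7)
The plan is to reduce the lemma to a one-variable scalar inequality. By Corollary~\ref{cor:Chebyshev1-shift}, the polynomial $q^\star_{\nu,a,b,\epsilon}$ satisfies $|(1+\nu x)q^\star_{\nu,a,b,\epsilon}(x) - 1| \le \epsilon$ for every $x \in [a,b]$. Setting $y(x) \defeq (1+\nu x) q^\star_{\nu,a,b,\epsilon}(x)$, this says $|y(x)-1| \le \epsilon$ on $[a,b]$. Thus it suffices to prove the scalar statement: if $y \in \mathbb{R}$ satisfies $|y-1| \le \epsilon$ and $t\epsilon \le 1$, then $|y^t - 1| \le 2t\epsilon$.

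First I would write $y = 1 + z$ with $|z| \le \epsilon$ and expand with the binomial theorem:
\[
y^t - 1 = \sum_{i=1}^t \binom{t}{i} z^i.
\]
Taking absolute values and using $|z| \le \epsilon$ together with the standard estimate $\binom{t}{i} \le t^i/i!$, I get
\[
|y^t - 1| \;\le\; \sum_{i=1}^t \binom{t}{i} \epsilon^i \;\le\; \sum_{i=1}^t \frac{(t\epsilon)^i}{i!} \;\le\; e^{t\epsilon} - 1.
\]
Finally, using the elementary inequality $e^s \le 1 + 2s$ valid for $s \in [0,1]$, and the assumption $t\epsilon \le 1$, we conclude $e^{t\epsilon} - 1 \le 2t\epsilon$, as desired.

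There is essentially no obstacle here; the content of the lemma is entirely captured by Corollary~\ref{cor:Chebyshev1-shift}, and the rest is a routine application of the binomial theorem together with the bound $e^s \le 1 + 2s$ for $s \in [0,1]$. The only small care needed is to apply the inequality $e^s - 1 \le 2s$ only in the regime $s \le 1$, which is exactly guaranteed by the hypothesis $t\epsilon \le 1$; outside this regime the bound $2t\epsilon$ would not be tight (or even correct), which explains the need for this technical assumption in the statement.
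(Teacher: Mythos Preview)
Your proof is correct and follows essentially the same route as the paper: reduce to the scalar bound via Corollary~\ref{cor:Chebyshev1-shift}, expand $(1+z)^t-1$ with the binomial theorem, and then bound the resulting sum by $e^{t\epsilon}-1\le 2t\epsilon$ using $t\epsilon\le 1$. The only cosmetic difference is that the paper sums $\sum_i \binom{t}{i}\epsilon^i$ directly as $(1+\epsilon)^t-1\le e^{t\epsilon}-1$ and then applies $e^x\le 1+x+x^2$ on $[0,1]$, whereas you pass through $\binom{t}{i}\le t^i/i!$ and use $e^s\le 1+2s$ on $[0,1]$; both yield the same conclusion.
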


\noindent Since $q^\star$ is an approximation to
$(1+\nfrac{x}{k})^{-1},$ in order to bound the error for the
composition $p_k^\star(q^\star),$ we need to bound how the value of
the polynomial $p^\star_k$ changes on small perturbations in the
input. We will use the following crude bound in terms of the $\ell_1$
norm of the polynomial.
\begin{lemma}[Error in Polynomial]
\label{lem:poly-error-L1}
For any polynomial $p$ of degree $k,$ and any $x,y \in \rea,$
$|p(x)-p(y)| \le \norm{p}_1 \cdot \max_{0 \le i \le k} |x^i - y^i|.$
\end{lemma}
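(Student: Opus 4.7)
The proof is a direct application of the triangle inequality to the monomial expansion of $p$. Writing $p(x) = \sum_{i=0}^k a_i x^i$, the plan is to expand the difference $p(x) - p(y)$ termwise, bound each term by $|a_i|$ times the monomial discrepancy $|x^i - y^i|$, and then factor out the worst such discrepancy.

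Concretely, I would proceed as follows. First, observe that
\[
p(x) - p(y) = \sum_{i=0}^k a_i \bigl(x^i - y^i\bigr),
\]
where the $i=0$ term contributes nothing since $x^0 - y^0 = 0$. Then applying the triangle inequality and pulling out the maximum of $|x^i - y^i|$ over $i$, we get
\[
|p(x) - p(y)| \;\le\; \sum_{i=0}^k |a_i|\cdot |x^i - y^i| \;\le\; \Bigl(\max_{0 \le i \le k} |x^i - y^i|\Bigr) \sum_{i=0}^k |a_i| \;=\; \|p\|_1 \cdot \max_{0 \le i \le k} |x^i - y^i|,
\]
which is exactly the claimed bound. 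There is no real obstacle here: the statement is essentially just the definition of $\|p\|_1$ combined with the triangle inequality, and the crude maximum over $i$ is what allows the sum of coefficients to appear as a clean multiplicative factor. The bound is tight only up to the cancellation behaviour captured by the different signs of $x^i - y^i$, which this proof deliberately ignores in exchange for a simple universal estimate that suffices for the downstream application (bounding how much a polynomial can change when its input is perturbed, as needed in Lemma~\ref{lem:approx-extended-interval:restated}).
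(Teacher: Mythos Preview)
Your proof is correct and essentially identical to the paper's own argument: write $p(x)-p(y)=\sum_i a_i(x^i-y^i)$, apply the triangle inequality, and pull out the maximum of $|x^i-y^i|$ so that $\sum_i |a_i|=\norm{p}_1$ factors out. There is nothing to add.
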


\noindent In order to utilize the above lemma, we will need a bound on the $\ell_1$
norm of $p_k^\star,$ which is provided by Lemma~\ref{lem:L1-norm},
that bounds the $\ell_1$ norm of any polynomial in
$(1+\nfrac{x}{k})^{-1}$ that approximates the exponential function and
has no constant term. We restate the lemma here for completeness.
\begin{lemma}[$\ell_1$-norm Bound. Lemma~\ref{lem:L1-norm} Restated]
\label{lem:L1-norm:restated}
Given a polynomial $p$ of degree $k$ such that $p(0) = 0$ and 
\[\sup_{t \in (0,1]} \left\lvert e^{-\nfrac{k}{t}+k} - p(t) \right\rvert = \sup_{x \in [0,\infty)} \left\lvert e^{-x} -
  p\left((1+\nfrac{x}{k})^{-1}\right)\right\rvert \le 1\ ,\] we must
have $\norm{p}_1 \le (2k)^{k+1}.$
\end{lemma}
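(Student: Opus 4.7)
The plan is to invoke Lagrange interpolation at $k+1$ equally spaced nodes in $[0,1]$ in order to express $p$ explicitly in terms of its values at those nodes, then bound those values using the given approximation hypothesis, and finally extract a crude bound on the monomial coefficients of $p$.

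Concretely, I would choose the nodes $t_j = j/k$ for $j = 0, 1, \ldots, k$. Since $p$ has degree at most $k$, it is determined by its values on these nodes, and Lagrange's formula gives
\[
p(x) \;=\; \sum_{j=0}^{k} p(t_j)\,\prod_{\substack{0 \le i \le k \\ i \neq j}} \frac{x - i/k}{j/k - i/k}.
\]
The $j=0$ term drops out because $p(0) = 0$. For $j \ge 1$, the hypothesis, evaluated at $t = j/k$, gives $|p(j/k) - e^{-k(k-j)/j}| \le 1$, and since $e^{-k(k-j)/j} \le 1$ for $1 \le j \le k$, this yields $|p(j/k)| \le 2$.

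Next I would read off the monomial coefficient $a_l$ of $x^l$ from the Lagrange expansion. The denominator $\prod_{i \ne j}(j/k - i/k)$ pulls out a factor of $(1/k)^k$ times an integer, so in absolute value it is at least $(1/k)^k$. The numerator coefficient of $x^l$ in $\prod_{i \ne j}(x - i/k)$ is an elementary symmetric polynomial of degree $k-l$ evaluated at points in $[0,1]$, so its absolute value is at most $\binom{k}{k-l}$. Combining, $|a_l|$ is bounded by
\[
|a_l| \;\le\; \sum_{j=1}^{k} \frac{\binom{k}{k-l}}{(1/k)^k}\,|p(j/k)| \;\le\; 2k \cdot k^k \binom{k}{k-l} \;=\; 2\,k^{k+1}\binom{k}{k-l}.
\]
Summing $\|p\|_1 = \sum_{l=0}^{k}|a_l|$ and using $\sum_l \binom{k}{k-l} = 2^k$, I obtain $\|p\|_1 \le 2^{k+1}k^{k+1} = (2k)^{k+1}$, as required.

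There is no significant obstacle here; the main subtlety is simply the choice of nodes. Equally spaced nodes in $[0,1]$ are convenient because (i) the interpolation denominators carry a clean $(1/k)^k$ factor, and (ii) at these nodes $e^{-k/t+k}$ is bounded by $1$, so the approximation hypothesis immediately forces $|p(t_j)| \le 2$. Any other choice of nodes (e.g.\ Chebyshev nodes) would complicate one of these two estimates while gaining us nothing, since the bound $(2k)^{k+1}$ we are after is quite loose.
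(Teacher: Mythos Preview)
Your proposal is correct and follows essentially the same approach as the paper: Lagrange interpolation at the equally spaced nodes $t_j = j/k$, dropping the $j=0$ term via $p(0)=0$, bounding $|p(j/k)|\le 2$ from the approximation hypothesis, and then reading off $|a_l|\le 2k^{k+1}\binom{k}{k-l}$ to sum to $(2k)^{k+1}$. The paper's proof is line-for-line the same argument.
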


We can now analyze the error in approximating $e^{-x}$ by the
polynomial $p_k^\star(q^\star)$ and give a proof for
Theorem~\ref{thm:exp-poly-approx}.
\begin{proof}
Given $\delta \le 1$, let $k=\max\{k_0,\log_2 \nicefrac{4c_1}{\delta}
+ 2 \log_2 \log_2 \nicefrac{4c_1}{\delta}\} = O\left(\log
\nicefrac{1}{\delta}\right),$ where $k_0,c_1$ are the constants given
by Corollary~\ref{cor:pk-star:restated}. Moreover, $p_k^\star$
is the degree $k$ polynomial given by Corollary~\ref{cor:pk-star:restated},
which gives, $p_k^\star(0)=0,$ and,
\begin{align}
\sup_{x \in [0,\infty)} \left|e^{-x} -
  p^\star_k\left((1+\nicefrac{x}{k})^{-1}\right) \right| & \le
\frac{\delta}{4}\cdot \frac{\log_2 \nicefrac{4c_1}{\delta} + 2 \log_2 \log_2
\nicefrac{4c_1}{\delta}}{(\log_2 \nicefrac{4c_1}{\delta})^2} \nonumber\\ & \le
\frac{\delta}{4} \cdot \frac{1}{\log_2 \nicefrac{4c_1}{\delta}} \left(1+ 2\cdot \frac{\log_2 \log_2
\nicefrac{4c_1}{\delta}}{\log_2 \nicefrac{4c_1}{\delta}} \right) \le
\frac{\delta}{4} \cdot \frac{1}{2} \cdot 3 \le \frac{\delta}{2}\ ,
\label{eq:rational-error1}
\end{align}
where the last inequality uses $\delta \le 1 \le c_1$ and $\log_2 x \le x,
\forall x \ge 0$. Thus, we can use Lemma~\ref{lem:L1-norm:restated} to conclude
that $\norm{p_k^\star}_1 \le (2k)^{k+1}.$

Let $\nu \defeq \nicefrac{1}{k}$. Define $\varepsilon$ as
$\varepsilon \defeq \frac{\delta}{2(2k)^{k+2}}.$
Let $p_{a,b,\delta}(x) \defeq e^{-a}\cdot p^\star_{k} \left(
q^\star_{\nu,0,b-a,\epsilon}(x-a)\right),$ where
$q^\star_{\nu,0,b-a,\epsilon}$ is the polynomial of degree
$\ceil{\sqrt{1+\nu(b-a)}\log \frac{2}{\epsilon}}$ given by
Corollary~\ref{cor:Chebyshev1-shift}. Observe that $p_{a,b,\delta}(x)$ is a
polynomial of degree that is the product of the degrees of $p_k^\star$
and $q^\star_{\nu,0,b-a,\epsilon}$, \emph{i.e.},
$k\ceil{\sqrt{1+\nu(b-a)}\log \frac{2}{\epsilon}}$. Also note that
$k\epsilon < 1$ and hence we can use
Lemma~\ref{lem:Chebyshev-power}. We show that $p_{a,b,\delta}$
is a uniform $\delta$-approximation to $e^{-x}$ on the interval $[a,b].$
\begin{align*}
& \sup_{x \in [a,b]} |e^{-x}-p_{a,b,\delta}(x)| =
e^{-a} \cdot \sup_{x \in [a,b]}
\left|e^{-(x-a)}-e^{a}\cdot p_{a,b,\delta}(x)\right|
= e^{-a} \cdot \sup_{y \in [0,b-a]}
\left|e^{-y}-e^{a}\cdot p_{a,b,\delta}(y+a)\right| \\
&  \qquad  \stackrel{\text{by def}}{=} e^{-a} \cdot \sup_{y \in [0,b-a]}
\left|e^{-y}- p^\star_{k}
\left(
  q^\star_{\nu,0,b-a,\epsilon}(y)\right)\right| \\
& \qquad \stackrel{\Delta-\text{ineq.}}{\leq} e^{-a} \cdot \sup_{y \in
  [0,b-a]} \left(\left|e^{-y}-p^\star_{k}
\left((1+\nu y)^{-1}\right)\right| + \left|p^\star_{k}
\left((1+\nu y)^{-1}\right) - p^\star_{k}
\left(q^\star_{\nu,0,b-a,\epsilon}(y)\right)\right| \right)\\
& \qquad \le e^{-a} \cdot \sup_{y \in
  [0,b-a]} \left|e^{-y}-p^\star_{k}
\left((1+\nu y)^{-1}\right)\right| 
+  e^{-a} \cdot \sup_{y \in
  [0,b-a]} \left|p^\star_{k}
\left((1+\nu y)^{-1}\right) - p^\star_{k}
\left(q^\star_{\nu,0,b-a,\epsilon}(y)\right)\right| \\
& \qquad \stackrel{Lem.~\ref{lem:poly-error-L1}}{\le} e^{-a} \cdot \sup_{y \in
  [0,\infty)} \left|e^{-y}-p^\star_{k}
\left((1+\nu y)^{-1}\right)\right| \\
& \qquad \qquad  +  e^{-a} \cdot
\norm{p^\star_{k}}_1\cdot \max_{0\le i \le k}\sup_{y \in
  [0,b-a]} \left|
(1+\nu y)^{-i} - 
\left(q^\star_{\nu,0,b-a,\epsilon}(y)\right)^i\right| \\
& \qquad \stackrel{Eq.~\eqref{eq:rational-error1}}{\le} e^{-a} \cdot \frac{\delta}{2}  +  e^{-a} \cdot
\norm{p^\star_{k}}_1\cdot \max_{0\le i \le k} \sup_{y \in
  [0,b-a]} (1+\nu y)^{-i} \left|
1 - \left((1+\nu y) \cdot
  q^\star_{\nu,0,b-a,\epsilon}(y)\right)^i \right| \\
& \qquad \stackrel{Lem.~\ref{lem:Chebyshev-power},\ref{lem:L1-norm:restated}}{\le} e^{-a} \cdot \frac{\delta}{2}+
2k\epsilon\cdot  e^{-a} \cdot (2k)^{k+1} \le \delta\cdot e^{-a}
\end{align*}

The degree of the polynomial $p_{a,b,\delta}$ is 
\begin{align*}
k\ceil{\sqrt{1+\nu(b-a)}\log \frac{2}{\epsilon}} & =
O\left(\sqrt{k^2 + k
      (b-a)}\cdot \left(k\log k + \log \nfrac{1}{\delta}\right)\right)\\
 & = O\left(\sqrt{\max\{\log^2 \nfrac{1}{\delta},\log \nfrac{1}{\delta}
      (b-a)\}}\cdot \left(\log
    \nicefrac{1}{\delta}\right) \cdot \log \log \nicefrac{1}{\delta}
\right)
\end{align*}
\end{proof}

\subsection{Proof of Lower Bound in Theorem~\ref{thm:exp-poly-approx}}
\label{sec:sqrt-lower-bound-proof}
In this section, we will use the following well known theorem of
Markov from approximation theory to give a proof of the lower bound
result in Theorem~\ref{thm:exp-poly-approx}.
\begin{theorem}[Markov, See \cite{cheney-book}]
\label{thm:markov}
Let $p \from \rea \to \rea$ be a univariate polynomial of degree $d$
such that any real number $a_1 \le x \le a_2,$ satisfies $b_1 \le p(x)
\le b_2.$ Then, for all $a_1 \le x \le a_2,$ the derivative of $p$
satisfies $|p^\prime(x)| \le d^2\cdot\frac{b_2 - b_1}{a_2 - a_1}.$
\end{theorem}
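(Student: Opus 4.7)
The plan is to reduce the statement to the classical Markov inequality on the canonical interval $[-1,1]$ with a sup-norm bound of $1$, and then invoke that classical result. Specifically, introduce the affine change of variables $y = \frac{2x - (a_1+a_2)}{a_2 - a_1}$, which maps $[a_1,a_2]$ bijectively onto $[-1,1]$, together with the affine rescaling of values $\tilde p(y) \defeq \frac{2 p(x) - (b_1 + b_2)}{b_2 - b_1}$. Then $\tilde p$ is still a polynomial of degree $d$ and, by hypothesis, $|\tilde p(y)| \le 1$ for all $y \in [-1,1]$. By the chain rule,
\[
p'(x) \;=\; \frac{b_2 - b_1}{a_2 - a_1} \cdot \tilde p'(y),
\]
so the bound $|p'(x)| \le d^2 \cdot \frac{b_2 - b_1}{a_2 - a_1}$ follows once we prove the classical statement: any polynomial $q$ of degree at most $d$ with $\sup_{y \in [-1,1]} |q(y)| \le 1$ satisfies $\sup_{y \in [-1,1]} |q'(y)| \le d^2$.

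For the classical statement, the plan is to argue that the Chebyshev polynomial $T_d(y) = \cos(d \arccos y)$ is the unique extremizer and attains $|T_d'(\pm 1)| = d^2$ while being bounded by $1$ in magnitude on $[-1,1]$. The sharpness can be checked directly from the formula $T_d'(y) = d \sin(d \arccos y)/\sqrt{1-y^2}$ and the limit as $y \to \pm 1$. For the upper bound on a general $q$, I would use the representation of $q$ via Lagrange interpolation at the $d+1$ Chebyshev extrema $\xi_j = \cos(j\pi/d)$, $j = 0, \dots, d$: since $q$ has degree at most $d$, it is determined by its values at these nodes, and one can write $q'(y)$ as an explicit linear combination of $q(\xi_0), \dots, q(\xi_d)$. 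Bounding $|q(\xi_j)| \le 1$ and collecting the resulting weights, the extremal configuration of signs recovers precisely $T_d$, yielding $|q'(y)| \le |T_d'(y)|$ at the endpoints; combined with Bernstein's inequality $|q'(y)| \le d/\sqrt{1-y^2}$ for $|y| < 1$ (which is easier and gives a weaker constant in the interior), one concludes $|q'(y)| \le d^2$ uniformly.

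The hard part is the sharp constant $d^2$: a naive termwise estimate of the Lagrange weights produces a bound larger than $d^2$, so one genuinely needs an extremal/convexity argument identifying $T_d$ as the optimum, or a direct proof using the three-term recurrence and orthogonality of Chebyshev polynomials. Given that Markov's inequality is a classical result from approximation theory and is cited to \cite{cheney-book}, my proposal would stop at the reduction above and invoke the canonical Markov inequality as a black box, rather than reprove the extremality of $T_d$ from scratch. This reduction also makes the sharpness transparent: equality is attained, up to the affine transformations, by the transplanted Chebyshev polynomial of degree $d$ on $[a_1,a_2]$ rescaled to take values in $[b_1,b_2]$.
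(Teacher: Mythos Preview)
The paper does not prove this theorem at all: it is stated as a classical result with a citation to \cite{cheney-book} and used as a black box in the lower-bound argument of Section~\ref{sec:sqrt-lower-bound-proof}. Your affine reduction to the canonical Markov inequality on $[-1,1]$ is correct and standard, and your decision to invoke the classical inequality as a black box is exactly what the paper does---indeed the paper goes further and cites the entire statement, affine rescaling included, without proof.
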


The idea is to first use uniform approximation bound to bound the
value of the polynomial within the interval of approximation. Next, we
use the approximation bound and the \emph{Mean Value theorem} to show that there
must exist a point $t$ in the interval where $|p^\prime(t)|$ is
large. We plug both these bounds into Markov's theorem to deduce our
lower bound.
\begin{proof}
  Suppose $p$ is a degree $k$ polynomial that is a uniform
  approximation to $e^{-x}$ over the interval $[a,b]$ up to an error
  of $\delta\cdot e^{-a}$. For any $x \in [a,b],$ this bounds the
  values $p$ can take at $x.$ Since $p$ is a uniform approximation to
  $e^{-x}$ over $[a,b]$ up to an error of $\delta\cdot e^{-a},$ we
  know that for all $x \in [a,b],$ $e^{-x} - \delta\cdot e^{-a} \le
  p(x) \le e^{-x} + \delta\cdot e^{-a}.$ Thus, $\max_{x \in [a,b]}
  p(x) \le e^{-a} +\delta\cdot e^{-a}$ and $\min_{x \in [a,b]} p(x)
  \ge e^{-b} - \delta\cdot e^{-a}.$

  Assume that $\delta \le \nfrac{1}{8},$ and $b \ge a + \log_e 4 \ge a
  + \log_e \nfrac{2}{(1-4\delta).}$ Applying the \emph{Mean Value
    theorem} on the interval $[a,a+\log_e \nfrac{2}{(1-4\delta)}],$ we
  know that there exists $t \in [a,a+\log_e \nfrac{2}{(1-4\delta)} ],$
  such that,
\begin{align*}
  |p^\prime(t)| = \left\lvert \frac{p(a+\log_e \nfrac{2}{(1-4\delta)}) -
      p(a)}{\log_e \nfrac{2}{(1-4\delta)}} \right \rvert &\ge
  \frac{(e^{-a} -\delta\cdot e^{-a}) - (e^{-a -\log_e
      \nfrac{2}{1-4\delta}} + \delta\cdot e^{-a}) }{\log_e
    \nfrac{2}{(1-4\delta)}} \\
  & \ge e^{-a} \frac{1-2\delta - \frac{(1-4\delta)}{2}}{\log_e
      \nfrac{2}{(1-4\delta)}} =  e^{-a} \frac{1}{2\log_e
      \nfrac{2}{(1-4\delta)}}
\end{align*}
We plug this in Markov's theorem (Theorem~\ref{thm:markov}) stated above to deduce,
\[e^{-a} \frac{1}{2\log_e
      \nfrac{2}{(1-4\delta)}} \le k^2 \frac{(e^{-a}+\delta\cdot
      e^{-a}) - (e^{-b}-\delta\cdot e^{-a})}{b-a} \le k^2\cdot e^{-a}
    \cdot \frac{1+2\delta}{b-a}\ .\]
Rearranging, we get,
\[k \ge \sqrt{\frac{b-a}{2\cdot(1+2\delta)\cdot\log_e
    \nfrac{2}{(1-4\delta)}} } \ge \sqrt{\frac{b-a}{2\cdot \nfrac{5}{4}\cdot\log_e
    4} }  \ge \frac{1}{2}\cdot\sqrt{b-a},\]
where the second inequality uses $\delta \le \nfrac{1}{8}.$
\end{proof}

\subsection{Remaining Proofs}
\label{appendix:proofs:poly}
\begin{lemma}[Lemma \ref{lem:approx-linear-shift} Restated]
\label{lem:approx-linear-shift:restated}
For any non-negative integer $k$ and $l$ and real numbers $b > a,$
\[ \min_{p_k \in \Sigma_k, q_l \in \Sigma_l}\sup_{t \in [a,b]}
\left|e^{-t} - \frac{p_k(t)}{q_l(t)} \right| = e^{-\frac{b+a}{2}}
\cdot \min_{p_k \in \Sigma_k, q_l \in \Sigma_l}\sup_{x \in [-1,1]}
\left|e^{\frac{(b-a)}{2} x} - \frac{p_k(x)}{q_l(x)} \right|
\]
\end{lemma}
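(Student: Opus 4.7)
The plan is to establish the equality via the linear change of variables $t = \tfrac{a+b}{2} - \tfrac{b-a}{2} x$, which is an affine bijection between $x \in [-1,1]$ and $t \in [a,b]$. Under this map, $-t = -\tfrac{a+b}{2} + \tfrac{b-a}{2}x$, so
\[
e^{-t} \;=\; e^{-\frac{a+b}{2}} \cdot e^{\frac{b-a}{2} x},
\]
which is precisely the form needed to get the exponential factor on the right-hand side of the claim.

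Next I would argue that this substitution induces a degree-preserving bijection on polynomials. Given any $p_k \in \Sigma_k$, define $\tilde p_k(x) \defeq p_k\!\left(\tfrac{a+b}{2} - \tfrac{b-a}{2}x\right)$; since this is the composition of $p_k$ with a degree-one polynomial, $\tilde p_k \in \Sigma_k$, and the map $p_k \mapsto \tilde p_k$ is invertible (its inverse is the inverse linear substitution). The same applies to $q_l \mapsto \tilde q_l \in \Sigma_l$. Therefore, minimizing over $(p_k, q_l) \in \Sigma_k \times \Sigma_l$ on the left is the same as minimizing over $(\tilde p_k, \tilde q_l) \in \Sigma_k \times \Sigma_l$ on the right.

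With the substitution in place, I would compute
\[
\sup_{t \in [a,b]} \left| e^{-t} - \frac{p_k(t)}{q_l(t)}\right| \;=\; \sup_{x \in [-1,1]} \left| e^{-\frac{a+b}{2}} e^{\frac{b-a}{2}x} - \frac{\tilde p_k(x)}{\tilde q_l(x)}\right| \;=\; e^{-\frac{a+b}{2}} \sup_{x \in [-1,1]} \left| e^{\frac{b-a}{2}x} - \frac{e^{\frac{a+b}{2}}\tilde p_k(x)}{\tilde q_l(x)}\right|.
\]
The final step is to absorb the scalar $e^{\frac{a+b}{2}}$ into the numerator: the polynomial $e^{\frac{a+b}{2}} \tilde p_k(x)$ is again an element of $\Sigma_k$, and scaling by a positive constant is a bijection of $\Sigma_k$ to itself. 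Taking the infimum over both sides and using the bijections above yields the claimed identity.

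There is essentially no obstacle here; the lemma is a straightforward affine change of variables. The only point requiring minor care is bookkeeping: verifying that each transformation (the affine substitution in $t$, and the rescaling by $e^{(a+b)/2}$) induces a bijection on the relevant polynomial space $\Sigma_k$ (and $\Sigma_l$), so that the infima over the original and transformed sets coincide. Once this is noted, the identity follows directly.
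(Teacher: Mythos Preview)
Your proof is correct and follows essentially the same approach as the paper: the paper also uses the substitution $t = \tfrac{b+a}{2} - \tfrac{b-a}{2}x$ and then absorbs the constant factor into the numerator polynomial, with the bijection on $\Sigma_k \times \Sigma_l$ left implicit. Your version is slightly more explicit about why the affine substitution and scalar multiplication are degree-preserving bijections, but the argument is the same.
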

\begin{proof}
Using the substitution $t \defeq \frac{(b+a)}{2}-\frac{(b-a)}{2}x,$
\begin{align*}
  \min_{p_k \in \Sigma_k, q_l \in \Sigma_l}\sup_{t \in [a,b]} \left|e^{-t} - \frac{p_k(t)}{q_l(t)} \right| & =
  \min_{p_k \in \Sigma_k,q_l \in \Sigma_l}\sup_{x \in [1,-1]}
  \left|e^{-\frac{(b+a)}{2}+\frac{(b-a)}{2}x} -
    \frac{p_k\left(\nfrac{(b+a)}{2}-\nfrac{(b-a)}{2}x\right)}{q_l\left(\nfrac{(b+a)}{2}-\nfrac{(b-a)}{2}x\right)} \right| \\
  & = e^{-\frac{b+a}{2}} \cdot \min_{p^\prime_k \in \Sigma_k,
    q^\prime_l \in \Sigma_l}\sup_{x \in
    [-1,1]} \left|e^{\frac{(b-a)}{2} x} - \frac{p^\prime_k(x)}{q_l^\prime(x)} \right|
\end{align*}
\end{proof}

\begin{lemma}[Lemma \ref{lem:taylor} Restated]
\label{lem:taylor:restated}
The degree $k$ polynomial obtained by truncating Taylor's expansion of
$e^{-t}$ around the point $\nfrac{b+a}{2}$ is a uniform approximation
to $e^{-t}$ on the interval $[a,b]$ up to an error of
\[e^{-\frac{b+a}{2}} \cdot \sum_{i=k+1}^\infty
\frac{1}{i!}\left(\frac{(b-a)}{2}\right)^i,\]
which is smaller than $\delta$ for $k \ge \max\{\frac{e^2(b-a)}{2},\log \nfrac{1}{\delta}\}$
\end{lemma}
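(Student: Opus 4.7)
The plan is to write the Taylor expansion with remainder around the midpoint $c \defeq \nfrac{a+b}{2}$ and bound the tail term by term. Specifically, the degree $k$ truncation is
\[
p_k(t) \defeq e^{-c}\sum_{i=0}^{k}\frac{(-(t-c))^{i}}{i!},
\]
so that $e^{-t}-p_k(t) = e^{-c}\sum_{i=k+1}^{\infty}\frac{(-(t-c))^{i}}{i!}$. Applying the triangle inequality and using $|t-c|\le \nfrac{(b-a)}{2}$ for every $t\in[a,b]$ yields the first displayed bound immediately.

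For the second part, set $r \defeq \nfrac{(b-a)}{2}$ and bound the tail $\sum_{i\ge k+1} r^{i}/i!$ from above. The key step is the standard Stirling estimate $i!\ge (i/e)^{i}$, which gives $r^{i}/i!\le (re/i)^{i}$. Under the hypothesis $k\ge e^{2}r$, every index $i\ge k+1$ satisfies $re/i\le e^{-1}$, so the tail is dominated by a geometric series:
\[
\sum_{i=k+1}^{\infty}\frac{r^{i}}{i!}\;\le\;\sum_{i=k+1}^{\infty}e^{-i}\;\le\;\frac{e^{-k}}{1-e^{-1}}\;\le\;2e^{-k}.
\]
Combining with $k\ge \log \nfrac{1}{\delta}$ (absorbing the factor of $2$ into a mild adjustment of constants, or alternatively requiring $k\ge \log\nfrac{2}{\delta}$) makes this tail at most $\delta$, as claimed. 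Note that the factor $e^{-c}$ appearing in the first bound should also multiply $\delta$ in the second bound for consistency; the argument above produces exactly this stronger statement.

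The only subtle point is the regime change at $k\approx e^{2}r$: for $i$ much smaller than $r$ the terms $r^{i}/i!$ grow, while for $i\gg r$ they decay super-geometrically. Taking $k$ beyond the threshold $e^{2}r$ (rather than merely $k\ge r$) ensures we are safely past the peak of the sequence and moreover that the ratio $re/i$ is bounded away from $1$, which is what produces the clean geometric tail bound. There is no real obstacle here; the proof is a two-line Taylor remainder argument followed by a one-paragraph Stirling-based tail estimate.
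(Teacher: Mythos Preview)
Your proof is correct and follows essentially the same route as the paper: write the Taylor remainder around the midpoint, bound term-by-term using $|t-c|\le (b-a)/2$, then apply the Stirling estimate $i!\ge (i/e)^i$ so that each tail term is at most $e^{-i}$ once $k\ge e^2(b-a)/2$, and sum the resulting geometric series. The only minor difference is that the paper evaluates the geometric tail as $\frac{1}{e-1}e^{-k}<e^{-k}$, so $k\ge \log\nfrac{1}{\delta}$ suffices without any adjustment of constants, whereas your looser bound $\le 2e^{-k}$ forces the caveat you noted; your observation about the missing $e^{-c}$ factor in the $\delta$ bound is also apt (the original statement of the lemma in the paper includes it, the restated version drops it).
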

\begin{proof}
Let $q_k(t)$ be the degree $k$ Taylor approximation of the function
$e^{-t}$ around the point $\nfrac{(b+a)}{2},$ \emph{i.e.}, $q_k(t)
\defeq e^{-\frac{(b+a)}{2}}\sum_{i=0}^k \frac{1}{i!} \left(t-\frac{(b+a)}{2}\right)^i.$ 
\begin{align*}
\sup_{t \in [a,b]} |e^{-t} - q_k(t)| & = \sup_{t \in [a,b]}
e^{-\frac{b+a}{2}} \cdot \left|\sum_{i=k+1}^\infty
\frac{1}{i!} \left(t-\frac{(b+a)}{2}\right)^i\right| =
e^{-\frac{b+a}{2}} \cdot \sum_{i=k+1}^\infty
\frac{1}{i!}\left(\frac{(b-a)}{2}\right)^i
\end{align*}
Using the inequality $i! > \left(\frac{i}{e}\right)^i,$ for all $i,$
and assuming $k \ge \frac{e^2(b-a)}{2},$ we get,
\[ \sum_{i=k+1}^\infty
\frac{1}{i!}\left(\frac{(b-a)}{2}\right)^i \le \sum_{i=k+1}^\infty
\left(\frac{e(b-a)}{2i}\right)^i \le \sum_{i=k+1}^\infty
e^{-i}= \frac{1}{e-1} e^{-k}\ ,\]
which is smaller than $\delta$ for $k \ge \log \nfrac{1}{\delta}$.
\end{proof}

\begin{lemma}[Lemma~\ref{lem:Chebyshev1} Restated]
\label{lem:Chebyshev1:restated}
For every $\epsilon > 0$, $b > a > 0$, there exists a polynomial $q_{a,b,\epsilon}(x)$ of degree
$\ceil{\sqrt{\frac{b}{a}}\log \frac{2}{\epsilon}}$ such that 
\[\sup_{x \in [a,b]} |x\cdot q_{a,b,\epsilon}(x) - 1| \le \epsilon.\]
\end{lemma}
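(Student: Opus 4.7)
The plan is to reduce this to a classical Chebyshev extremal problem, as already hinted at in Section~\ref{sec:exp-poly-approx-proof}. Write $r(x) \defeq x \cdot q(x) - 1$ for an unknown polynomial $q$ of degree $k$; then $r$ is a polynomial of degree at most $k+1$ satisfying $r(0) = -1$, and conversely, any polynomial $r$ of degree $k+1$ with $r(0)=-1$ is of the form $x\cdot q(x)-1$ for some $q \in \Sigma_k$. So the problem is equivalent to finding a polynomial $s$ of degree $k+1$ with $s(0) = 1$ that minimizes $\sup_{x \in [a,b]} |s(x)|$.

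First, I would introduce the affine map $\phi\from [a,b] \to [-1,1]$ defined by $\phi(x) \defeq \frac{2x - (a+b)}{b-a}$, under which $0 \in [a,b]$ (after shifting) is mapped outside $[-1,1]$ to the point $\phi(0) = -\frac{a+b}{b-a}$, whose absolute value exceeds $1$ since $a,b>0$. The natural candidate is then $s(x) \defeq \frac{T_{k+1}(\phi(x))}{T_{k+1}(\phi(0))}$, where $T_{k+1}$ is the Chebyshev polynomial of degree $k+1$. By construction, $s(0)=1$, $s$ has degree exactly $k+1$, and since $|T_{k+1}|\le 1$ on $[-1,1]$, we have $\sup_{x \in [a,b]} |s(x)| \le \frac{1}{|T_{k+1}(\phi(0))|}$. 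Setting $q_{a,b,\epsilon}(x) \defeq (1 - s(x))/x$ (noting that $x=0$ is a root of $1-s(x)$) gives the desired polynomial of degree $k$, with error $\sup_{x \in [a,b]} |x q_{a,b,\epsilon}(x) - 1| \le \frac{1}{|T_{k+1}(\phi(0))|}$.

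It remains to lower-bound $|T_{k+1}(\phi(0))|$. Using $|T_n(-y)|=|T_n(y)|$ and the identity $T_n(y) = \tfrac{1}{2}\bigl[(y+\sqrt{y^2-1})^n + (y-\sqrt{y^2-1})^n\bigr]$ valid for $|y|>1$, we get $|T_{k+1}(\phi(0))| \ge \tfrac{1}{2}(y+\sqrt{y^2-1})^{k+1}$ for $y = \frac{a+b}{b-a}$. A short calculation (using $y^2-1 = \frac{4ab}{(b-a)^2}$) yields the clean factorization
\[
y+\sqrt{y^2-1} \;=\; \frac{\sqrt{b}+\sqrt{a}}{\sqrt{b}-\sqrt{a}} \;=\; \frac{1+\sqrt{a/b}}{1-\sqrt{a/b}},
\]
whose logarithm is $2\operatorname{artanh}(\sqrt{a/b}) \ge 2\sqrt{a/b}$. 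Consequently $|T_{k+1}(\phi(0))| \ge \tfrac{1}{2} e^{2(k+1)\sqrt{a/b}}$, and choosing $k = \lceil \sqrt{b/a}\,\log(2/\epsilon) \rceil$ forces this to be at least $1/\epsilon$, which completes the bound.

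I do not expect any serious obstacle: the only non-routine ingredient is the identification of the Chebyshev polynomial as the extremal candidate and the arithmetic simplification of $y+\sqrt{y^2-1}$, both of which are standard. The mildly delicate point is confirming that the constants in the chosen $k$ match the claimed degree — specifically that $2(k+1)\sqrt{a/b} \ge \log(2/\epsilon) + \log 2$ is implied by $k \ge \sqrt{b/a}\log(2/\epsilon)$ — which is a direct check using $\log(2/\epsilon) \ge \log 2$ whenever $\epsilon \le 1$ (and the case $\epsilon > 1$ is trivial since one can take $q \equiv 0$).
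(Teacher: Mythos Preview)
Your proof is correct and essentially identical to the paper's: both construct $q$ via the shifted, normalized Chebyshev polynomial $T_{k+1}\bigl(\frac{b+a-2x}{b-a}\bigr)/T_{k+1}\bigl(\frac{b+a}{b-a}\bigr)$ (your $\phi$ differs only by a sign, which is immaterial since $|T_n(-y)|=|T_n(y)|$), and both bound the error by $1/|T_{k+1}(\frac{b+a}{b-a})|$ using the explicit formula for $T_n$ outside $[-1,1]$. The only cosmetic difference is that you use $\operatorname{artanh}(t)\ge t$ to get an exponent of $2(k+1)\sqrt{a/b}$, whereas the paper uses the cruder $\frac{1-1/\sqrt{\kappa}}{1+1/\sqrt{\kappa}}\le 1-1/\sqrt{\kappa}$ to get exponent $(k+1)\sqrt{a/b}$; either way the same degree $\lceil\sqrt{b/a}\log(2/\epsilon)\rceil$ suffices.
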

\begin{proof}
If $T_{k+1}(x)$ denotes the degree $k+1$ Chebyshev polynomial, consider the
function,
\[ q_{a,b,\epsilon}(x) \defeq \frac{1}{x}\left(1 -
  \frac{T_{k+1}\left(\frac{b+a-2x}{b-a}\right)}{T_{k+1}\left(\frac{b+a}{b-a}\right)}
\right). \]

First, we need to prove that the above expression is a
polynomial. Clearly $1 -
  \frac{T_{k+1}\left(\frac{b+a-2x}{b-a}\right)}{T_{k+1}\left(\frac{b+a}{b-a}\right)}$
    is a polynomial and evaluates to 0 at $x=0$. Thus, it must have
    $x$ as a factor. Thus $q_{a,b,\epsilon}$ is a polynomial of degree $k
    $. Let $\kappa = \nicefrac{b}{a}$ and note that $\kappa > 1$. Thus,
\begin{align*}
\sup_{x \in [a,b]} |x\cdot q_{a,b,\epsilon}(x) - 1| &  = \sup_{x \in [a,b]} \left|
  \frac{T_{k+1}\left(\frac{b+a-2x}{b-a}\right)}{T_{k+1}\left(\frac{b+a}{b-a}\right)}
\right| \\
& \le T_{k+1}\left(\frac{b+a}{b-a}\right)^{-1} \qquad \qquad
\text{(Since $|T_{k+1}(y)| \le 1$ for $|y| \le 1$)} \\
& =
2\left( \left(\frac{\sqrt{\kappa}+1}{\sqrt{\kappa}-1}\right)^{k+1}
+ \left(\frac{\sqrt{\kappa}-1}{\sqrt{\kappa}+1}\right)^{k+1}
\right)^{-1} \qquad \text{(By def.)}\\
& \le 
2\left(\frac{\sqrt{\kappa}+1}{\sqrt{\kappa}-1}\right)^{-k-1} \qquad\qquad
\text{(Each term is positive since $\sqrt{\kappa} > 1$)}\\
& =
2\left(\frac{1 - \nfrac{1}{\sqrt{\kappa}}}{1 +
    \nfrac{1}{\sqrt{\kappa}}}\right)^{k+1} \\
& \le 2\cdot \left(1 - \nfrac{1}{\sqrt{\kappa}}\right)^{k+1} \le
2\cdot 
e^{-\nfrac{(k+1)}{\sqrt{\kappa}}} \le \epsilon,
\end{align*}
for $k = \ceil{\sqrt{\kappa}\log \frac{2}{\epsilon}}$. The
  first inequality follows from the fact that $|T_{k+1}(x)| \le 1$ for all
  $|x| \le 1$.
\end{proof}

\begin{corollary}[Corollary~\ref{cor:Chebyshev1-shift} Restated]
For every $\nu >0,\epsilon > 0$ and $b > a \ge 0$, there exists a polynomial $q^\star_{\nu,a,b,\epsilon}(x)$ of degree
$\ceil{\sqrt{\frac{1+\nu b}{1+\nu a}}\log \frac{2}{\epsilon}}$ such that 
\[\sup_{x \in [a,b]} |(1+\nu x)\cdot q^\star_{\nu,a,b,\epsilon}(x) - 1| \le
\epsilon.\]
\label{cor:Chebyshev1-shift:restated}
\end{corollary}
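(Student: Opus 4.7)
The plan is to derive this corollary by a simple affine change of variables that reduces it to Lemma~\ref{lem:Chebyshev1}. The key observation is that the expression $(1+\nu x)$ takes values in $[1+\nu a,\, 1+\nu b]$ as $x$ ranges over $[a,b]$, and since $\nu > 0$ and $a \geq 0$, the left endpoint $1+\nu a \geq 1 > 0$ satisfies the positivity hypothesis of Lemma~\ref{lem:Chebyshev1} (even when $a=0$, which is why the corollary can relax the strict inequality on $a$).

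Concretely, I would set $a' \defeq 1+\nu a$ and $b' \defeq 1+\nu b$, invoke Lemma~\ref{lem:Chebyshev1} with parameters $(a',b',\epsilon)$ to obtain a polynomial $q_{a',b',\epsilon}$ of degree $\ceil{\sqrt{b'/a'}\,\log(2/\epsilon)} = \ceil{\sqrt{(1+\nu b)/(1+\nu a)}\,\log(2/\epsilon)}$ satisfying $\sup_{y \in [a',b']} |y\cdot q_{a',b',\epsilon}(y) - 1| \le \epsilon$, and then define
\[
q^\star_{\nu,a,b,\epsilon}(x) \defeq q_{a',b',\epsilon}(1+\nu x).
\]
Since $1+\nu x$ is an affine function of $x$, composing with it preserves the degree, so the degree of $q^\star_{\nu,a,b,\epsilon}$ is exactly the degree of $q_{a',b',\epsilon}$, matching the claimed bound.

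The verification is then routine: for any $x \in [a,b]$, substituting $y = 1+\nu x \in [a',b']$ gives
\[
(1+\nu x)\cdot q^\star_{\nu,a,b,\epsilon}(x) - 1 \;=\; y \cdot q_{a',b',\epsilon}(y) - 1,
\]
whose absolute value is at most $\epsilon$ by the guarantee of Lemma~\ref{lem:Chebyshev1}. Taking the supremum over $x \in [a,b]$ yields the desired bound. There is no real obstacle here; the only thing to be mindful of is the edge case $a = 0$, which is handled precisely because the shifted left endpoint $1+\nu a$ is strictly positive, and confirming that the floor/ceiling in the degree bound lines up correctly after the substitution $b/a \mapsto b'/a'$.
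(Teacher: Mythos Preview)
Your proposal is correct and essentially identical to the paper's proof: both define $q^\star_{\nu,a,b,\epsilon}(x) \defeq q_{1+\nu a,\,1+\nu b,\,\epsilon}(1+\nu x)$, apply Lemma~\ref{lem:Chebyshev1} on the shifted interval $[1+\nu a,\,1+\nu b]$, and note that the affine substitution preserves the degree. Your remark about the edge case $a=0$ being handled because $1+\nu a>0$ is a nice touch that the paper leaves implicit.
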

\begin{proof}
Consider the polynomial $q^\star_{\nu,a,b,\epsilon}(x) \defeq q_{1+\nu a,1+\nu
  b,\epsilon}\left(1+\nu x\right),$ where $q_{1+\nu a,1+\nu
  b,\epsilon}$ is given by the previous lemma.
\begin{align*}
\sup_{x \in [a,b]}\left|(1+\nu x)\cdot q^\star_{\nu,a,b,\epsilon}(x) -
  1\right| & =
\sup_{x \in [a,b]}\left|(1+\nu x)\cdot q_{1+\nu a,1+\nu
  b,\epsilon}\left(1+\nu x\right) - 1\right| \\
& \stackrel{t \defeq 1+\nu x}{=} \sup_{t \in [1+\nu
a,1 +\nu b]}\left|t\cdot q_{1+\nu a,1+\nu
  b,\epsilon}\left(t\right) - 1\right| \stackrel{Lem.~\ref{lem:Chebyshev1}}{\le} \epsilon.
\end{align*}
Since $1+\nu x$ is a linear transformation, the degree of
$q^\star_{\nu,a,b,\epsilon}$ is the same as that of $q_{1+\nu a,1+\nu
  b,\epsilon}$ , which is, $\ceil{\sqrt{\frac{1+\nu b}{1+\nu
      a}}\log \frac{2}{\epsilon}}.$
\end{proof}

\begin{lemma}[Lemma~\ref{lem:Chebyshev-power} Restated]
\label{lem:Chebyshev-power:restated}
For all real $\epsilon > 0$, $b > a \ge 0$ and positive integer $t$; if $t\epsilon \le 1$, then,
\[\sup_{x \in [a,b]} |((1+\nu x) \cdot q^\star_{\nu,a,b,\epsilon}(x))^t - 1| \le
2t\epsilon,\]
where $q^\star_{\nu,a,b,\epsilon}$ is the polynomial given by Corollary~\ref{cor:Chebyshev1-shift}.
\end{lemma}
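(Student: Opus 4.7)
The plan is to reduce this to a scalar estimate about $(1+y)^t$ for $|y|\le\epsilon$. For $x \in [a,b]$, set $y(x) \defeq (1+\nu x)\cdot q^\star_{\nu,a,b,\epsilon}(x) - 1$. By Corollary~\ref{cor:Chebyshev1-shift:restated}, we already have $\sup_{x\in[a,b]} |y(x)| \le \epsilon$, so the supremum we need to bound is at most $\sup_{|y|\le\epsilon} |(1+y)^t - 1|$, and the problem is entirely scalar from this point on.

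The main step is then to show that $|(1+y)^t - 1| \le 2t\epsilon$ whenever $|y|\le\epsilon$ and $t\epsilon \le 1$. I would expand via the binomial theorem and apply the triangle inequality:
\[
|(1+y)^t - 1| \;=\; \Bigl|\sum_{i=1}^{t}\binom{t}{i} y^i\Bigr| \;\le\; \sum_{i=1}^{t}\binom{t}{i}|y|^i \;\le\; \sum_{i=1}^{t}\binom{t}{i}\epsilon^i \;=\; (1+\epsilon)^t - 1.
\]
From here, use $(1+\epsilon)^t \le e^{t\epsilon}$ to get $(1+\epsilon)^t - 1 \le e^{t\epsilon} - 1$, and then exploit convexity of $e^x - 1$ on $[0,1]$: since $f(x) \defeq e^x - 1 - (e-1)x$ satisfies $f(0)=f(1)=0$ and $f''(x)=e^x>0$, we have $f(x)\le 0$ on $[0,1]$, so $e^x - 1 \le (e-1)x \le 2x$ for $x\in[0,1]$. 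Applying this at $x = t\epsilon \le 1$ yields $e^{t\epsilon} - 1 \le 2t\epsilon$, and the lemma follows by chaining these inequalities.

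I do not expect any real obstacle here; the estimate is entirely a routine binomial/exponential computation, and the only mild point is to collapse both signs of $y$ uniformly, which is handled cleanly by the triangle inequality applied \emph{inside} the binomial sum (so that we never have to split into separate cases $y\ge 0$ and $y<0$). The hypothesis $t\epsilon \le 1$ is used exactly once, in the final step where we convert $e^{t\epsilon}-1$ into a linear bound $2t\epsilon$.
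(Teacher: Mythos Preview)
Your proof is correct and essentially identical to the paper's: both reduce to $(1+\epsilon)^t-1$ via the binomial expansion and triangle inequality, then pass through $e^{t\epsilon}-1$ and bound this by $2t\epsilon$ using an elementary inequality on $[0,1]$. The only cosmetic difference is that the paper uses $e^x \le 1+x+x^2$ for $x\in[0,1]$ in the last step where you use the convexity bound $e^x-1\le (e-1)x$.
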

\begin{proof}
We write the expression $(1+\nu x) \cdot
q^\star_{\nu,a,b,\epsilon}(x)$ as 1 plus an error term and then use
the Binomial Theorem to expand the $t^\text{th}$ power.
\begin{align*}
\sup_{x \in [a,b]} |((1+\nu x)\cdot q^\star_{\nu,a,b,\epsilon}(x))^t - 1|  & =
\sup_{x \in [a,b]} \left| \left(1 - \left[1
      - (1+\nu x)\cdot q^\star_{\nu,a,b,\epsilon}(x)\right]\right)^t - 1\right| \\
& = \sup_{x \in [a,b]} \left| \sum_{i=1}^t \binom{t}{i}\left(1- (1+\nu
    x) \cdot q^\star_{\nu,a,b,\epsilon}(x)\right)^{i}\right| \\
& \le \sup_{x \in [a,b]} \sum_{i=1}^t \binom{t}{i}\left|1 - (1+\nu x)
  \cdot  q^\star_{\nu,a,b,\epsilon}(x)\right|^{i} \\
& \le \sum_{i=1}^t \binom{t}{i} \sup_{x \in [a,b]} \left|1 - (1+\nu x)
  \cdot  q^\star_{\nu,a,b,\epsilon}(x)\right|^{i} \\
& \stackrel{Cor.~\ref{cor:Chebyshev1-shift}}{\le} \sum_{i=1}^t \binom{t}{i}\epsilon^{i} =
(1+\epsilon)^t-1\\
& \le \exp(t\epsilon)-1 \le 1+t\epsilon + (t\epsilon)^2 -1 \le 2t\epsilon,
\end{align*}
where the second last inequality uses $e^x \le 1+x+x^2$ for $x \in [0,1].$
\footnote{For $x \in [0,1]$, $e^x = \sum_{i\ge 0} \frac{x^i}{i!} = 1 + x
  + x^2\left(\frac{1}{2!} + \frac{x}{3!} +\ldots \right) \le 1 + x
  + x^2\left(\frac{1}{2} + \frac{x}{2^2} + \frac{x}{2^3}+\ldots
  \right) \le 1+ x + x^2$}
\end{proof}

\begin{lemma}[Lemma~\ref{lem:poly-error-L1} Restated]
\label{lem:poly-error-L1:restated}
For any polynomial $p$ of degree $k,$ and any $x,y \in \rea$
\[|p(x)-p(y)| \le \norm{p}_1 \cdot \max_{0 \le i \le k} |x^i - y^i|\]
\end{lemma}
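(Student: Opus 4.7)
The plan is to prove this by a direct application of the triangle inequality, exploiting the decomposition of $p$ into its monomials. Writing $p(x) = \sum_{i=0}^{k} a_i x^i$, the difference $p(x) - p(y)$ splits as $\sum_{i=0}^{k} a_i (x^i - y^i)$, so one simply bounds each term in absolute value and factors out the maximum of $|x^i - y^i|$ over $i$.

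Concretely, I would first invoke the definition $\|p\|_1 = \sum_{i=0}^{k} |a_i|$ and the identity $p(x) - p(y) = \sum_{i=0}^{k} a_i (x^i - y^i)$. Applying the triangle inequality yields
\[
|p(x) - p(y)| \le \sum_{i=0}^{k} |a_i| \cdot |x^i - y^i|.
\]
Pulling out the maximum of $|x^i - y^i|$ over $0 \le i \le k$ (noting that the $i=0$ term is zero and hence contributes nothing, so the max is really taken over $1 \le i \le k$ but the stated bound holds either way) gives
\[
|p(x) - p(y)| \le \left(\sum_{i=0}^{k} |a_i|\right) \cdot \max_{0 \le i \le k} |x^i - y^i| = \|p\|_1 \cdot \max_{0 \le i \le k} |x^i - y^i|,
\]
which is precisely the claim.

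There is no real obstacle here — the lemma is a one-line consequence of the triangle inequality applied to the monomial expansion of $p$, and the bound is essentially tight up to the coarseness of replacing each $|x^i - y^i|$ by the maximum across $i$. The only point to note is that the degree-zero term vanishes identically, so the maximum could equivalently be restricted to $1 \le i \le k$; I would mention this in passing but not belabor it, since the stated form of the bound already subsumes this case.
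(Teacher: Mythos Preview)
Your proposal is correct and essentially identical to the paper's proof: write $p(t) = \sum_{i=0}^k a_i t^i$, apply the triangle inequality to $p(x)-p(y) = \sum_i a_i(x^i - y^i)$, and pull out $\max_i |x^i - y^i|$ to obtain $\norm{p}_1 \cdot \max_{0 \le i \le k} |x^i - y^i|$.
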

\begin{proof}
Suppose $p(t)$ is the polynomial $\sum_{i=0}^k a_i\cdot  t^i,$ where
$a_i \in \rea.$ Then,
\begin{align*}
|p(x)-p(y)| & = \left|\sum_{i=0}^k a_i \cdot  x^i - \sum_{i=0}^k a_i \cdot y^i
\right|  \le \sum_{i=0}^k |a_i| |x^i - y^i| \\
& \le \left(\sum_{i=0}^k |a_i|\right) \max_{0 \le i \le k} |x^i - y^i|
= \norm{p}_1 \cdot \max_{0 \le i \le k} |x^i - y^i|
\end{align*}
\end{proof}

\newpage

\bibliographystyle{plain}
\bibliography{stoc}

\end{document}